\documentclass[11pt,a4paper]{amsart}
\usepackage[foot]{amsaddr}
\usepackage{ifxetex}
\ifxetex
  \usepackage[no-math]{fontspec}
\else
\fi
\usepackage{amsmath}
\usepackage{amsfonts}
\usepackage{amssymb}
\usepackage{amsthm}
\usepackage{fullpage}
\usepackage{microtype}
\usepackage[libertine]{newtxmath}
\usepackage[tt=false]{libertine} 
\usepackage{caption}
\usepackage{bbm}
\usepackage{hyperref, color}
\hypersetup{colorlinks=true,citecolor=blue, linkcolor=blue, urlcolor=blue}
\usepackage[linesnumbered,boxed,ruled,vlined]{algorithm2e}
\usepackage{bm}
\usepackage{bbm}
\usepackage[numbers]{natbib}
\usepackage{xcolor}
\usepackage{enumerate} 
\usepackage{enumitem}
\usepackage{tabularx}
\usepackage{array}
\usepackage{cleveref}
\newcolumntype{L}[1]{>{\raggedright\arraybackslash}p{#1}}
\newcolumntype{C}[1]{>{\centering\arraybackslash}m{#1}}
\newcolumntype{R}[1]{>{\raggedleft\arraybackslash}p{#1}}

\usepackage{makecell}
\usepackage{footnote}
\makesavenoteenv{tabular}

\newcommand{\Nv}{N_{\mathrm{vtx}}}
\newcommand{\cnon}{\mathcal{C}_{\mathrm{non}}}
\newcommand{\cadp}{\mathcal{C}_{\mathrm{apt}}}

\newcommand{\Vfro}{V_{\mathrm{frozen}}}
\newcommand{\Efro}{\mathcal{E}_{\mathrm{frozen}}}
\newcommand{\vol}{\mathrm{Vol}}

\newcommand{\DTV}[2]{d_{\mathrm{TV}}\left({#1},{#2}\right)}
\newcommand{\dist}{\mathrm{dist}}

\renewcommand{\epsilon}{\varepsilon}

\newcommand{\Lin}{\mathrm{Lin}}
\newcommand{\Vcol}{V_{\mathrm{set}}}

\newcommand{\twotree}{S_{\mathrm{tree}}}

\newcommand{\vst}{v_{\star}}

\newcommand{\vbl}[1]{\mathsf{vbl}\left(#1\right)}

\newcommand{\sample}{\textnormal{\textsf{InvSample}}}

\newcommand{\Tmix}{\left\lceil 2n \log \frac{4n}{\epsilon} \right\rceil}

\newcommand{\tmix}{T_{\textsf{mix}}}

\newtheorem{theorem}{Theorem}[section]

\newtheorem*{claim*}{Claim}
\newtheorem{condition}[theorem]{Condition}

\newtheorem{lemma}[theorem]{Lemma}
\newtheorem{proposition}[theorem]{Proposition}
\newtheorem{corollary}[theorem]{Corollary}
\theoremstyle{definition}

\newtheorem{definition}[theorem]{Definition}

\newtheorem*{remark*}{Remark}

\def\Pr{\mathop{\mathbf{Pr}}\nolimits}



\renewcommand{\emptyset}{\varnothing}

\newcommand{\abs}[1]{\left\vert#1\right\vert}
\newcommand{\set}[1]{\left\{#1\right\}}
 \newcommand{\tuple}[1]{\left(#1\right)} 
 
 \newcommand{\tp}{\tuple}

\newcommand{\defeq}{\triangleq}

\newcommand{\ctp}[1]{\left\lceil{#1}\right\rceil}
\newcommand{\ftp}[1]{\left\lfloor{#1}\right\rfloor}

\def\*#1{\mathbf{#1}} 
\def\+#1{\mathcal{#1}} 
\def\-#1{\mathrm{#1}} 

\usepackage{todonotes}

\usepackage{xifthen}

\renewcommand{\Pr}[2][]{ \ifthenelse{\isempty{#1}}
  {\mathbf{Pr}\left[#2\right]} {\mathbf{Pr}_{#1}\left[#2\right]} } 
\newcommand{\E}[2][]{ \ifthenelse{\isempty{#1}}
  {\mathbf{\mathbf{E}}\left[#2\right]}
  {\mathbf{\mathbf{E}}_{#1}\left[#2\right]} }
  \newcommand{\Var}[2][]{ \ifthenelse{\isempty{#1}}
  {\mathbf{\mathbf{Var}}\left[#2\right]}
  {\mathbf{\mathbf{Var}}_{#1}\left[#2\right]} }

\newcommand{\Dom}[1]{\boldsymbol{#1}}
\newcommand{\Cons}[1]{\mathcal{#1}}
\newcommand{\True}{\mathtt{True}}
\newcommand{\False}{\mathtt{False}}
\newcommand{\Ass}[1]{\boldsymbol{#1}}
\newcommand{\Proj}[1]{\boldsymbol{#1}}

\newcommand{\RDForm}[2]{{#1}^{\ftp{#2}}}
\newcommand{\PX}{\Proj{h}^{X}}
\newcommand{\PY}{\Proj{h}^{Y}}

\title{Sampling Constraint Satisfaction Solutions in the Local Lemma Regime}
\date{}

\author{Weiming Feng}
\author{Kun He}
\author{Yitong Yin}

\thanks{This research is supported by the National Key R\&D Program of China 2018YFB1003202 and the National Science Foundation of China under Grant Nos. 61722207 and 61672275.}
\address[Weiming Feng, Yitong Yin]{State Key Laboratory for Novel Software Technology, Nanjing University. \textnormal{E-mail: \url{fengwm@smail.nju.edu.cn} and \url{yinyt@nju.edu.cn}}.}

\address[Kun He]{Institute of Computing Technology, Chinese Academy of Sciences.
\textnormal{E-mail: \url{hekun.threebody@foxmail.com}}.}


\begin{document}

\maketitle

\begin{abstract}
We give a Markov chain based algorithm for sampling almost uniform solutions of constraint satisfaction problems (CSPs).
Assuming a canonical setting for the Lov\'asz local lemma,
where each constraint is violated by a small number of forbidden local configurations, 
our sampling algorithm is accurate in a local lemma regime,
and the running time is a fixed polynomial whose dependency on $n$ is close to linear, where $n$ is the number of variables.
Our main approach is a new technique called \emph{state compression}, 
which generalizes the ``mark/unmark'' paradigm of Moitra~\cite{Moi19},
and can give fast local-lemma-based sampling algorithms.
As concrete applications of our technique, we give the current best almost-uniform samplers for hypergraph colorings and for CNF solutions.
\end{abstract}

\section{Introduction}
\label{section-intro}
The space of constraint satisfaction solutions is one of the most well-studied subjects in Computer Science.
Given a collection of constraints defined on a set of variables, a solution to the \emph{constraint satisfaction problem} (CSP) is an assignment of variables such that all constraints are satisfied.
A fundamental criterion for the existence of constraint satisfaction solutions is given by the \emph{Lov\'asz local lemma}~(LLL)~\cite{EL75}.
%
Interpreting the space of all assignment as a probability space and 
the violation of each constraint as a bad event,  
the local lemma characterizes a regime within which a constraint satisfaction solution always exists, by the tradeoff between: (1)~the chance for the occurrence of each bad event and (2)~the degree of dependency between them. 

%

In Computer Science, the studies of the Lov\'asz local lemma are more focused on the \emph{algorithmic LLL} (also called \emph{constructive LLL}), 
which is concerned with not just  existence of a constraint satisfaction solution, but also how to find such a solution efficiently. 
The studies of algorithmic LLL constitute an important line of modern algorithm researches~\cite{beck1991algorithmic,alon1991parallel,molloy1998further,CS00, moser2009constructive,moser2010constructive,Kolipaka2011MoserAT,haeupler2011new,HS17,HS19}.
A major breakthrough was the Moser-Tardos algorithm~\cite{moser2010constructive}, which finds a satisfaction solution efficiently up to a sharp condition known as the Shearer’s bound~\cite{shearer85,Kolipaka2011MoserAT}. 



In this paper, we are concerned with a problem that we call the \emph{sampling LLL},
which asks for the regimes in which a nearly uniform (instead of an arbitrary) satisfaction solution can be generated efficiently.
This is a \emph{distribution-sensitive} variant of the algorithmic LLL.
The problem is closely related to the problem of estimating the total number of satisfaction solutions, usually via standard reductions~\cite{jerrum1986random,vstefankovivc2009adaptive};
besides, it may also serve as a standard toolkit for solving the inference problems that are well motivated from machine learning applications~\cite{Moi19}.


This sampling variant of algorithmic LLL is computationally more challenging than the conventional algorithmic LLL.
For example, for $k$-CNF formulas with variable-degree $d$, the Moser-Tardos algorithm for generating an arbitrary solution is known to be efficient when $k\gtrsim\log_2 d$, 
while the problem of generating a nearly uniform solution requires $k\gtrsim2\log_2 d$ to be tractable~\cite{BGGGS19}.

Meanwhile, much less positive progress was known for the sampling LLL.
A fundamental obstacle is that the space of satisfaction solutions may not be connected via local updates of variables~\cite{wigderson2019book}, whereas such connectivity is crucial for mainstream sampling techniques.
In~\cite{GJL19}, Guo, Jerrum and Liu proposed to study the sampling LLL, 
and 
resolved the problem for the CSPs with extremal constraints.
%
In a major breakthrough~\cite{Moi19}, Moitra introduced a novel approach for approximately counting $k$-SAT solutions.
The approach utilizes the algorithmic LLL to properly \emph{mark/unmark} variables, 
which helps construct efficient linear programmings for estimating marginal probabilities.
For $k$-CNF formulas with variable-degree $d$ within a local lemma regime $k \gtrsim 60 \log d$, the algorithm approximately counts the total number of SAT solutions in time $n^{\mathrm{poly}(dk)}$.
Further extensions of Moitra's approach were made to hypergraph colorings~\cite{guo2019counting} and random CNF formulas~\cite{galanis2019counting}, where the running times are both $n^{\mathrm{poly}(dk)}$ for constraint-width $k$ and variable-degree $d$.
Recently, a much faster algorithm for sampling $k$-SAT solutions inspired by Moitra's algorithm was given in~\cite{FGYZ20}. 
It implements a Markov chain on the assignments of the marked variables chosen via Moitra's approach.
The resulting sampling algorithm enjoys a close-to-linear running time $\widetilde{O}(d^2k^3n^{1.000001})$ with an improved regime $k \gtrsim 20 \log d$.
It also formally confirms that the originally  disconnected solution space is changed to be very well connected after restricting onto a wisely chosen set of marked variables.
However, such approach of fast sampling seems rather restricted to CNF formulas, where the variables can be marked/unmarked non-adaptively to the assignments, whereas for CSPs with larger domains where marking/unmarking variables adaptively to their assignments is crucial~\cite{guo2019counting}, the current approach for fast sampling has met some fundamental barriers.

For sampling general constraint satisfaction solutions,
we do not know whether the problem is tractable in a local lemma type of regime, 
neither do we know 
any general algorithmic approach that can achieve this.
New ideas beyond the  paradigm of marking/unmarking variables 
are needed.

\subsection{Our results}
\label{section-results}
We consider the problem of uniform sampling constraint satisfaction solutions,
formulated by the variable-framework LLL with \emph{uniform random variable} and \emph{atomic bad events}.
Let $V$ be a collection of $n=\abs{V}$ mutually independent uniform random variables and $\+B$ be a collection of atomic bad events such that
\begin{itemize}
\item \textbf{uniform random variables}: the value of each $v \in V$ is uniformly drawn from a domain~$Q_v$;
\item \textbf{atomic bad events}: each $B \in \+B$ is determined by the variables in $\vbl{B} \subseteq V$, and $B$ occurs if the assignment of $\vbl{B}$ is as specified by the unique forbidden pattern $\sigma_B \in \bigotimes_{v \in \vbl{B}}Q_v$.
\end{itemize}




We assume uniform random variables because our goal is to uniformly sample constraint satisfaction solutions. 
Meanwhile, the atomicity of bad events is a natural and fundamental setting assumed in various studies of LLL~\cite{achlioptas2016random,HH17,harris2017algorithmic,kolmogorov2018commutativity,Harris19Obl,achlioptas2019beyond,HS19,harvey2020algorithmic}.

Let $p = \max_{B \in \+B}\Pr{B}$, where the probability is taken over independent  random variables in $V$. 
Let $G=(\+B,E)$ be the \emph{dependency graph}, where each vertex is a bad event in $\+B$, and the neighborhood of each $B\in \+B$ in $G$ is $\Gamma(B) \triangleq \{B'\in \+B\setminus\{B\} \mid  \vbl{B} \cap \vbl{B'} \neq \emptyset \}$.
%
%
Let $D\triangleq\max_{B\in \+B}\abs{\Gamma(B)}$ denote the maximum degree of the dependency graph.
By the Lov\'asz local lemma, there exists a satisfying assignment that avoids all bad events in $\+B$ if
\begin{align}\label{eq:lll-regime}
\ln \frac{1}{p} \geq \ln D + 1.
\end{align}

Such an instance of LLL naturally specifies a uniform distribution over all satisfying assignments, called the \textbf{LLL-distribution}~\cite{harris2020New}.
Formally, it is the distribution of the independent random variables in $V$ conditioned on that none of the bad events in $\+B$ occurs. 	
%

\begin{theorem} 
\label{theorem-main}
The following holds for any $0 < \zeta \leq 2^{-400}$.
There is an algorithm such that given a Lov\'asz local lemma instance with uniform random variables and atomic bad events, if
\begin{align}
\label{eq-condition-main-thm}
\ln \frac{1}{p} \geq  350 \ln D + 3 \ln \frac{1}{\zeta},
\end{align}
then the algorithm outputs a random assignment $\Ass{X} \in \bigotimes_{v \in V}Q_v$ in time
$\widetilde{O}\tp{(D^2 k +q)n \tp{\frac{n}{\epsilon}}^{\zeta}}$,
such that 
the distribution of $\Ass{X}$ is $\epsilon$-close to the LLL-distribution in total variation distance,
where $q = \max_{v \in V}\abs{Q_v}$, $k = \max_{B \in \+B} \abs{\vbl{B}}$, and $\widetilde{O}(\cdot)$ hides a factor of $\mathrm{polylog}(n,\frac{1}{\epsilon}, q, D)$. 
\end{theorem}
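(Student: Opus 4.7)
The plan is to generalize the mark/unmark paradigm of Moitra and Feng--Guo--Yin--Zhang via a \emph{state compression}: for each variable $v\in V$, we select a compressed domain $\widetilde{Q}_v \subsetneq Q_v$, run a Markov chain on the compressed configurations, and then \emph{expand} each compressed sample back into a full satisfying assignment using a Moser--Tardos-style resampling. Unlike the binary marking used for CNF, the compression should be allowed to be \emph{adaptive} to values of neighbors, so that for larger alphabets the compressed value of $v$ can encode partial information about $v$'s actual value. Concretely, I would partition each $Q_v$ into blocks of moderate size and let the compressed state record only the block index, so that fixing the compressed configuration restricts each variable to a sub-domain whose residual marginal probabilities of bad events are dramatically reduced.

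The first key step is to design the compression so that the induced ``residual'' instance, obtained by conditioning each bad event on the compressed configuration, still satisfies a quantitatively strong local-lemma condition. In particular, the conditional probability of any $B\in\+B$ under a typical compressed configuration should drop by a factor polynomial in $D$. The condition $\ln(1/p)\ge 350 \ln D + 3\ln(1/\zeta)$ in Theorem~\ref{theorem-main} is precisely the slack needed: after paying for the compression, we retain enough LLL margin to run the later steps, while the overhead for distorting the distribution costs only the $(n/\epsilon)^{\zeta}$ factor in the runtime.

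The second step is to implement a Glauber-like chain $\MExp$ on compressed configurations whose stationary distribution is the projection $\prj$ of the LLL-distribution. Using path coupling along single-coordinate updates, combined with the marginal bounds coming from the compressed residual LLL instance, one should establish an $O(n\log(n/\epsilon))$ mixing time on a \emph{typical} set of compressed configurations, with atypical configurations contributing total mass at most $(n/\epsilon)^{O(\zeta)}$. The third step is the \emph{Expand} procedure: given a compressed sample, iteratively sample each variable from its residual domain using a local resampling routine $\GenResample$ whose expected per-variable work is $O(D^{2}k+q)$, the cost of inspecting the neighboring bad events and drawing from the restricted domain.

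The main obstacle is proving that the projected chain is simultaneously well-defined and rapidly mixing despite the \emph{adaptive}, value-dependent compression. Adaptivity destroys the clean product structure that made the CNF case tractable: an update at one compressed coordinate can, in principle, change the residual domain of a neighbor and ripple through the dependency graph. I expect the resolution to go through a Moser--Tardos witness-tree analysis identifying a ``frozen'' sub-instance $\Vfro$ on which the compression is effectively non-adaptive, combined with a coupling argument on $\MExp$ showing that any discrepancy between two coupled compressed configurations fails to percolate beyond depth $O(\log n)$ in the dependency graph with high probability. Putting the mixing bound together with the local cost of Expand and summing over $n$ vertices yields the stated runtime $\widetilde{O}((D^{2}k+q)\,n\,(n/\epsilon)^{\zeta})$, while the total variation error $\epsilon$ is controlled by the $\zeta$-slack in the LLL condition.
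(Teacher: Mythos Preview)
Your high-level architecture---compress each domain into blocks, run Glauber dynamics on the projected distribution, then invert the projection---matches the paper's framework. However, there is a genuine conceptual gap in your plan that would lead you astray.

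You assert that the compression must be \emph{adaptive to values of neighbors}, and you identify this adaptivity as the main obstacle. This is exactly backwards. The paper's central insight is that adaptivity is \emph{not} needed: by allowing the per-variable projection $h_v:Q_v\to\Sigma_v$ to have intermediate granularity (so $1<|\Sigma_v|<|Q_v|$, roughly $|\Sigma_v|\approx|Q_v|^{3/4}$), one can satisfy both desiderata---enough compression to connect the projected space, not so much that inversion becomes hard---with a projection scheme that is fixed once and for all, independent of the running configuration. The scheme is constructed by an application of the algorithmic LLL (Moser--Tardos) to a system of local \emph{entropy constraints} (Condition~\ref{condition-projection}). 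Your proposed resolution via a witness-tree analysis of a frozen sub-instance is therefore solving a self-inflicted problem; the actual analysis never confronts value-dependent compression.

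Two further technical divergences: (i) the paper's inversion (\textsf{InvSample}) is not Moser--Tardos resampling but \emph{rejection sampling} on each connected component of the residual formula after deleting constraints already satisfied by the projected configuration; the $(n/\epsilon)^{\zeta}$ factor is precisely the number of rejection trials, and the components are shown to have size $O(D\log(nD/\delta))$ via a $2$-tree argument. (ii) The mixing proof is a path-coupling argument using a Moitra-style sequential coupling (adaptive for the uniform $(k,d)$ case, non-adaptive for the general case), not a witness-tree or percolation-of-discrepancy bound of the kind you sketch. Your ``typical/atypical'' dichotomy for the projected chain does not appear; the chain mixes on the full projected space $\Sigma$, and the error budget is spent entirely on the accuracy of \textsf{InvSample} calls.
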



This gives a unified approach for sampling uniform LLL-distributions.
It is achieved by a new technique called ``{state compression}'' (see \Cref{section-overview} and \Cref{section-proj-of-LLL}).
%
The time complexity of the sampling algorithm is controlled by a constant parameter  $\zeta$ which also controls the gap to the local lemma condition~\eqref{eq-condition-main-thm}, 
so the running time can be arbitrarily close to linear in $n$ as $\zeta$ approaches $0$.
%


Though \Cref{theorem-main} is stated for uniform sampling, 
our main result can be extended to the LLL-distributions that arise from non-uniform random variables with arbitrary constant biases, a setting that corresponds to the statistical physics models with constant \emph{local fields}, which are considered  interesting for sampling and counting.
For such a general setting, \Cref{theorem-main} remains to hold by replacing the condition~\eqref{eq-condition-main-thm} with $\ln \frac{1}{p} \geq  C\ln({D}/{\zeta})$ where the constant factor $C$ depends on the maximum bias.
The formal proof of this general result is postpone to the full version of the paper.

On the other hand, any general non-atomic bad event can be seen as a union of disjoint atomic bad events.
Let $B$ be a bad event defined on $\vbl{B} \subseteq V$ and $\mathcal{N}_B\triangleq{\{\sigma\in \bigotimes_{v \in \vbl{B}}Q_v\mid B\text{ occurs at }\sigma\}}$ denote the set of assignments of  $\vbl{B}$ that make $B$ occur.
Event $B$ can thus be decomposed to $\abs{\mathcal{N}_B}$ atomic events, each corresponding to a forbidden assignment  $\sigma\in\mathcal{N}_B$.
Therefore, any general LLL instance with $p = \max_{B \in \+B}\Pr{B}$ and maximum degree $D$ of the dependency graph, can be equivalently represented as an LLL instance with atomic bad events, by blowing up each bad event $B\in\+B$ for at most $N\triangleq\max_{B\in\+B}\abs{\mathcal{N}_B}$ times.
The resulting LLL instance with atomic bad events can be constructed within  $\widetilde{O}(DNkn)$ time, such that every atomic bad event occurs with probability at most $p$ and has the degree of dependency at most $(D+1)N$.
%
Hence, we have the following corollary.
\begin{corollary}
\label{corollary-main}
The following holds for any $0 < \zeta \leq 2^{-400}$.
There is an algorithm such that given a Lov\'asz local lemma instance with uniform random variables, 
if 
\begin{align*}
\ln \frac{1}{p} \geq  350 \ln (D+1) + 350 \ln N + 3 \ln \frac{1}{\zeta}, 
\end{align*}
then the  algorithm  outputs a random assignment $\Ass{X} \in \bigotimes_{v \in V}Q_v$ in time $\widetilde{O}\tp{(D^2 N^2 k+q) n \tp{\frac{n}{\epsilon}}^{\zeta}}$
such that 
the distribution of $\Ass{X}$ is $\epsilon$-close to the LLL-distribution in total variation distance,
where $q = \max_{v \in V}\abs{Q_v}$, $k = \max_{B \in \+B} \abs{\vbl{B}}$, and $\widetilde{O}(\cdot)$ hides a factor of $\mathrm{polylog}(n,\frac{1}{\epsilon}, q, D,N)$.
\end{corollary}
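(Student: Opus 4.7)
The plan is to reduce the corollary directly to Theorem~\ref{theorem-main} via the atomic-decomposition construction sketched in the paragraph preceding the statement. Given an LLL instance with uniform random variables, general bad events $\+B$, maximum bad-event probability $p$, maximum dependency-degree $D$, and per-event forbidden-pattern count at most $N$, I would first build an auxiliary instance on the \emph{same} variable set $V$ whose bad events are all atomic.

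The construction: for each $B \in \+B$ enumerate the set $\+N_B \subseteq \bigotimes_{v \in \vbl{B}} Q_v$ of assignments that make $B$ occur, and for each $\sigma \in \+N_B$ introduce an atomic bad event $B_\sigma$ on variables $\vbl{B}$ whose unique forbidden pattern is $\sigma$. The key observations are: (i) $B$ occurs iff exactly one of the $B_\sigma$ with $\sigma \in \+N_B$ occurs, so the set of satisfying assignments, and hence the LLL-distribution, is unchanged; (ii) each $B_\sigma$ has $\Pr{B_\sigma} \leq \Pr{B} \leq p$; and (iii) two atomic events $B_\sigma, B'_{\sigma'}$ share a variable only when $B = B'$ or $B' \in \Gamma(B)$. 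Counting the neighbours, the dependency-degree of any $B_\sigma$ is at most $(N-1) + DN \leq (D+1)N$. A straightforward listing shows the auxiliary instance can be produced in $\widetilde{O}(DNkn)$ time.

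Next I would verify the hypothesis of Theorem~\ref{theorem-main} for this atomic instance with $D' \triangleq (D+1)N$ in place of $D$ and the same $k$. Since $\ln D' \leq \ln(D+1) + \ln N$, the assumed inequality
\[
\ln\frac{1}{p} \geq 350 \ln(D+1) + 350 \ln N + 3 \ln\frac{1}{\zeta}
\]
implies $\ln\frac{1}{p} \geq 350 \ln D' + 3 \ln \frac{1}{\zeta}$, which is exactly~\eqref{eq-condition-main-thm}. Applying Theorem~\ref{theorem-main} then yields, in time $\widetilde{O}\tp{((D')^2 k + q)\, n (n/\epsilon)^{\zeta}} = \widetilde{O}\tp{(D^2 N^2 k + q)\, n (n/\epsilon)^{\zeta}}$, a sample whose law is $\epsilon$-close in total variation to the LLL-distribution of the atomic instance, which by (i) above coincides with the LLL-distribution of the original instance. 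The $\widetilde{O}(DNkn)$ preprocessing cost is absorbed into this running-time bound.

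The reduction itself is essentially routine; the only points requiring a little care are the degree and probability bounds for the atomic instance and tracking where the extra factors of $N$ enter the running time. These are bookkeeping rather than a genuine obstacle, so the substance of the corollary is entirely carried by Theorem~\ref{theorem-main}.
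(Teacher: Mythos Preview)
Your proposal is correct and follows essentially the same route as the paper: decompose each bad event into its $\leq N$ atomic forbidden patterns, observe that this preserves the LLL-distribution while the new atomic instance has bad-event probability $\leq p$ and dependency degree $\leq (D+1)N$, and then invoke Theorem~\ref{theorem-main} with $D' = (D+1)N$. The degree count $(N-1)+DN \le (D+1)N$, the verification that the hypothesis~\eqref{eq-condition-main-thm} is met, and the absorption of the $\widetilde{O}(DNkn)$ preprocessing into the final bound are exactly the bookkeeping the paper performs (implicitly) in the paragraph preceding the corollary.
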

To the best of our knowledge, this is the first result that achieves efficient uniform sampling of general CSP solutions
within such a local lemma type of regime.
In the current result, both the regime and the complexity depend on an extra parameter $N$, namely the maximum number of violating local configurations for any bad event.
Whether such dependency is necessary is an open problem.

Our approach also produces sharper bounds for specific subclasses of LLL instances.
We consider the problem of uniformly sampling proper colorings of hypergraphes. 
Let $H=(V,\+E)$ be a $k$-uniform hypergraph i.e. $\abs{e} = k$ for all $e \in \+E$.
A proper hypergraph $q$-coloring $\Ass{X} \in [q]^V$ assigns each vertex a color such that no hyperedge is monochromatic. 
Let $\Delta$ denote the maximum degree of hypergraph, i.e.~each vertex belongs to at most $\Delta$ hyperedges. 
By LLL, a proper $q$-coloring exists if $q \geq C\Delta^{\frac{1}{k-1}}$ for some suitable constant $C$. 
We have  the following result for sampling hypergraph colorings.
\begin{theorem}
\label{theorem-coloring}
There is an algorithm such that given any $k$-uniform hypergraph on $n$ vertices with maximum degree $\Delta$ and a set of colors $[q]$, 
assuming $k\geq 30$ and $q \geq 15\Delta^{\frac{9}{k-12}} + 650$, 
the algorithm returns a random $q$-coloring $\Ass{X}\in [q]^V$
in time $\widetilde{O}(q^2k^3\Delta^2 n \tp{\frac{n}{\epsilon}}^{\frac{1}{q}})$, 
such that 
the distribution of $\Ass{X}$ is $\epsilon$-close in total variation distance to the uniform distribution of all proper  $q$-colorings of the input hypergraph.
%
\end{theorem}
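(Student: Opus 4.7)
The plan is to recast the hypergraph coloring problem as a variable-framework LLL instance with uniform random variables and atomic bad events, and then invoke (a refinement of) Theorem~\ref{theorem-main} with $\zeta = 1/q$. First, I would define the random variables $X_v \in [q]$, $v \in V$, to be i.i.d.\ uniform, and for each hyperedge $e \in \+E$ and each color $c \in [q]$ introduce an atomic bad event $B_{e,c}$ whose unique forbidden pattern assigns color $c$ to every vertex of $e$. An assignment is a proper $q$-coloring iff no $B_{e,c}$ occurs, so the resulting LLL-distribution coincides with the uniform distribution over proper $q$-colorings of $H$.

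For this encoding the parameters are $p = q^{-k}$, $\abs{\vbl{B_{e,c}}} = k$, and domain size $q$; the dependency degree satisfies $D \leq qk\Delta$, since $B_{e,c}$ shares variables with the $q-1$ atomic events on the same hyperedge together with at most $qk(\Delta-1)$ events on hyperedges adjacent to $e$. Plugging these parameters into the complexity bound of Theorem~\ref{theorem-main}, the running time $\widetilde{O}((D^2 k + q) n (n/\epsilon)^{\zeta})$ with $\zeta = 1/q$ becomes $\widetilde{O}(q^2 k^3 \Delta^2 n (n/\epsilon)^{1/q})$, exactly matching the claim.

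The hard part is verifying that the hypotheses $k \geq 30$ and $q \geq 15\Delta^{9/(k-12)} + 650$ imply the LLL-sampling condition required by the sampler. If one tried to apply the universal condition~\eqref{eq-condition-main-thm} verbatim, one would need $(k-353)\ln q \geq 350\ln(k\Delta)$, which is vacuous for $k$ anywhere near $30$. So I expect the proof to require a problem-specific sharpening of the state compression analysis that replaces the universal constant $350$ by much smaller hypergraph-coloring-specific constants. The structural features I would exploit are: (i) the $q$ atomic events $\{B_{e,c}\}_{c \in [q]}$ attached to a single hyperedge are mutually exclusive, so the effective dependency between distinct hyperedges behaves like $k\Delta$ rather than $qk\Delta$; and (ii) after fixing the colors of a constant fraction of the variables of a hyperedge, the conditional probability of any remaining $B_{e,c}$ decays multiplicatively in $q$, which improves the effective $p$ used inside the state-compression argument.

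Once a refined condition of the form $(k - O(1))\ln q \geq c_1 \ln \Delta + c_2 \ln(1/\zeta) + O(1)$ with small constants $c_1, c_2$ is established for this restricted class of instances, the assumption $q \geq 15\Delta^{9/(k-12)} + 650$ rewritten as $(k-12)\ln q \gtrsim 9\ln \Delta$ together with $\zeta = 1/q$ absorbs the remaining $\ln q$, $\ln k$, and additive constants, and the resulting inequality closes. The quantitative ratios $9/(k-12)$ and the additive $650$ in the hypothesis are precisely what one would expect to fall out of such a sharpened bound; engineering the constants of the specialized state-compression analysis so that they yield exactly this pair of numbers is the principal technical task.
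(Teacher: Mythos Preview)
Your framework is right: the paper does exactly the encoding you describe (one atomic bad event $B_{e,c}$ per hyperedge-color pair, giving $p=q^{-k}$, $D\le qk\Delta$, and a $(k,d)$-CSP with $d=q\Delta$), and the running time is obtained precisely by substituting these parameters into the general sampler's cost. You are also right that applying condition~\eqref{eq-condition-main-thm} verbatim is hopeless for $k\approx 30$, and that a sharper problem-specific analysis is needed.

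The gap is in your proposed mechanism for that sharpening. Neither the mutual exclusivity of $\{B_{e,c}\}_c$ on a fixed edge nor a generic ``conditional-probability decay'' argument is what drives the improved constants. The paper's route is different and more structural: it exploits that the instance is a $(k,d)$-CSP on a \emph{homogeneous} domain $[q]$, which unlocks two specialized tools. First, the projection scheme is built via \Cref{theorem-projection-uniform} rather than \Cref{theorem-projection-general}: for large $q$ one simply sets $s_v=\lceil q^{(2-\alpha-\beta)/2}\rceil$ for every $v$, deterministically, and Condition~\ref{condition-projection} holds with any desired $(\alpha,\beta)$ once $\log q\ge 1/(\alpha-\beta)$ and $q$ exceeds a constant. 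Second, the mixing analysis uses an \emph{adaptive} coupling (\Cref{lemma-mixing-kd}, proved in \Cref{section-adaptive-coupling}) that tracks the shrinking ``volume'' $\prod_{u\in e}|Q_u|$ of each hyperedge and freezes variables before any volume drops below a threshold; this yields the condition $k\log q\ge \frac{1}{\beta}\log(3000q^2d^6k^6)$, far milder than the $\frac{50}{\beta}\log(2000D^4/\beta)$ from the general non-adaptive analysis. With these two ingredients one can take $\alpha=7/9$, $\beta=2/3$ (rather than $\alpha\approx 0.994$, $\beta\approx 0.577$), and the combined conditions simplify to $q\ge\max\bigl((7k\Delta)^{9/(k-12)},650\bigr)$; the observation $(7k)^{9/(k-12)}\le 15$ for $k\ge 30$ then gives the stated hypothesis. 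So the ``engineering of constants'' you anticipate is not a matter of tracking exclusivity among atomic events, but of replacing both the projection construction and the coupling analysis by homogeneous-domain versions.
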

In fact, our algorithm works for a regime where $k\ge 13$ and $q\ge q_0(k)=\Omega(\Delta^{\frac{9}{k - 12}})$.
See~\Cref{theorem-coloring-gen} for a more technical statement.
The running time of our algorithm is always polynomially bounded for any bounded or unbounded $k$ and $\Delta$, and is getting arbitrarily close to linear in $n$ as $q$ grows.


Hypergraph colorings are important combinatorial objects.
The classic local Markov chain on hypergraph colorings rapidly mixes in $O(n \log n)$ steps if $k \geq 4$  and $q > \Delta$~\cite{bordewich2006stopping,bordewich2008path}.
 For ``simple'' hypergraphs where any two hyperedges share at most one vertex, the mixing condition was improved to  $q \geq \max\{C_k \log n, 500k^3\Delta^{1/(k-1)}\}$~\cite{frieze2011randomly,frieze2017randomly}.
The first algorithm for sampling and counting hypergraph colorings that works in a local lemma regime was given in~\cite{guo2019counting}.
The algorithm is obtained by extending Moitra's approach~\cite{Moi19} to adaptively marking/unmarking hypergraph vertices, and runs in time $n^{\mathrm{poly}(\Delta k)}$ if $k \geq 28$ and $q > 798\Delta^{\frac{16}{k-16/3}}$.
Our algorithm both substantially improves the running time and improves the regime to $q \geq 15\Delta^{\frac{9}{k-12}} + O(1)$. 
Our algorithm utilizes a novel projection scheme instead of the mark/unmark strategy of Moitra, to transform the space of proper colorings.
And our algorithm implements a rapidly mixing Markov chain on the projected space.

A canonical subclass of CSPs are the CNF (conjunctive normal form) formulas.
In a $k$-CNF, each clause contains $k$ distinct variables.
And the maximum (variable-)degree $d$ is given by maximum number of clauses a variable appears in. 
By LLL, a satisfying assignment exists if $k \geq \log d + \log k + C$\footnote{Throughout the paper, we use $\log$ to denote the logarithm base 2.} for some suitable constant $C$.
We have the following result for uniform sampling $k$-CNF solutions.
\begin{theorem}
\label{theorem-CNF}
The following holds for any $0< \zeta \leq 2^{-20}$. 
There is an algorithm such that given any $k$-CNF formula on $n$ variables with maximum degree $d$, 
assuming \mbox{$k \geq 13 \log d + 13 \log k + 3 \log \frac{1}{\zeta}$},
the algorithm returns a random assignment $\Ass{X}\in \{\True,\False\}^V$ 
in time
$\widetilde{O}(d^2k^3 n \tp{\frac{n}{\epsilon}}^{\zeta})$
such that 
the distribution of $\Ass{X}$ is $\epsilon$-close in total variation distance to the uniform distribution of all satisfying assignments. 
\end{theorem}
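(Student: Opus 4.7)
The plan is to instantiate the general state-compression framework behind \Cref{theorem-main} on the specific structure of $k$-CNF formulas, where each Boolean variable has domain $Q_v=\{\True,\False\}$, each bad event corresponds to a clause being violated with probability $p = 2^{-k}$, and the dependency-graph degree is at most $D \le kd$. Plugging these parameters directly into \eqref{eq-condition-main-thm} would demand $k \gtrsim 350\log(kd)$, which is far weaker than the target $k \ge 13\log d + 13\log k + 3\log(1/\zeta)$. To reach the sharper bound I would exploit the fact that for Boolean variables state compression degenerates to the mark/unmark paradigm of Moitra~\cite{Moi19} and of~\cite{FGYZ20}, which admits a tighter analysis than the generic one used for \Cref{theorem-main}.

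First, I would independently mark each variable with a suitable constant probability $p_{\mathrm{mark}}\in(0,1)$, so that in expectation each clause retains $p_{\mathrm{mark}}k$ marked and $(1-p_{\mathrm{mark}})k$ unmarked literals. The projected distribution on marked variables will serve as the target of a Markov chain: conditioned on a satisfying extension, the one-variable marginals on marked variables are close to uniform on $\{\True,\False\}$, and the distribution is shaped by the ``projected'' bad events capturing those clauses whose marked part is already close to forcing a violation. Standard LLL-style computations for CNFs establish (i) local uniformity of one-variable marginals on marked variables, and (ii) an LLL-type condition on the projected bad events that essentially reduces to $p_{\mathrm{mark}}k \gtrsim \log(kd) + O(\log(1/\zeta))$.

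Next, I would implement the Glauber dynamics on marked variables, where each single-site update is realized by a Moser--Tardos-type resampling that only touches a local region of the formula and runs in $\widetilde{O}(d^2k^3)$ time, following~\cite{FGYZ20}. Rapid mixing in time $O(n(n/\epsilon)^\zeta)$ then follows from the same path-coupling/spectral-independence style argument used for \Cref{theorem-main}, but with sharper constants thanks to the Boolean domain (in particular $q=2$, so the $q$-dependent overhead in the generic bound disappears). Once an approximate sample of the marked variables is obtained, we extend it to a full satisfying assignment by running a conditional Moser--Tardos process on the unmarked variables, and combine the two errors additively in total variation.

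The main obstacle is tightening the constant from the generic $350$ of \eqref{eq-condition-main-thm} down to the claimed $13$. This requires (a) optimizing $p_{\mathrm{mark}}$ to balance the decay of the projected bad-event probability (of order $2^{-p_{\mathrm{mark}}k}$ up to lower-order losses accounting for partially assigned clauses) against the effective dependency degree (of order $kd$), and (b) carrying the $\zeta$-slack coherently through the projected LLL condition and through the chain-mixing analysis. Once the inequality $p_{\mathrm{mark}}k \gtrsim \log(kd) + \Theta(\log(1/\zeta))$ is verified at a well-chosen $p_{\mathrm{mark}}$, the stated regime $k \ge 13\log d + 13\log k + 3\log(1/\zeta)$ and the running time $\widetilde{O}(d^2k^3 n (n/\epsilon)^\zeta)$ follow from the general machinery underlying \Cref{theorem-main}.
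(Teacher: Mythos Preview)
Your high-level plan is the same as the paper's: for Boolean domains the state-compression scheme collapses to a random marking of variables (the second part of \Cref{theorem-projection-uniform}), one runs Glauber dynamics on the projected distribution, and one plugs in $q=2$, $p=2^{-k}$, $D\le kd$ throughout. The paper then fixes $\alpha=21/25$, $\beta=1/2$ (equivalently a specific marking probability) and verifies that the four simultaneous constraints---for projection construction, for rapid mixing, and two for the inverse-sampling subroutine---are all implied by $k\ge 13\log d+13\log k+3\log(1/\zeta)$.

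Two technical points in your proposal differ from the paper and would need to be corrected. First, single-site updates and the final extension are \emph{not} done via Moser--Tardos resampling; the paper's $\sample$ subroutine (\Cref{alg-sample}) does plain rejection sampling on the small connected components of the formula left unsatisfied by the current projected configuration. The $(n/\epsilon)^\zeta$ factor in the running time is exactly the rejection-sampling trial bound $R=\lceil 10(n/\delta)^\eta\log(n/\delta)\rceil$, not a mixing-time term; the Glauber chain itself mixes in $O(n\log(n/\epsilon))$ steps. Second, the drop from $350$ to $13$ is not obtained merely by ``optimizing $p_{\mathrm{mark}}$'' inside the generic analysis behind \Cref{theorem-main}: the paper invokes the sharper mixing condition \Cref{lemma-mixing-kd}, proved via the adaptive coupling of \Cref{section-adaptive-coupling} that is specific to $(k,d)$-CSPs with homogeneous domain. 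That adaptive coupling (tracking volumes of hyperedges and freezing variables when volumes drop below a threshold) is the substantive extra ingredient; without it you would be stuck with the non-adaptive bound of \Cref{lemma-mixing-gen} and a much larger constant.
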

A more detailed version is stated  as \Cref{theorem-CNF-gen}.
The regime $k \gtrsim 13 \log d$ in \Cref{theorem-CNF} improves the state-of-the-art regime $k \gtrsim 20 \log d$ in~\cite{FGYZ20} with the same running time.

\subsection{Implications to approximate counting}
All our sampling results imply efficient algorithms for approximate counting.
Given an LLL instance $\Phi$ with uniform random variables, let $Z_{\Phi}$ denote the total number of satisfying assignments that avoid all bad events.
For any $0<\delta <1$, the problem $\+P_{\mathrm{count}}(\Phi,\delta)$ asks to output a random number $\widehat{Z}$ such that $ \widehat{Z} \in (1 \pm \delta)Z_{\Phi}$ with probability at least $\frac{3}{4}$.

%
In our results~(\Cref{theorem-main},
\Cref{corollary-main},
\Cref{theorem-coloring},
and \Cref{theorem-CNF}),
for several subclasses of LLL instances, we give such sampling algorithms  that given an LLL instance $\Phi$ and an error bound $\epsilon>0$,  a random $\Ass{X}$ is returned in time $T(\epsilon)=T_{\Phi}(\epsilon)$ such that $\Ass{X}$ is $\epsilon$-close in total variation distance to the LLL-distribution of $\Phi$, which is the uniform distribution over all satisfying assignments for $\Phi$.

It is well known that one can solve the approximate counting problem $\+P_{\mathrm{count}}(\Phi,\delta)$ by calling to such oracles for nearly uniform sampling, either via the self-reducibility~\cite{jerrum1986random} that adds one bad event at a time, or via the simulated annealing approach~\cite{bezakova2008accelerating,vstefankovivc2009adaptive,huber2015approximation,kolmogorov18faster} that alters a temperature.
The simulated annealing gives more efficient reduction.
Specifically, by routinely going through the annealing process in~\cite{FGYZ20}, one can obtain a non-adaptive simulated annealing strategy to solve the approximate counting problem $\+P_{\mathrm{count}}(\Phi,\delta)$ in time 
$O\tp{\frac{m}{\delta^2}T(\epsilon)\log \frac{m}{\delta}}$, where $\epsilon = \Theta\tp{ \frac{\delta^2}{m \log (m / \delta)} } $, and $m$ denotes the number of bad events in $\Phi$.

\subsection{Technique overview} 
\label{section-overview}
As addressed in~\cite{wigderson2019book}, in general, the space of SAT solutions may not be connected via local updates of variables, even when the existence of SAT solutions is guaranteed by the local lemma.
A major challenge for efficiently sampling constraint satisfaction solutions in a local lemma regime is to bypass such connectivity barrier.

Several previous works that have successfully bypassed this fundamental barrier fell into the same ``mark/unmark'' paradigm initiated by Moitra in~\cite{Moi19}.
Let $V$ be the set of variables, and let $\mu$ denote the uniform distribution over all satisfying assignments. 
The paradigm effectively constructs a random pair $(M,X_M)$ where $M\subseteq V$ is a set of marked variables and $X_M$ is a random assignment of the marked variables in $M$, 
such that the random pair $(M,X_M)$ satisfies the so-called ``\emph{pre-Gibbs}'' property~\cite{guo2019counting}, which means that if we complete $X_M$ to an assignment $\Ass{X}$ of all variables in $V$ by sampling the complement $X_{V\setminus M}$ according to the marginal distribution induced by $\mu$ on ${V\setminus M}$ conditioning on $X_M$, then the resulting $\Ass{X}$ indeed follows the correct distribution $\mu$.
The paradigm may construct the marked set $M$ either non-adaptively to the random $X_M$ (as in~\cite{Moi19,FGYZ20,galanis2019counting} for CNFs), or adaptively to it (as in~\cite{guo2019counting} for hypergraph colorings).
The random pair $(M,X_M)$ can thus be jointly distributed, so that being pre-Gibbs does not necessarily mean that $X_M$ is distributed as the marginal distribution~$\mu_M$.
Indeed, it can be much more complicated than that.

In this paper, we introduce a novel technique called ``\textbf{\emph{state compression}}'' to bypass the connectivity barrier for general spaces of satisfaction solutions and obtain fast sampling algorithms.

For each variable $v\in V$ with domain $Q_v$, we construct a projection $h_v:Q_v\to\Sigma_v$ that maps from domain $Q_v$ to an alphabet $\Sigma_v$,
so that each assignment $\Ass{X}\in \Dom{Q} \triangleq \bigotimes_{v \in V}Q_v$ is mapped to a string $\Proj{h}(\Ass{X})\triangleq(h_v(\Ass{X}_v))_{v \in V}$ in $\Dom{\Sigma} \triangleq \bigotimes_v\Sigma_v$. 
Therefore, the LLL-distribution $\mu$ over satisfying assignments, is transformed to a joint distribution $\nu$ over $\Dom{\Sigma}$ as:
\[
\forall \Ass{Y} \in \Sigma, \quad \nu(\Ass{Y}) = \Pr[\Ass{X} \sim \mu]{ \Proj{h}(\Ass{X}) = \Ass{Y}}. 
\]

Our algorithm first simulates the Glauber dynamics with stationary distribution $\nu$ to draw a sample $\Ass{Y} \in \Sigma$ approximately according to $\nu$.
At each transition, the Glauber dynamics:
\begin{itemize}
\item picks a variable $v$ uniformly at random;
\item updates $Y_v$ by a random value sampled according to $\nu^{Y_{V \setminus \{v\}}}_v$, 
which stands for the marginal distribution at $v$ induced by $\nu$ conditioned on the assignment on $V \setminus \{v\}$ being fixed as $Y_{V \setminus \{v\} }$.
\end{itemize}
After running the Glauber dynamics for a sufficiently many $O(n\log n)$ steps, 
the algorithm generates a random string $\Ass{Y} \in \Sigma$ which hopefully is distributed approximately as $\nu$.
Finally, the algorithm still needs to ``invert'' the sampled string $\Ass{Y} \in \Sigma$ to a random satisfying assignment $\Ass{X} \in \Dom{Q}$ that follows the LLL-distribution $\mu$ conditioning on $\Proj{h}(\Ass{X})=\Ass{Y}$.

Both in the final step of the algorithm and at each transition of the Glauber dynamics, we are in fact trying to invert a completely specified string $\Ass{Y} \in \Sigma$ (or an almost completely specified string $Y_{V \setminus \{v\}}$) to a uniform random satisfying assignment $\Ass{X} \in \Dom{Q}$ within its pre-image $\Proj{h}^{-1}(\Ass{Y})$ (or that of $Y_{V \setminus \{v\}}$).
%

Therefore, the efficiency of above algorithmic framework for sampling  relies on that:
\begin{enumerate}
\item\label{item:rapid} the Glauber dynamics for $\nu$ mixes in $O(n\log n)$ steps;
\item\label{item:implement} there is a procedure that can efficiently invert a completely (or almost completely) specified string $\Ass{Y}$ to a uniform random satisfying assignment $\Ass{X} \in \Dom{Q}$ within the pre-image $\Proj{h}^{-1}(\Ass{Y})$.
\end{enumerate}

As we know, the original space of satisfying assignments $\Ass{X}\in\Dom{Q}$ may not be connected via the local updates used by the Glauber dynamics.
To achieve above \cref{item:rapid}, intuitively, the projection $\Proj{h}$ should be able to map many far-apart solutions $\Ass{X}, \Ass{X}' \in\Dom{Q}$ to the same $\Proj{h}(\Ass{X})=\Proj{h}(\Ass{X}')$, so the random walk in the projected space becomes well connected.
This suggests that
\emph{the projection $\Proj{h}$ should substantially compress the original state space}.
%
On the other hand, the above \cref{item:implement} is easier to solve when the projection $\Proj{h}$  is somehow close to a one-to-one mapping, because in such case, by assuming $\Proj{h}(\Ass{X})=\Ass{Y}$, the original LLL instance is very likely to be decomposed into small clusters.
%
%
%
This suggests that
\emph{the projection $\Proj{h}$ should not compress the original state space too much}.

The above two seemingly contradicting requirements can in fact be captured by a set of simple and local entropy constraints, formulated in~\Cref{condition-projection}.
A good projection $\Proj{h}$ satisfying these requirements can thus be constructed by algorithmic LLL.

The original mark/unmark paradigm can be treated as a special case of our approach of state compression.
Recall that the paradigm generates a pre-Gibbs pair $(M,X_M)$, where each variable $v\in V$ is either marked ($v\in M$) so that its value $X_v$ is revealed, or is unmarked ($v\not\in M$) so that its value $X_v$ is unrevealed.
This can be represented by a projection $\Proj{h}$ where for each marked $v$, the projection $h_v:Q_v\to\Sigma_v$ is a one-to-one mapping to $\Sigma_v$ where $\abs{\Sigma_v}=\abs{Q_v}$; and for each unmarked $v$, the projection $h_v:Q_v\to\Sigma_v$ is a all-to-one mapping to $\Sigma_v$ of size $\abs{\Sigma_v}=1$.
General projections provide a broad middle ground 
between the two extremal cases for the one-to-one and the all-to-one mappings, 
so that our technique is applicable to more general settings.
And for large enough $Q_v$'s, it indeed is such middle ground $h_v:Q_v\to\Sigma_v$ with $\abs{\Sigma_v}\approx\abs{Q_v}^{3/4}$ 
that resolves the problem well.

\subsection{Open problems}
An open problem is to remove the assumption on the atomicity of bad events.
In general, the LLL is defined by arbitrary bad events on arbitrary probability space.
The LLL distribution can thus be generalized. 
%
And the sampling LLL corresponds to the problems of sampling from non-uniform distributions or distributions arising  from  global constraints.
%

It is well-known that the Shearer’s bound is tight for general LLL~\cite{shearer85}.
A central open problem for sampling LLL is to find the ``Shearer’s bound'' for sampling LLL, namely, to give a tight condition under which one can efficiently draw random samples from general LLL distributions.

Even for interesting special classes of LLL instances such as $k$-CNFs or hypergraph colorings, the critical thresholds for the computational phase transition for sampling  are major open problems in the field of sampling algorithms.

\subsection{Organization of the paper}
Models and preliminaries are described in \Cref{section-model}.
The rules for state compression are given in \Cref{section-proj-of-LLL}.
The main sampling algorithm is described in \Cref{section-algorithm}.
In \Cref{section-proof-main}, we prove all main results in \Cref{section-intro}.
In \Cref{section-projection-construction}, we give the algorithms for constructing projections.
In \Cref{section-subroutine}, we analyze the inverse sampling subroutine.
The rapid mixing of the Markov chain is proved in \Cref{section-mixing}.

\pagebreak
\section{Models and preliminaries}
\label{section-model}
\subsection{CSP formulas defined by atomic bad events}
Let $V$ be a set of variables with finite domains $(Q_v)_{v\in V}$, where each $v \in V$ takes its value from $Q_v$ with $\abs{Q_v}\geq 2$.
Let $\Dom{Q}\triangleq\bigotimes_{v \in V}Q_v$ denote the space for all assignments, and for any subset $\Lambda\subseteq V$, denote $Q_{\Lambda}\triangleq\bigotimes_{v \in \Lambda}Q_v$.
Let $\Cons{C}$ be a collection of \emph{local constraints}, where each $c\in\Cons{C}$ is defined on a subset of variables $\vbl{c}\subseteq V$ that  maps every assignment $\Ass{x}_{\vbl{c}}\in Q_{\vbl{c}}$ to a $\True$ or $\False$, which indicates whether $c$ is  \emph{satisfied} or \emph{violated}.
A CSP (constraint-satisfaction problem) formula $\Phi$ is specified by the tuple $(V, \Dom{Q}, \Cons{C})$ such that:
\begin{align*}
\forall\Ass{x}\in\Dom{Q},\qquad \Phi(\Ass{x}) = \bigwedge_{c \in \Cons{C}}c\left(\Ass{x}_{\vbl{c}}\right),
\end{align*}
where $\Ass{x}_{\vbl{c}}$ denotes the restriction of $\Ass{x}$ on $\vbl{c}$.
In LLL's language, each $c\in\Cons{C}$ corresponds to a \emph{bad event} $A_c$ defined on $\vbl{c}$ that occurs if $c$ is violated, and $\Phi$ is satisfied by $\Ass{x}$ if and only if none of these bad events occurs.

In this paper, we restrict ourselves to the CSP formulas defined by {atomic bad events}. 
A constraint $c$ defined on $\vbl{c}$ is called \emph{atomic} if $|c^{-1}(\False)|=1$, that is, if $c$ is violated by a unique ``forbidden configuration'' in $Q_{\vbl{c}}$.
%
Such CSP formulas with atomic constraints have drawn studies in the context of LLL~\cite{achlioptas2016random,HH17,harris2017algorithmic,kolmogorov2018commutativity,Harris19Obl,achlioptas2019beyond,HS19,harvey2020algorithmic}. 
Similar classes of CSP formulas have also been studied under the name ``{multi-valued/non-Boolean CNF formulas}'' in the field of classic Artificial Intelligence~\cite{LiuKM03,FrischP01}.
Clearly, any general constraint $c$ on $\vbl{c}$ can be simulated by $|c^{-1}(\False)|$ atomic constraints, each forbidding a configuration in $c^{-1}(\False)$.

The \emph{dependency graph} of a CSP formula $\Phi=(V,\Dom{Q},\Cons{C})$ is defined on the vertex set $\Cons{C}$, such that any two constraints $c,c'\in C$ are adjacent if $\vbl{c}$ and $\vbl{c'}$ intersect.
We use $\Gamma(c)\triangleq\{c'\in\Cons{C}\setminus\{c\}\mid  \vbl{c} \cap \vbl{c'} \neq \emptyset\}$ to denote the neighborhood of $c\in\Cons{C}$ and let
\begin{align*}
D =D_{\Phi}\triangleq \max_{c \in \Cons{C}}\abs{\Gamma(c)}	
\end{align*}
denote the maximum degree of the dependency graph.

The followings are some typical special cases of CSP formulas with atomic constraints.

\subsubsection{$k$-CNF formula}
The CNF formulas $\Phi=(V,\Dom{Q},\Cons{C})$ are formulas with atomic constraints on Boolean domains $Q_v=\{\True,\False\}$, for all $v\in V$.
Now each constraint $c\in\Cons{C}$ is a \emph{clause}.
For $k$-CNF formulas, we have $\abs{\vbl{c}}=k$ for all clauses $c\in\Cons{C}$.

\subsubsection{Hypergraph coloring}
Let $H=(V,\+E)$ be a $k$-uniform hypergraph, where every hyperedge $e\in\+E$ has $|e|=k$.
Let $[q] = \{1,2,\ldots,q\}$ be a set of $q$ colors.
A proper hypergraph coloring $\Ass{X} \in [q]^V$ assigns each vertex $v \in V$ a color $X_v$ such that no hyperedge is monochromatic.

Define the following set  $\Cons{C}$ of atomic constraints. 
For each hyperedge $e \in \+E$ and color $i \in  [q]$, add an atomic constraint  $c_{e,i}$ into $\Cons{C}$, where $c_{e,i}$ is defined as $\vbl{c_{e,i}} = e$ and for any $\Ass{x}\in[q]^e$, $c_{e,i}(\Ass{x}) = \False $ if and only if ${x}_v = i$ for all $v \in e$.
It is straightforward to see that there is a one-to-one correspondence between the proper $q$-colorings in $H$ and the satisfying assignments to $\Phi = (V,[q]^V, \Cons{C})$.

\subsection{Lov\'asz local lemma}
Let $\mathcal{R}=\{R_1,R_2,\ldots,R_n\}$ be a collection of mutually independent random variables.
For any event $E$, denote by $\vbl{E}\subseteq \+R$ the set of variables determining $E$.
In other words, changing the values of variables outside of $\vbl{E}$ does not change the truth value of $E$.
Let $\mathcal{B} = \{B_1,B_2,\ldots, B_n\}$ be a collection of ``bad'' events.
%
%
For each event $B \in \mathcal{B}$, we define $\Gamma(B) \defeq \set{B'\in \mathcal{B} \mid B' \neq B \text{ and } \vbl{B'} \cap \vbl{B} \ne \emptyset }$.
For any event $A\notin \mathcal{B}$ and its determining variables $\vbl{A} \subseteq \mathcal{R}$,
we define $\Gamma(A) \triangleq \{B \in \mathcal{B} \mid \vbl{A} \cap \vbl{B} \ne \emptyset \}$.
%
%
Let $\Pr[\+D]{\cdot}$ denote the product distribution of variables in
$\mathcal{R}$. 
The following version of the Lov\'asz local lemma will be used in this paper.
\begin{theorem}[~\cite{haeupler2011new}]
\label{theorem-LLL}
If there is a function $x: \mathcal{B} \rightarrow (0,1)$ such that for any $B \in \mathcal{B}$,
\begin{align}\label{eqn:LLL}
  \Pr[\+D]{B} \leq x(B) \prod_{B' \in \Gamma(B)}(1-x(B')),	
\end{align}
then it holds that
\begin{align*}
  \Pr[\+D]{\bigwedge_{B \in \mathcal{B}} \overline{B} } \geq \prod_{B \in \mathcal{B}}(1 - x(B)) > 0. 	
\end{align*}
Thus, there exists an assignment of all variables that avoids all the bad events.

Moreover, for any event $A$, it holds that
\begin{align*}
\Pr[\+D]{A \,\big|\, \bigwedge_{B \in \mathcal{B}} \overline{B} } \leq \Pr[\+P]{A} \prod_{B \in \Gamma(A)}(1- x(B))^{-1}.	
\end{align*}
\end{theorem}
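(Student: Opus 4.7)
The plan is to prove this classical LLL statement by the usual inductive argument on the size of a conditioning set. The central claim to establish by induction on $|S|$ is: for every $S\subseteq\mathcal{B}$ and every $B\in\mathcal{B}\setminus S$,
$$\Pr[\+D]{B\,\Big|\,\bigwedge_{B'\in S}\overline{B'}}\le x(B).$$
The base case $|S|=0$ follows directly from the hypothesis~\eqref{eqn:LLL}, since each factor $(1-x(B'))$ lies in $(0,1)$. For the inductive step, I would split $S=S_1\sqcup S_2$ according to whether each element lies in $\Gamma(B)$, and write
$$\Pr[\+D]{B\mid\bigwedge_{B'\in S}\overline{B'}}=\frac{\Pr[\+D]{B\wedge\bigwedge_{B'\in S_1}\overline{B'}\mid\bigwedge_{B'\in S_2}\overline{B'}}}{\Pr[\+D]{\bigwedge_{B'\in S_1}\overline{B'}\mid\bigwedge_{B'\in S_2}\overline{B'}}}.$$
The numerator is bounded by $\Pr[\+D]{B}$ using independence of $B$ from $\vbl{B'}$ for $B'\in S_2$, which in turn is at most $x(B)\prod_{B'\in\Gamma(B)}(1-x(B'))$ by hypothesis. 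For the denominator, I enumerate $S_1=\{B_1',\dots,B_t'\}$ and apply the chain rule together with the inductive hypothesis (applied to each $B_i'$ with conditioning set $S_2\cup\{B_1',\dots,B_{i-1}'\}$) to obtain a lower bound of $\prod_{B'\in S_1}(1-x(B'))$. Cancelling common factors leaves exactly $x(B)$.

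Once the inductive claim is in hand, the main inequality falls out of the chain rule applied to $\mathcal{B}$ in any fixed order $B_1,\dots,B_m$:
$$\Pr[\+D]{\bigwedge_{B\in\mathcal{B}}\overline{B}}=\prod_{i=1}^{m}\Big(1-\Pr[\+D]{B_i\mid\bigwedge_{j<i}\overline{B_j}}\Big)\ge\prod_{B\in\mathcal{B}}(1-x(B))>0,$$
which yields existence of a satisfying assignment. The ``moreover'' bound on $\Pr[\+D]{A\mid\bigwedge_{B\in\mathcal{B}}\overline{B}}$ comes from the same conditioning trick applied to $A$: express this probability as a ratio in which one conditions on $\bigwedge_{B\notin\Gamma(A)}\overline{B}$, upper-bound the numerator by $\Pr[\+D]{A}$ using independence of $A$ from events outside $\Gamma(A)$, and lower-bound the denominator by $\prod_{B\in\Gamma(A)}(1-x(B))$ by invoking the inductive claim once more with the extra background conditioning in force.

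The only subtle point is that the induction must tolerate a background conditioning event being carried along (namely $\bigwedge_{B'\in S_2}\overline{B'}$ in the main claim, and $\bigwedge_{B\notin\Gamma(A)}\overline{B}$ in the ``moreover'' part). This presents no real difficulty, because the independence facts invoked at each step survive conditioning on events whose determining variables are disjoint from the variables currently under consideration; formalizing the argument just requires stating the inductive claim with an explicit placeholder for such a background event, or equivalently re-running the induction within the conditional probability space. Beyond this bookkeeping the proof is entirely standard and decoupled from the state-compression machinery developed later in the paper.
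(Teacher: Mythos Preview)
Your proposal is the standard, correct inductive proof of the general Lov\'asz local lemma. Note, however, that the paper does not supply its own proof of this statement: Theorem~\ref{theorem-LLL} is quoted as a cited result from~\cite{haeupler2011new} and is used as a black box throughout the analysis, so there is no in-paper proof to compare against.
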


\subsection{Coupling, Markov chain and mixing time}
Let $\Omega$ be a state space. Let $\mu$ and $\nu$ be two distributions over $\Omega$.
The \emph{total variation distance} between $\mu$ and $\nu$ are defined by
\begin{align*}
\DTV{\mu}{\nu} \triangleq \frac{1}{2} \sum_{x \in \Omega}\abs{\mu(x) - \nu(x)}.	
\end{align*}
A coupling of $\mu$ and $\nu$ is a joint distribution $(X, Y) \in \Omega \times \Omega$ such that the marginal distribution of $X$ is $\mu$ and the marginal distribution of $Y$ is $\nu$.
The following coupling lemma is well-known. 
\begin{lemma}[coupling lemma~\text{\cite[Proposition~4.7]{levin2017markov}}]
\label{lemma-coupling-ineq}
For any coupling $(X,Y)$ between $\mu$ and $\nu$,
\begin{align*}
\DTV{\mu}{\nu} \leq \Pr[]{X \neq Y}.	
\end{align*}
Moreover, there exists an \emph{optimal coupling} that achieves the equality.
\end{lemma}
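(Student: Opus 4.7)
The plan is to prove the two halves separately: first the inequality $\DTV{\mu}{\nu} \leq \Pr{X \neq Y}$ for every coupling, then the existence of an optimal coupling that saturates it by explicitly constructing the so-called maximal coupling.

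For the inequality, I would start from the identity $\DTV{\mu}{\nu} = \max_{A \subseteq \Omega}\abs{\mu(A) - \nu(A)}$, which is standard and follows directly from expanding $\sum_{x}\abs{\mu(x)-\nu(x)}$ and splitting the sum at $A = \{x : \mu(x) \geq \nu(x)\}$. Fix such an event $A$, and let $(X,Y)$ be any coupling of $(\mu,\nu)$. Then
\begin{align*}
\mu(A) - \nu(A) &= \Pr{X \in A} - \Pr{Y \in A} \\
&= \Pr{X \in A,\, X = Y} + \Pr{X \in A,\, X \neq Y} - \Pr{Y \in A,\, X = Y} - \Pr{Y \in A,\, X \neq Y}.
\end{align*}
The two ``$X = Y$'' terms cancel, and dropping the nonpositive term $-\Pr{Y \in A,\, X \neq Y}$ gives $\mu(A) - \nu(A) \leq \Pr{X \in A,\, X \neq Y} \leq \Pr{X \neq Y}$. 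By symmetry the same bound applies with $\mu$ and $\nu$ swapped, and taking the maximum over $A$ yields $\DTV{\mu}{\nu} \leq \Pr{X \neq Y}$.

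For the ``moreover'' part, I would build the maximal coupling explicitly. Define $\omega(x) \triangleq \min\{\mu(x), \nu(x)\}$ for each $x \in \Omega$, and set $p \triangleq \sum_{x \in \Omega}\omega(x)$. A quick calculation gives $p = 1 - \DTV{\mu}{\nu}$, using the fact that $\mu(x) + \nu(x) - 2\min\{\mu(x),\nu(x)\} = \abs{\mu(x) - \nu(x)}$. Now sample a Bernoulli $B$ with $\Pr{B = 1} = p$. If $B = 1$, draw $Z$ from the normalized distribution $\omega / p$ and output $(X,Y) = (Z,Z)$; if $B = 0$, independently draw $X$ from $(\mu - \omega)/(1-p)$ and $Y$ from $(\nu - \omega)/(1-p)$ (both are valid probability distributions because $\mu - \omega$ and $\nu - \omega$ are nonnegative with total mass $1 - p$, and the degenerate case $p = 1$ just corresponds to $\mu = \nu$). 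A direct computation of the marginals shows $(X,Y)$ is a coupling of $(\mu,\nu)$, and by construction $X = Y$ whenever $B = 1$, so $\Pr{X \neq Y} \leq 1 - p = \DTV{\mu}{\nu}$. Combined with the inequality already proved, equality holds.

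There is no real obstacle here: the inequality direction is an event-by-event accounting, and the optimal coupling is a textbook construction whose only subtle point is verifying that $p = 1 - \DTV{\mu}{\nu}$ and handling the boundary case $p = 1$. The one thing I would be careful about is writing $\Omega$-summations assuming a discrete or at least countable state space, which matches the setting of the paper (assignments in a finite product of finite domains), so no measure-theoretic overhead is needed.
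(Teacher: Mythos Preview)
Your proof is correct and entirely standard. Note, however, that the paper does not actually prove this lemma: it is stated as a well-known result and cited directly from Levin--Peres--Wilmer, so there is no paper proof to compare against. Your argument is essentially the textbook proof found in that reference, with the inequality handled via the event-maximization characterization of total variation distance and the optimal coupling constructed as the maximal (greedy) coupling.
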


A \emph{Markov chain} is a random sequence $(X_t)_{t \geq 0}$ over a state space $\Omega$ such that the transition rule is specified by the \emph{transition matrix} $P: \Omega \times \Omega \to \mathbb{R}_{\geq 0}$.
We often use the transition matrix to denote the corresponding Markov chain.
The Markov chain $P$ is \emph{irreducible} if for any $X, Y \in \Omega$, there exists $t > 0$ such that $P^t(X, Y) > 0$. 
The Markov chain $P$ is \emph{aperiodic} if $\gcd\{t \mid P^t(X,X) > 0\} = 1$ for all $X \in \Omega$.
A distribution $\pi$ over $\Omega$ is a \emph{stationary distribution} of $P$ if $\pi P = \pi$.
If a Markov chain is irreducible and aperiodic, then it has a unique stationary distribution.
The Markov chain $P$ is \emph{reversible} with respect to the distribution $\pi$ if the following \emph{detailed balance equation} holds
\begin{align*}
\forall X, Y \in \Omega: \quad \pi(X)P(X, Y) = \pi(Y)P(Y,X),	
\end{align*}
which implies $\pi$ is a stationary distribution of $P$.
Given a Markov chain $P$ with the unique stationary distribution $\pi$, the \emph{mixing time} of $P$ is defined by
\begin{align*}
\forall 0 < \epsilon < 1, \quad \tmix(\epsilon) \triangleq \max_{X_0 \in \Omega} \min\{t \mid \DTV{P^t(X_0,\cdot)}{ \pi} \leq \epsilon \}.	
\end{align*}

A coupling of Markov chain $P$ is a joint random process $(X_t,Y_t)_{t \geq 0}$ such that both $(X_t)_{t \geq 0}$ and $(Y_t)_{t \geq 0}$ follow the transition rule of $P$ individually, and if $X_s = Y_s$, then $X_k = Y_k$ for all $k \geq s$. The coupling is a widely-used tool to bound the mixing times of Markov chains, because by the coupling lemma, it holds that $\max_{X_0 \in \Omega}\DTV{P^t(X_0,\cdot)}{\pi} \leq \max_{X_0,Y_0 \in \Omega}\Pr[]{X_t \neq Y_t}$.

The \emph{path coupling}~\cite{bubley1997path} is a powerful tool to construct the coupling of Markov chains. Assume $\Omega = \bigotimes_{v \in V}Q_v$, where $\abs{V} = n$ and each $Q_v$ is a finite domain.  
For any $X,Y \in \Omega$,
define the \emph{Hamming distance} between $X$ and $Y$ by
\begin{align*}
d_{\mathrm{ham}}(X,Y) \triangleq \abs{\{v \in V \mid X_v \neq Y_v\}}.	
\end{align*}
In this paper, we will use the following simplified version of path coupling.
\begin{lemma}[path coupling~\cite{bubley1997path}]
\label{lemma-path-coupling}
Let $0 < \delta < 1$ be a parameter. 
Let $P$ be an irreducible and aperiodic Markov chain over the state space $\Omega = \bigotimes_{v \in V}Q_v$, where $\abs{V} = n$ . If there is a coupling of Markov chain $(X, Y) \to (X',Y')$ defined over all $X, Y \in \Omega$ with $d_{\mathrm{ham}}(X, Y) = 1$ such that 
\begin{align*}
\E[]{d_{\mathrm{ham}}(X', Y') \mid X, Y} \leq 1 - \delta,
\end{align*}
then the mixing time of the Markov chain satisfies
\begin{align*}
\tmix(\epsilon) \leq \ctp{\frac{1}{\delta}\log\frac{n}{\epsilon}}.	
\end{align*}
\end{lemma}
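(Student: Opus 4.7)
The plan is to prove this by the classical path coupling argument of Bubley and Dyer~\cite{bubley1997path}. First, I would extend the hypothesized one-step coupling, which is specified only on pairs at Hamming distance one, to a one-step coupling on every pair $(X, Y) \in \Omega \times \Omega$. Given $(X, Y)$ with $d_{\mathrm{ham}}(X, Y) = r$, I would choose a path $X = Z_0, Z_1, \ldots, Z_r = Y$ in the Hamming graph whose consecutive elements differ in exactly one coordinate, and then sequentially compose the hypothesized couplings on each pair $(Z_i, Z_{i+1})$ on a common probability space so that the conditional distribution of $Z'_{i+1}$ given $Z'_i$ matches the unit-distance coupling. The resulting law of $(X', Y') \defeq (Z'_0, Z'_r)$ is a valid one-step coupling of $P(X, \cdot)$ and $P(Y, \cdot)$, because the correct marginals are preserved at each link of the chain.

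Second, by the triangle inequality for the Hamming metric together with linearity of expectation along the path,
\[
\E[]{d_{\mathrm{ham}}(X', Y') \mid X, Y} \;\leq\; \sum_{i=0}^{r-1} \E[]{d_{\mathrm{ham}}(Z'_i, Z'_{i+1}) \mid Z_i, Z_{i+1}} \;\leq\; r(1-\delta) \;=\; (1-\delta)\,d_{\mathrm{ham}}(X, Y),
\]
where the last bound applies the hypothesis to each unit-distance pair along the path. Thus the extended coupling contracts Hamming distance in expectation by a factor of $1 - \delta$ per step, regardless of the initial distance.

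Third, iterating this contraction from any initial pair $(X_0, Y_0)$ with $d_{\mathrm{ham}}(X_0, Y_0) \leq n$ gives
\[
\E[]{d_{\mathrm{ham}}(X_t, Y_t)} \;\leq\; n(1-\delta)^t \;\leq\; n\,\e^{-\delta t}.
\]
Since $X_t \neq Y_t$ if and only if $d_{\mathrm{ham}}(X_t, Y_t) \geq 1$, Markov's inequality yields $\Pr[]{X_t \neq Y_t} \leq n\,\e^{-\delta t}$. Choosing $t = \ctp{\frac{1}{\delta}\log\frac{n}{\epsilon}}$ makes the right-hand side at most $\epsilon$, and taking $Y_0 \sim \pi$ so that $Y_t \sim \pi$ for every $t$, Lemma~\ref{lemma-coupling-ineq} gives $\DTV{P^t(X_0, \cdot)}{\pi} \leq \epsilon$ for every initial state $X_0$, which is precisely the claimed mixing time bound.

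The main subtle step is the construction of the extended coupling: the random variables $Z'_0, Z'_1, \ldots, Z'_r$ must be realized jointly on a single probability space so that the endpoint pair $(Z'_0, Z'_r)$ has the required marginals $P(X, \cdot)$ and $P(Y, \cdot)$. Once the sequential composition is arranged so that each conditional $Z'_{i+1} \mid Z'_i$ matches the hypothesized unit-distance coupling, these endpoint marginals follow automatically by telescoping; the remainder of the argument is routine expectation bookkeeping together with a single application of Markov's inequality and the coupling lemma.
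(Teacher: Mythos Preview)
Your proposal is correct and gives the standard path coupling argument. The paper itself does not prove this lemma: it is stated as a known result with a citation to Bubley and Dyer~\cite{bubley1997path}, so there is no proof in the paper to compare against.
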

Readers can refer to the textbook~\cite{levin2017markov} for more backgrounds of Markov chains and mixing times.

\pagebreak

\section{state compression}
\label{section-proj-of-LLL}
A CSP formula $\Phi=(V,\Dom{Q},\Cons{C})$ with uniformly distributed random variables defines an LLL instance.
\begin{definition}[LLL-distribution]
\label{def:LLL-distr}
For each $v\in V$, let $\pi_v$ denote the uniform distribution over domain $Q_v$. 
Let $\pi\triangleq\bigotimes_{v\in V}\pi_v$ be the uniform distribution over $\Dom{Q}$.
Let $\mu=\mu_{\Phi}$ denote the distribution of $\Ass{X}\sim\pi$ conditioned on $\Phi(\Ass{X})$, that is, the uniform distribution over satisfying solutions of $\Phi$.
\end{definition}
This distribution $\mu$ over satisfying solutions of $\Phi$ is what we want to sample from.
In order to do so, this uniform probability space of satisfying solutions is transformed by a projection.
A \emph{projection scheme} $\Proj{h}=(h_v)_{v\in V}$ specifies for each $v\in V$, a mapping from $v$'s domain $Q_v$ to a finite \emph{alphabet} $\Sigma_v$:
\begin{align*}
h_v:Q_v\to\Sigma_v.
\end{align*}
Let $\Dom{\Sigma}\triangleq\bigotimes_{v\in V} \Sigma_v$, 
and for any $\Lambda\subseteq V$, we denote $\Sigma_\Lambda\triangleq\bigotimes_{v\in \Lambda} \Sigma_v$.
%
%

We  also naturally interpret $\Proj{h}$ as a function on (partial) assignments such that 
\begin{align}
\label{eq-def-proj-set}
\forall \Lambda\subseteq V, \forall \Ass{x}\in Q_\Lambda,\qquad
\Proj{h}(\Ass{x})\triangleq(h_v(x_v))_{v\in \Lambda}.
\end{align}

%


\begin{definition}[projected LLL-distribution]
\label{def:proj-distr}
For each $v\in V$, let $\rho_v$ be the distribution of $Y_v=h_v(X_v)$ where $X_v\sim\pi_v$. 
%
Let $\rho\triangleq\bigotimes_{v\in V}\rho_v$ be the product distribution over $\Dom{\Sigma}$.

For each $v\in V$ and any $y_v\in\Sigma_v$, let $\pi_v^{y_v}$ denote the distribution of $X_v\sim\pi_v$ conditioned on $h_v(X_v)=y_v$.
For any $\Lambda\subseteq V$ and ${y}_\Lambda\in\Sigma_\Lambda$, let $\pi^{{y}_{\Lambda}}$ be the distribution of $\Ass{X}\sim\pi$ conditioned on $\Proj{h}(\Ass{X}_{\Lambda})={y}_{\Lambda}$.

Let $\nu=\nu_{\Phi,\Proj{h}}$ denote the distribution of $\Ass{Y}=\Proj{h}(\Ass{X})$ where $\Ass{X}\sim\mu$. 
\end{definition}

Note that the original LLL-distribution $\mu$ is a \emph{Gibbs} distribution~\cite{mezard2009information}, defined by local constraints on independent random variables.
Whereas, the distribution $\nu$ of projected satisfying solution, is a joint distribution over $\Sigma$, which may no longer be a Gibbs distribution nor can it be represented as any LLL instance, because $\Ass{x},\Ass{x}'\in Q$ with $\Phi(\Ass{x})\neq\Phi(\Ass{x}')$ may be mapped to the same $\Proj{h}(\Ass{x})=\Proj{h}(\Ass{x}')$. 

In the algorithm, a projection scheme $\Proj{h}=(h_v)_{v\in V}$ is accessed through the following oracle.
\begin{definition}[projection oracle]\label{def:projection-oracle}
A \emph{projection oracle} with query cost $t$ for a projection scheme $\Proj{h}=(h_v)_{v\in V}$ is a data structure that can answer each of the following two types of queries within time $t$:
\begin{itemize}
\item \emph{evaluation}: given an input value $x_v \in Q_v$ of a variable $v\in V$, output  $h_v(x_v) \in \Sigma_v$;
\item \emph{inversion}: given a projected value $y_v\in \Sigma_v$ of a variable $v \in V$, return a random $X_v\sim\pi_v^{y_v}$.
\end{itemize}
\end{definition}

Our algorithm for sampling a uniform random satisfying solution is then outlined below.

  \par\addvspace{.5\baselineskip}
\framebox{
  \noindent
  \begin{tabularx}{14cm}{@{\hspace{\parindent}} l X c}
    \multicolumn{2}{@{\hspace{\parindent}}l}{\underline{Algorithm for sampling from $\mu$}} \\
  1. & Construct a good projection scheme $\Proj{h}$ (formalized by Condition~\ref{condition-projection});\\
  2. & sample a uniform random $\Ass{X}\sim\pi$ and let $\Ass{Y}=\Proj{h}(\Ass{X})$;\\
  3. & (Glauber dynamics on $\nu$) repeat the followings for sufficiently many iterations:\\
      & \qquad pick a $v\in V$ uniformly at random;\\
      & \qquad update $Y_v$ by redrawing its value independently according to $\nu_v^{Y_{V\setminus\{v\}}}$;\\
  4. & sample $\Ass{X}\sim\mu$ conditioned on $\Proj{h}(\Ass{X})=\Ass{Y}$.
  \end{tabularx}
 }
\par\addvspace{.5\baselineskip}

The algorithm simulates a Markov chain (known as the Glauber dynamics) on space $\Dom{\Sigma}$ for drawing a random configuration $\Ass{Y}\in\Sigma$ approximately according to the joint distribution $\nu$,
after which, the algorithm ``inverts'' $\Ass{Y}$ to a uniform random satisfying assignment $\Ass{X}$ for $\Phi$ within the pre-image $\Proj{h}^{-1}(\Ass{Y})$.

The key to the effectiveness of this sampling algorithm is that we should be able to sample accurately and efficiently from $\nu_v^{Y_{V\setminus\{v\}}}$ (which is the marginal distribution at $v$ induced by $\nu$ conditioning on that the configuration on ${V\setminus\{v\}}$ being fixed as $Y_{V\setminus\{v\}}$) as well as from $\mu^{\Ass{Y}}$ (which is the distribution of $\Ass{X}\sim\mu$ conditioned on that $\Proj{h}(\Ass{X})=\Ass{Y}$). 
In fact, both of these are realized by sampling generally from the following marginal  distribution $\mu_S^{y_{\Lambda}}$, for $S\subseteq V$ and $y_{\Lambda}\in\Sigma_{\Lambda}$, where either $\Lambda= V$ or $|\Lambda|= |V|-1$.
\begin{align}
\mu_{S}^{{y}_{\Lambda}}
&:
\text{ distribution of ${X}_S$, where $\Ass{X}\in\Dom{Q}$ is drawn from $\mu$ conditioning on that $\Proj{h}({X}_{\Lambda})={y}_{\Lambda}$.}\label{eq:conditional-marginal-distribution}
\end{align}
The distribution $\mu^Y$ corresponds to the special case of $\mu_{S}^{{y}_{\Lambda}}$ with $S=\Lambda=V$.
And also we can sample from $\nu_v^{Y_{V\setminus\{v\}}}$ by first sampling a $X_v\sim\mu_{v}^{Y_{V\setminus\{v\}}}\triangleq \mu_{\{v\}}^{Y_{V\setminus\{v\}}}$ and then outputting $h_v(X_v)$.

Since $y_{\Lambda}$ is either completely or almost completely specified on $V$, sampling from $\mu_{S}^{{y}_{\Lambda}}$ is essentially trying to invert $y_{\Lambda}$ according to distribution $\mu$.
And this task becomes tractable when the projection $\Proj{h}$ is somehow close to a 1-1 mapping, i.e.~when $\Proj{h}(\Ass{X})$'s entropy remains significant compared to $\Ass{X}\sim\mu$.

On the other hand, the efficiency of the sampling algorithm relies on the mixing of the Markov chain for sampling from $\nu$.
It was known that the original state space of all satisfying solutions might not be well connected through single-site updates~\cite{wigderson2019book,FGYZ20}.
The projection may increase the connectivity of the state space by mapping many far-apart satisfying solutions to the same configuration in $\Sigma$, but this means that  the projection $\Proj{h}$ should not be too close to a 1-1 mapping.
In other words, the projection $\Proj{h}(\Ass{X})$ shall reduce the entropy of $\Ass{X}\sim\mu$ by a substantial amount.

These two seemingly contradicting requirements are formally captured by the following condition.

\begin{condition}[entropy criterion]
\label{condition-projection}
Let $0<\beta< \alpha<1$ be two parameters.
The followings hold for the CSP formula $\Phi = (V,\Dom{Q},\Cons{C})$ and the projection scheme $\Proj{h}$.
For each $v\in V$, let $q_v\triangleq|Q_v|$ and $s_{v}\triangleq|\Sigma_v|$.
The projection $\Proj{h}$ is \emph{balanced}, which means for any $v\in V$ and $y_v\in \Sigma_v$, 
\begin{align*}
\ftp{\frac{q_v}{s_v}} \leq \abs{h^{-1}_v(y_v)} \leq \ctp{\frac{q_v}{s_v}}.
\end{align*}
And for any constraint $c \in \Cons{C}$, it holds that
\begin{align}
\sum_{v \in \vbl{c}}\log\ctp{\frac{q_v}{s_v}}	 &\leq  \alpha\sum_{v \in \vbl{c}}\log q_v, \label{eq:entropy-lower-bound}\\
\sum_{v \in \vbl{c}}\log\ftp{\frac{q_v}{s_v}} &\geq \beta\sum_{v \in \vbl{c}}\log q_v.\label{eq:entropy-upper-bound}
\end{align}
\end{condition}
Note that for uniform random variable $X_v\in Q_v$, the entropy $H(X_v)=\log q_v$, and for $Y_v=h_v(X_v)$ where $\Proj{h}$ is balanced, we have $\log\frac{q_v}{\ctp{q_v/s_v}}\le H(Y_v)\le \log\frac{q_v}{\ftp{q_v/s_v}}$.
Therefore, the two inequalities \eqref{eq:entropy-lower-bound} and \eqref{eq:entropy-upper-bound} are in fact slightly stronger versions of the entropy upper and lower bounds for $\Ass{X}\sim\pi$:
\[
(1-\alpha)\sum_{v \in \vbl{c}}H(X_v)\le \sum_{v \in \vbl{c}}H(h_v(X_v)) \le (1-\beta)\sum_{v \in \vbl{c}}H(X_v).
\]

So how may such a projection satisfying Condition~\ref{condition-projection}  change the properties of a solution space and help sampling?
Next, we introduce two consequent conditions of Condition~\ref{condition-projection} to explain this.

Recall that after projection, the joint distribution $\nu$ over projected solutions may no longer be represented by any LLL instance.
Nevertheless, we can modify it to a valid LLL instance by proper rounding.

\begin{definition}[the ``round-down'' CSP formula]
\label{def:round-down}
Given a CSP formula $\Phi=(V,\Dom{Q},\Cons{C})$ and a projection scheme $\Proj{h}=(h_v)_{v\in V}$, 
%
let CSP formula $\RDForm{\Phi}{\Proj{h}}=(V,\Dom{\Sigma},\RDForm{\Cons{C}}{\Proj{h}})$ be constructed as follows:
\begin{itemize}
\item 
the variable set is still $V$ and each variable $v\in V$ now takes values from $\Sigma_v$;
\item 
corresponding to each constraint $c\in\Cons{C}$ of $\Phi$, a constraint $c'\in\RDForm{\Cons{C}}{\Proj{h}}$ 
is constructed as follows:
\begin{align*}
&\vbl{c'}=\vbl{c}
\quad\text{and}\\
&\forall \Ass{y}\in\Sigma_{\vbl{c'}},\quad
c'\left(\Ass{y}\right)
=
\begin{cases}
\True & \text{if }c(\Ass{x})\text{ for all }\Ass{x}\in\Omega_{\vbl{c}}\text{ that }\Proj{h}(\Ass{x})=\Ass{y},\\
\False & \text{if }\neg c(\Ass{x})\text{ for some }\Ass{x}\in\Omega_{\vbl{c}}\text{ that }\Proj{h}(\Ass{x})=\Ass{y}.
\end{cases}
\end{align*}
\end{itemize}
\end{definition}
The CSP formula $\RDForm{\Phi}{\Proj{h}}$ is considered a ``\emph{round-down}'' version of the CSP formula $\Phi$ under projection $\Proj{h}$, because it always holds that
$c'(\Ass{y})=\ftp{\Pr[\Ass{X}\sim\pi]{c\left(\Ass{X}_{\vbl{c}}\right)\mid \Proj{h}\left(\Ass{X}_{\vbl{c}}\right)=\Ass{y}}}$ for all $\Ass{y}\in\Sigma_{\vbl{c'}}=\Sigma_{\vbl{c}}$.

Recall that the following ``LLL condition'' is assumed for the LLL instance defined by  CSP formula $\Phi=(V,\Dom{Q},\Cons{C})$ on uniform random variables $\Ass{X}\sim\pi$:
\begin{align}
\ln\frac{1}{p}>A\ln D+B, \quad \text{(for some suitable constants $A$ and $B$)}\label{eq:abstract-LLL-condition}
\end{align}
where $p\triangleq\max_{c \in\Cons{C}} \Pr[\Ass{X}\sim\pi]{\neg c\left(\Ass{X}_{\vbl{c}}\right)}$ denotes the maximum probability that a constraint $c\in\Cons{C}$ is violated and $D$ denotes the maximum degree of the dependency graph. 

For CSP formula $\Phi$ defined by atomic constraints, 
the LLL condition~\eqref{eq:abstract-LLL-condition} and the inequality~\eqref{eq:entropy-lower-bound} in Condition~\ref{condition-projection} together imply the following condition.

\begin{condition}[round-down LLL criterion]\label{round-down-LLL-condition}
The LLL instance defined by the round-down CSP formula $\RDForm{\Phi}{\Proj{h}}=(V,\Dom{\Sigma},\RDForm{\Cons{C}}{\Proj{h}})$ on variables distributed as $\rho$, satisfies that
\begin{align*}
\ln{\frac{1}{p}} >(1-\alpha)(A \ln D+B),
\end{align*}
where $p\triangleq \max_{c \in \RDForm{\Cons{C}}{\Proj{h}}}\Pr[\Ass{Y}\sim\rho]{\neg c\left(\Ass{Y}_{\vbl{c}}\right)}$ and $D$ denotes the maximum degree of the dependency graph.
\end{condition}

The projection $\Proj{h}$ may map both satisfying $\Ass{x}\in\Dom{Q}$ and unsatisfying $\Ass{x}'\in\Dom{Q}$ to the same $\Proj{h}(\Ass{x})=\Proj{h}(\Ass{x}')\in\Sigma$, which causes ambiguity for classifying those ``satisfying'' $\Ass{y}\in\Sigma$.
The round-down CSP formula resolves such ambiguity with a pessimistic mindset: it refutes any $\Ass{y}\in\Sigma$ whenever even a single $\Ass{x}\in \Proj{h}^{-1}(\Ass{y})$ is unsatisfying.
Condition~\ref{round-down-LLL-condition} basically says that an LLL condition holds even up to such a pessimistic interpretation.
This is crucial for sampling from $\mu_S^{y_{\Lambda}}$ defined in~\eqref{eq:conditional-marginal-distribution}, because within such regime, the probability space of $\mu^{y_{\Lambda}}$ is decomposed into small clusters of sizes $O(\log n)$.

Meanwhile, the LLL condition~\eqref{eq:abstract-LLL-condition} and the inequality~\eqref{eq:entropy-upper-bound} in Condition~\ref{condition-projection} together imply the following condition.

\begin{condition}[conditional LLL criterion]\label{conditional-LLL-condition}
For any $\Lambda\subseteq V$ and $y_{\Lambda}\in\Sigma_{\Lambda}$, the LLL instance defined by CSP formula $\Phi=(V,\Dom{Q},\Cons{C})$ on variables distributed as $\pi^{y_{\Lambda}}$, satisfies that
\begin{align*}
\ln{\frac{1}{p}} >\beta (A \ln D +B),
\end{align*}
where $p\triangleq \max_{c \in \Cons{C}}\Pr[\Ass{X}\sim\pi^{y_{\Lambda}}]{\neg c\left(\Ass{X}_{\vbl{c}}\right)}$ and $D$ denotes the maximum degree of the dependency graph.
\end{condition}

Condition~\ref{conditional-LLL-condition} is basically a self-reducibility property.
A major obstacle for sampling satisfying solution is that the regime~\eqref{eq:abstract-LLL-condition} for the original CSP formula $\Phi$ may not be self-reducible: it is not closed under pinning of variables to arbitrary evaluations.
Condition~\ref{conditional-LLL-condition} states that the self-reducibility property is achieved under projection: the LLL regime is closed under pinning of variables to arbitrary projected evaluations.
This is crucial for rapid mixing of the Markov chain on projected space $\Sigma$.

We have efficient procedures for constructing the projection scheme satisfying \Cref{condition-projection}.

\begin{theorem}[projection construction]
\label{theorem-projection-general}
Let $0< \beta< \alpha < 1$ be two parameters.
Let $\Phi=(V,\Dom{Q},\Cons{C})$ be a CSP formula where all constraints in $\Cons{C}$ are atomic.
Let $D$ denotes the maximum degree of its dependency graph and $p \triangleq \max_{c \in \Cons{C}} \prod_{v \in \vbl{c}}\frac{1}{\abs{Q_v}}$.
If $\log \frac{1}{ p} \geq \frac{25}{(\alpha-\beta)^3} \tp{\log D + 3}$,
then for any $0<\delta<1$, with probability at least $1-\delta$ 
a projection oracle (Definition~\ref{def:projection-oracle}) with query cost $O(\log q)$ can be successfully constructed within time $O(n(Dk + q) \log \frac{1}{\delta} \log q)$,
where $q \triangleq \max_{v \in V}\abs{Q_v}$, $k \triangleq \max_{c \in \Cons{C}}\abs{ \vbl{c}}$ and the oracle is for a projection scheme $\Proj{h}=(h_v)_{v\in V}$ that satisfies \Cref{condition-projection} with parameters $(\alpha,\beta)$.
\end{theorem}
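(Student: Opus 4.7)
The plan is to construct the projection scheme probabilistically and derandomize via the algorithmic Lov\'asz local lemma (Moser--Tardos). For each $v \in V$, I would independently choose the alphabet size $s_v=|\Sigma_v|$ at random from a distribution on $\{1,\ldots,q_v\}$ whose target is roughly $s_v\approx q_v^{1-\gamma}$ for $\gamma\triangleq(\alpha+\beta)/2$, and then take $h_v:Q_v\to[s_v]$ to be a canonical balanced map such as $h_v(x)=\ctp{xs_v/q_v}$. The balance requirement $\ftp{q_v/s_v}\le|h_v^{-1}(y_v)|\le\ctp{q_v/s_v}$ is then automatic, and since each $h_v$ is specified by the single integer $s_v$, both evaluation and inversion can be implemented with $O(\log q)$-bit arithmetic, matching the stated query cost.

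For each constraint $c\in\Cons{C}$ define a bad event $B_c$ that fires when either
\[
\sum_{v\in\vbl{c}}\log\ctp{q_v/s_v}>\alpha\sum_{v\in\vbl{c}}\log q_v \quad\text{or}\quad \sum_{v\in\vbl{c}}\log\ftp{q_v/s_v}<\beta\sum_{v\in\vbl{c}}\log q_v,
\]
so that simultaneously avoiding all $B_c$ is exactly \Cref{condition-projection}. For atomic constraints one has $\log(1/p_c)=\sum_{v\in\vbl{c}}\log q_v$ with $p_c\le p$, so a Hoeffding bound applied to the weighted sums of independent bounded random variables $\log\ctp{q_v/s_v}$ and $\log\ftp{q_v/s_v}$ should yield $\Pr{B_c}\le\exp(-\Omega((\alpha-\beta)^2\log(1/p)))$, provided $\gamma$ is chosen to leave a deviation margin of $\Theta(\alpha-\beta)$ on each side. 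Since $B_c$ is determined by $\{s_v:v\in\vbl{c}\}$, its LLL-dependency degree is at most $D$, and the symmetric LLL condition $\mathrm{e}(D+1)\Pr{B_c}\le 1$ reduces to $(\alpha-\beta)^2\log(1/p)\gtrsim\log D+3$. A further factor of $(\alpha-\beta)^{-1}$ is paid for converting between the real-valued target and the integer rounding, lifting the exponent to $(\alpha-\beta)^{-3}$ and matching the hypothesis $\log(1/p)\ge\frac{25}{(\alpha-\beta)^3}(\log D+3)$ with slack for the constant $25$.

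With the LLL condition in hand, Moser--Tardos finds a satisfying $(s_v)_{v\in V}$ in expected $O(n)$ resamplings, and standard amplification boosts the success probability to $1-\delta$ at a multiplicative $\log(1/\delta)$ cost. Each resampling redraws $s_v$ for the $O(k)$ variables of a violated constraint, updates the $O(q)$-size implicit description of the corresponding $h_v$'s, and rechecks the $O(Dk)$ neighboring constraints, totaling $O(n(Dk+q)\log(1/\delta)\log q)$ time as claimed. The final projection is stored implicitly as the vector $(s_v)_{v\in V}$ and supports both oracle operations in $O(\log q)$ time.

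The main obstacle is the lower-entropy inequality $\sum_v\log\ftp{q_v/s_v}\ge\beta\sum_v\log q_v$ in the presence of rounding: when $q_v$ is small, $\log\ctp{q_v/s_v}-\log\ftp{q_v/s_v}$ is not negligible and can even force $\log\ftp{q_v/s_v}=0$, so naive concentration is insufficient. One has to tailor the distribution of $s_v$ per variable (for instance, pinning $s_v=1$ whenever $q_v$ is below a threshold dictated by $\beta$, and mixing the two nearest integer values of $q_v^{1-\gamma}$ with appropriate weights otherwise) and then account carefully for how the rounding error consumes the deviation budget. This bookkeeping is precisely what drives the cubic dependence on $1/(\alpha-\beta)$ in the hypothesis, and getting the constant to come out at $25$ rather than something larger is the delicate quantitative step.
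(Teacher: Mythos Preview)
Your overall architecture---randomize a per-variable projection parameter, phrase \Cref{condition-projection} as constraint-local bad events, and run Moser--Tardos---is exactly what the paper does. The gap is in the one place you flag as the obstacle: your proposed fix for small domains does not work, and the missing idea is precisely what drives the proof.

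Concretely, if $\vbl{c}$ consists entirely of variables with small $q_v$ and you pin $s_v=1$ for all of them, then $\sum_{v\in\vbl{c}}\log\ctp{q_v/s_v}=\sum_{v}\log q_v>\alpha\sum_v\log q_v$ and $B_c$ fires \emph{deterministically}; no LLL argument can rescue this. And randomizing $s_v$ between $\ftp{q_v^{1-\gamma}}$ and $\ctp{q_v^{1-\gamma}}$ for the remaining variables introduces essentially zero variance in $\log\ctp{q_v/s_v}$, so Hoeffding over those contributes nothing either.

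The paper's resolution is a genuinely hybrid scheme. Split at the threshold $\log q_v\ge 5/(\alpha-\beta)$. For large $q_v$, set $s_v=\ctp{q_v^{1-\gamma}}$ \emph{deterministically}: the rounding error is $O(1)$, absorbed into the $(\alpha-\beta)$ margin because $\log q_v$ is large. For small $q_v$, randomize between the two \emph{extremes} $s_v\in\{1,q_v\}$---the classical mark/unmark---with marking probability $1-\gamma$, so that $\log\ctp{q_v/s_v}\in\{0,\log q_v\}$ with mean $\gamma\log q_v$. Hoeffding is then applied only to the small variables, each with range $\log q_v\le 5/(\alpha-\beta)$; this threshold is the actual source of the third factor of $(\alpha-\beta)^{-1}$ (not a generic ``rounding'' penalty): with deviation $t\asymp(\alpha-\beta)\sum_v\log q_v$ and $\sum_{\text{small}}\log^2 q_v\le\frac{5}{\alpha-\beta}\sum_v\log q_v$, one obtains $\Pr{B_c}\le 2\exp\bigl(-\Theta((\alpha-\beta)^3)\sum_v\log q_v\bigr)$, after which the LLL condition and the running-time bookkeeping proceed as you outlined.
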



The above result can be strengthened for the \emph{$(k,d)$-CSP formulas}, where $\abs{\vbl{c}}=k$ for all $c\in\Cons{C}$ and each  $v\in V$ appears in at most $d$ constraints, on homogeneous domains ${Q}_v=[q]$ for all $v\in V$.
\begin{theorem}
\label{theorem-projection-uniform}
Let $0< \beta< \alpha < 1$ be two parameters.
The followings hold for any $(k,d)$-CSP formula $\Phi=(V,[q]^V,\Cons{C})$ where all constraints in $\Cons{C}$ are atomic:
\begin{itemize}
\item 
If $7\le q^{\frac{\alpha+\beta}{2}}\le \frac{q}{6}$ and $\log q \geq \frac{1}{\alpha-\beta}$,
then a projection oracle with query cost $O(\log q)$ for a projection scheme  $\Proj{h}$ satisfying \Cref{condition-projection} with parameters $(\alpha,\beta)$, can be constructed in time $O(n \log q)$.
\item 
If $k \geq \frac{2\ln 2}{(\alpha - \beta)^2}\log(2\mathrm{e}kd)$,
then for any $0<\delta<1$, with probability at least $1-\delta$ a projection oracle as above can be successfully constructed within time $O(ndk\log\frac{1}{\delta})$.
\end{itemize}
\end{theorem}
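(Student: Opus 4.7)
The plan is to back the projection oracle by a canonical block construction: for each $v\in V$ pick an integer alphabet size $s_v\in\{1,\dots,q\}$ and define $h_v:[q]\to[s_v]$ by $h_v(x)=\ctp{xs_v/q}$. Then $h_v$ is automatically balanced in the sense of Condition~\ref{condition-projection}, and in the word-RAM model each evaluation/inversion query reduces to $O(1)$ arithmetic operations on $O(\log q)$-bit integers, matching the required $O(\log q)$ query cost. What remains is to choose the integer tuple $(s_v)_{v\in V}$ so that both constraint-level inequalities of Condition~\ref{condition-projection} hold, i.e., for every $c\in\Cons{C}$,
\[
\sum_{v\in\vbl{c}}\log\ctp{q/s_v}\le \alpha k\log q\quad\text{and}\quad\sum_{v\in\vbl{c}}\log\ftp{q/s_v}\ge\beta k\log q.
\]

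For the first regime I would set $s_v\equiv s^\star$ with $s^\star=\ctp{q/m}$ and $m=\ftp{q^{(\alpha+\beta)/2}}$. The hypothesis $q^{(\alpha+\beta)/2}\ge 7$ gives $m\ge 7$, and $\ctp{q/s^\star}\le m\le q^{(\alpha+\beta)/2}\le q^\alpha$ is then immediate. For the lower bound I would use $s^\star\le q/m+1$ to obtain $q/s^\star\ge qm/(q+m)\ge 6m/7$ (invoking $m\le q/6$), so $\ftp{q/s^\star}\ge\ftp{6m/7}$; combined with the slack $q^{(\alpha-\beta)/2}\ge\sqrt2$ afforded by $\log q\ge 1/(\alpha-\beta)$ and a short case split on the magnitude of $q^\beta$ (handling $q^\beta\le 6$ by $\ftp{6m/7}\ge 6$ and $q^\beta>6$ analytically), this yields $\ftp{q/s^\star}\ge q^\beta$. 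Writing $s^\star$ at each of the $n$ vertices completes the construction in $O(n\log q)$ time.

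For the second regime $q$ may be too small (e.g.\ $q=2$) to admit any such deterministic $s^\star$, so I would instead sample $s_v$ independently from a two-point distribution on neighboring integers $\{s_L,s_H\}$ that straddle $q^{1-(\alpha+\beta)/2}$, with the mixing weight tuned so that both $\E{\log\ctp{q/s_v}}$ and $\E{\log\ftp{q/s_v}}$ lie near $\tfrac{\alpha+\beta}{2}\log q$ with one-sided margin $\Theta((\alpha-\beta)\log q)$. For each $c\in\Cons{C}$ define the bad event $B_c$ that one of the two constraint-level inequalities above fails. Since $B_c$ is driven by $k$ independent variables bounded in $[0,\log q]$, Hoeffding gives $\Pr{B_c}\le 2\exp\!\bigl(-\tfrac{(\alpha-\beta)^2}{2}k\bigr)$, and the dependency degree of the $B_c$'s is at most $k(d-1)$. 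The hypothesis $k\ge\tfrac{2\ln 2}{(\alpha-\beta)^2}\log(2\mathrm{e}kd)$ is calibrated so that $\mathrm{e}(kd+1)\Pr{B_c}\le 1$; the Moser--Tardos algorithm then returns a good $(s_v)_{v\in V}$ in expected time $O(ndk)$, and a standard $O(\log(1/\delta))$-fold amplification drives the overall failure probability below $\delta$.

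The main obstacle is matching the constants exactly. In case~1 the single slack factor $q^{(\alpha-\beta)/2}\ge\sqrt2$ must absorb both the ceiling loss at $s^\star$ and the $6/7$ loss from $m\le q/6$, which is why \emph{both} hypotheses $q^{(\alpha+\beta)/2}\ge 7$ and $\log q\ge 1/(\alpha-\beta)$ are needed rather than just one. In case~2 the two-point distribution on $\{s_L,s_H\}$ must be designed so that both one-sided Hoeffding tails exhibit the same exponent $\tfrac{(\alpha-\beta)^2}{2}$, and the factor of $2$ coming from the union over the two failure directions must be charged against the $\log(2\mathrm{e}kd)$ term in the hypothesis. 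Once these calibrations are in place, the $O(ndk\log(1/\delta))$ runtime follows from the routine Moser--Tardos bookkeeping, since each resampling touches $O(dk)$ constraints each re-evaluated in $O(k)$ time.
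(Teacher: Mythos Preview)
Your Part~1 construction is essentially the paper's: both fix $s_v$ uniformly near $q^{1-(\alpha+\beta)/2}$ and absorb the $\pm O(1)$ rounding losses into the slack $(\alpha-\beta)\log q\ge 1$. The paper takes $s_v=\ctp{q^{1-(\alpha+\beta)/2}}$ directly rather than going through $m=\ftp{q^{(\alpha+\beta)/2}}$, which lets the lower-bound verification run as a single chain of inequalities without a case split. Your case split on $q^\beta\lessgtr 6$ is not airtight as written (the ``analytic'' branch, using only $m\ge q^\gamma-1$ and $q^{(\alpha-\beta)/2}\ge\sqrt2$, actually needs $q^\beta\gtrsim 8.7$, leaving a gap for $6<q^\beta<8.7$), but this is a bookkeeping issue, not a conceptual one.

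Your Part~2 has a genuine gap. The two-point distribution on \emph{neighboring} integers straddling $q^{1-(\alpha+\beta)/2}$ does not always admit a mixing weight with the required margin. Take $q=3$, $\alpha=0.3$, $\beta=0.1$ (the theorem imposes no constraint on $q$ in this part): then $q^{1-(\alpha+\beta)/2}=3^{0.8}\approx 2.41$, so $\{s_L,s_H\}=\{2,3\}$, and $\ftp{q/s_v}\in\{\ftp{3/2},\ftp{3/3}\}=\{1,1\}$. Hence $\sum_{v\in\vbl{c}}\log\ftp{q/s_v}=0$ deterministically, while the required lower bound is $\beta k\log q>0$. No choice of weight and no Hoeffding bound can rescue this; the construction simply fails for all constraints.

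The paper's fix is to take the \emph{extreme} two-point distribution $s_v\in\{1,q\}$, i.e.\ a pure mark/unmark scheme: set $s_v=1$ (the constant map) with probability $\tfrac{\alpha+\beta}{2}$ and $s_v=q$ (the identity) otherwise. Then $q/s_v\in\{q,1\}$ is always an integer, so $\ctp{q/s_v}=\ftp{q/s_v}$ and both $\log$-quantities are the \emph{same} $\{0,\log q\}$-valued random variable with mean exactly $\tfrac{\alpha+\beta}{2}\log q$. This is what simultaneously delivers margin $\tfrac{\alpha-\beta}{2}\log q$ on each side and hence the Hoeffding exponent $\tfrac{(\alpha-\beta)^2}{2}k$ that the hypothesis is calibrated to. Your proposal coincides with this only when $q=2$; for $q\ge 3$ the neighboring-integer choice can collapse the range of $\ftp{q/s_v}$ to a single value, which is precisely what breaks it.
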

The proofs of \Cref{theorem-projection-general} and \Cref{theorem-projection-uniform} are given in \Cref{section-projection-construction}.

\pagebreak

\section{The sampling algorithm}
\label{section-algorithm}
%
Let $\Phi=(V,\Dom{Q},\Cons{C})$ be the input CSP formula with atomic constraints, which defines a uniform distribution $\mu$ over satisfying assignments as in \Cref{def:LLL-distr}.
Let $\epsilon > 0$ be an error bound.
The goal is to output a random assignment $\Ass{X} \in \Dom{Q}$ such that $\DTV{\Ass{X}}{\mu} \leq \epsilon$.

Depending on the classes of CSP formulas, the algorithm first applies one of the procedures in \Cref{theorem-projection-general} and \Cref{theorem-projection-uniform} to construct a projection scheme $\Proj{h} = (h_v)_{v \in V}$, where $h_v:Q_v\to \Sigma_v$ for each $v\in V$, such that $\Proj{h}$ satisfies~\Cref{condition-projection} with parameters $(\alpha,\beta)$, where $0  < \beta < \alpha < 1$ are going to be fixed later in the analysis in~\Cref{section-proof-main}.
For randomized construction procedure, we set its failure probability to be $\frac{\epsilon}{4}$, and if it fails, the sampling algorithm simply returns an arbitrary $\Ass{X} \in \Dom{Q}$.

Suppose that the projection scheme $\Proj{h}$ is given.
%
%
The sampling algorithm is described in \Cref{alg-mcmc}.

\begin{algorithm}[ht]
  \SetKwInOut{Input}{input} \SetKwInOut{Output}{output} 
  \Input{a CSP formula $\Phi=(V,\Dom{Q},\Cons{C})$ with atomic constraints, a projection scheme $\Proj{h} = (h_v)_{v \in V}$ satisfying \Cref{condition-projection} with parameters $(\alpha,\beta)$, and an error bound $\epsilon > 0$;}  
  \Output{a random assignment $\Ass{X} \in \Dom{Q}$;}  
  sample a uniform random $\Ass{X} \sim \pi$ and let $\Ass{Y} \gets \Proj{h}(\Ass{X})$\; 
  \For(\hspace{62pt}// \texttt{\small Glauber dynamics for $\Ass{Y}\in\Dom{\Sigma}$ }){each $t$ from $1$ to $T \triangleq \Tmix$} {
    pick a variable $v \in V$ uniformly at random\; 
    $X_v \gets \sample\tp{\Phi, \Proj{h}, \frac{\epsilon}{4(T+1)}, Y_{V \setminus \{v\}}, \{v\}}$\label{line-sample-1};\hspace{30pt}// \texttt{\small sample $X_v \in Q_v$ approx.~from  $\mu^{Y_{V \setminus \{v\}}}_v$}\\
    $Y_v \gets h_v(X_v)$;\label{line-map}
    }
  $\Ass{X} \gets \sample\tp{\Phi, \Proj{h}, \frac{\epsilon}{4(T+1)}, \Ass{Y} , V}$;\label{line-sample-2}{\hspace{78pt}// \texttt{\small sample $\Ass{X} \in \Dom{Q}$ approx.~from  $\mu^{\Ass{Y}}$}}\\
  \Return{$\Ass{X}$\;}
  \caption{The sampling algorithm (given a proper projection scheme)}\label{alg-mcmc}
\end{algorithm}

\Cref{alg-mcmc} implements the sampling algorithm outlined in \Cref{section-proj-of-LLL}.
It first implements the Glauber dynamics on space $\Dom{\Sigma}$ for sampling from $\nu$, the distribution of projected satisfying assignments in \Cref{def:proj-distr}.
It simulates the Glauber dynamics for $T = \Tmix$ steps to draw a random $\Ass{Y}\in\Dom{\Sigma}$ distributed approximately as $\nu$.
At each step, $Y_v$ for a uniformly picked $v\in V$ is redrawn approximately from the marginal distribution $\nu^{Y_{V \setminus \{v\}}}_v$.
At last, the algorithm inverts the sampled $\Ass{Y}\in\Dom{\Sigma}$ to a random satisfying assignment $\Ass{X}\in\Dom{Q}$ distributed approximately as $\mu$ conditioning on that $\Proj{h}(\Ass{X})=\Ass{Y}$.

\Cref{alg-mcmc} relies on an \emph{Inverse Sampling} subroutine for sampling approximately from $\mu^{Y_{V \setminus \{v\}}}_v$ or~$\mu^{\Ass{Y}}$.

\subsection{The \sample{} subroutine (\Cref{alg-sample})}
The goal of the subroutine $\sample\tp{\Phi, \Proj{h}, \delta, y_{\Lambda}, S}$, where $S\subseteq V$, $\Lambda\subseteq V$, and $y_{\Lambda}\in\Sigma_{\Lambda}$, is to sample a random $X_S \in Q_S$ according to the distribution $\mu_S^{y_{\Lambda}}$, as defined in~\eqref{eq:conditional-marginal-distribution}.
In principle, computing the distribution $\mu_S^{y_{\Lambda}}$ involves computing some nontrivial partition function, which is intractable in general.
Here, for an error bound $\delta>0$, we only ask for that with probability at least $1-\delta$, the subroutine returns a random sample that is $\delta$-close  to $\mu_S^{y_{\Lambda}}$ in total variation distance, where the probability is taken over the randomness of the input $y_{\Lambda}$.
 

We define some notions to describe the subroutine.
Let $c \in \Cons{C}$ be a constraint in CSP formula $\Phi$. 
Recall that $c$ is atomic.
Let 
\begin{align*}
\Ass{F}^c \triangleq c^{-1}(\False) 
\end{align*}
denote the unique ``forbidden configuration'' in $\Dom{Q}_{\vbl{c}}$ that violates $c$. 
We say that an atomic constraint $c \in \Cons{C}$ is \emph{satisfied} by $y_{\Lambda} \in \Sigma_{\Lambda}$ for $\Lambda\subseteq V$, if 
\begin{align}
\label{eq-def-sat-by-y}
\Proj{h}\tp{F^c_{\Lambda\cap\vbl{c}}} \neq y_{\Lambda \cap \vbl{c}},
\end{align}
where the function $\Proj{h}(\cdot)$ is formally defined in~\eqref{eq-def-proj-set}.
For atomic constraint $c\in\Cons{C}$, the above condition~\eqref{eq-def-sat-by-y} implies that $c$ is satisfied by any $\Ass{x} \in \Dom{Q}$ that $\Proj{h}({x_{ \Lambda }})= y_{\Lambda}$. 
Hence, the constraint $c$ must be satisfied by any configuration in the support of the distribution $\mu^{y_{\Lambda}} = \mu^{y_{\Lambda}}_{V}$. 

The key idea of the subroutine is that we can remove all the constraints that have already been satisfied by $y_{\Lambda}$ to obtain a new CSP formula $\Phi' = (V, \Dom{Q},\Cons{C}')$, where $\Cons{C}' \triangleq \{c \in \Cons{C} \mid c \text{ is not satisfied by } y_\Lambda\}$.

Define $\mu_{\Phi'}^{y_{\Lambda}}$ to be the distribution of $\Ass{X} \sim \pi^{y_{\Lambda}}$ conditioned on $\Phi'(\Ass{X})$, where the product distribution $\pi^{y_{\Lambda}}$ is as in \Cref{def:proj-distr}.
It is straightforward to verify that $\mu_{\Phi'}^{y_{\Lambda}} \equiv \mu^{y_{\Lambda}}$.

Furthermore, the new CSP formula $\Phi'$ can be \emph{factorized} into a set of disjoint formulas:
\begin{align*}
\Phi' = \Phi'_1 \wedge \Phi'_2 \wedge \ldots \wedge \Phi'_m.
\end{align*}
Our plan is to show that it almost always holds that the size of every sub-formula $\Phi'_i$ is logarithmically bounded.
Thus, we can apply the na\"ive rejection sampling independently on each sub-formula $\Phi'_i$, which remains to be efficient altogether.

\begin{algorithm}[t]
  \SetKwInOut{Input}{Input}%
  \SetKwInOut{Output}{Output}%
  \Input{
  a CSP formula $\Phi=(V,\Dom{Q},\Cons{C})$ with atomic constraints, a projection scheme $\Proj{h}$, an error bound $\delta >0$, a configuration $y_{\Lambda} \in \Sigma_{\Lambda}$ specified on $\Lambda \subseteq V$, and a subset $S \subseteq V$; 
   }%
  \Output{a random assignment $\Ass{X} \in \Dom{Q}_S$;}%
  let $\Phi'$ be the new formula obtained by removing all the constraints in $\Phi$ already satisfied by $y_{\Lambda}$\;%
  factorize $\Phi'$ and find all the sub-formulas $\set{\Phi'_i = (V_i,\Dom{Q}_{V_i}, \Cons{C}'_i)\mid 1\leq i \leq \ell}$ s.t. each $V_i \cap S \neq \emptyset$\;%
  \If(\hspace{13pt}// \texttt{\small existence of giant component}){there exists $ 1 \leq i \leq \ell$ s.t. $|\Cons{C}'_i| > 2D\log \frac{nD}{\delta}$}
  	{\Return{a uniform random  $\Ass{X}_S\sim \pi_S$\;\label{line-bad-return-1}}}
  \For{each $i$ from 1 to $\ell$\label{line-rejection-begin} }{ 
    \Repeat( \emph{for at most $R \triangleq \ctp{ 10\tp{\frac{n}{\delta}}^{\eta} \log \frac{n}{\delta}}$ times:\label{line-v-start}}
    \hspace{18pt}// \texttt{\small rejection sampling with $\le R$ trials}){$\Phi'_i(\Ass{X}_i)=\True$\label{line-v-end}}{
    sample $\Ass{X}_i \sim \pi_{V_i}^{y_{\Lambda_i}}$, where $\Lambda_i\triangleq V_i\cap\Lambda$\;
    }
    \If(\hspace{108pt}// \texttt{\small overflow of rejection sampling}){$\Phi'_i(\Ass{X}_i)=\False$}{
      \Return{a uniform random  $\Ass{X}_S\sim \pi_S$\;\label{line-bad-return-2}}
    } 
  }
  \Return{$\Ass{X}'_S$, where $\Ass{X}' = \bigcup_{i=1}^\ell \Ass{X}_i$\;\label{line-good}}
  \caption{$\sample\tp{\Phi, \Proj{h}, \delta, y_{\Lambda}, S}$}\label{alg-sample}
\end{algorithm}


Formally, let $H' = (V,\+E')$ denote the (multi-)hypergraph induced by the CSP formula $\Phi' = (V, \Dom{Q}, \Cons{C}')$, constructed by adding a hyperedge  $e_c = \vbl{c}$ into $\+E'$ for each constraint $c \in \Cons{C}'$.
Note that $H'$ may contain duplicated hyperedges.
Let $H_1',H_2',\ldots,H_m'$ denote the connected components of $H'$, where $H_i' = (V_i, \+E_i')$. 
Let $\Phi_i'=(V_i,\Dom{Q}_{V_i}, \Cons{C}'_i)$ denote sub-formula corresponding to  $H_i'$, where $\Cons{C}_i'$ is the set of constraints corresponding to hyperedges in $\+E'_i$.
This defines the \emph{factorization} $\Phi' = \Phi'_1 \wedge \Phi'_2 \wedge \ldots \wedge \Phi'_m$.
For each sub-formula $\Phi'_i = (V_i,\Dom{Q}_{V_i}, \Cons{C}'_i)$, 
let $\Lambda_i = \Lambda \cap V_i$, and define $\mu_{\Phi'_i}^{y_{\Lambda_i}}$ to be the distribution of $\Ass{X} \sim \pi^{y_{\Lambda_i}}_{V_i}$ conditioned on $\Phi_i'(\Ass{X})$, where $\pi^{y_{\Lambda_i}}_{V_i}$ denotes restriction of the product distribution $\pi^{y_{\Lambda_i}}$ on $V_i$.
It is then straightforward to verify:
\begin{align*}
\mu^{y_{\Lambda}} \equiv \mu_{\Phi'}^{y_{\Lambda}} \equiv \mu_{\Phi'_1}^{y_{\Lambda_1}}\times \mu_{\Phi'_2}^{y_{\Lambda_2}} \times \ldots \times \mu_{\Phi'_m}^{y_{\Lambda_m}}.
\end{align*}

Without loss of generality, we assume $S \cap V_i \neq \emptyset$ for  $1\leq i \leq \ell$ and $S \cap V_i = \emptyset$ for  $\ell< i \leq m$.
It suffices to draw random samples $\Ass{X}_i \sim \mu_{\Phi'_i}^{y_{\Lambda_i}}$ independently for all $1\leq i \leq \ell$, adjoin them together $\Ass{X}' = \cup_{i=1}^\ell \Ass{X}_i$, and output its restriction $\Ass{X}'_S$ on $S$, where each $\Ass{X}_i \sim \mu_{\Phi'_i}^{y_{\Lambda_i}}$ can be drawn by the \emph{rejection sampling} procedure: repeatedly and independently  sampling $\Ass{X}_i \sim \pi^{y_{\Lambda_i}}_{V_i}$ until $\Phi'_i(\Ass{X}_i)$ is true. 
%

The subroutine $\sample\tp{\Phi, \Proj{h}, \delta, y_{\Lambda}, S}$  does precisely as above with two exceptions:
\begin{itemize}
\item
\textbf{existence of giant connected component}: 
$|\Cons{C}_i'| \geq 2D\log \frac{n D}{\delta}$ for some $1\leq i \leq \ell$, where $D$ stands for the maximum degree of the dependency graph for $\Phi$;
\item 
\textbf{overflow of rejection sampling}:
the rejection sampling from $\mu_{\Phi'_i}^{y_{\Lambda_i}}$ for some $1\leq i \leq \ell$, has used more than $R = \ctp{ 10\tp{\frac{n}{\delta}}^{\eta} \log \frac{n}{\delta}}$ trials, where $\eta$ is a parameter to be fixed in \Cref{section-proof-main}. 
\end{itemize}
If either of the above exceptions occurs, the algorithm terminates and returns a random $\Ass{X}_S \sim \pi_S$.

In \Cref{section-subroutine}, we will show that assuming \Cref{condition-projection} for the projection scheme $\Proj{h}$ with properly chosen parameters $(\alpha,\beta)$ and by properly choosing $\eta$, for the random $y_\Lambda$ upon which the subroutine is called in  \Cref{alg-mcmc}, 
with high probability none of these exceptions occurs.
Therefore, the random sample returned by the subroutine is accurate enough when being called in \Cref{alg-mcmc}.



\pagebreak

\section{Proofs of the main results}
\label{section-proof-main}
In this section, we prove the main theorems of this paper.
Our algorithm first constructs a projection scheme using one of the procedures in \Cref{theorem-projection-general} and \Cref{theorem-projection-uniform}, which gives us the projection oracle that can answer queries within time cost $O(\log q)$, where $q = \max_{v \in V}\abs{Q_v}$.
We then execute \Cref{alg-mcmc} for sampling $\Ass{X}$ approximately according to $\mu$.
We assume the following basic operations for uniform sampling:
\begin{itemize}
\item 
draw a variable $v \in V$ uniformly at random within time cost $O(\log n)$;
\item for any variable $v \in V$, draw a uniform sample $X\sim\pi_v$ from $Q_v$ within time cost $O(\log q)$.
\end{itemize}
When measuring the time cost of \Cref{alg-mcmc}, we count the number of calls to the projection oracle as well as the above two basic sampling operations.
The time complexity of \Cref{alg-mcmc} is dominated by these oracle costs.

Next, we prove \Cref{theorem-main} for general CSP formulas with atomic constraints, while \Cref{theorem-coloring}  and  \Cref{theorem-CNF} for specific subclasses of formulas are proved in \Cref{section-proof-applications}.

\subsection{CSP formulas with atomic constraints}
\label{section-proof-general}
For CSP formulas $\Phi = (V, \Dom{Q}, \Cons{C})$ defined by atomic constraints, we show that sampling uniform solution is efficient within the following regime:
\begin{align}
\label{eq:main-gen}
\ln \frac{1}{p} \geq 350\ln D + 3 \ln \frac{1}{\zeta}
\end{align}
where $p = \max_{c \in \Cons{C}}\prod_{v \in \vbl{c}} \frac{1}{\abs{Q_v}}$ stands for the maximum probability that a constraint $c\in\Cons{C}$ is violated by uniform random assignment,
and
$D$ stands for the maximum degree of the dependency graph of $\Phi$.
The positive constant parameter $\zeta$ specifies a gap to the boundary of the regime.
\begin{theorem}
\label{theorem-main-gen}
The following holds for any $0 < \zeta \leq 2^{-400}$.
There is an algorithm such that given any $0<\epsilon<1$ and CSP formula $\Phi = (V, \Dom{Q}, \Cons{C})$ with atomic constraints satisfying~\eqref{eq:main-gen},
the algorithm 
outputs 
a random assignment $\Ass{X} \in \Dom{Q}$
whose distribution is $\epsilon$-close in total variation distance to the uniform distribution $\mu$ over all solutions to $\Phi$,
using time cost
$O\tp{(D^2 k+q) n \tp{\frac{n}{\epsilon}}^{\zeta} \log ^4 \tp{\frac{nDq}{\epsilon} } }$,
where  $k = \max_{c \in \Cons{C}} \abs{\vbl{c}}$.
\end{theorem}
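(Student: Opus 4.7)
The plan is to instantiate the framework set up in Sections 3 and 4 with parameters chosen as functions of $\zeta$, and then assemble the error analysis by a triangle inequality on total variation distance.

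First I would fix the parameters of the projection scheme. The condition $\ln\tfrac{1}{p}\ge 350\ln D + 3\ln\tfrac{1}{\zeta}$ is stronger than the hypothesis $\log\tfrac{1}{p}\ge \tfrac{25}{(\alpha-\beta)^3}(\log D+3)$ of \Cref{theorem-projection-general} whenever $\alpha-\beta$ is bounded below by an absolute constant (roughly $(1/14)^{1/3}$). So I would take $\alpha$ close to $1$ (say around $1/2$) and $\beta$ noticeably smaller, such that the gap $\alpha-\beta$ is a suitable constant and, simultaneously, $1-\alpha$ and $\beta$ are large enough to make the round-down LLL criterion (\Cref{round-down-LLL-condition}) and the conditional LLL criterion (\Cref{conditional-LLL-condition}) hold with enough slack: specifically, the numerical slack in \eqref{eq:main-gen} is what will eventually guarantee that both derived LLL conditions are satisfied with spare room for the subsequent Markov chain and inverse sampling analyses. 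The parameter $\eta$ controlling the rejection-sampling budget in \Cref{alg-sample} would be taken as a constant multiple of $\zeta$, which is what produces the $(n/\epsilon)^{\zeta}$ factor in the running time.

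Then I would run \Cref{alg-mcmc}. The analysis splits into three error sources that I would bound separately and combine by the triangle inequality. (i)~Projection construction: \Cref{theorem-projection-general} succeeds except with probability at most $\epsilon/4$, which I absorb into the final TV error. (ii)~Glauber dynamics on $\Dom{\Sigma}$: assuming access to exact samples from $\nu_v^{Y_{V\setminus\{v\}}}$, path coupling (\Cref{lemma-path-coupling}) together with the conditional LLL criterion will give mixing time $T=\Tmix$; this is exactly the content deferred to \Cref{section-mixing}, and I would invoke it as a black box. (iii)~Inverse sampling: at each of the $T$ iterations and at the final step, \Cref{alg-sample} is called with error budget $\epsilon/(4(T+1))$, and I would invoke the guarantee proved in \Cref{section-subroutine} — assuming $y_\Lambda$ is drawn from the algorithm's actual distribution — to conclude that each call is $\epsilon/(4(T+1))$-close in TV to $\mu_S^{y_\Lambda}$. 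Summing over the at most $T+1$ calls yields an additional $\epsilon/4$, and adding the Glauber mixing error ($\epsilon/4$) and the construction failure probability ($\epsilon/4$) gives TV distance $\le\epsilon$ to $\mu$.

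For the running time, the projection oracle has query cost $O(\log q)$ and is built in time $\widetilde{O}(n(Dk+q))$. Each of the $T=O(n\log(n/\epsilon))$ Glauber iterations spends $O(\log n)$ to pick a variable, $O(\log q)$ to query the projection, and one call to \sample. The cost of \sample{} decomposes into: scanning the constraints touching the pinned vertex to form the decomposition $\Phi'=\bigwedge_i\Phi'_i$ (paying $O(D^2 k)$ per sweep because only constraints within distance two of the resampled site can change status), testing the giant-component condition, and then performing at most $R=O((n/\delta)^{\eta}\log(n/\delta))$ rejection-sampling trials on each of the $O(1)$ logarithmic-sized pieces that are relevant to $S$. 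With $\eta=\Theta(\zeta)$ and $\delta=\epsilon/(4(T+1))$, this contributes $\widetilde{O}((D^2k+q)(n/\epsilon)^{\zeta})$ per iteration, yielding the claimed total $\widetilde{O}((D^2k+q)n(n/\epsilon)^{\zeta})$ after multiplying by $T$ and including the one-shot construction cost.

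The main obstacle, which I would cite rather than reprove here, is the joint validity of steps (ii) and (iii): one needs both that the Glauber dynamics on $\Dom{\Sigma}$ contracts under a single-site coupling despite $\nu$ not being itself a Gibbs/LLL distribution, and that for the \emph{actual} (non-stationary) distribution of $Y_{V\setminus\{v\}}$ produced along the chain, the rejection-sampling inside \sample{} essentially never triggers either exception (the giant-component event or the overflow event). Both require exploiting the round-down LLL criterion to control the component-size distribution of $\Phi'$ and the conditional LLL criterion to control the acceptance probability of each rejection trial, which is precisely what the slack between the constant $350$ in \eqref{eq:main-gen} and the constant $25(\alpha-\beta)^{-3}$ from \Cref{theorem-projection-general} is calibrated to deliver.
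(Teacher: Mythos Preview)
Your overall plan matches the paper's proof: construct the projection via \Cref{theorem-projection-general} with failure probability $\epsilon/4$, run \Cref{alg-mcmc}, and bound the TV error by a triangle inequality separating (i) projection failure, (ii) Glauber mixing (\Cref{lemma-mixing-gen}), and (iii) the subroutine error (\Cref{lemma-simulation}), with $\eta=\Theta(\zeta)$ so that the $(n/\delta)^{\eta}$ factor in \Cref{alg-sample} becomes $(n/\epsilon)^{\zeta}$. The running-time accounting is essentially right (minor note: the $+q$ in the final bound comes only from the one-time construction cost in \Cref{theorem-projection-general}, not from each iteration).

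However, your parameter choice is internally inconsistent and would not actually close. You write ``$\alpha$ close to $1$ (say around $1/2$)'' and then ask that ``$1-\alpha$ and $\beta$ are large enough''; $1/2$ is not close to $1$, and if $\alpha\approx 1/2$ then to keep $\alpha-\beta\ge (1/14)^{1/3}\approx 0.415$ you are forced to $\beta\lesssim 0.085$, whereupon the mixing hypothesis $\log\tfrac{1}{p}\ge \tfrac{50}{\beta}\log\tp{\tfrac{2000D^4}{\beta}}$ from \Cref{lemma-mixing-gen} demands roughly $\log\tfrac{1}{p}\gtrsim 2500\log D$, far beyond the $350\log D$ you are given in~\eqref{eq:main-gen}. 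The three constraints that must hold simultaneously are (projection) $\tfrac{25}{(\alpha-\beta)^3}\lesssim 350$, (mixing) $\tfrac{200}{\beta}\lesssim 350$, and (component-size bound in \Cref{lemma-simulation}) $\tfrac{2}{1-\alpha}\lesssim 350$; the feasible region forces $\alpha$ very close to $1$ and $\beta$ slightly above $1/2$. The paper takes $\alpha=0.994$, $\beta=0.577$, $\eta=\zeta/3$, and with those values your argument goes through exactly as you outline. A second minor point: each call to \sample{} has \emph{two} bad events ($\mathcal{B}^{(1)}_t$ and $\mathcal{B}^{(2)}_t$), so the union bound over the $T+1$ calls contributes $2(T+1)\delta=\epsilon/2$ rather than $\epsilon/4$, but the total is still $\le\epsilon$.
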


\Cref{theorem-main} is implied by \Cref{theorem-main-gen}, by interpreting any LLL instance with uniform random variables and atomic bad events as a CSP formula with atomic constraints.

Let $\Proj{h}= (h_v)_{v \in V}$ be a projection scheme satisfying \Cref{condition-projection} with parameters $\alpha$ and $\beta$.
To prove \Cref{theorem-main-gen}, we have the following lemma which shows that assuming a Lov\'asz local lemma condition, the Glauber dynamics for the projected distribution $\nu$ is rapidly mixing.
\begin{lemma}
\label{lemma-mixing-gen}
If $\log \frac{1}{p} \geq \frac{50}{\beta} \log \tp{\frac{2000D^4}{\beta}}$, then the Markov chain $P_{\mathrm{Glauber}}$ on $\nu$ has
$\tmix(\epsilon) \leq \ctp{2n \log \frac{n}{\epsilon}}$.
\end{lemma}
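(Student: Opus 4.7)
The plan is to apply the path coupling lemma (\Cref{lemma-path-coupling}) to $P_{\mathrm{Glauber}}$ on the state space $\Dom{\Sigma}$. Fix any pair of adjacent configurations $\Ass{Y}, \Ass{Y}' \in \Dom{\Sigma}$ with $d_{\mathrm{ham}}(\Ass{Y},\Ass{Y}')=1$ and disagreement at a single site $v^* \in V$. I will couple one Glauber step in both chains by choosing the same random vertex $u \in V$; conditioned on that choice I use the optimal coupling between $\nu_u^{Y_{V\setminus\{u\}}}$ and $\nu_u^{Y'_{V\setminus\{u\}}}$. When $u = v^*$ the two chains agree at $v^*$ and the original disagreement is removed; when $u \neq v^*$ a new disagreement is created with probability $p_{v^*,u} \triangleq \DTV{\nu_u^{Y_{V\setminus\{u\}}}}{\nu_u^{Y'_{V\setminus\{u\}}}}$. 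Aiming for
\[
\E[]{d_{\mathrm{ham}}(\Ass{Y}^{\mathrm{new}},\Ass{Y}'^{\mathrm{new}}) \mid \Ass{Y},\Ass{Y}'} \le 1 - \frac{1}{2n},
\]
which by \Cref{lemma-path-coupling} yields $\tmix(\epsilon) \le \ctp{2n\log(n/\epsilon)}$, it suffices to establish the influence bound $\sum_{u \in V\setminus\{v^*\}} p_{v^*, u} \le \tfrac12$.

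To estimate $p_{v^*,u}$ I lift from $\nu$ back to $\mu$: since $Y_u = h_u(X_u)$, the data processing inequality gives $p_{v^*,u} \le \DTV{\mu_{\{u\}}^{Y_{V\setminus\{u\}}}}{\mu_{\{u\}}^{Y'_{V\setminus\{u\}}}}$, where $\mu_{\{u\}}^{y_{V\setminus\{u\}}}$ denotes the marginal at $u$ of $\Ass{X}\sim\pi^{y_{V\setminus\{u\}}}$ conditioned on $\Phi(\Ass{X})$. By \Cref{conditional-LLL-condition}, for either pinning $Y_{V\setminus\{u\}}$ or $Y'_{V\setminus\{u\}}$ this conditional distribution is itself an LLL-distribution on the reduced domains (each $h_v^{-1}(Y_v)$) whose violation probability $p^\beta$ satisfies $\ln(1/p^\beta) \geq \beta(A\ln D + B)$ with a generous constant $A$. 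Because $\Ass{Y}$ and $\Ass{Y}'$ differ only at $v^*$, the only constraints whose admissibility can differ between the two conditioned instances are those in $\Gamma(v^*) \cup \{c : v^* \in \vbl{c}\}$, so the initial discrepancy is localized around $v^*$ in the dependency graph of $\Phi$.

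The heart of the argument is a coupling of the two conditional LLL-distributions that propagates this localized discrepancy along paths in the dependency graph. I will build it by a Moser--Tardos-style resampling procedure that grows a random ``discrepancy subgraph'' rooted at $v^*$: the coupling proceeds identically until it encounters a bad event involving the current discrepancy, and with probability at most $p^\beta$ (times a union bound over forbidden assignments) the disagreement expands to neighboring constraints. A standard witness-tree / percolation argument then bounds the probability that the discrepancy reaches a given $u$ at graph-distance $r$ from $v^*$ by something of the order $\bigl(C D \cdot p^{\beta}\bigr)^{r}$, while the number of vertices at that distance is at most $(kD)^r$. The main obstacle is making this percolation argument rigorous in the projected setting, since the chain updates $Y_u$ rather than $X_u$, and absorbing the loss incurred by the data-processing step and by the projection rounding. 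The hypothesis $\log(1/p) \ge (50/\beta)\log(2000 D^4/\beta)$ is calibrated precisely so that
\[
\sum_{u \in V \setminus \{v^*\}} p_{v^*, u} \;\le\; \sum_{r \geq 1} (kD)^r \cdot \bigl(C D\, p^{\beta}\bigr)^{r} \;\le\; \tfrac{1}{2},
\]
with the factor $D^4$ providing comfortable slack to dominate both the $(kD)^r$ combinatorial growth and the $k$-factor loss from the atomic decomposition of each constraint into its $k$ variables, thus yielding the desired path-coupling contraction.
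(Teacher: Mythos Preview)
Your path-coupling reduction and the data-processing step lifting $\nu$ to $\mu$ match the paper exactly, so the target inequality $\sum_{u\neq v^*}\DTV{\nu_u^{Y_{V\setminus\{u\}}}}{\nu_u^{Y'_{V\setminus\{u\}}}}\le\tfrac12$ is the right one. The gap is in the percolation estimate. Your final bound $\sum_{r\ge1}(kD)^r(CDp^\beta)^r$ carries the constraint width $k=\max_c|\vbl{c}|$, yet the hypothesis of \Cref{lemma-mixing-gen} involves only $p$, $D$, and $\beta$; nothing prevents $k$ from being arbitrarily large compared to $D$. Your claim that ``the factor $D^4$ provides comfortable slack to dominate the $(kD)^r$ combinatorial growth'' is therefore false in the general setting, and the sketch as written does not close.

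This is exactly the difficulty that drives the paper's proof (\Cref{section-non-adp}) to a substantially more elaborate route. The paper introduces a \emph{secondary} projection scheme $\PX,\PY$ on the already-conditioned instances $\Phi^X,\Phi^Y$ (\Cref{condition-projection-coupling}, existence via \Cref{lemma-find-projection-coupling}), and runs a Moitra-style sequential coupling $\cnon$ (\Cref{alg-coupling-gen}) in that projected space---not a Moser--Tardos resampling. Percolation is tracked along \emph{hyperedges} in $\Lin^3(H)$ rather than along vertices, and the crucial output is \Cref{lemma-PP}: the probability that a bad percolation path $e_1,\dots,e_\ell$ ends at a hyperedge containing $v^*$ is at most $(1/(4D^3))^\ell\cdot(\beta/50)(1/2)^{\beta|e_\ell|/50}$. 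When one then sums over all $v^*\in e_\ell$, the factor $|e_\ell|$ introduced is absorbed by the $(1/2)^{\beta|e_\ell|/50}$ factor via $x(1/2)^x\le 1$, leaving a bound depending only on $D$. The secondary projection is what manufactures this geometric-in-$|e_\ell|$ saving; without it (or an equivalent mechanism) your argument cannot eliminate the dependence on $k$.
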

\noindent The proof of \Cref{lemma-mixing-gen} is given in \Cref{section-mixing}.

We also need the following lemma for analyzing the subroutine $\sample(\Phi,\Proj{h}, \delta, X_\Lambda, S)$.
In \Cref{alg-mcmc} the  subroutine is called for $T+1$ times.
For $1 \leq t \leq T+1$, define the following bad events:
\begin{itemize}
\item $\+B^{(1)}_t$: in the $t$-th call of $\sample(\cdot)$,	a random assignment $\Ass{X}$ is returned in \Cref{line-bad-return-2}.
\item $\+B^{(2)}_t$: in the $t$-th call of $\sample(\cdot)$, a random assignment $\Ass{X}$ is returned in \Cref{line-bad-return-1}
\end{itemize}

\begin{lemma}
\label{lemma-simulation}
Let $1\leq t \leq T+1$ and $0 <\eta < 1$.
In \Cref{alg-mcmc}, for the $t$-th calling to the subroutine $\sample(\Phi, \Proj{h} ,\delta, y_\Lambda, S)$ with parameter $\eta$, it holds that
\begin{itemize}
\item given access to a projection oracle with query cost $O(\log q)$, the time cost of  $\sample(\Phi, \Proj{h} ,\delta, y_\Lambda, S)$ is bounded as 
\begin{align*}
O\tp{\abs{S}D^2k \tp{\frac{n}{\delta}}^{\eta} \log^2\tp{\frac{nD}{\delta}} \log q },
\end{align*} 
where $k = \max_{c \in \Cons{C}}\abs{\vbl{c}}$ and $q = \max_{v \in V}\abs{Q_v}$;	
\item conditioned on $\neg\+B^{(1)}_t\land\neg\+B^{(2)}_t$, the $t$-th calling to  $\sample(\Phi, \Proj{h} ,\delta, y_\Lambda, S)$ returns a $\Ass{X}_S\in Q_S$ that is distributed precisely according to $\mu_S^{y_{\Lambda}}$.
\end{itemize}
Furthermore, 
if $\log \frac{1}{p} \geq \frac{1}{1-\alpha}\log(20D^2)$ and $\log \frac{1}{p} \geq \frac{1}{\beta}\log\tp{\frac{40\mathrm{e} D^2}{\eta}}$ 
it holds that
\begin{align*}
\Pr[]{\+B^{(1)}_t} \leq \delta \quad \text{and} \quad 	\Pr[]{\+B^{(2)}_t} \leq \delta.
\end{align*}
\end{lemma}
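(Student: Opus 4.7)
The lemma has three parts: a running-time bound, an exact correctness statement conditional on neither bad event occurring, and probability bounds on $\+B^{(1)}_t$ and $\+B^{(2)}_t$. My plan is to treat them in that order, since the first two fall out of a direct inspection of \Cref{alg-sample}, whereas the probability bounds are where the hypotheses from \Cref{round-down-LLL-condition} and \Cref{conditional-LLL-condition} get used.

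For the time bound I would walk through the algorithm. The factorization of $\Phi'$ is produced by a BFS on the dependency graph starting from constraints touching $S$; the giant-component early-return guarantees that every enumerated component has at most $2D\log(nD/\delta)$ constraints and at most $2Dk\log(nD/\delta)$ variables, and that there are at most $|S|$ such components. Checking whether a constraint is already satisfied by $y_\Lambda$ costs $O(k\log q)$ via the projection oracle, and each BFS step examines $D$ neighbors, yielding $O(D^2 k\log(nD/\delta)\log q)$ per component for the factorization; each rejection trial draws $O(Dk\log(nD/\delta))$ coordinates via the $O(\log q)$ inversion oracle and then evaluates $O(Dk\log(nD/\delta))$ atomic positions, for $O(Dk\log(nD/\delta)\log q)$ per trial. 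Multiplying by $R=\ctp{10(n/\delta)^\eta\log(n/\delta)}$ trials and summing over $\le|S|$ components gives the stated time bound. For the exact correctness, when neither bad event occurs the algorithm returns at \Cref{line-good}, and each $\Ass{X}_i$ has been produced by honest rejection from $\pi_{V_i}^{y_{\Lambda_i}}$ conditioned on $\Phi'_i$, so $\Ass{X}_i$ is distributed exactly as $\mu_{\Phi'_i}^{y_{\Lambda_i}}$; the factorization identity $\mu^{y_\Lambda}\equiv\mu_{\Phi'_1}^{y_{\Lambda_1}}\times\cdots\times\mu_{\Phi'_m}^{y_{\Lambda_m}}$ already established in \Cref{section-algorithm} then gives the distributional claim for the restricted assembly $\Ass{X}'_S$.

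The bound on $\Pr{\+B^{(2)}_t}$ is the first substantive step. The key observation is that a constraint $c$ fails to be satisfied by $y_\Lambda$ precisely when $y_{\vbl{c}\cap\Lambda}$ coincides with $\Proj{h}(F^c)$ on that portion of $\vbl{c}$, which is exactly the event that the corresponding constraint of the round-down formula $\RDForm{\Phi}{\Proj{h}}$ is violated at $y$. So once $y_\Lambda$ is viewed as the state of the Glauber chain at step $t$, the subgraph of unsatisfied constraints is controlled by the LLL regime of \Cref{round-down-LLL-condition}. A component of size $\ge 2D\log(nD/\delta)$ must contain a $2$-tree of size $\Omega(\log(nD/\delta))$ all of whose constraints are unsatisfied, so a standard $2$-tree enumeration ($(\mathrm{e}D)^{2s}$ rooted trees of size $s$) together with a union bound over at most $n$ roots will, under $\log(1/p)\ge\frac{1}{1-\alpha}\log(20D^2)$, push the total probability below $\delta$.

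Finally, conditioning on $\neg\+B^{(2)}_t$, every factor $\Phi'_i$ has at most $2D\log(nD/\delta)$ atomic constraints. Under $\pi_{V_i}^{y_{\Lambda_i}}$ each such constraint is violated with probability at most $p^\beta$ by \Cref{conditional-LLL-condition}, so applying the symmetric LLL (\Cref{theorem-LLL}) with a uniform $x(\cdot)=O(1/D^2)$ witnesses that a single rejection trial succeeds with probability at least $(1-O(1/D^2))^{|\Cons{C}'_i|}\ge(n/\delta)^{-\eta/2}$, provided $\log(1/p)\ge\frac{1}{\beta}\log(40\mathrm{e}D^2/\eta)$. With $R=10(n/\delta)^\eta\log(n/\delta)$ independent trials per component and at most $n$ components in total, a routine geometric-tail plus union bound gives $\Pr{\+B^{(1)}_t}\le\delta$. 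I expect the $2$-tree percolation estimate to be the main obstacle: the random $y_\Lambda$ is literally the Markov-chain state at time $t$ and is further coupled to resample randomness from earlier subroutine calls, so some care will be needed to justify that the per-constraint ``unsatisfied'' probability is still governed by the round-down LLL regime of \Cref{round-down-LLL-condition} rather than by a possibly worse conditional distribution.
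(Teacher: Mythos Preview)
Your proposal is correct and follows essentially the same approach as the paper. The obstacle you flag---that $y_\Lambda$ is the Markov-chain state rather than a product sample---is exactly the crux, and the paper resolves it via a separate lemma (\Cref{lemma-uniform}) bounding $\Pr{Y_H=\sigma}$ for arbitrary $H\subseteq\Lambda$ and $\sigma\in\Sigma_H$ by a chain-rule decomposition ordered by last Glauber-update times, with each conditional factor controlled by a uniform LLL marginal estimate (\Cref{lemma-uniform-general}) that holds for \emph{every} conditioning history; one additional wrinkle is that, to keep the resulting $\exp(|H|/(20D))$ correction tame, the paper selects for each edge $e$ in the $2$-tree a bounded-size witness subset $R(e)\subseteq e$ rather than using all of $e$.
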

\noindent
The proof of \Cref{lemma-simulation} is given in \Cref{section-subroutine}.

\begin{proof}[Proof of \Cref{theorem-main-gen}]
Let $\alpha,\beta,\eta$ be three parameters to be fixed later.
Our algorithm first uses the algorithm in \Cref{theorem-projection-general} with $\delta = \frac{\epsilon}{4}$ to construct a projection scheme satisfying \Cref{condition-projection} with parameters $\alpha$ and $\beta$. If the algorithm in \Cref{theorem-projection-general} fails to find such projection scheme, our algorithm terminates and outputs an arbitrary $\Ass{X}_{\mathrm{out}} \in \Dom{Q}$. If the algorithm finds such projection scheme, we run \Cref{alg-mcmc} to obtain the random sample $\Ass{X}_{\mathrm{out}} = \Ass{X}_{\mathrm{alg}}$, where $\Ass{X}_{\mathrm{alg}}$ denotes the output of \Cref{alg-mcmc}.

We first analyze the running time of the whole algorithm. 
By \Cref{theorem-projection-general}, the running time for constructing the projection scheme is 
\begin{align*}
T_{\mathrm{proj}} = O\tp{n(Dk+q)\log\frac{1}{\epsilon} \log q}.	
\end{align*} 
If the algorithm in \Cref{theorem-projection-general} succeeds, then it gives a projection oracle with query cost $O(\log q)$.
In \Cref{alg-mcmc}, we simulate the Glauber dynamics for $T = \Tmix$ transition steps. 
In each step, the algorithm  first picks a variable $v \in V$ uniformly at random, the cost is $O(\log n)$.
The algorithm then calls the subroutine  $\sample\tp{\Phi, \Proj{h}, \frac{\epsilon}{4(T+1)}, Y_{V \setminus \{v\}}, \{v\}}$ to draw a random $X_v \in Q_v$.
By \Cref{lemma-simulation}, the cost of the subroutine is $O\tp{D^2k\tp{\frac{n}{\delta}}^{\eta}\log^2\tp{\frac{nD}{\delta}}\log q}$, where 
\begin{align*}
\delta = \frac{\epsilon}{4(T+1)} = \Theta\tp{\frac{\epsilon}{n \log \frac{n}{\epsilon}}}	 = \Omega\tp{\frac{\epsilon^2}{n^2}}.
\end{align*}
After $X_v$ is sampled in \Cref{line-sample-1}, the algorithm calls the projection oracle to map $X_v \in Q_v$ to $Y_v = h_v(v) \in \Sigma_v$, the cost of this step is $O(\log q)$.
Thus, the cost for simulating each transition step is 
\begin{align}
\label{eq-T-step}
T_{\mathrm{step}} = O\tp{D^2 k \tp{\frac{n}{\epsilon}}^{3 \eta} \log^2 \tp{\frac{nD}{\epsilon}}\log q }. 	
\end{align}
Finally, the algorithm uses $ \sample\tp{\Phi, \Proj{h}, \frac{\epsilon}{4(T+1)}, \Ass{Y}, V}$ in \Cref{line-sample-2} to sample the final output. By \Cref{lemma-simulation}, the cost is $O\tp{nD^2k\tp{\frac{n}{\delta}}^{\eta}\log^2\tp{\frac{nD}{\delta}}\log q}$, where $\delta = \frac{\epsilon}{4(T+1)} = \Omega\tp{\frac{\epsilon^2}{n^2}}$. 
Hence, the cost for the last step is 
\begin{align}
\label{eq-T-final}
T_{\mathrm{final}} = O\tp{nD^2 k \tp{\frac{n}{\epsilon}}^{3\eta} \log^2 \tp{\frac{nD}{\epsilon}}\log q }.	
\end{align}
Combining all of them together, the total running time is 
\begin{align}
\label{eq-tot-time}
T_{\mathrm{total}} &= T_{\mathrm{proj}} + T \cdot T_{\mathrm{step}} + T_{\mathrm{final}}=O\tp{n(Dk+q)\log\frac{1}{\epsilon}\log q} + O\tp{(T+n) D^2 k \tp{\frac{n}{\epsilon}}^{3\eta } \log^2 \tp{\frac{nD}{\epsilon}} \log q }\notag\\
&= O\tp{(D^2 k+q) n \tp{\frac{n}{\epsilon}}^{3\eta} \log^3 \tp{\frac{nD}{\epsilon}} \log q}.
\end{align}

Next, we prove the correctness of the algorithm, i.e., the total variation distance between the output $\Ass{X}_{\mathrm{out}}$ and the uniform distribution $\mu$ is at most $\epsilon$. 
It suffices to prove
\begin{align}
\label{eq-alg-mu}
\DTV{\Ass{X}_{\mathrm{alg}}}{\mu} \leq \frac{3\epsilon}{4}.	
\end{align}
Because if $0<\beta<\alpha<1$ and $\log \frac{1}{ p} \geq \frac{25}{(\alpha-\beta)^3} \tp{\log D + 3}$, then with probability at least $1-\frac{\epsilon}{4}$, the algorithm in \Cref{theorem-projection-general} constructs the projection scheme successfully, i.e. $\Ass{X}_{\mathrm{out}} = \Ass{X}_{\mathrm{alg}}$. 
Let $\Ass{X} \sim \mu$. By coupling lemma, we can couple $\Ass{X}$ and $\Ass{X}_{\mathrm{alg}}$ such that  $\Ass{X}\neq \Ass{X}_{\mathrm{alg}}$ with probability $\frac{3\epsilon}{4}$. Thus, we can coupling $\Ass{X}$ and  $\Ass{X}_{\mathrm{out}}$ such that $\Ass{X}\neq \Ass{X}_{\mathrm{out}}$ with probability at most $\frac{\epsilon}{4} + \frac{3\epsilon}{4} = \epsilon$. By coupling lemma,
\begin{align*}
\DTV{\Ass{X}_{\mathrm{out}}}{\mu} \leq \epsilon.	
\end{align*}
We then verify~\eqref{eq-alg-mu}. 
Consider an idealized algorithm that first runs the idealized Glauber dynamics for $T = \Tmix$ steps to obtain a random sample $\Ass{Y}_{\mathrm{G}}$, then samples $\Ass{X}_{\mathrm{idea}}$ from the distribution $\mu^{\Ass{Y}_{\mathrm{G}}}$.  
By \Cref{lemma-mixing-gen}, if $\log \frac{1}{p} \geq \frac{50}{\beta} \log \tp{\frac{2000D^4}{\beta}}$, then $\DTV{\Ass{Y}_{\mathrm{G}}}{\nu} \leq \frac{\epsilon}{4}.$
Consider the following process to draw a random sample $\Ass{X} \sim \mu$.
First sample $\Ass{Y} \sim \nu$, then sample $\Ass{X} \sim \mu^{\Ass{Y}}$.
Thus, we can couple $\Ass{Y}$ and $\Ass{Y}_G$ such that $\Ass{Y} \neq \Ass{Y}_\mathrm{G}$ with probability $\frac{\epsilon}{4}$.
Conditional on $\Ass{Y} = \Ass{Y}_{\mathrm{G}}$, $\Ass{X}$ and $\Ass{X}_{\mathrm{idea}}$ can be perfectly coupled.
By coupling lemma,
\begin{align}
\label{eq-Glau-alg}
\DTV{\Ass{X}_{\mathrm{idea}}}{\mu} \leq \frac{\epsilon}{4}.		
\end{align}
We now couple \Cref{alg-mcmc} with this idealized algorithm. 
For each transition step, they pick the same variable, then couple each transition step optimally. 
In the last step, they use the optimal coupling to draw random samples from the conditional distributions. 
Note that in \Cref{line-sample-1} of \Cref{alg-mcmc}, if the random sample $X_v \in Q_v$ returned by the subroutine is a perfect sample from $\mu_v^{Y_{V \setminus \{v\} }}$, then the $Y_v \in \Sigma_v$ constructed 
in \Cref{line-map} follows the distribution $\nu_{v}^{Y_{V \setminus \{v\} }}$.
By \Cref{lemma-coupling-gen}, if none of $\+B^{(1)}_t$ and $\+B^{(2)}_t$ for $1\leq t \leq T + 1$ occurs, then
all the $(T+1)$ executions of the subroutine $\sample(\Phi,\Proj{h} ,\delta, y_\Lambda, S)$ return perfect samples from $\mu_S^{y_{\Lambda }}$. In this case, \Cref{alg-mcmc} and the idealized algorithm can be coupled perfectly.
Note that $\delta = \frac{\epsilon}{4(T+1)}$.
By coupling lemma and \Cref{lemma-simulation}, we have
\begin{align*}
\DTV{\Ass{X}_{\mathrm{alg}}}{\Ass{X}_{\mathrm{idea}}}\leq \Pr[]{\bigvee_{i=1}^{T+1}\tp{\+B^{(1)}_t \vee \+B^{(2)}_t }} \leq 2(T+1)\delta = \frac{\epsilon}{2}.
\end{align*}
Hence, ~\eqref{eq-alg-mu} can be proved by the following triangle inequality
\begin{align*}
\DTV{\Ass{X}_{\mathrm{alg}}}{\mu}  \leq 	\DTV{\Ass{X}_{\mathrm{alg}}}{\Ass{X}_{\mathrm{idea}}} + \DTV{\Ass{X}_{\mathrm{idea}}}{\mu} \leq \frac{\epsilon}{2} + \frac{\epsilon}{4} \leq \frac{3\epsilon}{4}.
\end{align*}

We then set the parameters $\alpha,\beta$ and $\eta$. We put all the constraints in \Cref{theorem-projection-general}, \Cref{lemma-mixing-gen} and \Cref{lemma-simulation} together:
\begin{align*}
0<\beta &< \alpha < 1, \quad 0 < \eta < 1;\\
\log \frac{1}{ p} &\geq \frac{25}{(\alpha-\beta)^3} \tp{\log D + 3};\\	
\log \frac{1}{p} &\geq \frac{50}{\beta} \log \tp{\frac{2000D^4}{\beta}};\\
 \log \frac{1}{p} &\geq \frac{1}{1 - \alpha}\log(20D^2); \\
 \log \frac{1}{p} &\geq \frac{1}{\beta}\log\tp{\frac{40\mathrm{e} D^2}{\eta}}.
\end{align*}
We can take $\alpha = 0.994$ and $\beta = 0.577$. The following condition implies all the above constraints
\begin{align*}
\log \frac{1}{p} \geq 350\log D + 3\log \frac{1}{\zeta} \quad \text{and} \quad \eta = \frac{\zeta}{3}, \quad\text{where } 0 < \zeta \leq 2^{-400}.
\end{align*}
Remark that $\log \frac{1}{p} \geq 350\log D + 3\log \frac{1}{\zeta}$ is equivalent to $\ln \frac{1}{p} \geq 350\ln D + 3\ln \frac{1}{\zeta}$.
By ~\eqref{eq-tot-time}, under this condition, the total running time is 
\begin{align*}
T_{\mathrm{total}} =O\tp{(D^2 k+q) n \tp{\frac{n}{\epsilon}}^{3\eta} \log^3 \tp{\frac{nD}{\epsilon}} \log q}= O\tp{(D^2 k+q) n \tp{\frac{n}{\epsilon}}^{\zeta} \log ^4 \tp{\frac{nDq}{\epsilon} } }. &\qedhere
\end{align*}
\end{proof}

\subsection{Sharper bounds for subclasses of CSP formulas}
\label{section-proof-applications}
We prove the following theorems on specific subclasses of CSP formulas. 
Our first result is for hypergraph coloring.
\begin{theorem}
\label{theorem-coloring-gen}
There is an algorithm such that given any $k$-uniform hypergraph with maximum degree $\Delta$ and a set of colors $[q]$, 
assuming $k\geq 13$ and $q \geq 	\max\tp{\tp{7k\Delta}^{\frac{9}{k-12}}, 650}$,
the algorithm returns a random $q$-coloring $\Ass{X}\in [q]^V$ in time
$O\tp{q^2k^3\Delta^2 n \tp{\frac{n}{\epsilon}}^{\frac{1}{100(qk\Delta)^4}} \log ^4 \tp{\frac{nqk\Delta}{\epsilon} } }$
such that 
the distribution of $\Ass{X}$ is $\epsilon$-close in total variation distance to the uniform distribution of all proper $q$-colorings of the input hypergraph.	
\end{theorem}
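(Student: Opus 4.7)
The plan is to instantiate Algorithm \ref{alg-mcmc} for the hypergraph coloring CSP and to optimize parameters under the sharp regime claimed by the theorem. First, encode the $k$-uniform hypergraph $H=(V,\+E)$ with maximum degree $\Delta$ as the CSP formula $\Phi=(V,[q]^V,\Cons{C})$ described in Section \ref{section-model}: for each hyperedge $e\in\+E$ and color $i\in[q]$, include one atomic constraint $c_{e,i}$ whose unique forbidden configuration is the all-$i$ coloring of $e$. This is a $(k,d)$-CSP in the sense of Theorem \ref{theorem-projection-uniform} with $d\leq qk\Delta$, and it satisfies $p=q^{-k}$ and dependency-graph degree $D\leq qk\Delta$. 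Since $q\geq 650$, the first (deterministic) branch of Theorem \ref{theorem-projection-uniform} produces a projection oracle of query cost $O(\log q)$ in time $O(n\log q)$ for suitable $(\alpha,\beta)$ satisfying Condition \ref{condition-projection}.

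Second, feed $\Phi$, $\Proj{h}$, and $\epsilon$ to Algorithm \ref{alg-mcmc}. The correctness follows verbatim from the proof of Theorem \ref{theorem-main-gen}: decompose
\[
\DTV{\Ass{X}_{\mathrm{out}}}{\mu}\leq \DTV{\Ass{X}_{\mathrm{out}}}{\Ass{X}_{\mathrm{idea}}}+\DTV{\Ass{X}_{\mathrm{idea}}}{\mu},
\]
bounding the first term by $\epsilon/2$ via Lemma \ref{lemma-simulation} plus a union bound over the $T+1=O(n\log(n/\epsilon))$ calls to $\sample$, and the second by $\epsilon/4$ via Lemma \ref{lemma-mixing-gen}. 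The time accounting $T_{\mathrm{proj}}+T\cdot T_{\mathrm{step}}+T_{\mathrm{final}}$ with Lemma \ref{lemma-simulation}'s per-call bound instantiated at $D=qk\Delta$, $q_v=q$, $|\vbl{c}|=k$ yields the advertised $O\!\tp{q^2k^3\Delta^2 n (n/\epsilon)^{1/(100(qk\Delta)^4)}\log^4(nqk\Delta/\epsilon)}$ bound once $\eta$ is set to $1/(100(qk\Delta)^4)$.

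The main obstacle is parameter selection. The three LLL-type inequalities driving Lemma \ref{lemma-mixing-gen} and Lemma \ref{lemma-simulation},
\[
k\log q\geq \tfrac{50}{\beta}\log\!\tp{\tfrac{2000 D^4}{\beta}},\quad k\log q\geq \tfrac{1}{1-\alpha}\log(20D^2),\quad k\log q\geq \tfrac{1}{\beta}\log\!\tp{\tfrac{40\mathrm{e}D^2}{\eta}},
\]
together with the feasibility conditions $7\leq q^{(\alpha+\beta)/2}\leq q/6$ and $\log q\geq 1/(\alpha-\beta)$ from Theorem \ref{theorem-projection-uniform}, must all be met simultaneously under $q\geq(7k\Delta)^{9/(k-12)}$ and $q\geq 650$, with $k\geq 13$. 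The sharp form $(7k\Delta)^{9/(k-12)}$ rather than the generic $D^{350/k}$ from Theorem \ref{theorem-main-gen} indicates that a structure-specific refinement of Lemma \ref{lemma-mixing-gen} tailored to monochromatic atomic constraints is needed; I expect this to be the main calculation. The plan there is to set $\alpha$ a large constant below $1$ so that the $(1-\alpha)^{-1}\log(20D^2)$ term produces the subtractive $12$ from $k$ in the exponent (absorbed into the round-down LLL criterion, Condition \ref{round-down-LLL-condition}), and $\beta$ a moderate constant below $\alpha$ so that the refined mixing constant produces the factor $9$ and the base $7k\Delta$; the third inequality then holds for free since $\eta$ is polynomially small in $qk\Delta$.

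Finally, combining the failure probability budgets (zero for the projection construction, $\epsilon/4$ for mixing, and $\epsilon/2$ for inverse sampling) by the triangle inequality gives $\DTV{\Ass{X}_{\mathrm{out}}}{\mu}\leq \epsilon$, and the running-time calculation proceeds exactly as in equations (\ref{eq-T-step})--(\ref{eq-tot-time}) in the proof of Theorem \ref{theorem-main-gen}.
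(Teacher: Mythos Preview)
Your overall plan mirrors the paper's proof, but there is one concrete mismatch that affects the parameter calculation. You invoke Lemma~\ref{lemma-mixing-gen} (the general mixing bound with condition $\log(1/p)\ge (50/\beta)\log(2000D^4/\beta)$) and then propose to derive a ``structure-specific refinement tailored to monochromatic atomic constraints'' to reach the sharp exponent $9/(k-12)$. The paper does not do this: it simply uses Lemma~\ref{lemma-mixing-kd}, the sharper mixing bound already stated for \emph{any} $(k,d)$-CSP on homogeneous domain $[q]$, whose condition is $k\log q\ge (1/\beta)\log(3000q^2d^6k^6)$. Nothing about monochromaticity or coloring is used; only the uniform-domain, uniform-width structure matters.

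A related slip: you write $d\le qk\Delta$, conflating the variable degree $d$ with the dependency-graph degree $D$. In the coloring CSP each vertex lies in at most $\Delta$ hyperedges and each hyperedge spawns $q$ atomic constraints, so $d=q\Delta$ and $D\le kd=qk\Delta$. With this and the choices $\alpha=7/9$, $\beta=2/3$, the mixing constraint from Lemma~\ref{lemma-mixing-kd} becomes $k\log q\ge (3/2)\log(3000q^8\Delta^6k^6)$, i.e.\ $(k-12)\log q\ge 9\log(k\Delta)+O(1)$: both the subtractive $12$ and the factor $9$ come from this inequality, not from the round-down LLL criterion as you conjectured. The constraints from Theorem~\ref{theorem-projection-uniform} and Lemma~\ref{lemma-simulation} are dominated by this one, and $q\ge 650$ handles the feasibility conditions $7\le q^{(\alpha+\beta)/2}\le q/6$ and $\log q\ge 1/(\alpha-\beta)=9$. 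The rest of your argument (deterministic projection, correctness via the idealized-coupling decomposition, and the time accounting) matches the paper.
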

\noindent
\Cref{theorem-coloring} is implied by \Cref{theorem-coloring-gen}: when $k \geq 30$, we have $(7k)^{\frac{9}{k-12}} \leq 15$, which means that $q \geq 15\Delta^{\frac{9}{k-12}} + 650$ suffices to imply the condition in \Cref{theorem-coloring-gen}. 

Our next result is for CNF formulas.
For a $k$-CNF formula, each clause contains $k$ variables.
And the \emph{maximum degree} of the formula is  given by the maximum number of clauses a variable belongs to.
The following theorem is is a formal restatement of \Cref{theorem-CNF}.

\begin{theorem}
\label{theorem-CNF-gen}	
The following holds for any $0< \zeta \leq 2^{-20}$. 
There is an algorithm such that given any $k$-CNF formula with maximum degree $d$, 
assuming $k \geq 13 \log d + 13 \log k + 3 \log \frac{1}{\zeta}$,  
the algorithm returns a random assignment $\Ass{X}\in \{\True,\False\}^V$ in time
$O\tp{d^2k^3 n \tp{\frac{n}{\epsilon}}^{{\zeta}/{(dk)^4}} \log ^3 \tp{\frac{ndk}{\epsilon} } }$
such that 
the distribution of $\Ass{X}$ is $\epsilon$-close in total variation distance to the uniform distribution of all satisfying assignments.
\end{theorem}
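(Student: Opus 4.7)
The plan is to specialize the proof of \Cref{theorem-main-gen} to the $k$-CNF setting, which is a $(k,d)$-CSP with homogeneous binary domain, so $q_v=2$ for every $v$, $p=2^{-k}$, and the dependency-graph degree satisfies $D\le k(d-1)$. First I would construct the projection using the second branch of \Cref{theorem-projection-uniform}: the first branch requires $q^{(\alpha+\beta)/2}\ge 7$ and fails for $q=2$, while the second branch only needs $k\ge\frac{2\ln 2}{(\alpha-\beta)^2}\log(2\mathrm{e}kd)$. In the binary case this reduces to the classical non-adaptive mark/unmark scheme of~\cite{Moi19,FGYZ20}: each variable is independently marked with a suitable probability, marked variables are given the identity projection with $\Sigma_v=\{\True,\False\}$, and unmarked variables collapse to $|\Sigma_v|=1$, recovering the ``mark/unmark'' paradigm as a special case of state compression. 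Setting the failure probability to $\epsilon/4$ gives a projection oracle of constant query cost constructed in time $O(ndk\log(1/\epsilon))$.

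Next I would feed this projection to \Cref{alg-mcmc} and mirror the argument of \Cref{theorem-main-gen}: the rapid mixing of the Glauber dynamics on $\nu$ in $\widetilde{O}(n)$ steps follows from \Cref{lemma-mixing-gen} (or, as I expect, a sharper CNF-specific refinement proved in \Cref{section-mixing}); each \sample{} call is analyzed via \Cref{lemma-simulation} with parameter $\eta=\zeta/(3(dk)^4)$; and the error terms are composed through the coupling lemma to bound $\DTV{\Ass{X}_{\mathrm{alg}}}{\mu}\le\epsilon$. Plugging $D\le kd$, $q=2$, and $3\eta=\zeta/(dk)^4$ into the runtime expression~\eqref{eq-tot-time} yields $O\bigl(d^2 k^3 n\,(n/\epsilon)^{\zeta/(dk)^4}\log^3(ndk/\epsilon)\bigr)$, matching the theorem statement.

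The main obstacle is constant tuning. I need $(\alpha,\beta,\eta)$ so that $k\ge 13\log d+13\log k+3\log\frac{1}{\zeta}$ simultaneously implies (i)~the projection threshold $k\ge\frac{2\ln 2}{(\alpha-\beta)^2}\log(2\mathrm{e}kd)$, (ii)~the mixing threshold, and (iii)~both inverse-sampling thresholds $\log\frac{1}{p}\ge\frac{1}{1-\alpha}\log(20D^2)$ and $\log\frac{1}{p}\ge\frac{1}{\beta}\log(40\mathrm{e}D^2/\eta)$ from \Cref{lemma-simulation}. Since $\log(1/p)=k$ and $\log(1/\eta)=4\log(dk)+O(\log(1/\zeta))$, the last threshold forces $k\gtrsim\frac{6}{\beta}\log(dk)+\frac{1}{\beta}\log\frac{1}{\zeta}$, so the coefficient $3$ in front of $\log(1/\zeta)$ in the target essentially pins $\beta$ near $\tfrac{1}{3}$, after which $\alpha$ must be pushed close enough to $1$ to keep $\frac{1}{1-\alpha}\log(20D^2)$ within budget while keeping $\alpha-\beta$ large enough for (i). The stated form of \Cref{lemma-mixing-gen} is too weak on its own to reach the coefficient $13$ in $13\log d+13\log k$; closing the gap is precisely where I expect to need the CNF-specialized mixing analysis deferred to \Cref{section-mixing}, which should exploit the binary domain to replace the generic $200/\beta$-type coefficient by something that composes cleanly with the $(\alpha,\beta)$ chosen above.
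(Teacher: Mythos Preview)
Your plan matches the paper's proof almost exactly: use the second (randomized mark/unmark) branch of \Cref{theorem-projection-uniform} with failure probability $\epsilon/4$, run \Cref{alg-mcmc}, invoke the $(k,d)$-specific mixing bound \Cref{lemma-mixing-kd} rather than \Cref{lemma-mixing-gen}, and set $\eta=\zeta/(3d^4k^4)$.

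The one slip is in your constant tuning. The coefficient $3$ in front of $\log(1/\zeta)$ only forces $\beta\ge 1/3$, not $\beta\approx 1/3$; taking $\beta=1/3$ would make the term $(6/\beta)\log(dk)$ in the inverse-sampling threshold equal $18\log(dk)$, overshooting the budget of $13\log(dk)$. The paper instead sets $\alpha=21/25$ and $\beta=1/2$. With these values all four constraints --- the projection bound $k\ge\frac{2\ln 2}{(\alpha-\beta)^2}\log(2\mathrm{e}kd)$, the mixing bound $k\ge\frac{1}{\beta}\log(12000\,d^6k^6)$ from \Cref{lemma-mixing-kd} with $q=2$, and the two inverse-sampling bounds $k\ge\frac{1}{1-\alpha}\log(20d^2k^2)$ and $k\ge\frac{1}{\beta}\log(40\mathrm{e}d^2k^2/\eta)$ --- each reduce to roughly $12\log(dk)$ or $12.5\log(dk)$ plus additive constants that are absorbed by $3\log(1/\zeta)\ge 60$.
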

\noindent

Let $\Phi = (V, [q]^V, \Cons{C})$ denote the CSP formula where all variables have the same domain $[q]$.
Suppose that for every constraint $c \in \Cons{C}$, $c$ is atomic and $\abs{\vbl{c}} = k$, and each variable belongs to at most $d$ constraints.
Let $\Proj{h}$ denote a projection scheme satisfying \Cref{condition-projection} with parameters $\alpha$ and $\beta$.
For such special CSP formulas, we have the following lemma with an improved  mixing condition.
\begin{lemma}
\label{lemma-mixing-kd}
If $k\log q \geq \frac{1}{\beta} \log \tp{3000 q^2d^6k^6}$, then the Markov chain $P_{\mathrm{Glauber}}$ on $\nu$ has
$\tmix(\epsilon) \leq \ctp{2n \log \frac{n}{\epsilon}}$.
\end{lemma}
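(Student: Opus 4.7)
The plan is to adapt the path coupling argument that proves Lemma~\ref{lemma-mixing-gen} while exploiting the extra structure of $(k,d)$-CSP formulas with homogeneous domain $[q]$, namely that $p = q^{-k}$ and the maximum degree of the dependency graph satisfies $D \leq kd$. By Lemma~\ref{lemma-path-coupling}, it suffices to build a one-step coupling $(\Ass{Y}, \Ass{Y}') \to (\Ass{Y}^{\mathrm{new}}, (\Ass{Y}')^{\mathrm{new}})$ for arbitrary $\Ass{Y}, \Ass{Y}' \in \Dom{\Sigma}$ with $d_{\mathrm{ham}}(\Ass{Y}, \Ass{Y}') = 1$ (say they differ only at $v^{\star}$) such that $\E{d_{\mathrm{ham}}(\Ass{Y}^{\mathrm{new}}, (\Ass{Y}')^{\mathrm{new}})} \leq 1 - \tfrac{1}{2n}$; this gives the claimed $\tmix(\epsilon) \leq \lceil 2n\log(n/\epsilon)\rceil$.

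The coupling proceeds in the standard way: both chains pick the same variable $u \in V$ uniformly, and then resample $Y_u$ and $Y'_u$ via the optimal coupling of the two conditional marginals $\nu_u^{Y_{V\setminus\{u\}}}$ and $\nu_u^{Y'_{V\setminus\{u\}}}$. The contribution $u = v^{\star}$ decreases the distance by $1$, contributing $-\tfrac{1}{n}$. Any other $u$ contributes at most $\tfrac{1}{n}\DTV{\nu_u^{Y_{V\setminus\{u\}}}}{\nu_u^{Y'_{V\setminus\{u\}}}}$. So the core task is to upper-bound the total discrepancy
\[
\Psi \;\triangleq\; \sum_{u \neq v^{\star}} \DTV{\nu_u^{Y_{V\setminus\{u\}}}}{\nu_u^{Y'_{V\setminus\{u\}}}}
\]
by a quantity strictly less than $\tfrac{1}{2}$ under the hypothesis $k\log q \geq \tfrac{1}{\beta}\log(3000\,q^2 d^6 k^6)$.

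To bound $\Psi$, first observe that $\nu_u^{Y_{V\setminus\{u\}}}$ is obtained by projecting, via $h_u$, the distribution $\mu_u^{Y_{V\setminus\{u\}}}$ of \eqref{eq:conditional-marginal-distribution}; so it suffices to bound the corresponding discrepancies of the $\mu_u^{\cdot}$'s. Now apply Condition~\ref{conditional-LLL-condition} (the self-reducible LLL regime that is preserved under arbitrary pinning on the projected side): under the assumed hypothesis, each conditional CSP formula still satisfies an LLL regime with $p$ replaced by $p^{\beta} = q^{-k\beta}$ and $D \leq kd$. This allows the use of a local-uniformity / decorrelation estimate (in the style of the standard analyses via Moser--Tardos witness trees, or equivalently via the coupling arguments used in Section~\ref{section-mixing} for the general case): the $\mu_u$-marginal depends essentially only on the at most $k d$ variables sharing a constraint with $u$, and the influence from a change at $v^{\star}$ decays as soon as $u$ is not in the immediate constraint-neighborhood of $v^{\star}$. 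Concretely, for the at most $kd$ variables $u$ that do share a constraint with $v^{\star}$, the TV distance is trivially bounded by $1$ but sharpened to $O(q^{-\beta k})$ using Theorem~\ref{theorem-LLL} applied to the pinned instance; for all remaining $u$ the TV distance is $O((kd)\cdot q^{-\beta k})$ per neighbor plus an exponentially decaying long-range tail. Multiplying by the number of terms and invoking $q^{-\beta k} \leq \tfrac{1}{3000\,q^2 d^6 k^6}$ yields $\Psi < \tfrac{1}{2}$.

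The main obstacle will be the quantitative bookkeeping in the last step: one has to show that the total discrepancy, summed over all non-$v^{\star}$ sites, is controlled by a quantity proportional to $D^{O(1)} q^{-\beta k}$, which is where the particular constants $3000\,q^2 d^6 k^6$ in the hypothesis come from. Compared to Lemma~\ref{lemma-mixing-gen}, the refinement is that we use $D \leq kd$ and $\log(1/p) = k \log q$ directly rather than through a general $\log D$, which removes the $50/\beta$ prefactor and replaces it by $1/\beta$ at the cost of slightly larger polynomial factors in $d$, $k$, $q$ inside the logarithm. The remainder of the argument—verifying irreducibility and aperiodicity of $P_{\mathrm{Glauber}}$ on the support of $\nu$, and invoking Lemma~\ref{lemma-path-coupling}—is routine.
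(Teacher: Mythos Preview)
Your path-coupling framework and the reduction to bounding $\sum_{u\neq v^{\star}}\DTV{\mu_u^{Y_{V\setminus\{u\}}}}{\mu_u^{Y'_{V\setminus\{u\}}}}$ match the paper. The gap is in the last paragraph: the claim that substituting $D\le kd$ and $\log(1/p)=k\log q$ into the general argument ``removes the $50/\beta$ prefactor and replaces it by $1/\beta$'' is incorrect, and this is precisely the content of the lemma.

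In the paper's proof of Lemma~\ref{lemma-mixing-gen} (Section~\ref{section-non-adp}), the factor $50/\beta$ is structural, not an artifact of loose constants. The non-adaptive coupling there requires constructing a \emph{second} projection $\Proj{h}^X,\Proj{h}^Y$ on top of the already-projected instances $\Phi^X,\Phi^Y$ (Condition~\ref{condition-projection-coupling}), and that second projection only retains a $\beta/10$ fraction of the entropy (equations~\eqref{eq-condition-upper-bound}--\eqref{eq-condition-coupling-projection}). Plugging in specific values of $D$ and $p$ does nothing to this $\beta/10$ loss; it is inherent to the method. Likewise, a bare LLL decorrelation estimate of the sort you sketch gives local uniformity of each marginal, but does not by itself show that the \emph{sum} of discrepancies over all $u$ is controlled with only a $1/\beta$ prefactor.

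To obtain $1/\beta$, the paper uses a different argument (Section~\ref{section-adaptive-coupling}): an \emph{adaptive} coupling $\cadp$ (Algorithm~\ref{alg-coupling-v}) operating directly on $\mu$ rather than on a twice-projected distribution. It reveals variables one by one and \emph{freezes} any hyperedge whose volume drops below a threshold $\gamma=32\mathrm{e}q^2d^3k^3$; this dynamic volume control is what eliminates the need for a second projection and hence the $\beta/10$ loss. The analysis then shows (Lemmas~\ref{lemma-path}--\ref{lemma-failed-path}) that $v_{\star}\in V_1$ forces a percolation sequence of disjoint failed variables and failed hyperedges, and uses a single uniform random variable per sequence element to decouple the failure events (Lemma~\ref{lemma-coupling-uniform}). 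The constant $3000\,q^2d^6k^6$ comes from the ratio between the initial volume lower bound in~\eqref{eq-initial-vol} and $\gamma$, together with the $k^3d^3$ combinatorial count for percolation-sequence extensions, not from a generic $D^{O(1)}q^{-\beta k}$ estimate. Your proposal contains none of this adaptive machinery, so as written it cannot reach the stated condition.
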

\noindent
The proof of \Cref{lemma-mixing-kd} is given in \Cref{section-mixing}.
We use \Cref{lemma-simulation} and \Cref{lemma-mixing-kd} to prove our results.

\begin{proof}[Proof of \Cref{theorem-coloring-gen}]
Consider the hypergraph $q$-coloring on a $k$-uniform hypergraph $H=(V,\+E)$ with maximum degree $\Delta$.
We first transform the hypergraph coloring instance into a CSP formula $\Phi = (V, [q]^V, \Cons{C})$ with atomic constraints.
For each hyperedge $e \in \+E$, we add $q$ constraints such that the $i$-th constraint $c_i$ forbids the bad event that the hyperedge $e$ is monochromatic with color $i \in [q]$. 
Namely, $\vbl{c_i} = e$ and $c_i$ is $\False{}$ if and only if all variables in $\vbl{c_i}$ take the value $i$.
The time complexity for this reduction is $O(nq\Delta\log q)$.

In CSP formula  $\Phi = (V, [q]^V, \Cons{C})$, $c$ is atomic and $\abs{\vbl{c}} = k$ for all $c \in \Cons{C}$; each variable belongs to at most $q \Delta$ constraints.
The maximum degree  $D$ of the dependency graph of $\Phi$ is at most $qk\Delta$.
We assume $D = qk\Delta$.
If  each variable $v \in V$ draws a random value from $[q]$ uniformly and independently, then the maximum probability $p$ that one constraint becomes $\False$ is 
$p = \tp{\frac{1}{q}}^k$.

Let $\alpha,\beta,\eta$ be three parameters to be fixed later.
Our algorithm first uses the deterministic algorithm in \Cref{theorem-projection-uniform} to construct a projection scheme satisfying \Cref{condition-projection} with parameters $\alpha$ and $\beta$. 
The deterministic algorithm in \Cref{theorem-projection-uniform} always finds such a projection scheme, which gives a projection oracle with query cost $O(\log q)$.
Remark that the cost for constructing the projection scheme is 
\begin{align}
\label{eq-proj-coloring}
T_{\mathrm{proj}} = O\tp{n \log q}.	
\end{align}
We  then run \Cref{alg-mcmc} to obtain the output $\Ass{X}_{\mathrm{out}} = \Ass{X}_{\mathrm{alg}}$, where $\Ass{X}_{\mathrm{alg}}$ denotes the output of \Cref{alg-mcmc}.
The correctness result can be proved by going through the proof of \Cref{theorem-main}.

We set parameters $\alpha, \beta$ and $\eta$. 
Note that $\vbl{c} = k$ for all $c \in \Cons{C}$; $p = q^{-k}$; and each variable belongs to at most $d = q\Delta $ constraints; and $D = qk\Delta$.
We put all the constraints in \Cref{theorem-projection-uniform}, \Cref{lemma-mixing-kd} and \Cref{lemma-simulation} together:
\begin{align*}
0 < \beta < \alpha <1,\quad 7&\le q^{\frac{\alpha+\beta}{2}}\le \frac{q}{6},\quad \log q \geq \frac{1}{\alpha-\beta},\quad 0 < \eta < 1;\\
k\log q &\geq \frac{1}{\beta} \log \tp{3000 q^8\Delta^6k^6};\\	
k\log q &\geq \frac{1}{1-\alpha}\log(20q^2k^2\Delta^2); \\
k\log q &\geq \frac{1}{\beta}\log\tp{\frac{40\mathrm{e} q^2k^2\Delta^2}{\eta}}.
\end{align*}
We can take $\alpha = \frac{7}{9}$ and $\beta = \frac{2}{3}$. 
The following condition suffices to imply all the  above constraints: assume $k > 12$,
\begin{align*}	
\log q &\geq \frac{9}{k-12} \log \Delta + \frac{9}{k - 12}\log k + \frac{25}{k - 12},\quad q \geq 650, \quad \eta = \frac{1}{2^{9}(qk\Delta)^4}.
\end{align*}
The following condition suffices to imply the above one
\begin{align*}
q \geq 	\max\tp{\tp{7k\Delta}^{\frac{9}{k-12}}, 650} \quad \text{and}\quad \eta = \frac{1}{2^{9}(qk\Delta)^4}.
\end{align*}
Note that $D=kq\Delta$.
Under this condition,  by~\eqref{eq-T-step},~\eqref{eq-T-final} and~\eqref{eq-proj-coloring}, the total running time is 
\begin{align*}
T_{\mathrm{total}} = O\tp{D^2k n \tp{\frac{n}{\epsilon}}^{3\eta} \log ^3 \tp{\frac{nD}{\epsilon} } \log q } = O\tp{q^2k^3\Delta^2 n \tp{\frac{n}{\epsilon}}^{\frac{1}{100(qk\Delta)^4}} \log ^4 \tp{\frac{nqk\Delta}{\epsilon} } }. &\qedhere
\end{align*}
\end{proof}

\begin{proof}[Proof of \Cref{theorem-CNF-gen}]
Let $\Phi = (V, \{\True,\False\}^V, \Cons{C})$ be a $k$-CNF formula, where each variable belongs to at most $d$ clauses. 
Each variable takes its value for the Boolean domain $\{\True,\False\}$, thus the size of the domain is $q = 2$.
The maximum degree $D$ of the dependency graph is at most $kd$.
We  assume $D = kd$.
If  each variable $v \in V$ draws a random value from the Boolean domain $\{\True,\False\}$ uniformly and independently, the maximum probability $p$ that one clause is not satisfied is 
$p = \tp{\frac{1}{2}}^k$.	

Let $\alpha,\beta,\eta$ be three parameters to be fixed later.
Our algorithm first uses the randomized algorithm in \Cref{theorem-projection-uniform} with $\delta = \frac{\epsilon}{4}$ to construct a projection scheme satisfying \Cref{condition-projection} with parameters $\alpha$ and $\beta$. 
If the randomized algorithm in \Cref{theorem-projection-uniform} fails to find such projection scheme, 
our algorithm terminates and outputs an arbitrary $\Ass{X}_{\mathrm{out}} \in \{\True,\False\}^V$. If the randomized algorithm in \Cref{theorem-projection-uniform} succeeds, it gives a projection oracle with query cost $O(\log q)$.
By \Cref{theorem-projection-uniform}, the cost for constructing the projection scheme is 
\begin{align}
\label{eq-projection-CNF}
T_{\mathrm{proj}} = O\tp{n dk \log \frac{1}{\epsilon}}.	
\end{align}
We then run \Cref{alg-mcmc} to obtain the output $\Ass{X}_{\mathrm{out}} = \Ass{X}_{\mathrm{alg}}$, where $\Ass{X}_{\mathrm{alg}}$ denotes the output of \Cref{alg-mcmc}.
The correctness result can be proved by going through the proof of \Cref{theorem-main}.

We set parameters $\alpha, \beta$ and $\eta$. We put all the constraints in \Cref{theorem-projection-uniform}, \Cref{lemma-mixing-kd} and \Cref{lemma-simulation} together:	
\begin{align*}
0  <\beta &< \alpha < 1, \quad k \geq \frac{2\ln 2}{(\alpha - \beta)^2}\log(2\mathrm{e}kd), \quad 0 < \eta < 1;\\
k &\geq \frac{1}{\beta} \log \tp{3000 \cdot 4 \cdot d^6k^6};\\
k &\geq \frac{1}{1-\alpha}\log(20d^2k^2);\\
k &\geq \frac{1}{\beta}\log\tp{\frac{40\mathrm{e} d^2k^2}{\eta}}.
\end{align*}
We can take $\alpha = \frac{21}{25}$ and $\beta = \frac{1}{2}$. 
The following condition suffices to imply all the  above constraints 
\begin{align*}
k \geq 13 \log d + 13\log k +3 \log \frac{1}{\zeta} \quad\text{and}\quad
\eta = \frac{\zeta}{3d^4k^4}, \quad \text{where }	0 <\zeta \leq 2^{-20}.
\end{align*}
Note that $D=dk$ and $q = 2$.
Under this condition,  by~\eqref{eq-T-step},~\eqref{eq-T-final} and~\eqref{eq-projection-CNF}, the total running time is 
\begin{align*}
T_{\mathrm{total}} = O\tp{D^2k n \tp{\frac{n}{\epsilon}}^{3\eta} \log ^3 \tp{\frac{nD}{\epsilon} } \log q }= O\tp{d^2k^3 n \tp{\frac{n}{\epsilon}}^{\frac{\zeta}{d^4k^4}} \log ^3 \tp{\frac{ndk}{\epsilon} } }. &\qedhere
\end{align*}

\end{proof}

\pagebreak
\section{Projection construction}
\label{section-projection-construction}

In this section, we give the algorithms to construct the projection schemes.
We first give the projection algorithm for $(k,d)$-CSP formulas (\Cref{theorem-projection-uniform}), then  give the projection algorithm for general CSP formulas (\Cref{theorem-projection-general}).
\begin{proof}[Proof of \Cref{theorem-projection-uniform}]
We start from the first part of the lemma.
For each $v \in V$, we set $s_v$ as 
\begin{align*}
s_v = \ctp{q^{\frac{2-\alpha-\beta}{2}}}.
\end{align*}
For each variable $v \in V$, we partition $[q] = \{1,2,\ldots,q\}$ into $s_v$ intervals, where the sizes of the first $(q \mod s_v)$ intervals are $\lceil q / s_v \rceil$, and the sizes of the last $s_v - (q \mod s_v)$ intervals are $\lfloor q /s_v \rfloor$. Let $\Sigma_v = \{1,2,\ldots,s_v\}$. For each $i \in [q]$, $h_v(i) = j \in \Sigma_v$, where $i$ belongs to the $j$-th interval. This constructs the function $h_v:[q] \to \Sigma_v$. To implement the projection oracle, we only need to calculate $s_v$ for each $v \in V$, the total cost is $O(n\log q)$. Consider the two queries in \Cref{def:projection-oracle}.
\begin{itemize}
\item evaluation: given an input value $i \in [q]$ of a variable $v \in V$, the algorithm should return $j \in \Sigma_v$ such that $i$ is in the $j$-th interval, this query can be answered with the cost $O(\log q)$;
\item inversion: given a projected value $j \in \Sigma_v$ of a variable $v \in V$, the algorithm should return a random element in the $j$-th interval uniformly at random, this query can be answered with the cost $O(\log q)$.
\end{itemize}

Next, we prove that this projection scheme satisfies \Cref{condition-projection}.
For any $v \in V$, it holds that
\begin{align*}
\ctp{\frac{q}{s_v}}	\leq \ctp{q^{(\alpha+\beta)/2}} \leq q^{(\alpha+\beta)/2}+1 \overset{\diamondsuit}{\leq} \frac{7}{6}q^{(\alpha+\beta)/2},
\end{align*}
where $(\diamondsuit)$ holds because $q^{(\alpha+\beta)/2}+1 \leq \frac{7}{6}q^{(\alpha+\beta)/2}$ if $q^{(\alpha+\beta)/2} \geq 6$.
Note that $\log \frac{7}{6} \leq 0.23$.
This implies the following inequality
\begin{align}
\label{eq-large-q-1}
\sum_{v \in \vbl{c}}\ctp{\frac{q}{s_v}}	\leq k \tp{\frac{\alpha+\beta}{2} \log q + 0.23} \overset{(\star)}{\leq} k \cdot \alpha \log q = \alpha \sum_{v \in \vbl{c}}{\log q},
\end{align}
where inequality $(\star)$ holds because $\alpha > \beta$ and $\log q \geq \frac{0.8}{\alpha-\beta}$. For any $v \in V$, it holds that
\begin{align*}
\ftp{\frac{q}{s_v}} = \ftp{ \frac{q}{\ctp{q^{(2-\alpha-\beta)/2}}} }	\geq \ftp{ \frac{q}{q^{(2-\alpha-\beta)/2} + 1} }	\overset{(\ast)}{\geq} \ftp{ \frac{q}{\tp{1 + \frac{1}{6}}q^{(2-\alpha-\beta)/2}} }\geq \frac{6}{7} q^{\frac{\alpha+\beta}{2}} - 1 \overset{(\diamondsuit)}{\geq} \frac{5}{7}q^{\frac{\alpha+\beta}{2}},
\end{align*}
where inequality $(\ast)$ holds because $\tp{1 + \frac{1}{6}}q^{(2-\alpha-\beta)/2} \geq q^{(2-\alpha-\beta)/2} + 1$ if $q^{(2-\alpha-\beta)/2} \geq 6$;
inequality $(\diamondsuit)$ holds because $q^{(\alpha+\beta)/2} \geq 7$. 
Note that $\log \frac{5}{7} \geq -0.5$.
This implies
\begin{align}
\label{eq-large-q-2}
\sum_{v \in \vbl{c}}\log\ftp{\frac{q}{s_v}} \geq k \tp{ \frac{\alpha+\beta}{2} \log q  - 0.5 } \overset{(\star)}{\geq} k \cdot \beta \log q = \beta \sum_{v \in \vbl{c}}\log q,
\end{align}
where inequality $(\star)$ holds because $\alpha > \beta$ and  $\log q \geq \frac{1}{\alpha-\beta}$.
Combining~\eqref{eq-large-q-1} and~\eqref{eq-large-q-2} proves the first part of the lemma.

We then prove the second part of the lemma.
The algorithm constructs a subset of variables $\+M \subseteq V$. We call $\+M$ the set of marked variables.
If $v \in \+M$, let $\Sigma_v = [q]$, and $h_v(i) = i$ for all $i \in [q]$. 
If $v \not\in \+M$, let $\Sigma_v = \{1\}$, and $h_v(i) = 1$ for all $i \in [q]$.
Remark that $s_v = q$ if $v$ is a marked variable, and $s_v = 1$ if $v$ is an unmarked variable.
To implement the projection oracle, we only need to construct $\+M$.
Suppose the set $\+M$ is given (the construction will be explained later).
Consider the two queries in \Cref{def:projection-oracle}.
\begin{itemize}
\item evaluation: given an input value $i \in [q]$ of a variable $v \in V$, the algorithm should return the input~$i$ if $v \in \+M$, or return $1 \in \Sigma_v$ if $v \notin \+M$; this query can be answered with the cost $O(\log q)$;
\item inversion: given a projected value $j \in \Sigma_v$ of a variable $v \in V$, the algorithm should return $j \in [q]$ if $v \in \+M$; or return a uniform random element $X \in [q]$ if $v \notin \+M$; this query can be answered with the cost $O(\log q)$.	
\end{itemize}

Now, we construct the set of marked variables $\+M \subseteq V$.
For each constraint $c \in \Cons{C}$, define $t_c$ as the number of marked variables in $c$, i.e.
\begin{align*}
t_c \triangleq \abs{\+M \cap \vbl{c}}.	
\end{align*}
Hence, \Cref{condition-projection} becomes for each $c \in \Cons{C}$,
\begin{align*}
(1-\alpha) k \leq t_c \leq (1 - \beta)k.	
\end{align*}
In other words, each constraint contains at least $(1-\alpha) k$ marked variables and at least $
\beta k$ unmarked variables. 
We use Lov\'asz local lemma to show that such set $\+M$ exists, then use Moser-Tardos algorithm to find a set $\+M$. 
Let $\+D$ denote the product distribution such that each variables is marked independently with probability $\frac{2-\alpha-\beta}{2}$. For each constraint $c \in \Cons{C}$, let $B_c$ denote the bad event that $c$ contains less than $(1-\alpha) k$ marked variables or less than $\beta k$ unmarked variables.
We use concentration inequality to bound the probability of $B_c$. 
In~\cite{FGYZ20}, the probability of the bad event $B_c$ is bounded by the Chernoff bound. Now, we use Hoeffding's inequality to obtain a better result 
\begin{align*}
\Pr[\+D]{B_c} = \Pr[]{t_c < (1-\alpha) k \lor t_c > (1-\beta)k}	 = \Pr{\abs{t_c - \E{t_c}} \geq \frac{\alpha-\beta}{2}k}\leq 2 \exp \tp{ - \frac{(\alpha-\beta)^2}{2}k }.
\end{align*}
The maximum degree of dependency graph is at most $k(d-1)$.
By Lov\'asz local lemma (\Cref{theorem-LLL}), the set $\+M$ exist if
\begin{align*}
\mathrm{e}\cdot 2 \exp \tp{ - \frac{(\alpha-\beta)^2}{2}k } \cdot kd \leq 1.
\end{align*}
Note that $\alpha > \beta$ and $k \geq \frac{2\ln 2}{(\alpha - \beta)^2}\tp{\log k + \log d + \log 2\mathrm{e}}$ implies the above condition.

The Moser-Tardos algorithm can find such set $\+M$ within $\frac{2n}{k}$ resampling steps in expectation~\cite{moser2010constructive}.
We can run $\ctp{\log \frac{1}{\delta}}$ Moser-Tardos algorithms independently, then with probability at least $1 - \delta$, one of them finds the set $\+M$ within $\frac{4n}{k}$ resampling steps. 
The cost of each resampling step is $O(dk^2)$.
The cost for constructing data structure is $O(ndk\log\frac{1}{\delta})$.
\end{proof}

\begin{proof}[Proof of \Cref{theorem-projection-general}]
The domain of each variable $v \in V$ is $Q_v$, where $q_v = \abs{Q_v}$.
Assume each element $x \in Q_v$ can be in-coded by $O(\log q_v)$ bits.
For each $v \in V$, suppose the input provides an array $\+A_v$ of size $q_v$ containing all the elements in $Q_v$.
For each $v \in V$, we construct a data structure $\+S_v$ that can answer the following two types of the queries:
(1) given any index $i \in [q_v]$, we can access the $i$-th element in this array with cost $O(\log q_v)$.  
(2) given any $x \in Q_v$, we can find the unique index $i$ such that $\+A_v(i) = x$ with the cost $O(\log q_v)$.
For each $v \in V$, the cost of the construction is $O(q_v \log q_v)$.

The algorithm divides all variables into two parts $S_{\mathrm{large}}$ and $S_{\mathrm{small}}$ such that
\begin{align*}
S_{\mathrm{large}} = \left\{v \in V \mid  \log q_v \geq \frac{5}{\alpha-\beta} \right\}, \qquad S_{\mathrm{small}} = \left\{v \in V \mid \log q_v < \frac{5}{\alpha-\beta} \right\}.
\end{align*}
For each variable $v \in S_{\mathrm{large}}$, the algorithm sets 
\begin{align*}
\forall v\in S_{\mathrm{large}}, \quad s_v = \ctp{q_v^{\frac{2 - \alpha-\beta}{2}}}.
\end{align*}
We partition $[q] = \{1,2,\ldots,q\}$ into $s_v$ intervals, where the sizes of the first $(q \mod s_v)$ intervals are $\lceil q / s_v \rceil$, and the sizes of the last $s_v - (q \mod s_v)$ intervals are $\lfloor q /s_v \rfloor$. Let $\Sigma_v = \{1,2,\ldots,s_v\}$, where each $j \in \Sigma_v$ represents an interval $[L_j, R_j]$. For any $x \in Q_v$, let $i$ denote the unique index such that $\+A_v(i) = x$, we set $h_v(x) = j$ such that $i \in [L_j,R_j]$. This defines the function $h_v: Q_v \to \Sigma_v$. 
To implement the projection oracle for $S_{\mathrm{large}}$, the algorithm only needs to compute the value of $s_v$, where the cost is $O(\log q_v)$.
Consider the two queries of the projection oracle in \Cref{def:projection-oracle}.
\begin{itemize}
\item  evaluation: given an input value $x \in Q_v$ of a variable $v \in S_{\mathrm{large}}$, with the data structure $\+S_v$, the algorithm can return $h_v(x)$ in time $O(\log q_v)$; 	
\item inversion: given a projected value $j \in \Sigma_v$ of a variable $v \in  S_{\mathrm{large}}$, the algorithm should return a uniform element in set $\{x \in \+A_v(i) \mid  L_j \leq i \leq R_j \}$; with the data structure $\+S_v$, this query can be answered with the cost $O(\log q_v)$.
\end{itemize}
Let $q = \max_{v \in V}q_v$. For any $v \in S_{\mathrm{large}}$, the cost for answering each query is $O(\log q)$.

For variables in $S_{\mathrm{small}}$, the algorithm constructs a subset of variables $\+M \subseteq S_{\mathrm{small}}$. 
We call $\+M$ the set of marked variables.
If $v \in \+M$, let $\Sigma_v = Q_v$, and $h_v(x) = x$ for all $x \in Q_v$. 
If $v \not\in \+M$, let $\Sigma_v = \{1\}$, and $h_v(x) = 1$ for all $x \in Q_v$.
To implement the projection oracle, the algorithm only needs to construct the set $\+M$.
The construction of $\+M$ will be explained later.
Suppose the set $\+M \subseteq S_{\mathrm{small}}$ is given.
Consider the two queries of the projection oracle in \Cref{def:projection-oracle}.
\begin{itemize}
\item evaluation: given an input value $x \in Q_v$ of a variable $v \in S_{\mathrm{small}}$, the algorithm should return the input $x$ if $v \in \+M$, or return $1 \in \Sigma_v$ if $v \notin \+M$; this query can be answered in time $O(\log q_v)$;
\item inversion: given a projected value $x \in \Sigma_v$ of a variable $v \in S_{\mathrm{small}}$, the algorithm should return the input $x$ if $v \in \+M$; or return a uniform random element $X \in Q_v$ if $v \notin \+M$; with the data structure $\+S_v$, this query can be answered in time $O(\log q_v)$.
\end{itemize}
Let $q = \max_{v \in V}q_v$. For any $v \in S_{\mathrm{small}}$, the cost for answering each query is $O(\log q)$.

Again, we use Lov\'asz local lemma to prove that there is a subset $\+M$ such that the above projection scheme satisfies \Cref{condition-projection}, then use Moser-Tardos algorithm to find such set $\+M$.
Let $\+D$ denote the product distribution such that each variable $v \in  S_{\mathrm{small}}$ is marked with probability $\frac{2-\alpha-\beta}{2}$. For each $c \in \Cons{C}$, let $B_c$ denote the bad event
\begin{align}
\label{eq-mark-condition-gen}
\sum_{v \in \vbl{c}}\log{\ctp{\frac{q_v}{s_v}}}	 > \alpha\sum_{v \in \vbl{c}}\log q_v \quad\text{or}\quad \sum_{v \in \vbl{c}}\log\ftp{\frac{q_v}{s_v}} < \beta \sum_{v \in \vbl{c}}\log q_v.
\end{align}
Fix a constraint $c \in \Cons{C}$. Suppose $v_1,v_2,\ldots,v_k$ are variables in $\vbl{c}$, where $k = k(c) = \abs{\vbl{c}}$. Let $0\leq \ell \leq k$ be an integer and assume $v_i \in S_{\mathrm{large}}$ for all $1\leq i \leq \ell$ and $v_j \in S_{\mathrm{small}}$ for all $\ell+ 1\leq j \leq k$. 
For each $1\leq i \leq k$, we define random variable
\begin{align*}
X_i \triangleq 	\log\ctp{\frac{q_{v_i}}{s_{v_i}}}.
\end{align*}
For each $1\leq i \leq \ell$, since $v_i \in S_{\mathrm{large}}$, $X_i = \log\ctp{q_{v_i} / \lceil q_{v_i}^{(2-\alpha-\beta)/2}\rceil }$ with probability 1. We have
\begin{align*}
\forall 	1\leq i \leq \ell, \quad \E{X_i} &= \log \ctp{\frac{q_{v_i}}{\lceil q_{v_i}^{(2-\alpha-\beta)/2}\rceil}} \leq \log \ctp{ q_{v_i}^{(\alpha+\beta)/2} } \leq \log\tp{ \frac{5}{4} q_{v_i}^{(\alpha+\beta)/2}},
\end{align*}
where the last inequality holds because $\log q_{v_i} \geq \frac{5}{\alpha - \beta}$, which implies $\frac{5}{4} q_{v_i}^{(\alpha+\beta)/2} \geq q_{v_i}^{(\alpha+\beta)/2} + 1 \geq \ctp{ q_{v_i}^{(\alpha+\beta)/2} }$. Note that $\log \frac{5}{4} \leq 0.33$ and $\log q_{v_i} \geq \frac{5}{\alpha - \beta}$. It holds that
\begin{align}
\label{eq-EX1}
\forall 	1\leq i \leq \ell, \quad \E{X_i} \leq 0.33 + \frac{\alpha + \beta}{2}\log q_{v_i} \leq 
\alpha \log q	_{v_i} - \frac{\alpha-\beta}{3} \log q_{v_i}.
\end{align}
For each $\ell + 1\leq j \leq k$, since  $v_j \in S_{\mathrm{small}}$, $X_j = \log q_{v_j}$ with probability $\frac{\alpha+\beta}{2}$; and $X_j = 0$ with probability $\frac{1-\alpha-\beta}{2}$. We have
\begin{align}
\label{eq-EX2}
\forall 	\ell + 1\leq j \leq k, \quad \E{X_i}	 &= \frac{\alpha+\beta}{2}\log q_{v_i} \leq \alpha \log q_{v_i} - \frac{\alpha-\beta}{3} \log q_{v_i}.
\end{align}
Consider the sum $\sum_{i = 1}^k X_i$. For any $v_i \in S_{\mathrm{large}}$, the value of $X_i$ is fixed. For any $v_j \in S_{\mathrm{small}}$, $X_j$ takes a random value and it must hold that $ X_j\in \{0, \log q_{v_i}\}$. 
By Hoeffding's inequality, 
\begin{align}
\label{eq-X-bounded}
\Pr[\+D]{\sum_{i = 1}^k X_i > \sum_{i=1}^k \E{X_i} + t } \leq \exp\tp{- \frac{2t^2}{\sum_{j=\ell + 1}^k \log^2 q_{v_j}}} \overset{(\star)}{\leq} \exp\tp{- \frac{2(\alpha-\beta)t^2}{5\sum_{j=\ell + 1}^k \log q_{v_j}}},
\end{align}
where $(\star)$ holds due to $\log q_{v_j} \leq \frac{5}{\alpha-\beta}$ for all $\ell + 1 \leq j \leq k$.
Combining~\eqref{eq-EX1},~\eqref{eq-EX2} and~\eqref{eq-X-bounded}, we have
\begin{align}
\label{eq-bad-X}
\Pr[\+D]{ \sum_{i=1}^k X_i > \alpha \sum_{i = 1}^k \log q_{v_i} } \leq \exp\tp{- \frac{\frac{2(\alpha-\beta)^3}{9} \tp{\sum_{i=1}^k \log q_{v_i} }^2 }{ 5\sum_{j = \ell + 1}^k \log q_{v_j} } } \leq \exp \tp{- \frac{(\alpha-\beta)^3}{23}\sum_{i=1}^k \log q_{v_i} }.
\end{align}

Similarly, for each $1\leq i \leq k$, we define random variable
\begin{align*}
Y_i \triangleq 	\log \ftp{\frac{q_{v_i}}{s_{v_i}}}.
\end{align*}
For each $1\leq i \leq \ell$, since $v_i \in S_{\mathrm{large}}$, $Y_i = \log \ftp{\frac{q_{v_i}}{\lceil q_{v_i}^{(2-\alpha-\beta)/2}\rceil }}$ with probability 1. We have
\begin{align*}
\forall 	1\leq i \leq \ell, \quad \E{Y_i} &=	\log \ftp{\frac{q_{v_i}}{\ctp{ q_{v_i}^{(2-\alpha-\beta)/2}}}} \geq \log \ftp{ \frac{4}{5}{q_{v_i}^{(\alpha+\beta)/2}}} \geq \log \tp{ \frac{3}{5}{q_{v_i}^{(\alpha+\beta)/2}}},
\end{align*}
where the last two inequalities hold because $0<\beta<\alpha<1$ and $\log q_{v_i} \geq \frac{5}{\alpha - \beta}$, which implies $\frac{5}{4}q_{v_i}^{(2-\alpha-\beta)/2} \geq q_{v_i}^{(2-\alpha-\beta)/2} + 1 \geq \ctp{ q_{v_i}^{(2-\alpha-\beta)/2}}$ and $\ftp{\frac{4}{5} q_{v_i}^{(\alpha+\beta)/2}} \geq \frac{4}{5} q_{v_i}^{(\alpha+\beta)/2} - 1 \geq \frac{3}{5}{q_{v_i}^{(\alpha+\beta)/2}}$. Note that $\log \frac{3}{5} \geq -0.74$. Again, by $\log q_{v_i} \geq \frac{5}{\alpha - \beta}$, we have 
\begin{align*}
\forall 	1\leq i \leq \ell, \quad \E{Y_i} \geq -0.74 + \frac{\alpha + \beta}{2}\log q_{v_i} \geq \beta \log q_{v_i} + \frac{\alpha - \beta}{3}\log q_{v_i}.
\end{align*}
For each $\ell + 1\leq j \leq k$, since  $v_j \in S_{\mathrm{small}}$, $Y_j = 0$ with probability $\frac{2-\alpha-\beta}{2}$; and $Y_j = \log q_{v_j}$ with probability $\frac{\alpha+\beta}{2}$. We have
\begin{align*}
\forall 	\ell + 1\leq j \leq k, \quad \E{Y_i}	 &= \frac{\alpha+\beta}{2}\log q_{v_i} \geq \beta \log q_{v_i} + \frac{\alpha-\beta}{3} \log q_{v_i}.
\end{align*}
Again, by Hoeffding's inequality,  we have
\begin{align}
\label{eq-bad-Y}
\Pr[\+D]{ \sum_{i=1}^k Y_i < \beta \sum_{i = 1}^k \log q_{v_i} } \leq  \exp \tp{- \frac{(\alpha-\beta)^3}{23}\sum_{i=1}^k \log q_{v_i} }.
\end{align}
Combining~\eqref{eq-bad-X} and~\eqref{eq-bad-Y} we have
\begin{align*}
\Pr[\+D]{B_c} \leq  2\exp \tp{- \frac{(\alpha-\beta)^3}{23}\sum_{i=1}^k \log q_{v_i} } \overset{(\star)}{\leq} 2\exp\tp{-\frac{25}{23}\log D - 3} \leq 2\exp\tp{-\frac{25}{23}\ln D - 3} \leq \frac{1}{\mathrm{e}(D+1)},
\end{align*}
where $(\star)$ holds because $\sum_{i=1}^k \log{q_{v_i}} \geq \frac{25}{(\alpha-\beta)^3} \tp{\log D + 3}$. By Lov\'asz local lemma, there exists a set of marked variables $\+M \subseteq S_{\mathrm{small}}$ such that the condition in~\eqref{eq-mark-condition-gen} is satisfied.

Similar to the proof of \Cref{theorem-projection-uniform}, we can use Moser-Tardos algorithm~\cite{moser2010constructive} to construct such projection scheme.  With probability at least $1 - \delta$, the algorithm constructs a projection scheme in time $O(nDk\log\frac{1}{\delta})$, where $k = \max_{c \in \Cons{C}}\abs{ \vbl{c}}$.

We now combine all the steps together. The construction of the data structures $\+S_v$ for all $v \in V$ has the cost $O(n q \log q)$. Computing the $s_v$ for all $v \in S_{\mathrm{large}}$ has the costs $O(n \log q)$. Computing the marked set $\+M \subseteq S_{\mathrm{small}}$ has the cost $O(nDk\log\frac{1}{\delta})$. The total cost is $O(n(Dk + q) \log \frac{1}{\delta} \log q)$. 
\end{proof}

\section{Analysis of the Inverse Sampling subroutine}
\label{section-subroutine}
In this section, we prove \Cref{lemma-simulation}.
Let $\Phi = (V, \Dom{Q},\Cons{C})$ be a CSP formula, where each variable $v$ takes value in $Q_v$.
Let $\Proj{h}=(h_v)_{v \in V}$ be a balanced projection scheme satisfying \Cref{condition-projection} with parameters $\alpha$ and $\beta$, where for each $v \in V$, $h_v: Q_v \to \Sigma_v$, $\abs{Q_v} = q_v$ and $\abs{\Sigma_v} = s_v$.
Let $(Y_t)_{t \geq 0}$ denote random sequence generated by \Cref{alg-mcmc}, where $Y_t \in \Sigma$ is the random  $Y$ after the $t$-th iteration of the for-loop.
Recall that for each $1\leq t \leq T+1$, we have defined the following bad events:
\begin{itemize}
\item $\+B^{(1)}_t$: in the $t$-th call of $\sample(\cdot)$,	the random assignment $\Ass{X}$ is returned in \Cref{line-bad-return-2};
\item $\+B^{(2)}_t$: in the $t$-th call of $\sample(\cdot)$, the random assignment $\Ass{X}$ is returned in \Cref{line-bad-return-1}.
\end{itemize}

In the $t$-th calling of the subroutine $\sample(\Phi, \Proj{h} ,\delta, y_\Lambda, S)$ (\Cref{alg-sample}), conditional on $\neg\+B^{(1)}_t\land\neg\+B^{(2)}_t$, all the connected components that intersect with $S$ are small, and the rejection sampling on each component succeeds.
It is straightforward to verify the subroutine  returns a perfect sample from  $\mu^{y_{\Lambda}}_S$.

Next, we analyze the running time of the subroutine $\sample(\Phi, \Proj{h} ,\delta, y_\Lambda, S)$.
Let $G=(\Cons{C} ,E)$ denote the dependency graph of $\Phi=(V, \Dom{Q},\Cons{C})$.
We assume the dependency graph is stored in an adjacent list.
We can construct such  adjacent list at the beginning of the whole algorithm.
The cost of construction is $O(nDk)$, which is dominated by the cost in \Cref{theorem-main-gen}.

Assume that the algorithm can access a projection oracle with query cost $O(\log q)$.
The first step of the subroutine is to find all the connected components that intersect with set $S$.
For each variable $v \in S$, we find all the constraints $C(v) = \{c \in \Cons{C} \mid v \in \vbl{c} \}$ (note that $\abs{C(v)} \leq D$),
then perform a deep first search (DFS) in $G$ starting from $C(v)$.
During the DFS, suppose the current constraint is $c \in \Cons{C}$.
We can find the unique configuration $\sigma \in Q_{\vbl{c}}$ forbidden by $c$, i.e. $c(\sigma) = \False$.
We call the projection oracle to obtain $\tau \in \Sigma_{\vbl{c}}$, where $\tau_v = h_v(\sigma_v)$ for each $v \in V$. 
The cost of this step is $O(k\log q)$. 
If for all $v \in \Lambda \cap \vbl{c}$, $y_{\Lambda}(v) = \tau_v$ (which means $c$ is not satisfied by $y_{\Lambda}$), we do DFS recursively starting from $c$; otherwise, we stop current DFS branch and remove $c$ from the graph $G$.
If the size of current connected component is greater than $2D \log \frac{nD}{\delta}$, the connected component is too large, we stop the whole DFS process. The total cost of DFS is 
\begin{align*}
T_{\mathrm{DFS}}  = O\tp{\abs{S}D^2 k \log \frac{nD}{\delta} \log q }.	
\end{align*} 

Another cost of the subroutine comes from the rejection sampling from \Cref{line-rejection-begin} to \Cref{line-bad-return-2}. 
To perform the rejection sampling, for each variable $v$, we either draw $X_v$ from $\pi_v^{y_v}$ or draw $X_v$ from the $\pi_v$. 
This step can be achieved by calling oracles. 
The cost is $O(\log q)$.
Since there are at most $\abs{S}$ connected components and each of the size at most $2D \log \frac{nD}{\delta}$, the total number of variables is $O(\abs{S}Dk \log \frac{nD}{\delta} )$.
For each component, the algorithm uses the rejection sampling for at most  $R = \ctp{ 10\tp{\frac{n}{\delta}}^{\eta} \log \frac{n}{\delta}}$ times. The total cost of rejection sampling is
\begin{align*}
T_{\mathrm{rej}} = O\tp{\abs{S}Dk \tp{\frac{n}{\delta}}^{\eta } \log^2\tp{\frac{nD}{\delta}} \log q}.	
\end{align*}
The total cost of the subroutine is 
\begin{align*}
T_{\mathrm{DFS}} + 	T_{\mathrm{rej}} =O\tp{\abs{S}D^2k \tp{\frac{n}{\delta}}^{\eta} \log^2\tp{\frac{nD}{\delta}}\log q }.
\end{align*}

Finally, we use the following lemma to bound the probabilities of the bad events $\+B^{(1)}_t$ and $\+B^{(2)}_t$.
\begin{lemma}
\label{lemma-subroutine}
Let $\Phi = (V,\Dom{Q},\Cons{C})$ be the input CSP formula and $\Proj{h}$ a projection scheme satisfying \Cref{condition-projection} with parameters $\alpha$ and $\beta$.
Let $D$ denote the maximum degree of the dependency graph of $\Phi$.
Let $p = \max_{c \in \Cons{C}}\prod_{v \in \vbl{c}}\frac{1}{\abs{Q_v}}$.
Let $0<\eta<1$ be a parameter.
Suppose $\log \frac{1}{p} \geq \frac{1}{1-\alpha}\log(20D^2)$ and $\log \frac{1}{p} \geq \frac{1}{\beta}\log\tp{\frac{40\mathrm{e} D^2}{\eta}}$.
The subroutine $\sample(\Phi, \Proj{h} ,\delta, y_{\Lambda}, S)$ in \Cref{alg-sample} with parameter $\eta$ satisfies that
for any $1\leq t \leq T+1$,
\begin{align*}
\Pr[]{\+B^{(1)}_t} \leq \delta \quad\text{and}\quad 	\Pr[]{\+B^{(2)}_t} \leq \delta.
\end{align*}
\end{lemma}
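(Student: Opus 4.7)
The plan is to bound the two bad events separately, invoking the two entropy-type inequalities in \Cref{condition-projection}: inequality~\eqref{eq:entropy-lower-bound} controls how well the projection preserves satisfiability and drives the bound on $\+B^{(2)}_t$ (giant component), while inequality~\eqref{eq:entropy-upper-bound} controls conditional violation probabilities and, via \Cref{conditional-LLL-condition}, drives the bound on $\+B^{(1)}_t$ (rejection overflow).

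For $\+B^{(2)}_t$, call a constraint $c$ \emph{surviving} if it is not satisfied by $y_\Lambda$ in the sense of~\eqref{eq-def-sat-by-y}, so that it remains in the factorized formula $\Phi'$. First I would show that for any fixed $c$ the survival probability is small. Under the product distribution $\rho$, balancedness of $\Proj{h}$ together with~\eqref{eq:entropy-lower-bound} gives $\Pr{c \text{ survives}} \leq \prod_{v \in \vbl{c}} \lceil q_v/s_v \rceil / q_v \leq p^{1-\alpha}$. Since the actual $y_\Lambda$ is (approximately) distributed as a marginal of $\nu$, which is the pushforward of $\mu$ under $\Proj{h}$, applying the LLL inflation in \Cref{theorem-LLL} to pass from $\rho$ to $\mu$ yields a uniform bound $\tilde p \triangleq \Pr{c \text{ survives}} \leq 2 p^{1-\alpha}$. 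A connected component of $\Phi'$ of size exceeding $L \triangleq 2D\log(nD/\delta)$ and intersecting $S$ must contain a $2$-tree of at least $L/2$ surviving constraints rooted near $S$. Enumerating such $2$-trees, of which there are at most $|S|D(eD^2)^\ell$ of size $\ell$, and using that survivals along a $2$-tree are mutually independent under $\rho$ (disjoint variable sets), a union bound gives $\Pr{\+B^{(2)}_t} \leq |S|D \sum_{\ell \geq L/2}(eD^2 \tilde p)^\ell$. The hypothesis $\log(1/p) \geq \frac{1}{1-\alpha}\log(20D^2)$ forces $eD^2 \tilde p \leq 1/2$, and the geometric tail collapses to at most $\delta$.

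For $\+B^{(1)}_t$, I would condition on $\neg \+B^{(2)}_t$, so each component $\Phi'_i = (V_i, \Dom{Q}_{V_i}, \Cons{C}'_i)$ has $|\Cons{C}'_i| \leq L$. A single rejection trial draws $\Ass{X}_i \sim \pi^{y_{\Lambda_i}}_{V_i}$ and succeeds iff $\Phi'_i(\Ass{X}_i)$. Balancedness and~\eqref{eq:entropy-upper-bound} give that the violation probability of any single $c$ under $\pi^{y_{\Lambda_i}}$ is at most $\prod_{v \in \vbl{c}} 1/\lfloor q_v/s_v \rfloor \leq p^\beta$. The hypothesis $\beta \log(1/p) \geq \log(40eD^2/\eta)$ places this instance well within the LLL regime with $x(c) = 2p^\beta$, so \Cref{theorem-LLL} yields a per-trial success probability at least $(1-2p^\beta)^{|\Cons{C}'_i|} \geq \exp(-4 p^\beta L) \geq (n/\delta)^{-\eta/2}$. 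With $R = \lceil 10(n/\delta)^\eta \log(n/\delta) \rceil$ independent trials, the failure probability for a single component is at most $\exp(-R(n/\delta)^{-\eta/2}) \leq \delta/n$, and a union bound over the at most $|S|\leq n$ components completes $\Pr{\+B^{(1)}_t} \leq \delta$.

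The main obstacle will be the first step: the actual law of $y_\Lambda$ is produced by a chain of imperfect subroutine calls in \Cref{alg-mcmc}, so it is only approximately a marginal of $\nu$. The cleanest workaround is to couple the real execution with an idealized one whose subroutines return perfect samples; conditional on no prior bad event, the two executions agree, and the distribution of $y_\Lambda$ in the idealized chain is exactly the $t$-step Glauber law on $\nu$, to which the LLL inflation argument applies without further loss. A secondary technical point is making the $2$-tree reduction rigorous: the event ``$c$ survives'' is measurable with respect to $y_{\Lambda \cap \vbl{c}}$, so variable-disjointness along a $2$-tree of the dependency graph turns the conjunction of survivals into a genuine product under $\rho$, after which the same LLL inflation carries it to $\nu$.
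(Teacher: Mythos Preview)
Your argument for $\+B^{(1)}_t$ is essentially the paper's: since $\+B^{(1)}_t$ can only occur when no component exceeds $L=2D\log(nD/\delta)$, you use \Cref{condition-projection}\eqref{eq:entropy-upper-bound} to bound each conditional violation probability by $p^\beta$, invoke the LLL to lower-bound the per-trial success probability, and finish with the $R$-trial amplification and a union bound. That part is fine.

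The gap is in your treatment of $\+B^{(2)}_t$. You correctly identify the obstacle---the $y_\Lambda$ fed to the $t$-th call is produced by $t-1$ imperfect earlier calls---but the proposed workaround does not close it. Coupling with an idealized chain ``conditional on no prior bad event'' is circular: to control the probability of prior bad events you need exactly the bound you are proving, and even granting the conditioning, the law of $y_\Lambda$ is the $t$-step Glauber law from a random start, not $\nu$, so the LLL inflation of \Cref{theorem-LLL} (which passes from the product law to the \emph{stationary} conditional law) does not apply to it ``without further loss''. Your product bound $(\tilde p)^\ell$ along a $2$-tree relies on independence under $\rho$, and the passage to the actual law of $Y$ is where the argument breaks.

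The paper circumvents this entirely by proving a \emph{pointwise} marginal bound that holds at every step of the real algorithm, including steps where earlier bad events have occurred. Specifically, \Cref{lemma-uniform-general} shows that for \emph{every} fixed $\tau\in\Sigma_{V\setminus\{u\}}$ one has $\nu_u^\tau(y)\le \frac{1}{q_u}\lceil q_u/s_u\rceil\,\exp(1/(20D))$; since both a ``good'' return of the subroutine and a ``bad'' fallback return (uniform over $Q_u$) respect this bound, a chain-rule argument (\Cref{lemma-uniform}) yields $\Pr{Y_H=\sigma}\le \prod_{v\in H}\frac{1}{q_v}\lceil q_v/s_v\rceil\,\exp(1/(20D))$ for the \emph{actual} random $Y$ produced by \Cref{alg-mcmc}. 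From there the $2$-tree enumeration proceeds as you outline. Note also that a connected subgraph of size $L$ in a graph of maximum degree $D$ only guarantees a $2$-tree of size $\lfloor L/(D+1)\rfloor$, not $L/2$; this is why $L$ is set to $2D\log(nD/\delta)$.
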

The rest of this section is dedicated to the proof of \Cref{lemma-subroutine}.
Let $v_i \in V$ denote the random variable picked by \Cref{alg-mcmc} in the $i$-th iteration of the for-loop.
In the proof of \Cref{lemma-subroutine}, we always fix a  $1\leq t \leq T+1$ and a sequence $v_1,v_2,\ldots,v_T$. Hence, we always consider the probability space generated by \Cref{alg-mcmc} conditional on $v_i$ is picked in the $i$-th iteration of the for-loop.

Define (possibly partial) projected configuration
\begin{align}
\label{eq-def-X-proof}
Y = y_\Lambda \triangleq \begin{cases}
Y_{t-1}(V \setminus \{v_t\} ) &\text{if } 1 \leq t \leq T;\\
Y_{T} &\text{if } t = T+1,	
 \end{cases}
\end{align}
where $\Lambda = V \setminus \{v_t\}$ if $1\leq t \leq T$, and $\Lambda = V$ if $t = T + 1$.
We analyze $\sample(\Phi, \Proj{h} ,\delta, Y, S)$, where 
\begin{align*}
S = \begin{cases}
\{v_t\} &\text{if } 1\leq t \leq T;\\
V &\text{if } t = T+1. 	
 \end{cases}
\end{align*}

\subsection{Analysis of rejection sampling (bound $\mathrm{Pr}[\+B^{(1)}_t]$)}
We first prove that 
\begin{align}
\label{eq-bound-b2}
\Pr[]{\+B^{(1)}_t}\leq \delta.
\end{align}
Let $\Phi' = (V, \Dom{Q}, \Cons{C}')$ denote the CSP formula obtained from $\Phi=(V,\Dom{Q},\Cons{C})$ by removing constraints satisfied by $Y$. 
Let $H' = H_{\Phi'} = (V,\+E')$ denote the hypergraph modeling $\Phi'$, where $\+E' = \{\vbl{c} \mid c \in \Cons{C}'\}$ is a multi-set.
Suppose $H'_\Phi$ has $\ell$ connected components $H_1',H_2',\ldots,H_\ell'$ that intersect with $S$, where $H_i' = (V_i,\+E_i')$ and $V_i \cap S \neq \emptyset$ for all $1\leq i \leq \ell$. 
Let $\Phi'_i =(V_i, \Dom{Q}_{V_i}, \Cons{C}'_i)$ denote the CSP formula represented by $H'_i$, where $\Cons{C}'_i$ denotes the set of constraints represented by $\+E'_i$.

Fix an integer $1\leq i \leq \ell$. Lines~\ref{line-v-start} -- \ref{line-v-end} in \Cref{alg-sample} actually run rejection sampling on  $\widetilde{\Phi}_i = (V_i, \widetilde{\Dom{Q}}_{V_i}, \Cons{C}_i' )$, where each $\widetilde{Q}_v \subseteq Q_v$, such that
\begin{align*}
\forall v \in V_i, \quad
\widetilde{Q}_v \triangleq \begin{cases}
 h^{-1}_v(Y_v) &\text{if } v \in V_i \cap \Lambda;\\
 Q_v &\text{if } v \in V_i \setminus \Lambda.	
 \end{cases}
 \end{align*}
Since the maximum degree of the dependency graph of $\Phi$ is $D$, 
the maximum degree of the dependency graph of $\widetilde{\Phi}_i$ is at most $D$.
Let $\+D$ denote the product distribution such that each $v \in V_i$ samples a value from $\widetilde{Q}_v$ uniformly at random. 
For each constraint $c \in \Cons{C}_i'$, let $B_c$ denote the bad event that $c$ is not satisfied. 
Note that $\Proj{h}$ is a balanced projection scheme. By the definition of $\widetilde{\Dom{Q}}_{V_i}$, it holds that $|\widetilde{Q}_v| \geq \lfloor q_v/s_v \rfloor$ for all $v \in V_i$, where $q_v = \abs{Q_v}$. In other words, $\widetilde{\Phi}_i$ is the conditional LLL instance in \Cref{conditional-LLL-condition}.
By \Cref{condition-projection},
we have for each $c \in \Cons{C}'_i$,
\begin{align*}
\Pr[\+D]{B_c} = \prod_{v \in \vbl{c}}\frac{1}{\abs{\widetilde{Q}_v}} \leq \prod_{v \in \vbl{c}}\frac{1}{\lfloor q_v/s_v \rfloor} \leq \tp{\prod_{v \in \vbl{c}}\frac{1}{q_v}}^\beta,
\end{align*}
Recall that in \Cref{lemma-subroutine}, we assume that for each $c \in \Cons{C}$, $\sum_{v \in \vbl{c}} \log q_v \geq \frac{1}{\beta} \log\tp{\frac{40\mathrm{e} D^2}{\eta}}$ for $0 < \eta < 1$.  
Note that $\Cons{C}'_i \subseteq \Cons{C}$,
we have for each $c \in \Cons{C}'_i$,
\begin{align*}
\Pr[\+D]{B_c} \leq 	\frac{\eta}{40 \mathrm{e}  D^2}.
\end{align*}
For each $B_c$, define $x(B_c) = \frac{\eta}{40 D^2}$.  We have
\begin{align*}
\Pr[\+D]{B_c} &\leq  \frac{\eta}{40 \mathrm{e}  D^2} \leq 	\frac{\eta}{ 40 D^2} \tp{ 1 - \frac{\eta}{40 D^2} }^{\frac{40 D^2}{\eta} - 1} \leq \frac{\eta}{40 D^2} \tp{ 1 - \frac{\eta}{40D^2} }^{D} \\
&\leq x(B_c)\prod_{B_{c'} \in \Gamma(B_c)}\tp{1 - x(B_{c'})},
\end{align*}
where $\Gamma(\cdot)$ is defined as in the Lov\'asz local lemma (\Cref{theorem-LLL}).
Since $B^{(1)}_t$ occurs, it must hold that $\abs{\Cons{C}_i'}  \leq 2D\log \frac{nD}{\delta}$.
By Lov\'asz local lemma (\Cref{theorem-LLL}), we have
\begin{align*}
\Pr[\+D]{\bigwedge_{c \in \Cons{C}'_i}\overline{B_c} }&\geq \prod_{c \in \Cons{C}'_i}(1 - x(B_c)) \geq \prod_{c \in \Cons{C}'_i}\tp{1 - \frac{\eta}{40 D^2}}\\
\left(\text{by } \abs{\Cons{C}'_i}\leq 2D\log \frac{nD}{\delta}\right)\qquad &\geq \tp{1 - \frac{\eta}{40 D^2}}^{2D \log \frac{nD}{\delta}} \geq \exp\tp{-\frac{\eta}{5 D}\log \frac{Dn}{\delta}}\\
&=\tp{\frac{\delta}{Dn}}^{\frac{\eta}{5 D \ln 2}} \geq \tp{\frac{\delta}{Dn}}^{\frac{\eta}{2D}} \geq \frac{1}{2}\tp{\frac{\delta}{n}}^{\eta}. 
\end{align*}
Hence, each trial of the rejection sampling in Lines~\ref{line-v-start} -- \ref{line-v-end} succeeds with probability at least $\frac{1}{2}\tp{\frac{\delta}{n}}^{\eta}$.
Since the algorithm uses rejection sampling independently for $R =  \ctp{10\tp{\frac{n}{\delta}}^{\eta } \log \frac{n}{\delta}}$ times, the probability that the rejection sampling fails in one connected component is at most
\begin{align*}
\tp{1 - \frac{1}{2}\tp{\frac{\delta}{n}}^{\eta}}^R \leq \exp \tp{- \frac{R}{2} \tp{\frac{\delta}{n}}^{\eta } }	 \leq \frac{\delta}{n}.
\end{align*}
Since there are at most $n$ connected components, by a union bound,
\begin{align*}
\Pr[]{\+B^{(1)}_t}\leq \delta	
\end{align*}
This proves~\eqref{eq-bound-b2}.

\subsection{Analysis of connected component (bound $\mathrm{Pr}[\+B^{(2)}_t]$)}
We now bound the probability of bad event $\+B^{(2)}_t$. 
Consider the subroutine $\sample(\Phi, \Proj{h} ,\delta, Y, S)$.
Recall $\Phi' = (V, \Dom{Q}, \Cons{C}')$ is the CSP formula obtained from $\Phi=(V,\Dom{Q},\Cons{C})$ by removing all the constraints satisfied by $Y$.
Recall hypergraph $H' = H_{\Phi'}=(V,\+E')$ models $\Phi'$.
Let $H = H_{\Phi} = (V, \+E)$ denote the hypergraph modeling $\Phi$, where $\+E = \{\vbl{c} \mid c\in \Cons{C}\}$ is a multi-set. 
For any edge $e \in \+E$, we use $\+B_e$ to denote the bad event that $e \in \+E'$ and the number of hyperedges in the connected component in $H'$ that contains $e$ is at least $L$, where  $L = \lceil 2D \log \frac{Dn}{\delta} \rceil$.
By a union bound, we have
\begin{align*}
\Pr[]{\+B_t^{(2)}} \leq \sum_{e \in \+E}\Pr{\+B_e}.	
\end{align*}
Recall $D$ is the maximum degree of the dependency graph. 
Since $\abs{\+E} \leq n(D+1)$, it suffices to prove
\begin{align}
\label{eq-bound-Be-general}
\Pr{\+B_e}\leq \frac{\delta}{n(D+1)}.	
\end{align}

To bound the probability of $\+B_e$, we need the following lemma.
\begin{lemma}
\label{lemma-uniform}
Let $\Phi = (V,\Dom{Q},\Cons{C})$ be a CSP formula.
Let $\Proj{h}$ be the projection scheme satisfying \Cref{condition-projection} with parameters $\alpha$ and $\beta$. 
Let $q_v = \abs{Q_v}$ and $D$ denote the maximum degree of the dependency graph of $\Phi$.
If for any constraint $c \in \Cons{C}$,
\begin{align*}
\sum_{v \in \vbl{c}}\log q_v \geq \frac{1}{\beta}\log(40 \mathrm{e} D^2),
\end{align*}
then for any subset $H \subseteq \Lambda$, any projected configuration $\sigma \in \Sigma_H = \bigotimes_{v \in H}\Sigma_v$,
\begin{align*}
\Pr{Y_H = \sigma} \leq \exp\tp{\sum_{u \in H}\frac{1}{20 D}} \prod_{v \in H} \tp{\frac{1}{q_v}\ctp{\frac{q_v}{s_v}}},
\end{align*}
where $Y \in \Sigma_{\Lambda}$ is defined in~\eqref{eq-def-X-proof}.
\end{lemma}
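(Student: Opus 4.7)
The plan is to prove the stronger statement by induction on the Glauber-step index $t$: the distribution $\nu_t$ of the idealized chain state $Y_t$, started from $Y_0 \sim \rho$, satisfies
\[
\Pr[\nu_t]{Y_H = \sigma} \;\leq\; B(H,\sigma) \;\defeq\; \exp\!\left(\tfrac{\abs{H}}{20D}\right) \prod_{v \in H}\tfrac{1}{q_v}\ctp{q_v/s_v}
\]
for every $H \subseteq V$ and every $\sigma \in \Sigma_H$. Since the $Y$ of (\ref{eq-def-X-proof}) is a restriction of some $Y_t$ from the idealized chain (the coupling of the actual algorithm with this idealized chain is handled in the surrounding analysis), this implies the lemma. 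The base case $t=0$ is immediate: $\rho$ is a product distribution with $\Pr[\rho]{Y_v = \sigma_v} = \abs{h_v^{-1}(\sigma_v)}/q_v \leq \ctp{q_v/s_v}/q_v$ by the balancedness part of Condition~\ref{condition-projection}, so $\Pr[\rho]{Y_H = \sigma} = \prod_{v \in H}\rho_v(\sigma_v) \leq B(H,\sigma)$.

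The engine of the induction is a uniform single-site bound: for every $v \in V$, every $y_{V\setminus\{v\}}$ in the support of $\nu$, and every $\sigma_v \in \Sigma_v$,
\[
\nu_v^{y_{V\setminus\{v\}}}(\sigma_v) \;\leq\; \exp\!\left(\tfrac{1}{20D}\right)\cdot \tfrac{1}{q_v}\ctp{q_v/s_v}.
\]
To obtain this I rewrite $\nu_v^{y_{V\setminus\{v\}}}(\sigma_v) = \Pr[X \sim \pi^{y_{V\setminus\{v\}}}]{h_v(X_v) = \sigma_v \mid E}$, where $E$ is the event that no constraint of $\Phi$ is violated, and apply the LLL conditional-probability bound (Theorem~\ref{theorem-LLL}) to the conditional LLL instance of Condition~\ref{conditional-LLL-condition}. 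With the choice $x_c = 1/(40D^2)$, the bad-event probability $\Pr[\pi^{y_{V\setminus\{v\}}}]{B_c} \leq \prod_{u\in\vbl{c}} 1/\ftp{q_u/s_u} \leq p^\beta$ by Condition~\ref{condition-projection}, and the hypothesis $\sum_{v\in\vbl{c}}\log q_v \geq (1/\beta)\log(40\mathrm{e}D^2)$ gives $p^\beta \leq 1/(40\mathrm{e}D^2)$, enough to verify the LLL feasibility $\Pr{B_c} \leq x_c(1-x_c)^D$. Summing the LLL bound over the at most $\ctp{q_v/s_v}$ preimages of $\sigma_v$ and using that $v$ belongs to at most $D+1$ constraints so $\prod_{c \ni v}(1-x_c)^{-1} \leq (1-1/(40D^2))^{-(D+1)} \leq \exp(1/(20D))$ yields the claim.

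For the inductive step I condition on the variable $u$ chosen uniformly by the Glauber step. When $u \notin H$ the marginal on $H$ is unchanged and the bound is inherited from $\nu_{t-1}$. When $u \in H$ the new value $Y_u'$ is drawn from $\nu_u^{(Y_{t-1})_{V\setminus\{u\}}}$ while the rest of $Y_H$ equals $(Y_{t-1})_{H\setminus\{u\}}$; the single-site bound peels off the $u$-factor uniformly in the conditioning, and the inductive hypothesis applied to $H\setminus\{u\}$ controls the remainder, giving $\exp(1/(20D))\cdot B(H\setminus\{u\},\sigma_{H\setminus\{u\}}) = B(H,\sigma)$. Averaging over $u$ preserves the bound. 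The main delicate point is pinning down the constants: $x_c = 1/(40D^2)$ simultaneously has to make LLL feasible under Condition~\ref{conditional-LLL-condition} and yield a single-site amplification factor of at most $\exp(1/(20D))$, and it is this two-sided requirement that forces the specific hypothesis $\sum_{v\in\vbl{c}}\log q_v \geq (1/\beta)\log(40\mathrm{e}D^2)$ appearing in the lemma.
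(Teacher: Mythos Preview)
Your single-site bound is correct and is exactly the content of the paper's auxiliary lemma (what you derive via LLL with $x_c=1/(40D^2)$). Your induction on the step index $t$ is also a clean way to propagate that bound. The gap is in the sentence ``the $Y$ of~\eqref{eq-def-X-proof} is a restriction of some $Y_t$ from the idealized chain (the coupling of the actual algorithm with this idealized chain is handled in the surrounding analysis).'' This is not right, and in fact is circular. The sequence $(Y_t)$ in Section~7 is explicitly the state of \emph{Algorithm~\ref{alg-mcmc}}, not the idealized Glauber dynamics; it can differ from the idealized chain at any step where the $\sample$ subroutine returns through \Cref{line-bad-return-1} or \Cref{line-bad-return-2}. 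The ``surrounding analysis'' you appeal to is the coupling in the proof of \Cref{theorem-main-gen}, which bounds $\DTV{\Ass{X}_{\mathrm{alg}}}{\Ass{X}_{\mathrm{idea}}}$ via $\Pr{\bigvee_t(\+B_t^{(1)}\vee\+B_t^{(2)})}$; but the bound on $\Pr{\+B_t^{(2)}}$ is precisely what \Cref{lemma-uniform} is invoked to prove. So you cannot assume the coupling to the idealized chain here.

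The paper avoids this circularity by arguing directly about the actual algorithm. It orders the variables of $H$ by their last update time and, for each variable $u_i$, does a case split on how the $t(u_i)$-th call to $\sample$ returned: if it returned via a bad line, then $X_{u_i}$ is uniform on $Q_{u_i}$ and hence $\Pr{Y_{u_i}=\sigma_{u_i}\mid\cdots}\le \ctp{q_{u_i}/s_{u_i}}/q_{u_i}$; if it returned via \Cref{line-good}, then $Y_{u_i}$ really is a Glauber update and the LLL single-site bound applies. Your inductive scheme can be salvaged the same way: in the step ``when $u\in H$, the new $Y_u'$ is drawn from $\nu_u^{(Y_{t-1})_{V\setminus\{u\}}}$,'' replace this with the observation that, conditioned on everything up to step $t$ including whether the subroutine failed, the new $Y_u$ is drawn either from $\nu_u^{(Y_{t-1})_{V\setminus\{u\}}}$ or from $\rho_u$, and \emph{both} of these are pointwise bounded by $\exp(1/(20D))\cdot\ctp{q_u/s_u}/q_u$. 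With that one-line fix the induction on $t$ goes through for the actual chain.
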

%
The proof of \Cref{lemma-uniform} is deferred to \Cref{section-proof-uniformity}.
Next, we introduce the following definitions of line graph and 2-tree. 

\begin{definition}[line graph]
Let $H=(V,\+E)$ be a hypergraph. The line graph $\Lin(H)$ is a graph such that each vertex represents a hyperedge in $\+E$, two vertices $e,e' \in \+E$ are adjacent iff $e \cap e' \neq \emptyset$.
\end{definition}

\begin{definition}[2-tree]
Let $G = (V,E)$ be a graph. A subset of vertices $\twotree \subseteq V$ is a 2-tree if (1) for any $u,v \in \twotree$, their distance $\dist_G(u,v)$ in graph $G$ is at least 2; (2) if one adds an edge between $u,v \in \twotree$ such that $\dist_G(u,v)=2$, then $\twotree$ becomes connected.
\end{definition}

The following two propositions are proved in the full version~\cite{FGYZ20full} of~\cite{FGYZ20}.

\begin{proposition}[\text{\cite[Corollary~5.7]{FGYZ20full}}]
\label{proposition-number}
Let $G=(V,E)$ be a graph with maximum degree $\Delta$ and $v \in V$ a vertex. The number of 2-trees in graph $G$ of size $\ell$ containing vertex $v$ is at most $\frac{(\mathrm{e}\Delta^2)^{\ell - 1}}{2}$.
\end{proposition}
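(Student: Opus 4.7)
My plan is to prove this bound by reducing the counting of 2-trees in $G$ to the counting of connected subgraphs in a suitable auxiliary graph, and then applying a standard spanning-tree encoding argument.

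First, I would introduce the auxiliary graph $G^{(2)} = (V, E^{(2)})$ where two vertices $u, w \in V$ are adjacent iff $\dist_G(u, w) = 2$. The two defining properties of a 2-tree $\twotree$ then translate cleanly: the pairwise distance condition $\dist_G(u,w) \geq 2$ says $\twotree$ is an independent set in $G$, and the second condition says exactly that $\twotree$ induces a connected subgraph of $G^{(2)}$. Thus the number of 2-trees of size $\ell$ containing $v$ is at most the number of connected subsets of $V(G^{(2)})$ of size $\ell$ containing $v$.

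Next I would bound the maximum degree of $G^{(2)}$. For any vertex $u \in V$, a distance-$2$ neighbor is obtained by walking from $u$ to one of its $\leq \Delta$ neighbors in $G$ and then stepping to a further neighbor; each neighbor of $u$ contributes at most $\Delta - 1$ new vertices different from $u$, so the degree of $u$ in $G^{(2)}$ is at most $\Delta(\Delta - 1) \leq \Delta^2$.

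The final step is to apply the standard ``connected-subgraph counting'' bound: in any graph of maximum degree $D$, the number of connected subsets of size $\ell$ containing a fixed vertex is at most $\frac{(\mathrm{e} D)^{\ell - 1}}{2}$. Setting $D = \Delta^2$ gives exactly $\frac{(\mathrm{e}\Delta^2)^{\ell - 1}}{2}$, as required. The standard proof of this counting bound proceeds by picking a canonical spanning tree of each connected subset rooted at $v$, encoding it as a depth-first traversal of length $2(\ell - 1)$ over ordered rooted trees (whose count is the Catalan number $C_{\ell - 1}$), and bounding the number of edge choices along the traversal by $D^{\ell - 1}$; the algebraic combination $C_{\ell - 1} \cdot D^{\ell - 1}$ yields the claimed inequality using $C_{\ell - 1} \leq \frac{4^{\ell-1}}{\ell^{3/2}\sqrt{\pi}}$ together with $\ell^{\ell - 1}/(\ell-1)! \leq \mathrm{e}^{\ell - 1}$.

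The main obstacle I expect is the final arithmetic: squeezing the precise constant $\mathrm{e}$ (and the factor $1/2$) out of the Catalan / tree-encoding estimate, as opposed to the looser but easier $(4D)^{\ell-1}$-type bound. In our setting, however, this bound is cited from the full version of \cite{FGYZ20}, so this step would be invoked rather than reproved, and the essential new ingredient is only the two-line reduction to $G^{(2)}$ together with the degree estimate $\Delta(G^{(2)}) \leq \Delta^2$.
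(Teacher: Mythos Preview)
The paper does not give its own proof of this proposition; it is quoted directly as \cite[Corollary~5.7]{FGYZ20full} and used as a black box. Your reduction to connected subsets of the distance-$2$ graph $G^{(2)}$ together with the degree bound $\Delta(G^{(2)})\le\Delta(\Delta-1)\le\Delta^2$ is exactly the standard route to this corollary, and you already note that the final connected-subgraph count $\tfrac{(\mathrm{e}D)^{\ell-1}}{2}$ is the piece one invokes rather than reproves here; so your proposal matches what the cited source does.
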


\begin{proposition}[\text{\cite[Lemma~5.8]{FGYZ20full}}]
\label{proposition-2-tree}
Let $H = (V, \+E)$ be  hypergraph. Let $\Lin(H)$ denote the line graph of $H$.
Let $B \subseteq \+E$ be a subset of hyperedges that induces a connected subgraph in $\Lin(H)$ and $e \in B$ an arbitrary hyperedge. 
There exists a 2-tree $\twotree \subseteq \+E$ in $\Lin(H)$ such that $e \in \twotree$ and $\abs{\twotree}= \left\lfloor \frac{\abs{B}}{D+1} \right\rfloor$, where $D$ is the maximum degree of the line graph $\Lin(H)$.
\end{proposition}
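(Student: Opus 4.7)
The plan is to build $\twotree$ by a simple greedy procedure on the induced subgraph $G = \Lin(H)[B]$, which is connected by hypothesis and inherits a maximum degree bound of $D$. Initialize $\twotree_0 = \set{e}$. At step $i\ge 1$, if there is a vertex $v \in B$ with $\dist_G(v, \twotree_{i-1}) \ge 2$, choose such a $v$ minimizing $\dist_G(v, \twotree_{i-1})$ and set $\twotree_i = \twotree_{i-1} \cup \set{v}$. Stop as soon as either no such $v$ exists, or $|\twotree_i| = \lfloor |B|/(D+1)\rfloor$.

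The key lemma in the argument is that whenever the greedy can extend, the chosen minimum distance is exactly $2$. Suppose for contradiction $\dist_G(v, \twotree_{i-1}) = d \ge 3$, and take a shortest $G$-path $v=w_0, w_1, \ldots, w_d$ ending in $\twotree_{i-1}$. All $w_j$ lie in $B$. Then $\dist_G(w_1, \twotree_{i-1}) \le d-1 < d$, and also $\dist_G(w_1, \twotree_{i-1}) \ge 2$ (otherwise $w_1$ would be adjacent in $G$ to some vertex of $\twotree_{i-1}$, forcing $\dist_G(v,\twotree_{i-1}) \le 2$). Hence $w_1$ is a strictly closer candidate, contradicting minimality of $v$. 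Thus each extension attaches a vertex at $G$-distance exactly $2$ from the current 2-tree.

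To verify the 2-tree invariant in $\Lin(H)$, note the crucial observation that because $\twotree_i \subseteq B$ throughout, for any two vertices $u,u' \in B$ one has $\dist_G(u,u') \ge 2$ if and only if $\dist_{\Lin(H)}(u,u') \ge 2$: indeed, any $\Lin(H)$-edge between two vertices of $B$ lies inside $G$. Hence the pairwise distance condition in the definition of a $2$-tree transfers directly from $G$ to $\Lin(H)$. The connectivity condition is maintained similarly: when $v$ is added, the witness of $\dist_G(v,u)=2$ for some $u \in \twotree_{i-1}$ is an intermediate vertex $w\in B$ adjacent in $\Lin(H)$ to both, so $\dist_{\Lin(H)}(v,u) \le 2$; combined with the distance-$\ge 2$ property this gives $\dist_{\Lin(H)}(v,u)=2$, supplying the required distance-$2$ edge joining $v$ to the previous 2-tree.

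For the size bound, if the procedure halts because no extension is possible, then every vertex of $B$ lies in the closed $G$-neighborhood of $\twotree$, i.e., $B \subseteq \bigcup_{u\in \twotree} N_G[u]$; since each $|N_G[u]| \le D+1$, this yields $|B| \le (D+1)|\twotree|$ and hence $|\twotree| \ge \lceil |B|/(D+1) \rceil \ge \lfloor |B|/(D+1)\rfloor$, so we can truncate to the desired size. The main conceptual obstacle I expect to have to handle carefully is the distinction between distances in $G$ and distances in the full line graph $\Lin(H)$; the resolution is to run the greedy entirely inside $B$ and exploit the identity noted above, so that a $G$-based construction nevertheless produces a valid $2$-tree in $\Lin(H)$.
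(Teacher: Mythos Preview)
The paper does not prove this proposition; it is quoted verbatim from \cite[Lemma~5.8]{FGYZ20full} and used as a black box. Your greedy construction is the standard argument for extracting a large 2-tree from a connected set and is correct: the ``minimum-distance is exactly $2$'' lemma is right, the transfer of the distance-$\ge 2$ condition from $G=\Lin(H)[B]$ to $\Lin(H)$ via the induced-subgraph property is the key observation, and the covering bound $|B|\le (D+1)|\twotree|$ when the greedy stalls is exactly what is needed. One cosmetic remark: the ``truncate'' step is never actually invoked, since your covering bound shows that the greedy cannot stall before reaching size $\lfloor |B|/(D+1)\rfloor$ (stalling at size $m<\lfloor |B|/(D+1)\rfloor$ would force $|B|\le (D+1)m<|B|$), so the procedure always halts at the target size on the nose.
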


Suppose $\Proj{h}$ satisfies \Cref{condition-projection}.
Recall $Y \in \Sigma_{\Lambda}$, where $\Lambda = V \setminus \{v_t\}$ for $1\leq t \leq T$ and $\Lambda = V$ for $t = T+1$.
We say an edge $e \in \+E$ is bad if $e$ is not satisfied by $Y$. Suppose $e$ represents the constraint $c$ such that $c(\Ass{x}) = \False$ for a unique configuration $\Ass{x} \in \Dom{Q}_{e}$. 
Given the projected configuration $Y \in \Sigma_{\Lambda}$, we have
\begin{align}
\label{eq-proof-bad}
e \text{ is bad} \quad \Longleftrightarrow \quad \forall u \in \Lambda \cap e, Y_u \neq  h_u(\Ass{x}_u).
\end{align}
In other words, if $e$ is bad, then the constraint corresponding to $c$ in the ``round-down'' CSP formula (\Cref{def:round-down}) is not satisfied by $Y$.
If $\+B_e$ occurs, there must exist a connected component $B \subseteq \+E$ in line graph $\Lin(H)$ such that $e \in B$ and all hyperedges in $B$ are bad and $\abs{B} = L$, where $L = \lceil 2D \log \frac{Dn}{\delta} \rceil$ and $D$ is the maximum degree of the dependency graph of the input formula.   By \Cref{proposition-2-tree}, there must exist a 2-tree $\twotree$ in $\Lin(H)$ with size $\ell = \left\lfloor \frac{L}{D+1} \right\rfloor$ such that $e \in \twotree$ and all edges in $\twotree$ are bad.
Fix such a 2-tree $\twotree$. 
By definition, each vertex in $\twotree$ is a hyperedge $e \in \+E$, and for all $e,e' \in \twotree$, $e \cap e' = \emptyset$.
Let $\twotree' \subseteq \twotree$ denote the subset of edges $e \in \twotree$ such that $e \subseteq \Lambda$. 
Since $Y$ is a random projected configuration, 
by~\eqref{eq-proof-bad}, we have
\begin{align*}
\Pr[]{\forall e \in \twotree, e \text{ is bad}}
&=\Pr[]{\forall e \in \twotree, \forall u \in e \cap \Lambda, Y_u \neq h_u(\Ass{x}_u) }\\
&\leq \Pr[]{\forall e \in \twotree', \forall u \in e, Y_u \neq h_u(\Ass{x}_u)}.
\end{align*}
Fix an edge $e \in \twotree'$. By \Cref{condition-projection} and the condition $\sum_{v \in e} \log q_v \geq \frac{1}{1-\alpha}\log(20D^2)$ assumed in \Cref{lemma-subroutine}, it holds that
\begin{align*}
\prod_{v \in e}	\frac{1}{q_v} \ctp{\frac{q_v}{s_v}} \leq \tp{\prod_{v \in e}\frac{1}{q_v}}^{1-\alpha} \leq \frac{1}{20D ^2}.
\end{align*}
Note that if $s_v = 1$, then $	\frac{1}{q_v} \ctp{\frac{q_v}{s_v}} = 1$. For any $v \in e$ such that $s_v > 1$(thus $q_v \geq s_v > 1$), we have $	\frac{1}{q_v} \ctp{\frac{q_v}{s_v}} \leq 	\frac{1}{q_v} \ctp{\frac{q_v}{2}} \leq \frac{2}{3}$. Let $r = \log_{2/3}\frac{1}{20D^2} + 1$. We can find a subset of variables $R(e) \subseteq e$ such that
\begin{align*}
\prod_{v \in R(e)}	\frac{1}{q_v} \ctp{\frac{q_v}{s_v}} \leq \frac{1}{20 D^2}, \quad \text{and}\quad \abs{R(e)} \leq r. 
\end{align*}
Note that \Cref{lemma-subroutine} assumes that $\sum_{v \in \vbl{c}}\log q_v \geq \frac{1}{\beta}\log\tp{\frac{40\mathrm{e} D^2}{\eta}} \geq  \frac{1}{\beta}\log(40\mathrm{e} D^2)$.
We use \Cref{lemma-uniform} on subset $H = \cup_{e \in \twotree'}R(e)$. 
Note that all hyperedges in $\twotree'$ are disjoint.
We have
\begin{align*}
\Pr[]{\forall e \in \twotree, e \text{ is bad}} &\leq \Pr[]{\forall e \in \twotree', \forall u \in R(e), Y_u \neq h_u(\Ass{x}_u) } \leq \Pr[]{\forall u \in H, Y_u \neq h_u(\Ass{x}_u) }\\
&\leq \prod_{e \in \twotree'}\prod_{v \in R(e)}\tp{\frac{1}{q_v}\ctp{\frac{q_v}{s_v}}\exp\tp{\frac{1}{20D}} }\leq  \prod_{e \in \twotree'}\tp{\frac{1}{20D^2}\exp\tp{\frac{r}{20D}}}\\
\tp{\text{by $r = \log_{2/3}\frac{1}{20D^2} + 1$)}}\quad&\leq \prod_{e \in \twotree'}\tp{\frac{1}{12D^{2}}}.
\end{align*}
Since $\abs{\Lambda} \geq n -1$ and all hyperedges in $\twotree$ are disjoint, $\abs{\twotree'} \geq \abs{\twotree} - 1 = \ell - 1$. We have
\begin{align*}
\Pr[]{\forall e \in \twotree, e \text{ is bad}}\leq 	\tp{\frac{1}{12D^{2}}}^{\ell - 1}. 
\end{align*}
Note that the maximum degree of line graph is at most $D$. By \Cref{proposition-number}, we have
\begin{align*}
\Pr[]{\+B_e} &\leq \frac{1}{2}\tp{\mathrm{e}D^2}^{\ell - 1}	\tp{\frac{1}{12D^{2}}}^{\ell - 1} \leq \frac{1}{2}\tp{\frac{1}{4}}^{\ell - 1} \leq  \tp{\frac{1}{2}}^{2\ell -1 }.
\end{align*}
Note that $\ell = \ftp{L / (D+1)}$ and $L = \ctp{2D \log \frac{nD}{\delta}}$. We have $\ell \geq \log \frac{nD}{\delta}-1$. 
We may assume $nD \geq 16$. Otherwise, the sampling problem is trivial.
The inequality~\eqref{eq-bound-Be-general} can be proved by
\begin{align*}
\Pr[]{\+B_e} \leq 	\tp{\frac{1}{2}}^{2\log\frac{nD}{\delta}-3} \leq \frac{\delta}{n(D+1)}.
\end{align*}

\subsection{Proof of \Cref{lemma-uniform}}
\label{section-proof-uniformity}
We now prove~(\Cref{lemma-uniform}). We use the following lemma to prove it. 
\begin{lemma}
\label{lemma-uniform-general}
Let $\Phi = (V,\Dom{Q},\Cons{C})$ be a CSP formula.
Let $\Proj{h} = (h_v)_{v \in V}$ be the projection scheme satisfying \Cref{condition-projection} with parameters $\alpha$ and $\beta$.
Let $D$ denote the maximum degree of the dependency graph of $\Phi$.
Let $q_v = \abs{Q_v}$. Suppose for any constraint $c \in \Cons{C}$, it holds that
\begin{align*}
\sum_{v \in \vbl{c}}\log q_v \geq \frac{1}{\beta} \log(40 \mathrm{e} D^2).
\end{align*}
Fix a variable $u \in V$ and a partial projected configuration $\tau \in \Sigma_{V \setminus \{u\}}$. 
For any $y \in \Sigma_u$, it holds that
\begin{align*}
\nu_{u}^\tau(y) \leq \frac{1}{q_u}\ctp{\frac{q_u}{s_u}}\exp\tp{\frac{1}{20D}}.
\end{align*}
\end{lemma}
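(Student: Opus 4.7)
The plan is to identify $\nu_u^\tau(y)$ as a conditional probability under a suitable product distribution and then apply the ``moreover'' clause of the Lov\'asz local lemma (\Cref{theorem-LLL}). Let $\pi^\tau$ denote the product distribution on $\Dom{Q}$ in which each $X_v$ with $v\neq u$ is uniform on $h_v^{-1}(\tau_v)$ while $X_u$ is uniform on $Q_u$; equivalently, $\pi^\tau$ is $\pi$ conditioned on $\Proj{h}(X_{V\setminus\{u\}})=\tau$. Since $\nu$ is the pushforward of $\mu$ under $\Proj{h}$ and $\mu$ is $\pi$ conditioned on $\Phi(\mathbf{X})$, I may write
\[
\nu_u^\tau(y) \;=\; \Pr[\mathbf{X}\sim\pi^\tau]{h_u(X_u)=y \,\big|\, \Phi(\mathbf{X})}.
\]
The event $A\triangleq\{h_u(X_u)=y\}$ depends only on $X_u$, and since $X_u$ is uniform on $Q_u$ under $\pi^\tau$, balancedness of $\Proj{h}$ gives $\Pr[\pi^\tau]{A}=\abs{h_u^{-1}(y)}/q_u\le \ctp{q_u/s_u}/q_u$.

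Next I run LLL under $\pi^\tau$ with bad events $\{B_c\}_{c\in\Cons{C}}$ (where $B_c$ is the violation event of $c$) and uniform slack $x(B_c)=1/(40D^2)$. Because $\Proj{h}$ is balanced, under $\pi^\tau$ each bad event satisfies
\[
\Pr[\pi^\tau]{B_c} \;\le\; \prod_{v\in\vbl{c}}\frac{1}{\ftp{q_v/s_v}} \;\le\; \prod_{v\in\vbl{c}}q_v^{-\beta} \;\le\; \frac{1}{40\mathrm{e}D^2},
\]
using the trivial bound $1/q_u\le 1/\ftp{q_u/s_u}$ for the slot $u$ left unconstrained by $\tau$, the entropy inequality~\eqref{eq:entropy-upper-bound} of \Cref{condition-projection}, and the hypothesis $\sum_{v\in\vbl{c}}\log q_v\ge \frac{1}{\beta}\log(40\mathrm{e}D^2)$. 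The routine estimate $(1-1/(40D^2))^D\ge 1/\mathrm{e}$ for $D\ge 1$ then verifies the LLL hypothesis~\eqref{eqn:LLL}, so the ``moreover'' clause of \Cref{theorem-LLL} applied to $A$ yields
\[
\nu_u^\tau(y) \;\le\; \Pr[\pi^\tau]{A}\prod_{B_c\in\Gamma(A)}(1-x(B_c))^{-1} \;\le\; \frac{\ctp{q_u/s_u}}{q_u}\left(1-\frac{1}{40D^2}\right)^{-\abs{\Gamma(A)}}.
\]

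Finally, since $\vbl{A}=\{u\}$, the set $\Gamma(A)$ is exactly the collection of bad events $B_c$ with $u\in\vbl{c}$; any two such constraints share $u$ and are therefore adjacent in the dependency graph, so they form a clique of size at most $D+1$. Combining this with $\log(1/(1-x))\le x/(1-x)$ reduces the last factor to $\exp((D+1)/(40D^2-1))\le\exp(1/(20D))$, the final inequality being equivalent to $20D^2-20D+1\ge 0$, which holds for every $D\ge 1$. I expect the main obstacle to be purely arithmetic bookkeeping: the constant inside $x(B_c)$ and the constant in the hypothesis $\sum\log q_v\ge \frac{1}{\beta}\log(40\mathrm{e}D^2)$ have to be tuned so that the LLL hypothesis holds with just enough slack while the resulting product $(1-x)^{-(D+1)}$ still fits under the tight target $\exp(1/(20D))$; the only non-routine ingredient is the clique bound $\abs{\Gamma(A)}\le D+1$, which is what permits the exponent to be taken this small.
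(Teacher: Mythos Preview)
Your argument is essentially the paper's own proof: introduce the product distribution obtained by restricting each $v\neq u$ to $h_v^{-1}(\tau_v)$, set $x(B_c)=1/(40D^2)$, verify~\eqref{eqn:LLL} via \Cref{condition-projection} and the hypothesis, and apply the ``moreover'' clause of \Cref{theorem-LLL} to the event $A=\{h_u(X_u)=y\}$. The paper simply asserts $\abs{\Gamma(A)}\le D$ and uses $(1-1/(40D^2))^{-D}\le\exp(1/(20D))$; your clique argument giving $\abs{\Gamma(A)}\le D+1$ is the more careful statement.

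One small slip: your ``equivalent'' inequality should read $20D^2-20D-1\ge 0$, not $20D^2-20D+1\ge 0$. The corrected form fails at $D=1$ (indeed $(1-1/40)^{-2}>\exp(1/20)$), so with exponent $D+1$ the bound as stated does not quite close there. This is harmless in the paper's applications, where $D\ge 2$ throughout; if you want the statement literally for all $D\ge 1$ you can either follow the paper and take the exponent to be $D$, or note that for $D=1$ the LLL step is unnecessary since the at most two constraints through $u$ can be handled directly.
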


\begin{proof}
Define a new CSP formula $\widehat{\Phi} = (V,  \widehat{\Dom{Q}} = (\widehat{Q}_v)_{v \in V}, \Cons{C})$ by
\begin{align*}
\forall w \in V, \quad \widehat{Q}_w = \begin{cases}
h^{-1}_w(\tau_w) &\text{if } w \neq u \\
Q_w 	&\text{if }  w = u.
\end{cases}
\end{align*}
Let $\+D$ denote the product distribution that each $w \in V$ takes a value from $\widehat{Q}_w$ uniformly and independently. For each constraint $c \in \Cons{C}$, define a bad event $B_c$ as $c$ is not satisfied. 
Let $\+B = (B_c)_{c \in \Cons{C}}$ be the collection of bad events.
Recall that $\Gamma(\cdot)$ is defined as in the Lov\'asz local lemma (\Cref{theorem-LLL}).
It holds that $\max_{c \in \Cons{C}}\abs{\Gamma(B_c)} \leq D$.
For each $B_c$, let $x(B_c) = \frac{1}{40D^2}$. By \Cref{condition-projection}, it holds that
\begin{align*}
\Pr[\+D]{B_c \text{ is not satisfied}} &= \prod_{v \in \vbl{c}}\frac{1}{\abs{\widehat{Q}_v}} \leq \prod_{v \in \vbl{c}}\frac{1}{\ftp{q_v/s_v}}\leq \tp{\prod_{v \in \vbl{c}}\frac{1}{q_v}}^{\beta}\\
&\leq \frac{1}{40\mathrm{e}D^2}\leq \frac{1}{40D^2}\tp{1-\frac{1}{40D^2}}^{40D^2 - 1}\\
& \leq \frac{1}{40D^2}\tp{1-\frac{1}{40D^2}}^{D} \leq x(B_c) \prod_{B_{c'} \in \Gamma(B_c)}\tp{1 - x(B_{c'})}.
\end{align*}
Fix $y \in \Sigma_u$.
Let $A$ denote the event that the value of $u$ belongs to $h^{-1}_u(y)$, then $\abs{\Gamma(A)} \leq D$, where $\Gamma(A) \subseteq \+B$ is the set of bad events $B$ such that $u \in \vbl{B}$.
Let $\widehat{\mu}$ denote the uniform distribution of all satisfying assignments to $\widehat{\Phi}$. 
By \Cref{theorem-LLL}, we have
\begin{align*}
\nu_{u}^\tau(y) = \Pr[\widehat{\mu}]{A} = \Pr[X \sim \widehat{\mu} ]{X_u \in h^{-1}_u(y)}\leq \frac{1}{q_u}\ctp{\frac{q_u}{s_u}}\tp{1 - \frac{1}{40D^2}}^{-D}  \leq \frac{1}{q_u}\ctp{\frac{q_u}{s_u}}\exp\tp{\frac{1}{20D}}. &\qedhere
\end{align*}
\end{proof}

Now we are ready to prove \Cref{lemma-uniform}. 

\begin{proof}[Proof of \Cref{lemma-uniform}]
Fix a subset $H \subseteq V$, and an projected configuration $\sigma \in \Sigma_H$.
Recall $1\leq t \leq T + 1$ is a fixed integer.
Recall $Y = Y_{t-1}(\Lambda)$, where $\Lambda = V \setminus \{v_t\}$ if $1\leq t \leq T$, and $\Lambda = V$ if $t = T + 1$.
Recall that  $v_1, v_2,\ldots,v_t \in V$ is a sequence such that $v_i$ is the variable picked by \Cref{alg-mcmc} in $i$-th iteration of the for-loop.

For any variable $u \in H$, let $t(u)$ denote the last step up to step $t$ such that $u$ is picked by \Cref{alg-mcmc} of the for-loop. Formally, if $u$ appears in the sequence $v_1,v_2,\ldots,v_t$, then $t(u)$ is the largest number such that $v_{t(u)} = u$; if $u$ does not appear in the sequence $v_1,v_2,\ldots,v_t$, then $t(u) = 0$. We list all variables in $H$ as $u_1,u_2,\ldots,u_{\abs{H}}$ such that $t(u_1) \leq t(u_2)\leq \ldots \leq t(u_{\abs{H}})$, where for these variables $u$ satisfying $t(u) = 0$, we break tie arbitrarily. Thus, $Y_t(u) = Y_{t(u)}(u)$ for all $u \in H$. We have
\begin{align*}
\Pr{Y_H = \sigma}  = \Pr[]{\forall u_i \in H, Y_{u_i} = \sigma_{u_i}} \leq \prod_{i = 1}^{|H|} \Pr[]{Y_{t(u_i)}(u_i) = \sigma_{u_i} \mid \forall j <i, Y_{t(u_j)}(u_j) = \sigma_{u_j}}.
\end{align*}
We now only need to prove that, for any $1\leq i \leq |H|$,
\begin{align}
\label{eq-uniform-target-2}
\Pr[]{Y_{t(u_i)}(u_i) = \sigma_{u_i} \mid \forall j <i, Y_{t(u_j)}(u_j) = \sigma_{u_j}} \leq \frac{1}{q_{u_i}}\ctp{\frac{q_{u_i}}{s_{u_i}}} \exp\tp{\frac{1}{20 D}}.
\end{align}

Suppose $t(u_i) = 0$, then $Y_{0}(u_i) \in \Sigma_{u_i}$ is sampled independently with $\Pr[]{Y_{0}(u_i)=\sigma_{u_i}} = \frac{\abs{h_{u_i}^{-1}(\sigma_{u_i})}}{q_{u_i}}$. Since $\Proj{h}$ is balanced, we have $\abs{h^{-1}_{u_i}(\sigma_{u_i})} \leq \ctp{\frac{q_{u_i}}{s_{u_i}}}$. Inequality \eqref{eq-uniform-target-2} holds because
\begin{align*}
\Pr[]{Y_{0}(u_i) = \sigma_{u_i} \mid \forall j <i, Y_{0}(u_j) = \sigma_{u_j}} \leq \frac{1}{q_{u_i}}\ctp{\frac{q_{u_i}}{s_{u_i}}}.
\end{align*}

Suppose $t(u_i)  = \ell \neq 0$. \Cref{alg-mcmc} uses the subroutine $\sample(\cdot)$ to sample a random $X_v \in Q_v$ in \Cref{line-sample-1}, then maps $X_v$ into $Y_{\ell}(u_i)$ in~\Cref{line-map}.
If $X_v$ is returned in \Cref{line-bad-return-1} or \Cref{line-bad-return-2} in \Cref{alg-sample}, then $X_v$ is uniformly distribution over $Q_{u_i}$. In this case, inequality \eqref{eq-uniform-target-2} holds because
\begin{align*}
\Pr[]{Y_{\ell}(u_i) = \sigma_{u_i} \mid \forall j <i, Y_{t(u_j)}(u_j) = \sigma_{u_j}} = \sum_{X_v \in h^{-1}_{u_i}(\sigma_{u_i})}\frac{1}{q_{u_i}} \leq  \frac{1}{q_{u_i}} \ctp{\frac{q_{u_i}}{s_{u_i}}}.
\end{align*}
Otherwise, $X_v$ is returned in \Cref{line-good} of \Cref{alg-sample}. In this case, $Y_{\ell}(u_i)$ is sampled from the distribution $\nu_{u_i}^{Y_{\ell-1}(V \setminus \{u_i\} )}$. 
We can use \Cref{lemma-uniform-general} with $\tau=Y_{\ell-1}(V \setminus \{u_i\} )$ and $u = u_i$. Note that \Cref{lemma-uniform-general} holds for any $\tau$ and $u$. We have
\begin{align*}
\Pr[]{Y_{\ell}(u_i) = \sigma_{u_i} \mid \forall j <i, Y_{t(u_j)}(u_j) = \sigma_{u_j}} =
\nu_{u_i}^{Y_{\ell-1}(V \setminus \{u_i\} )}(\sigma_{u_i}) \leq \frac{1}{q_{u_i}}\ctp{\frac{q_{u_i}}{s_{u_i}}} \exp\tp{\frac{1}{20 D}}.
\end{align*}
Thus, inequality~\eqref{eq-uniform-target-2} holds.
\end{proof}

\section{Proof of rapid mixing}
\label{section-mixing}
Let $\Phi=(V,\Dom{Q},\Cons{C})$ be a CSP formula with atomic constraints and $\Proj{h}=(h_v)_{v \in V}$ be a balanced projection scheme satisfying \Cref{condition-projection} with parameter $\alpha$ and $\beta$, where $h_v: Q_v \to \Sigma_v$.
Let $\nu=\nu_{\Phi,\Proj{h}}$ be the projected distribution over $\Sigma = \bigotimes_{v \in V}\Sigma_v$ in \Cref{def:proj-distr}.
Let $(Y_t)_{t \geq 0}$ denote the Glauber dynamics $P_{\mathrm{Glauber}}$ on $\nu$. 
In this section, we show that the Glauber dynamics $P_{\mathrm{Glauber}}$ is rapid mixing,
and prove \Cref{lemma-mixing-kd} and \Cref{lemma-mixing-gen}.


\subsection{The stationary distribution}
We first proves that $\nu$  is the unique stationary distribution.
\begin{proposition}
\label{proposition-stationary}
Let $\Phi=(V,\Dom{Q},\Cons{C})$ be a CSP formula with atomic constraints. Let $\Proj{h}=(h_v)_{v \in V}$ be the projection scheme satisfying \Cref{condition-projection} with parameters $\alpha$ and $\beta$.
Let  $q_v = \abs{Q_v}$, $p = \max_{c \in \Cons{C}}\prod_{v \in \vbl{c}} \frac{1}{q_v}$ and $D$ denote the maximum degree of the dependency graph of $\Phi$.
Suppose $ \log\frac{1}{p}  \geq \frac{1}{\beta}\log(2\mathrm{e}D)$.
The Glauber dynamics $P_{\mathrm{Glauber}}$ is irreducible, aperiodic and reversible with respect to $\nu$, thus it has the unique stationary distribution $\nu$.	
\end{proposition}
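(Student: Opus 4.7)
The plan is to verify three properties of $P_{\mathrm{Glauber}}$: reversibility with respect to $\nu$, aperiodicity, and irreducibility. Reversibility is the routine detailed-balance calculation for single-site dynamics: for any $Y, Y' \in \Sigma$ differing only at a single coordinate $v$, both $\nu(Y)\, P_{\mathrm{Glauber}}(Y,Y')$ and $\nu(Y')\, P_{\mathrm{Glauber}}(Y',Y)$ equal $\tfrac{1}{n}\nu(Y_{V\setminus\{v\}})\,\nu_v^{Y_{V\setminus\{v\}}}(Y_v)\,\nu_v^{Y_{V\setminus\{v\}}}(Y'_v)$. Aperiodicity and irreducibility will both follow once I establish the stronger structural claim that $\mathrm{supp}(\nu) = \Sigma$.

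Establishing $\mathrm{supp}(\nu) = \Sigma$ is the crux, and as highlighted in \Cref{section-overview}, it is exactly the connectivity that the original solution space of $\Phi$ may lack and that state compression is designed to restore. Fix any $y \in \Sigma$; the goal is to exhibit a satisfying assignment $\Ass{X}$ of $\Phi$ with $\Proj{h}(\Ass{X}) = y$. Consider $\Phi$ under the product distribution $\pi^{y}$, where each variable $v$ is uniform on $h_v^{-1}(y_v)$; this is well defined because $\Proj{h}$ is balanced, so $|h_v^{-1}(y_v)| \geq \ftp{q_v/s_v} \geq 1$. For each atomic constraint $c$, its violation probability under $\pi^{y}$ is at most
\begin{align*}
\prod_{v \in \vbl{c}} \frac{1}{\ftp{q_v/s_v}} \;\leq\; \prod_{v \in \vbl{c}} q_v^{-\beta} \;\leq\; p^{\beta},
\end{align*}
where the first inequality invokes~\eqref{eq:entropy-upper-bound} of \Cref{condition-projection}. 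The hypothesis $\log\tfrac{1}{p} \geq \tfrac{1}{\beta}\log(2\mathrm{e}D)$ rearranges to $p^{\beta} \leq 1/(2\mathrm{e}D)$, and hence $\mathrm{e}\, p^{\beta}\, (D+1) \leq (D+1)/(2D) \leq 1$ whenever $D \geq 1$ (the case $D=0$ is trivial, as constraints are then independent). The symmetric Lov\'asz local lemma (\Cref{theorem-LLL} applied with a constant weight function) then produces a satisfying assignment in the support of $\pi^{y}$, and any such assignment certifies $\nu(y) > 0$.

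With $\mathrm{supp}(\nu) = \Sigma$ established, aperiodicity is immediate from the positive self-loop probability $\tfrac{1}{n}\,\nu_v^{Y_{V\setminus\{v\}}}(Y_v) > 0$ available at every $Y \in \Sigma$, and irreducibility follows by flipping the coordinates of $Y$ to those of any target $Y'$ one at a time: every intermediate configuration lies in $\Sigma = \mathrm{supp}(\nu)$, so each single-site transition has positive probability. Combined with reversibility, this gives $\nu$ as the unique stationary distribution on the finite space $\Sigma$. The main obstacle is the existence argument in the second paragraph: the balanced property of $\Proj{h}$ together with the bound \eqref{eq:entropy-upper-bound} is precisely what converts the LLL hypothesis on $\Phi$ into the conditional LLL condition needed on $\pi^{y}$, and the factor $1/\beta$ in the hypothesis $\log(1/p) \geq (1/\beta)\log(2\mathrm{e}D)$ is calibrated exactly to make this conditional LLL check succeed.
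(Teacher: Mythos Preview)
Your proposal is correct and follows essentially the same route as the paper: both argue that $\mathrm{supp}(\nu)=\Sigma$ by applying the Lov\'asz local lemma to $\Phi$ under the product distribution $\pi^{y}$, bounding each constraint's violation probability via~\eqref{eq:entropy-upper-bound} to get $\prod_{v\in\vbl{c}}\ftp{q_v/s_v}^{-1}\le p^\beta\le 1/(2\mathrm{e}D)$, and then deduce irreducibility, aperiodicity, and reversibility in the standard way. Your write-up is somewhat more explicit about the self-loop/coordinate-flipping steps than the paper's, but the argument is the same.
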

\begin{proof}
By the transition rule of Glauber dynamics, it is easy to verify the Glauber dynamics is 	aperiodic and reversible with respect to $\nu$. We prove the Markov chain is irreducible.
We show that for any $\sigma \in \Sigma$, $\nu(\sigma) > 0$.
This implies that the transition probability of Glauber dynamics is always well-defined and the Markov chain is connected.
Fix a $\sigma \in \Sigma$. Define a new instance $\widehat{\Phi}=(V, \widehat{\Dom{Q}}=(\widehat{Q}_v)_{v \in V},\Cons{C})$ as $\widehat{Q}_v = h^{-1}_v(\sigma_v)$ for all $v \in V$. It suffices to show that $\widehat{\Phi}$ is satisfiable, which implies $\nu(\sigma) > 0$. The maximum degree of dependency graph of $\widehat{\Phi}$ is at most $D$. Besides, if each variable picks a value from $\widehat{Q}_v$ uniformly and independently, then for each $c \in \Cons{C}$, the probability that $c$ is not satisfied is at most
\begin{align*}
\prod_{v \in \vbl{c}} \frac{1}{\vert \widehat{Q}_v \vert}	\leq \prod_{v \in \vbl{c}} \frac{1}{\lfloor q_v /s_v \rfloor} \leq \tp{\prod_{v \in \vbl{c}}\frac{1}{q_v} }^\beta \leq \frac{1}{2\mathrm{e}D}.
\end{align*}
By Lov\'asz local lemma, $\widehat{\Phi}$ is satisfiable.
\end{proof}

\subsection{Path coupling analysis}
We use the path coupling~\cite{bubley1997path} to show that the Markov chain is rapid mixing.
Fix two projected configurations $\Ass{X}, \Ass{Y} \in \Sigma = \bigotimes_{v \in V}\Sigma_v$ such that $\Ass{X}$ and $\Ass{Y}$ disagree only at one variable $v_0 \in V$ (assume $s_{v_0} \geq 2$). We construct a coupling $(\Ass{X},\Ass{Y})\rightarrow (\Ass{X}',\Ass{Y}')$ such that $\Ass{X} \rightarrow \Ass{X}'$ and $\Ass{Y} \rightarrow \Ass{Y}'$ each individually follows the transition rule of $P_{\mathrm{Glauber}}$ such that
\begin{align}
\label{eq-path-coupling-target}
\E{d_{\mathrm{ham}}(\Ass{X}',\Ass{Y}')\mid \Ass{X}, \Ass{Y}} \leq 1-\frac{1}{2n},
\end{align}
where $d_{\mathrm{ham}}(\Ass{X}',\Ass{Y}') \triangleq |\{v \in V \mid X'_v \neq Y'_v\}|$ denotes the Hamming distance between $\Ass{X}'$ and $\Ass{Y}'$.
Note that the Hamming distance is at most $n$.
Thus, by path coupling lemma (\Cref{lemma-path-coupling}), for any $0<\epsilon < 1$,
\begin{align*}
\tmix(\epsilon) \leq \ctp{2n \log \frac{n}{\epsilon}},
\end{align*}
where $n = |V|$ is the number of variables. 

The coupling $(\Ass{X},\Ass{Y})\rightarrow (\Ass{X}',\Ass{Y}')$ is constructed as follows.
\begin{itemize}
\item Pick the same variable $v \in V$ uniformly at random, set $X'_u \gets X_u$ and $Y'_u \gets Y_u$ for all $u \neq v$.
\item Sample $(X'_v,Y'_v)$ jointly from the optimal coupling between $\nu_{v}^{X_{V\setminus \{v\} }}$ and $\nu_{v}^{Y_{V\setminus \{v\}}}$.
\end{itemize}
By the linearity of expectation, we have
\begin{align*}
\E{d_{\mathrm{ham}}(\Ass{X}',\Ass{Y}')\mid \Ass{X}, \Ass{Y}} &= \sum_{v \in V}\Pr[]{X'_v \neq Y'_v \mid \Ass{X}, \Ass{Y}}\\
\text{(by the optimal coupling)}\qquad&= \frac{1}{n}\sum_{v \in V \setminus \{v_0\} } \DTV{\nu_{v}^{ X_{V\setminus \{v\} }}}{\nu_{v}^{Y_{V\setminus \{v\} }}} + \tp{1 - \frac{1}{n}},
\end{align*}
where the last equation holds because $\DTV{\nu_{v_0}^{X_{V\setminus \{v_0\} }}}{\nu_{v_0}^{Y_{V\setminus \{v_0\} }}} = 0$.
To prove~\eqref{eq-path-coupling-target}, it suffices to prove
\begin{align*}
\sum_{v \in V  \setminus \{v_0\}} \DTV{\nu_{v}^{X_{V\setminus \{v\} }}}{\nu_{v}^{Y_{V\setminus \{v\} }}}\leq \frac{1}{2}.	
\end{align*}

To prove the above inequality, we need to bound $ \DTV{\nu_{v}^{X_{V\setminus \{v\} }}}{\nu_{v}^{Y_{V\setminus \{v\} }}}$ for each $v \in V \setminus \{v_0\}$.
We use the coupling introduced by Moitra~\cite{Moi19} to do this task. 
For $k$-uniform CSP formula such that the domain of each variable is $[q]$, we construct an adaptive version~\cite{guo2019counting} of Moitra's coupling.
Compared with the analysis in~\cite{guo2019counting,FGYZ20}, this coupling  is more refined and requires a more careful analysis. This part in given in \Cref{section-adaptive-coupling}. For general CSP formula, we use the original non-adaptive version of Moitra's coupling. The analysis for general case is much more involved, because we need to deal with arbitrary domain and arbitrary size of constraints. This part is given in \Cref{section-non-adp}.

\subsection{Adaptive coupling analysis}
\label{section-adaptive-coupling}
We first analyze the simple case. Suppose the original input CSP formula of \Cref{alg-mcmc} is a $(k,d)$-CSP formula $\Phi=(V,[q]^V,\Cons{C})$ with atomic constraints, where $\abs{\vbl{c}}=k$ for all $c\in\Cons{C}$ and each variable $v\in V$ appears in at most $d$ constraints, on homogeneous domains ${Q}_v=[q]$ for all $v\in V$.
 Note that this case covers two applications: hypergraph coloring and $k$-CNF formula.  We prove the following lemma.
\begin{lemma}
\label{lemma-path-coupling-uniform}
Let  $\Phi=(V,[q]^V,\Cons{C})$ be a $(k,d)$-CSP formula with atomic constraints.
Let $\Proj{h} = (h_v)_{v \in V}$ be the projection scheme for $\Phi$ satisfying \Cref{condition-projection} with parameters $\alpha$ and $\beta$.
If
\begin{align}
\label{eq-cond-path-coupling-uniform}
k\log q \geq \frac{1}{\beta} \log \tp{3000 q^2d^6k^6},	
\end{align}
then it holds that $\sum_{v \in V  \setminus \{v_0\}} \DTV{\nu_{v}^{X_{V\setminus \{v\} }}}{\nu_{v}^{Y_{V\setminus \{v\} }}}\leq \frac{1}{2}$.
\end{lemma}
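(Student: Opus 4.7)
My plan is to adapt the adaptive coupling of Moitra, as refined in~\cite{guo2019counting,FGYZ20}, to the projected distribution $\nu$. The fundamental observation is that although $\nu$ is no longer a Gibbs measure, the marginal $\nu_v^{X_{V\setminus\{v\}}}$ can be expressed in terms of the underlying LLL distribution: it is the pushforward under $h_v$ of $\mu_v^{X_{V\setminus\{v\}}}$, which is in turn a marginal of the conditional CSP $\widehat{\Phi}_X$ obtained by restricting each $u \neq v$ to the fiber $h_u^{-1}(X_u)$. Since $X$ and $Y$ differ only at $v_0$, the two conditional CSPs $\widehat{\Phi}_X$ and $\widehat{\Phi}_Y$ differ only at the variable $v_0$, and by Condition~\ref{condition-projection} both still satisfy a Lov\'asz local lemma condition (this is the content of Condition~\ref{conditional-LLL-condition} with parameter $\beta$).

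The strategy is as follows. To bound $\DTV{\nu_v^{X_{V\setminus\{v\}}}}{\nu_v^{Y_{V\setminus\{v\}}}}$ I would construct a coupling of two executions, one drawing from $\mu^{X_{V\setminus\{v\}}}$ and the other from $\mu^{Y_{V\setminus\{v\}}}$, that scans the variables in a BFS order rooted at $v_0$. At each step the two executions use an optimal coupling of the local conditional distributions on the currently-scanned variable; a discrepancy is declared at a constraint $c$ only when the coupling fails, which requires the pre-images inside $c$ of the projected assignments to be incompatible with $c$ being satisfied. By the atomicity of constraints and Condition~\ref{condition-projection}, the probability that a fixed clause becomes such a ``discrepancy'' clause is at most $(\prod_{u\in\vbl{c}} 1/q_u)^{\beta} = p^{\beta}$ conditioned on the configuration outside. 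The event that $v$ is affected therefore forces the existence of a connected sequence of discrepancy clauses linking $v_0$ to $v$; extracting a $2$-tree witness in the line graph as in Proposition~\ref{proposition-2-tree} and bounding by Proposition~\ref{proposition-number}, the probability that a variable $v$ at graph-distance $\ell$ from $v_0$ is affected is at most
\[
\tfrac{1}{2}(\mathrm{e} D^2)^{\ell-1}\, p^{\beta(\ell-1)/(D+1)},
\]
where $D \le qkd$ is the dependency-graph degree. The hypothesis $k\log q \geq \frac{1}{\beta}\log(3000 q^2 d^6 k^6)$ is precisely chosen so that $p^{\beta} \cdot (\text{combinatorial factor})$ is $O(1/(qkd)^{c})$ for a large enough constant $c$, making the geometric series over $\ell$ summable with plenty of room to spare.

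The final step is to sum $\DTV{\nu_v^{X_{V\setminus\{v\}}}}{\nu_v^{Y_{V\setminus\{v\}}}}$ over all $v \neq v_0$. Grouping variables by their distance $\ell$ from $v_0$, there are at most $(qkd)^{\ell}$ variables at distance $\ell$ (via the hypergraph-edge neighborhood), so the sum is bounded by
\[
\sum_{\ell \geq 1} (qkd)^{\ell} \cdot \tfrac{1}{2}(\mathrm{e}D^2)^{\ell-1}\, p^{\beta(\ell-1)/(D+1)},
\]
which a short computation shows is at most $1/2$ under condition~\eqref{eq-cond-path-coupling-uniform}, as the factor $p^{\beta}$ beats $(qkd)^{O(1)}$ per level by the assumption.

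The main obstacle I anticipate is making the adaptive coupling precise in the projected setting, rather than the original LLL setting as in~\cite{guo2019counting,FGYZ20}. The subtlety is that the ``discrepancy'' event at a clause $c$ involves assignments in the pre-image $h^{-1}(\cdot)$, so one must verify that the $p^{\beta}$-type bound (rather than just $p$) is what naturally shows up; this is exactly the reason for the second inequality \eqref{eq:entropy-upper-bound} in Condition~\ref{condition-projection}, which ensures the conditional CSPs retain an LLL regime with the slightly weakened parameter $\beta$. A secondary technical point is handling the fact that $v_0$ itself is fully correlated between the two chains (so its discrepancy is guaranteed), which forces us to measure the extracted $2$-tree rooted at $v_0$ and to lose only a constant factor from the root edge.
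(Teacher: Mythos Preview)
Your proposal has a genuine quantitative gap that makes the argument fail under the stated hypothesis.

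The core problem is the use of the $2$-tree extraction (Propositions~\ref{proposition-number} and~\ref{proposition-2-tree}) from the line graph. That extraction turns a connected bad-clause set of size $L$ into a $2$-tree of size only $\lfloor L/(D+1)\rfloor$. Consequently, when you bound the probability that $v$ at distance $\ell$ is reached, the exponent on $p^{\beta}$ is only $\ell/(D+1)$ rather than $\ell$. After summing over the roughly $(kd)^{\ell}$ variables at distance $\ell$, convergence of your series requires $p^{\beta}\lesssim (kd)^{-(D+1)}\approx (kd)^{-kd}$, i.e.\ $\beta k\log q\gtrsim kd\log(kd)$. The hypothesis~\eqref{eq-cond-path-coupling-uniform} only gives $\beta k\log q\geq \log(3000q^{2}d^{6}k^{6})$, which is weaker by a factor of roughly $kd$; your ``short computation'' at the end cannot close this gap. (Incidentally, for a $(k,d)$-CSP the dependency-graph degree is $D\le k(d-1)$, not $qkd$; the $q$-factor only appears after the hypergraph-coloring reduction.)

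The paper's proof avoids this loss entirely by \emph{not} passing through $2$-trees in the line graph. Instead it runs an \emph{adaptive} coupling (\Cref{alg-coupling-v}) that maintains, for every surviving hyperedge $e$, the current volume $\vol_{\Phi^{X}}(e)=\prod_{u\in e}|Q^{X}_u|$, and \emph{freezes} all remaining variables of $e$ as soon as this volume drops below a threshold $\gamma=\Theta(q^{2}d^{3}k^{3})$. This guarantees (\Cref{lemma-coupling-Cv}) that every surviving hyperedge always has volume $\ge\gamma/q$, which in turn gives uniform marginal control at every step (\Cref{lemma-coupling-uniform}). The analysis then builds, directly in the hypergraph, a \emph{percolation sequence} of mutually disjoint sets $S_1,\dots,S_\ell$ with $\dist_H(S_i,S_{i+1})\le 2$ (\Cref{lemma-path-strong}); each $S_i$ is either a single failed variable or a failed hyperedge, with failure probability at most $1/(8d^{2}k^{3})$ or $1/(8d^{3}k^{3})$ respectively. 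These per-element bounds are obtained by a random-seed implementation of the optimal coupling (proof of \Cref{lemma-failed-path}) that makes the failure events along the sequence genuinely independent. Since the branching factor at each step is at most $k^{3}d^{3}$, the product telescopes with ratio $\le 1/4$, giving $\sum_v I_v\le 1/2$ under~\eqref{eq-cond-path-coupling-uniform}. Your sketch is missing both the adaptive freeze (without it, single-variable coupling failures are not small enough) and the percolation-sequence construction that replaces the lossy $2$-tree step.
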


Recall that for any $\sigma \in \Sigma_{\Lambda}$, where $\Lambda \subseteq V$, the distribution $\mu^\sigma$ is the distribution of $\Ass{X} \in [q]^V$ such that $\Ass{X}$ is sampled from $\mu$ conditional on $\Proj{h}(\Ass{X}_{\Lambda}) = (h_v(\Ass{X}_v))_{v \in \Lambda} = \sigma$, where $\mu$ is the uniform distribution over all satisfying assignments to $\Phi$. We use $\mu_{v}^\sigma$ to denote the marginal distribution on $v$ projected from $\mu^\sigma$.
For any $v \in V$ and $c \in \Sigma_v$, it holds that
\begin{align*}
\nu_{v}^{X_{V\setminus \{v\} }}(c) = \sum_{j \in h_v^{-1}(c)}\mu_v^{X_{V\setminus \{v\}}}(j) \quad\text{and}\quad \nu_{v}^{Y_{V\setminus \{v\} }}(c) = \sum_{j \in h_v^{-1}(c)}\mu_v^{Y_{V\setminus \{v\}}}(j).
\end{align*}
Note that each $h_v$ is a function from $[q]$ to $\Sigma_v$.
By triangle inequality,
it holds that
\begin{align*}
\DTV{\nu_{v}^{X_{V\setminus \{v\} }}}{\nu_{v}^{Y_{V\setminus \{v\} }}} &= \frac{1}{2}\sum_{c \in \Sigma_v}\abs{\nu_{v}^{X_{V\setminus \{v\} }}(c) - \nu_{v}^{Y_{V\setminus \{v\} }}(c)}\\
\tp{\text{by } \biguplus_{c \in \Sigma_v}h^{-1}_v(c) = [q]} \quad &\leq \frac{1}{2}\sum_{j \in [q]}\abs{\mu_{v}^{X_{V\setminus \{v\} }}(j) - \mu_{v}^{Y_{V\setminus \{v\} }}(j)} = \DTV{\mu_{v}^{X_{V\setminus \{v\} }}}{\mu_{v}^{Y_{V\setminus \{v\} }}}.	
\end{align*}
For any variable $v \in V \setminus \{v_0\}$, define the influence on $v$ caused by $v_0$ as 
\begin{align}
\label{eq-proof-Iv}
I_v \triangleq 	\DTV{\mu_{v}^{X_{V\setminus \{v\} }}}{\mu_{v}^{Y_{V\setminus \{v\} }}}.
\end{align}
To prove the rapid mixing of Glauber dynamics, it suffices to prove that
\begin{align}
\label{eq-sum-Iv}
\sum_{v \in V: v \neq v_0}I_v \leq \frac{1}{2}.	
\end{align}

Fix a variable $\vst \in V$. We will use a coupling $\cadp$ to bound the influence $I_{\vst}$.
The coupling $\cadp$ draws two random samples $\Ass{X}^{\cadp} \sim \mu^{X_{V\setminus\{\vst\} }}$ and $\Ass{Y}^{\cadp} \sim \mu^{Y_{V\setminus \{\vst\} }} $. 
By coupling lemma (\Cref{lemma-coupling-ineq}), the influence $I_{\vst}$ can be bounded by
\begin{align}
\label{eq-bound-Iv}
I_{\vst} \leq \Pr[\cadp]{X^{\cadp}_{\vst} \neq Y^{\cadp}_{\vst}}.	
\end{align}

To describe the coupling $\cadp$,
we first introduce some definitions. 
Recall $\Phi=(V,[q]^V,\Cons{C})$ is the original input CSP formula of \Cref{alg-mcmc}.
Recall two projected configurations  $\Ass{X}, \Ass{Y} \in \Sigma = \bigotimes_{v \in V}\Sigma_v$  differ only at $v_0$.
Define two CSP formulas $\Phi^{X}$ and $\Phi^{Y}$ as follows:
\begin{itemize}
\item $\Phi^{X}=(V,\Dom{Q}^{X}= (Q^{X}_u)_{u \in V}, \Cons{C})$ is a CSP formula such that
\begin{align}
\label{eq-def-Qx}
Q^{X}_u = \begin{cases}
h_u^{-1}(X_u) &\text{if } u \neq \vst;\\
[q] &\text{if } u = \vst. 	
 \end{cases}
\end{align}

\item $\Phi^{Y}=(V, \Dom{Q}^{Y}= (Q^{Y}_u)_{u \in V}, \Cons{C})$ is a CSP formula such that
\begin{align*}
Q^{Y}_u = \begin{cases}
h_u^{-1}(Y_u) &\text{if } u \neq \vst;\\
[q] &\text{if } u = \vst. 	
\end{cases}
\end{align*}
\end{itemize}
By definition, $(Q^{X}_u)_{u \in V}$ and $(Q^{Y}_u)_{u \in V}$ differ only at variable $v_0$. We then define two distributions
\begin{itemize}
\item $\mu_{\Phi^{X}}$: the uniform distribution over all satisfying assignment to $\Phi^{X}$;
\item $\mu_{\Phi^{Y}}$: the uniform distribution over all satisfying assignment to $\Phi^{Y}$. 	
\end{itemize}
It is straightforward to verify $\mu_{\Phi^{X}} = \mu^{X_{V \setminus \{\vst\}}}$ and $\mu_{\Phi^{Y}} = \mu^{Y_{V \setminus \{\vst\}}}$. For any subset $S \subseteq V$, we use $\mu_{S,\Phi^{X}}$ (and $\mu_{S,\Phi^{Y}}$) to denote the marginal distribution on $S$ projected from $\mu_{\Phi^{X}}$ (and $\mu_{\Phi^{Y}}$).

Recall that $\Phi = (V, [q]^V, \Cons{C})$ is the original input CSP formula of \Cref{alg-mcmc}.
Recall that $H=(V,\+E)$ denotes the (multi-)hypergraph that models $\Phi$, where $\+E \triangleq \{\vbl{c} \mid c \in \+C\}$.
Note that $H$ also models $\Phi^{X}$ and $\Phi^{Y}$, because  $\Phi,\Phi^{X},\Phi^{Y}$ have the same sets of variables and constraints.
We assume that given any hyperedge $e \in \+E$, we can find the unique constraint in $c \in \Cons{C}$ represented by $e$.
For each hyperedge $e \in \+E$, define the \emph{volume} of $e$ with respect to $\Phi^{X}$ and $\Phi^{Y}$ as
\begin{align*}
\vol_{\Phi^{X}}(e) \triangleq \prod_{u \in e}\abs{Q^{X}_u} \quad \text{and} \quad \vol_{\Phi^{Y}}(e) \triangleq \prod_{u \in e}\abs{Q^{Y}_u}.	
\end{align*}
By \Cref{condition-projection} and~\eqref{eq-cond-path-coupling-uniform}, initially, we have for any hyperedge $e \in \+E$,
\begin{align}
\label{eq-initial-vol}
\vol_{\Phi^{X}}(e) \geq 3000 q^2d^6k^6\quad \text{and} \quad 	\vol_{\Phi^{Y}}(e) \geq 300 0q^2d^6k^6.
\end{align}
Let $\gamma$ be a threshold such that
\begin{align}
\label{eq-def-beta-T}
\gamma \triangleq 32\mathrm{e}q^2 d^3k^3 \leq 300 0q^2d^6k^6.
\end{align}

Consider an atomic constraint $c \in \Cons{C}$.
Let $\sigma \in [q]^{\vbl{c}}$ denote the unique configuration forbidden by $c$, i.e. $c(\sigma) = \False$.
The constraint $c$ is said to be satisfied by the value $x_u \in [q]$ of variable $u$ if $u \in \vbl{c}$ and $\sigma_u \neq x_u$.
In other words, given the condition that $u$ takes the value $x_u$, the constraint $c$ must be satisfied. 
A constraint $c$ is said to be satisfied by $\tau \in [q]^S$ for some subset $S \subseteq V$ if $c$ is satisfied by some $\tau_u$, where $u \in S \cap \vbl{c}$.

The coupling procedure $\cadp$ is given in \Cref{alg-coupling-v}.
\begin{algorithm}[h]
  \SetKwInOut{Input}{Input} \SetKwInOut{Output}{Output} 
  \Input{CSP formulas $\Phi^{X}=(V,\Dom{Q}^{X}= (Q^{X}_u)_{u \in V}, \Cons{C})$ and $\Phi^{Y}=(V,\Dom{Q}^{Y}= (Q^{Y}_u)_{u \in V}, \Cons{C})$, a hypergraph
    $H= (V,\mathcal{E})$ modeling $\Phi^{X}$ and $\Phi^{Y}$, two variables $v_0,\vst \in V$, a threshold  parameter $\gamma$ in~\eqref{eq-def-beta-T};}  
  \Output{a pair of assignments $\Ass{X}^{\cadp}, \Ass{Y}^{\cadp} \in [q]^{V}$.}
  $V_1 \gets \{v_0\}$, $V_2 \gets V \setminus V_1$, $\Vcol \gets \emptyset$, $\Vfro \gets \emptyset$ and $\Efro \gets \emptyset$\; 
  let $\Ass{X}^{\cadp}$ and $\Ass{Y}^{\cadp}$ be two empty assignments\;
  \While{$\exists e \in \mathcal{E}$ s.t.\ $e \cap V_1 \neq \emptyset, (e \cap V_2) \setminus (\Vcol \cup \Vfro) \neq \emptyset$\label{line-while-condition} }
    {let $e$ be the first such hyperedge and $u$ be the first variable in $(e \cap V_2) \setminus (\Vcol \cup \Vfro)$\label{line-find-u-Cv}\; 
    extend $\Ass{X}^{\cadp}$ and $\Ass{Y}^{\cadp}$ to variable $u$ by sampling $(X^{\cadp}_u, Y^{\cadp}_u)$ from the optimal coupling between $\mu_{u,\Phi^{X}}$ and $\mu_{u,\Phi^{Y}}$\label{line-set-u}\;
    update $\Phi^{X}$ by setting $Q^{X}_u \gets \{X^{\cadp}_u\}$, update $\Phi^{Y}$ by setting $Q^{Y}_u \gets \{Y^{\cadp}_u\}$\label{line-update-formula}\;
    $\Vcol \gets \Vcol \cup \{u\}$\label{line-add-vset-Cv}\; 
    \If{$X^{\cadp}_u \neq Y^{\cadp}_u$}{$V_1 \gets V_1 \cup \{u\}, V_2 \gets V \setminus V_1$\label{line-add-v1}\;}
    
    \For{$e \in \+E$ s.t. the constraint $c$ represented by $e$ is satisfied by both $X^{\cadp}_u$ and $Y^{\cadp}_u$}
      {$\mathcal{E} \gets\mathcal{E} \setminus \{e\}$, update $\Phi^{X}$ and $\Phi^{Y}$ by removing constraint $c$ from $\Cons{C}$, i.e. $\Cons{C} \gets \Cons{C} \setminus \{c\}$ \label{line-delete}\; } 
	\For{$e \in \mathcal{E}$ s.t.\ $\vol_{\Phi^{X}}(e) \leq \gamma$ or $\vol_{\Phi^{Y}}(e) \leq \gamma$\label{line-too-small}}
      { $\Vfro \gets \Vfro \cup ((e \cap V_2) \setminus \Vcol)$\label{line-add-vfro}\;
       }
    \For{$e \in \mathcal{E}$ s.t. $(e \cap V_2) \setminus (\Vcol \cup \Vfro) = \emptyset$\label{line-edge-frozen-1}}{
    	$\Efro \gets \Efro \cup \{e\}$\label{line-edge-frozen-2}\;
    }
	\While{$\exists e \in \Efro$ s.t. $e \cap V_1 \neq \emptyset$ and $e \cap \Vfro \neq \emptyset$ \label{line-frozen-to-v1-1} }
	{ 
		$V_1 \gets V_1 \cup ( e \cap \Vfro)$, $V_2 \gets V \setminus V_1$, $\Vfro \gets \Vfro \setminus e$\label{line-frozen-to-v1-2}\;
	}
    } 
    extend $\Ass{X}^{\cadp}$ and $\Ass{Y}^{\cadp}$ to the set $V_2 \setminus \Vcol$ 
   by sampling $(X^{\cadp}_{V_2 \setminus \Vcol}, Y^{\cadp}_{V_2 \setminus \Vcol})$ from the optimal coupling between $\mu_{V_2 \setminus \Vcol, \Phi^{X}}$ and $\mu_{V_2 \setminus \Vcol,\Phi^{Y}}$\label{line-sample-V2-Cv}\;
    extend $\Ass{X}^{\cadp}$ and $\Ass{Y}^{\cadp}$ to the set $V_1 \setminus \Vcol$ 
    by sampling $(X^{\cadp}_{V_1 \setminus \Vcol}, Y^{\cadp}_{V_1 \setminus \Vcol})$ from the optimal coupling between $\mu_{V_1 \setminus \Vcol,\Phi^{X}}(\cdot\mid \Ass{X}^{\cadp})$ and $\mu_{V_1 \setminus \Vcol,\Phi^{Y}}(\cdot\mid \Ass{Y}^{\cadp})$\label{line-extend-Cv}\;
    \Return{$(\Ass{X}^{\cadp}, \Ass{Y}^{\cadp})$\;}
  \caption{The coupling procedure $\cadp$}\label{alg-coupling-v}
\end{algorithm}

The coupling procedure $\cadp$ starts from two empty assignments $\Ass{X}^{\cadp}$ and $\Ass{Y}^{\cadp}$, then 
gradually extends these assignments, finally outputs two full assignments on $V$. 
The following three basic sets of variables are maintained by the coupling.
\begin{itemize}
\item $V_1/V_2$: $V_1$ is a superset of \emph{discrepancy} variables, which contains all variables $w$ such that the coupling on $w$ may be failed i.e. $X^{\cadp}_w \neq Y^{\cadp}_w$; $V_2 = V \setminus V_1$ is the complement of set $V_1$;
\item $\Vcol$: the set of variables whose values are already assigned by the coupling procedure.	
\end{itemize}

In addition, the coupling procedure $\cadp$ also maintains two  CSP formulas $\Phi^{X}=(V, \Dom{Q}^{X}, \Cons{C})),\Phi^{Y}=(V, \Dom{Q}^{Y}, \Cons{C})$ 
and a hypergraph $H=(V,\+E)$ modeling these two formulas.
In each step, we pick a suitable variable $u$ (\Cref{line-find-u-Cv}), extend $\Ass{X}^{\cadp}$ and $\Ass{Y}^{\cadp}$ to variable $u$ (\Cref{line-set-u}). 
We then remove all the constraints (together with corresponding hyperedges\footnote{Remark that $\+E$ is a multi-set of hyperedges. Once a hyperedge $e$ is removed from $\+E$ in \Cref{line-delete}, we only remove a single copy of $e$ representing the constraint $c$.}) satisfied  by both $X^{\cadp}_u$ and $Y^{\cadp}_u$~(\Cref{line-delete}), update $\Phi^{X}$ and $\Phi^{Y}$ by setting $Q^{X}_u \gets \{X^{\cadp}_u\}$ and $Q^{Y}_u \gets \{Y^{\cadp}_u\}$~(\Cref{line-update-formula}).
In other words, we force $u$ in $\Phi^{X}$ to take the value $X^{\cadp}_u$, and force $u$ in $\Phi^{Y}$ to take the value $Y^{\cadp}_u$.

The coupling procedure $\cadp$ guarantees that the volume of all hyperedges $e \in \+E$ cannot be too small in the whole procedure. 
This property is controlled by the parameter $\gamma$. 
Thus, the coupling procedure $\cadp$ is adaptive with respect to the current volumes of hyperedges.
Specifically, the following two sets are maintained during the coupling.
\begin{itemize}
\item $\Vfro$: the set of \emph{frozen} variables, which is a set of unassigned variables in $V_2$, where each $w \in \Vfro$ is incident to a hyperedge $e$ such that the volume of $e$ is below the threshold $\gamma$.
\item $\Efro$: the multi-set of \emph{frozen} hyperedges such that for each hyperedge $e \in \Efro$, all unassigned variables in $e \cap V_2$ are frozen.
\end{itemize}
Once the volume of some hyperedge $e$ is below the threshold $\gamma$ (\Cref{line-too-small}), we froze all unassigned variables in $e \cap V_2$ (\Cref{line-add-vfro}). 
Once a variable becomes frozen, the coupling cannot assign values to this variable.
If in a hyperedge $e$, all unassigned variables in $e \cap V_2$ are frozen, then the coupling cannot assign values to any unassigned variables $e$, the hyperedge $e$ becomes frozen (\Cref{line-edge-frozen-1} and \Cref{line-edge-frozen-2}).
Finally, once a frozen hyperedge both contains frozen variables and variables in $V_1$, we put all frozen variables in this hyperedge into $V_1$ (\Cref{line-frozen-to-v1-1} and \Cref{line-frozen-to-v1-2}). 

%

%

%

%
Once the while-loop in \Cref{alg-coupling-v} terminates, we then sample assignments for variables in $V_2 \setminus \Vcol$ and $V_1 \setminus \Vcol$ from the conditional distributions (\Cref{line-sample-V2-Cv} and \Cref{line-extend-Cv}).

\begin{lemma}
\label{lemma-coupling-Cv}
The coupling procedure $\cadp$  satisfies the following properties:
\begin{itemize}
\item the coupling procedure will terminate eventually;
\item the output $\Ass{X}^{\cadp} \in [q]^V$ follows  $\mu^{X_{V \setminus \{v\}}}$ and the output $\Ass{Y}^{\cadp} \in [q]^V$ follows  $\mu^{Y_{V \setminus \{v\}}}$;
\item for any time of the coupling procedure and any $e$ in the current set $\+E$, it holds that
\begin{align*}
\vol_{\Phi^{X}}(e) \geq \frac{\gamma}{q} \quad \text{and} \quad 	\vol_{\Phi^{Y}}(e) \geq \frac{\gamma}{q};	
\end{align*}
\item for any variable $u \in V$, if $X^{\cadp}_u \neq Y^{\cadp}_u$ in the final output, then $u \in V_1$.
\end{itemize}
\end{lemma}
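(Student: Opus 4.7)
The plan is to verify the four properties in order; property~1 is immediate, property~3 follows by induction on the outer loop, and properties~2 and~4 both reduce to a separation claim that holds at the end of the outer loop. Termination of the outer while loop at \Cref{line-while-condition} holds because \Cref{line-add-vset-Cv} strictly enlarges $\Vcol\subseteq V$ each iteration, and termination of the inner while loop at \Cref{line-frozen-to-v1-1,line-frozen-to-v1-2} holds because each iteration strictly shrinks $\Vfro$ (as $e\cap\Vfro\neq\emptyset$ and $\Vfro\gets\Vfro\setminus e$).

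For the volume lower bound, I induct on outer-loop iterations. Initially, by \Cref{condition-projection} and the hypothesis $k\log q\geq \tfrac{1}{\beta}\log(3000q^2d^6k^6)$, every hyperedge $e$ satisfies $\vol_{\Phi^X}(e),\vol_{\Phi^Y}(e)\geq \prod_{v\in e}\ftp{q/s_v}\geq q^{\beta k}>\gamma$. Now assume the invariant $\vol_{\Phi^X}(e),\vol_{\Phi^Y}(e)>\gamma/q$ holds at the start of an iteration. If $u\in e$ is selected at \Cref{line-find-u-Cv}, then $u\notin\Vfro$ forces $\vol_{\Phi^X}(e),\vol_{\Phi^Y}(e)>\gamma$ just before pinning, since otherwise the freezing rule at \Cref{line-add-vfro} in an earlier iteration would have placed $u$ in $\Vfro$ (frozen variables are never unfrozen by subsequent pinning—at worst they migrate into $V_1$, and the main loop only selects from $V_2$). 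Pinning at \Cref{line-update-formula} shrinks each volume by a factor of at most $q$, so the new volumes exceed $\gamma/q$. If a new volume is $\le\gamma$, \Cref{line-add-vfro} freezes every remaining unassigned $V_2$-variable in $e$, preventing $e$ from ever being touched again; otherwise $e$ remains with volumes $>\gamma$. Hyperedges $e'\not\ni u$ are unchanged.

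The main technical step is the \emph{separation} property at outer-loop termination: for every $e\in\+E$, either $e\subseteq V_2$ or $e\cap(V_2\setminus\Vcol)=\emptyset$. Indeed, termination of the outer loop gives $e\cap V_1=\emptyset$ or $(e\cap V_2)\setminus(\Vcol\cup\Vfro)=\emptyset$; in the latter case \Cref{line-edge-frozen-2} has placed $e\in\Efro$, and the inner while loop then enforces $e\cap V_1=\emptyset$ or $e\cap\Vfro=\emptyset$, which combined with $(e\cap V_2)\setminus(\Vcol\cup\Vfro)=\emptyset$ yields $e\cap V_2\subseteq\Vcol$. Next, $Q^X_u=Q^Y_u$ for every $u\in V_2$ throughout the procedure (initially because $v_0\in V_1$, and inductively because any disagreement at \Cref{line-set-u} immediately moves $u$ into $V_1$ at \Cref{line-add-v1}), and pinned $V_2$-variables take identical values in $\Ass{X}^{\cadp}$ and $\Ass{Y}^{\cadp}$. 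Consequently, for every constraint $c$ with $\vbl{c}\subseteq V_2$ the corresponding hyperedge, its variable domains, and the removal decisions at \Cref{line-delete} (which depend only on $V_2\cap\Vcol$ pinnings) all coincide in current $\Phi^X$ and $\Phi^Y$; hence $\mu_{V_2\setminus\Vcol,\Phi^X}=\mu_{V_2\setminus\Vcol,\Phi^Y}$. The optimal coupling at \Cref{line-sample-V2-Cv} can therefore enforce $X^{\cadp}_{V_2\setminus\Vcol}=Y^{\cadp}_{V_2\setminus\Vcol}$, proving property~4 since all remaining discrepancies lie in $\Vcol\cap V_1$ (from \Cref{line-set-u}) or in $V_1\setminus\Vcol$ (from \Cref{line-extend-Cv}).

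Finally, for property~2 I maintain the invariant $\Ass{X}^{\cadp}_{\Vcol}\sim\mu_{\Vcol,\Phi^X_{\mathrm{init}}}$ (and analogously for $Y$) across outer-loop iterations. Two ingredients sustain it: removing a constraint already satisfied by the current pinnings at \Cref{line-delete} does not change the distribution on satisfying assignments of $\Phi^X$, and the optimal coupling at \Cref{line-set-u} samples $X^{\cadp}_u$ from $\mu_{u,\Phi^X_{\mathrm{curr}}}$, which by the invariant equals $\mu_{u,\Phi^X_{\mathrm{init}}}(\cdot\mid \Ass{X}^{\cadp}_{\Vcol})$. The separation property then makes $V_1\setminus\Vcol$ and $V_2\setminus\Vcol$ conditionally independent given $\Vcol$ in both current formulas, so the two-stage final sampling at \Cref{line-sample-V2-Cv,line-extend-Cv} yields the correct joint conditional distribution and hence $\Ass{X}^{\cadp}\sim\mu^{X_{V\setminus\{\vst\}}}$, $\Ass{Y}^{\cadp}\sim\mu^{Y_{V\setminus\{\vst\}}}$. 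I expect the most delicate part to be formalizing the separation claim: in particular, confirming that after a single execution of the inner loop within the last outer-loop iteration, every hyperedge in $\Efro$ simultaneously witnessing $V_1$ and $\Vfro$ has been eliminated, so that the termination condition of the outer loop indeed implies the clean geometric partition of the unassigned variables.
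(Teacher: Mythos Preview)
Your proposal is correct and follows the paper's proof closely; your ``separation'' claim is precisely the paper's claim (II), and all four properties are handled by the same mechanisms (monotone growth of $\Vcol$ for termination, the freezing invariant for volumes, and separation plus agreement on $V_2\cap\Vcol$ for property~4). Two minor remarks: property~2 needs only the chain rule, since \Cref{line-extend-Cv} conditions on everything assigned so far including $V_2\setminus\Vcol$, so invoking separation there is unnecessary; and your flagged concern about the inner loop is benign because the set $(e\cap V_2)\setminus(\Vcol\cup\Vfro)$ is invariant under the inner loop---variables removed from $\Vfro$ simultaneously leave $V_2$---so its emptiness at outer-loop termination coincides with its emptiness at the time \Cref{line-edge-frozen-2} is executed, guaranteeing $e\in\Efro$ and letting the inner-loop termination condition apply.
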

\begin{proof}
We prove that the coupling $\cadp$ must terminate. 
Consider the while-loop in \Cref{line-frozen-to-v1-1} and \Cref{line-frozen-to-v1-2}. After the \Cref{line-frozen-to-v1-2}, the hyperedge $e$ cannot satisfy the condition in \Cref{line-frozen-to-v1-1} (because $e \cap \Vfro = \emptyset$), thus the while-loop in \Cref{line-frozen-to-v1-1} and \Cref{line-frozen-to-v1-2} will terminate eventually. 
Consider the main while-loop (\Cref{line-while-condition}). After each loop, the size of $\Vcol$ will increase by 1. 
Note that the size of $\Vcol$ cannot be greater than $n$.
Hence, the coupling $\cadp$ will terminate eventually.

We prove that the output $\Ass{X}^{\cadp} \in [q]^V$ follows the distribution $\mu^{X_{V \setminus \{v\}}}$. The result for the output $\Ass{Y}^{\cadp} \in [q]^V$ can be proved in a similar way.
Consider the input CSP formula $\Phi^{X}=(V,C,(Q^{X}_u)_{u \in V})$ defined in~\eqref{eq-def-Qx}.
It holds that the uniform distribution $\mu_{\Phi^{X}}$ of all satisfying assignments to $\Phi^{X}$ is precisely the distribution $\mu^{X_{V \setminus \{v\}}}$.
Suppose $\Vcol=\{u_1,u_2,\ldots,u_{\ell}\}$, where $u_i$ is the $i$-th variable whose value is assigned by the coupling $\cadp$. The following properties holds:
\begin{itemize}
\item the value of $u_1$ is sampled from the marginal distribution $\mu_{u_1,\Phi^{X}}$;
\item for each $1\leq i <\ell$, once $u_i$ gets the value $X^{\cadp}_{u_i}$, we fix $Q^{X}_{u_i}$ as $\{X^{\cadp}_{u_i}\}$ (\Cref{line-update-formula}) and remove a subset of constraints satisfied by current $X^{\cadp}_u$ (\Cref{line-delete}); after updated $\Phi^{X}$, we sample the value of $u_{i+1}$  from the marginal distribution $\mu_{u_{i+1},\Phi^{X}}$;
\item given the assignment of $\Vcol$, the assignments of $V_2 \setminus \Vcol$ and $V_1 \setminus \Vcol$ are sampled from the conditional distributions in \Cref{line-sample-V2-Cv} and \Cref{line-extend-Cv}. 
\end{itemize}
Note that for each $u_i$, the marginal distribution $\mu_{u_{i},\Phi^{X}}$ is precisely the distribution $\mu^{X_{V \setminus \{v\}}}$ projected on $u_i$ conditional on the value of $u_j$ is fixed as $X^{\cadp}_{u_j}$ for all $j<i$.
By the chain rule, the output $\Ass{X}^{\cadp} \in [q]^V$ follows the distribution $\mu^{X_{V \setminus \{v\}}}$.

We now prove the third property. 
By~\eqref{eq-initial-vol} and~\eqref{eq-def-beta-T}, initially, for all $e \in \+E$, it holds that $\vol_{\Phi^{X}}(e) > \frac{\gamma}{q}$	and $\vol_{\Phi^{Y}}(e) > \frac{\gamma}{q}$.
Suppose during the coupling procedure, there is a time such that some hyperedge $e$ in the current set $\+E$ satisfies $\vol_{\Phi^{X}}(e) < \frac{\gamma}{q}$ 	or $\vol_{\Phi^{Y}}(e) < \frac{\gamma}{q}$. Without loss generality, we assume $\vol_{\Phi^{X}}(e) < \frac{\gamma}{q}$. The case  $\vol_{\Phi^{Y}}(e) < \frac{\gamma}{q}$ follows from symmetry. Recall 
\begin{align*}
\vol_{\Phi^{X}}(e) \triangleq \prod_{u \in e}\abs{Q^{X}_u}.
\end{align*}
Note that the volume $\vol_{\Phi^{X}}(e)$ decreases only if we update $Q^{X}_u$ for some $u \in e$ in \Cref{line-update-formula}.
Note that for any $u \in V$, it holds that $\abs{Q^{X}_u} \leq q$. In \Cref{line-update-formula}, once the coupling sets $Q^{X}_u \gets \{X_u^{\cadp}\}$, the volume $\vol_{\Phi^{X}}(e)$ decreases by at most a factor $q$. 
If $\vol_{\Phi^{X}}(e) < \frac{\gamma}{q}$, the following event must occur
\begin{itemize}
\item event $\+B$: 	the main while-loop pick a variable $u \in e$ after $\vol_{\Phi^{X}}(e) < {\gamma}$.
\end{itemize}
We show that the event $\+B$ cannot occur. Consider the first time that $\vol_{\Phi^{X}}(e) < {\gamma}$. After \Cref{line-too-small} and \Cref{line-add-vfro}, it must hold that 
\begin{align}
\label{eq-e-subet}
e \subseteq V_1 \cup \Vcol \cup \Vfro.	
\end{align}
Note that the coupling $\cadp$ only  adds variables into $V_1$ and $\Vcol$, but never deletes variables from  $V_1$ and $\Vcol$. Also note that if a variable is removed from $\Vfro$, it must be added into $V_1$(\Cref{line-frozen-to-v1-2}). Thus, ~\eqref{eq-e-subet}  holds up to the end of the coupling.
Consider the variable $u$ in event $\+B$, $u$ must satisfy $u \in V_2 \setminus (\Vcol \cup \Vfro )$.
However, by~\eqref{eq-e-subet}, there is no such variable $u$ in hyperedge $e$. Contradiction.

Finally, we prove the last property. In this proof, we  consider $V_1,V_2,\Vcol,\Vfro,\+E,\Efro$ when the main while-loop in $\cadp$  terminates. We claim that the following properties holds:
\begin{itemize}
\item (I) for any $u \in V_2 \cap \Vcol$, $X^{\cadp}_u = Y^{\cadp}_u$;
\item (II) for any $e \in \+E$ such that $e \cap V_1 \neq \emptyset$ and $e \cap V_2 \neq \emptyset$, $e \cap V_2 \subseteq \Vcol$	.
\end{itemize}
Consider the CSP formulas $\Phi^{X}$ and $\Phi^{Y}$ in \Cref{line-sample-V2-Cv}. Note that both $\Phi^{X}$ and $\Phi^{Y}$ are modeled by hypergraph $H=(V,\+E)$. Define a set of variables 
\begin{align*}
R = \bigcup_{\substack{e \in \+E\\ e \cap V_1 \neq \emptyset, e \cap V_2 \neq \emptyset}}(e \cap V_2).
\end{align*}
Recall $\mu_{\Phi^{X}}$ and $\mu_{\Phi^{Y}}$ are the uniform distributions of satisfying assignments to $\Phi^{X}$ and $\Phi^{Y}$. By the definition of $R$, conditional on any assignment $\sigma \in [q]^R$ on set $R$, the assignment on $V_2 \setminus R$ is independent with the assignment on $V_1$. By property (I) and (II), it holds that $R \subseteq V_2\cap\Vcol$ and $X^{\cadp}_{R} = Y^{\cadp}_{R}$. Since $R \subseteq \Vcol$ and $X^{\cadp}_{R} = Y^{\cadp}_{R}$, for any $u \in R$, $\abs{Q^{X}_u} = \abs{Q^{Y}_u} = 1$ and $Q^{X}_u = Q^{Y}_u$. Hence, in $\Phi^{X}$ and $\Phi^{Y}$, variables in $R$ are fixed as a same value in $[q]$. Thus, $\mu_{V_2 \setminus \Vcol, \Phi^{X}}$ and $\mu_{V_2 \setminus \Vcol,\Phi^{Y}}$ are identical distributions. By \Cref{line-sample-V2-Cv},
\begin{align}
\label{eq-V2-same}
X^{\cadp}_{V_2 \setminus \Vcol} = Y^{\cadp}_{V_2 \setminus \Vcol}
\end{align}
Combining property (I) and~\eqref{eq-V2-same} proves that $X^{\cadp}_{V_2} = Y^{\cadp}_{V_2}$.
This proves the last property.

We finish the prove by proving properties (I) and (II). The property (I) is trivial, because for any $u \in \Vcol$, if $X^{\cadp}_u \neq Y^{\cadp}_u$, then by~\Cref{line-add-v1}, it must hold that $u \in V_1$. 
We then prove property (II). Suppose there is an hyperedge $e$ such that $e\cap V_1 \neq \emptyset$, $e \cap V_2 \neq \emptyset$ and $e$ violates property (II). 
We define a set 
\begin{align*}
S(e) = (e \cap V_2) \setminus \Vcol  = (e \setminus V_1) \setminus \Vcol \neq \emptyset.	
\end{align*}
There are only two possibilities for the set $S(e)$, we show neither of them is possible.
\begin{itemize}
\item $S(e) \not\subseteq \Vfro$: in this case, $e$ satisfies the condition in the main while-loop (\Cref{line-while-condition}), the main while-loop cannot terminate; contradiction.
\item $S(e) \subseteq \Vfro$: in this case, by \Cref{line-edge-frozen-1} and \Cref{line-edge-frozen-2}, $e \in \Efro$; hence, $e$ satisfies the condition in \Cref{line-frozen-to-v1-1}, then by \Cref{line-frozen-to-v1-2}, all variables in $e \cap \Vfro$ are removed from $\Vfro$ and added into $V_1$, thus there is no such non-empty subset $S(e) \subseteq e$ such that $S(e) \subseteq \Vfro$; contradiction.
\end{itemize}
Hence, such non-empty subset $S(e)$ does not exist, which implies property (II) holds.
\end{proof}

By \Cref{lemma-coupling-Cv} and the coupling lemma (\Cref{lemma-coupling-ineq}), to bound the $I_{\vst}$ in~\eqref{eq-proof-Iv}, we can bound
\begin{align}
\label{eq-proof-V1}
I_{\vst} = 	\DTV{\mu_{\vst}^{X_{V\setminus \{\vst\} }}}{\mu_{\vst}^{Y_{V\setminus \{\vst\} }}} \leq \Pr[\cadp]{\vst \in V_1},
\end{align}
where $V_1$ denotes the set $V_1$ at the end of the coupling $\cadp$.

In the rest of the proof,
our task is  to bounding the RHS of~\eqref{eq-proof-V1}.
From now, we use hypergraph $H=(V,\+E)$ to model the \emph{input} CSP formulas $\Phi^{X}$ and $\Phi^{Y}$ in \Cref{alg-coupling-v}.
For any  $v \in V$, define
\begin{align*}
\Nv(v) &\triangleq \{u \neq v \mid \exists e \in \mathcal{E} \text{ s.t. } u,v \in e \}.	
\end{align*}
We say a variable $u$ is incident to a hyperedge $e$ if $u \in e$; a sequence of variables $v_0,v_1,\ldots,v_{\ell}$ is a path in hypergraph $H$ if $v_i \in \Nv(v_{i-1})$ for all $1 \leq i \leq \ell$. We define the failed variables and failed edges.
\begin{definition}
Consider the time when the main while-loop in coupling procedure $\cadp$ terminates.
\begin{itemize}
\item A variable $u \in V$ is said to be failed if 	$u \in \Vcol$ and $X^{\cadp}_u \neq Y^{\cadp}_u$.
\item A hyperedge $e \in \+E$ is said to be failed if both of the following two properties hold:
\begin{enumerate}
\item the constraint represented by $e$ is not satisfied by  both $\Ass{X}^{\cadp}$ and $\Ass{Y}^{\cadp}$; 
\item $\vol_{\Phi^{X}}(e) < \gamma$ or $\vol_{\Phi^{Y}}(e) < \gamma$.
\end{enumerate}
\end{itemize}
	
\end{definition}

\begin{lemma}
\label{lemma-path}
For any $u \in V_1$, there exists a path $u_0,u_1,\ldots,u_{\ell} \in V$ in $H$ such that 
\begin{itemize}
\item $u_0=v_0$ is the initial disagreement variable, $u_\ell = u$ and $u_i \in V_1$ for all $0 \leq i \leq \ell$;
\item for any $1\leq i \leq \ell$, either $u_i$ is failed or $u_i$ is incident to a failed hyperedge $e_i$.
\end{itemize}
\end{lemma}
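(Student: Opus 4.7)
The proof proceeds by induction on the iteration of the main while-loop of \Cref{alg-coupling-v} at which $u$ is first added to $V_1$. The base case $u = v_0$ is trivial since the length-zero path $(v_0)$ satisfies both conditions vacuously. For the inductive step, observe that the algorithm admits exactly two mechanisms by which a variable $u \neq v_0$ can join $V_1$: either (i) $u$ has just been coloured in \Cref{line-set-u} and enters $V_1$ via \Cref{line-add-v1} because $X^{\cadp}_u \neq Y^{\cadp}_u$, so $u$ is failed; or (ii) $u$ enters via \Cref{line-frozen-to-v1-2} because some frozen hyperedge $e \in \Efro$ with $u \in e \cap \Vfro$ is found to meet $V_1$.

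In case (i), the triggering iteration was launched by a hyperedge $e^\star$ containing $u$ with $e^\star \cap V_1 \neq \emptyset$ (see \Cref{line-while-condition} and \Cref{line-find-u-Cv}), so there exists $u' \in e^\star \cap V_1$ already in $V_1$. The induction hypothesis yields a valid path from $v_0$ to $u'$; appending $u$ produces the required path, since $u$ and $u'$ share the hyperedge $e^\star$ and the new endpoint $u$ is failed. In case (ii), pick $u' \in e \cap V_1$ (non-empty by the triggering condition at \Cref{line-frozen-to-v1-1}) and apply induction to $u'$; extend the resulting path by $u$, which is adjacent to $u'$ in $H$ through $e$. It remains to exhibit a failed hyperedge incident to $u$. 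Trace back to the first iteration that placed $u$ in $\Vfro$ through \Cref{line-add-vfro}: there is a hyperedge $e^u$ with $u \in e^u$ and either $\vol_{\Phi^{X}}(e^u) \leq \gamma$ or $\vol_{\Phi^{Y}}(e^u) \leq \gamma$ at that moment. Since the volumes are monotonically non-increasing along the execution of $\cadp$, this low-volume property persists to termination, so it suffices to show that $e^u \in \+E$ at the end of the main loop.

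The key observation for the latter is that in the very iteration in which $e^u$ first became low-volume, \Cref{line-too-small} and \Cref{line-add-vfro} simultaneously placed every unassigned $V_2$-variable of $e^u$ into $\Vfro$. Consequently \Cref{line-find-u-Cv} can never again select a variable in $e^u$, so no subsequent main-loop colouring can satisfy and remove the constraint of $e^u$ via \Cref{line-delete}; variables in $e^u \cap (V_1 \setminus \Vcol)$ are only coloured by \Cref{line-extend-Cv}, after the main loop terminates. Moreover, any variable of $e^u$ in $V_1 \cap \Vcol$ that was coloured strictly before $e^u$ became low-volume had not satisfied its constraint, for otherwise $e^u$ would have been removed from $\+E$ before being used to freeze $u$, contradicting the fact that $e^u \in \+E$ at the freezing moment. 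Hence $e^u \in \+E$ at termination, certifying that $u$ is incident to a failed hyperedge. The main technical hurdle is precisely this bookkeeping in case (ii) — arguing that the hyperedge responsible for freezing $u$ cannot be silently removed from $\+E$ during the remainder of the main loop; the remaining pieces of the induction are routine manipulations of the auxiliary sets $V_1, \Vcol, \Vfro, \Efro$.
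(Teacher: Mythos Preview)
Your proof is correct and follows essentially the same approach as the paper: induction on the order in which variables enter $V_1$, splitting into the two cases (added via \Cref{line-add-v1} versus via \Cref{line-frozen-to-v1-2}), and in the second case tracing back to the hyperedge that first froze $u$ to certify it as a failed hyperedge. Your bookkeeping in case~(ii) is somewhat more explicit than the paper's (which simply notes that after \Cref{line-add-vfro} one has $e \subseteq V_1 \cup \Vcol \cup \Vfro$, so no further variable of $e$ is coloured in the main loop), but the argument is the same.
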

\begin{proof}
Suppose $V_1 = \{v_0,v_1,v_2,\ldots,v_m\}$, where $v_0$ is the initial disagreement variable and $v_i$ is the $i$-th 	variable added into set $V_1$. If a set of variables are added into $V_1$ at the same time (\Cref{line-frozen-to-v1-2}), we break tie arbitrarily.  We prove the first part of the lemma by induction on the index $i$.

The base case is $i = 0$, the first part of the lemma holds for the path that only contains $v_0$.

Assuming the lemma holds up to index $i$, we prove the lemma for index $i+1$. Consider the time when $v_{i+1}$ is added into the set $V_1$. There are following two possibilities.
\begin{itemize}
\item $v_{i+1}$ is added in \Cref{line-add-v1}. Consider  the hyperedge $e$ in \Cref{line-find-u-Cv}. It holds that $v_{i+1} \in e$ and $e \cap V_1 \neq \emptyset$, where $V_1 = \{v_0,v_1,\ldots,v_i\}$. Pick an arbitrary $v_{j} \in e \cap V_1$. By induction hypothesis, since $j < i$, there exists a path $u_0 = v_0, u_1,u_2,\ldots,u_\ell = v_j$ for $v_j$. Note that $v_{i+1} \in e$ and $v_{j} \in e$. We can find the path $u_0 = v_0, u_1,u_2,\ldots,u_\ell = v_j,u_{\ell + 1} = v_{i+1}$ for $v_{i+1}$.
\item $v_{i+1}$ is added in \Cref{line-frozen-to-v1-2}. Consider  the hyperedge $e$ satisfying the condition in \Cref{line-frozen-to-v1-1}. It holds that $v_{i+1} \in e$ and $e \cap V_1 \neq \emptyset$, where $V_1 = \{v_0,v_1,\ldots,v_i\}$. Pick an arbitrary $v_{j} \in e \cap V_1$. By induction hypothesis, since $j < i$, there exists a path $u_0 = v_0, u_1,u_2,\ldots,u_\ell = v_j$ for $v_j$. Note that $v_{i+1} \in e$ and $v_{j} \in e$. We can find the path $u_0 = v_0, u_1,u_2,\ldots,u_\ell = v_j,u_{\ell + 1} = v_{i+1}$ for $v_{i+1}$.
\end{itemize}

We now prove the second part of the lemma. It suffices to show that for any $u \in V_1 \setminus \{v_0\}$, either $u$ is failed or $u$ is incident to a failed hyperedge $e$. Note that a variable $u$ is added into $V_1$ in either \Cref{line-add-v1} or \Cref{line-frozen-to-v1-2}. If $u$ is added in \Cref{line-add-v1}, then it holds that $X^{\cadp}_u \neq Y^{\cadp}_u$, thus $u$ is a failed variable. Suppose $u$ is added in \Cref{line-frozen-to-v1-2}. Before the execution of \Cref{line-frozen-to-v1-2}, $u \in \Vfro$ must be a frozen variable. Consider the moment that $u$ becomes frozen. By \Cref{line-add-vfro}, $u$ must belong to a hyperedge $e$ such that $e$ is not satisfied by both $\Ass{X}^{\cadp}$ and $\Ass{Y}^{\cadp}$ (otherwise, $e$ is deleted in \Cref{line-delete}) and $\min\{\vol_{\Phi^{X}}(e), \vol_{\Phi^{Y}}(e)\} < \gamma$. Note that after \Cref{line-add-vfro}, $e \subseteq V_1 \cup \Vcol \cup \Vfro$. After that, in the main while-loop, the coupling $\cadp$ cannot assign values to any unassigned variables in $e$. Thus, this hyperedge $e$ is not satisfied by both $\Ass{X}^{\cadp}$ and $\Ass{Y}^{\cadp}$ up to the main while-loop in $\cadp$ terminates. Hence, $e$ is a failed hyperedge and $u$ is incident to $e$.
\end{proof}

\Cref{lemma-path} says if a variable belongs to $V_1$, there exists a path satisfying the condition in \Cref{lemma-path}. However, the failure probability of such path is not easy to bound. We next modify such path into a sequence whose failure probability is easy to bound.

Define the length of a path by the number of variables in this path  minus 1, e.g. the length of the path $v_1,v_2,\ldots,v_{\ell}$ is $\ell-1$. For any two variables $u, w \in V$, the distance between $u$ and $w$ in $H$, denoted as $\dist_H(u,w)$, is the length of the shortest path between $u$ and $w$ in $H$. 
We extend the notion of distance to subsets of variables. 
For any variable $u \in V$ and subsets $S,T \subseteq V$, define 
\begin{align*}
\dist_H(u,S) &\triangleq \min_{w \in S}\dist_H(u,w);\\
\dist_H(S, T) &\triangleq \min_{w \in S,w' \in T}\dist_H(w,w').		
\end{align*}
For such distance function $\dist_H(\cdot,\cdot)$, the triangle inequality may not hold for any subsets.
But we will use the following two specific triangle inequalities.
\begin{align}
\forall u_1,u_2,u_3 \in V, \qquad &\dist_H(u_1,u_2) \leq \dist_H(u_1,u_3) + \dist_H(u_3,u_2)\label{eq-triangle-1}\\
\forall u \in V, S, T \subseteq V \quad & \dist_H(S, T) \leq \dist_H(S, u) + \dist_H(u, T).\label{eq-triangle-2}
\end{align}
The inequality~\eqref{eq-triangle-1} holds trivially. Suppose $\dist_H(S,u) = \dist_H(u_S,u)$ for $u_S \in S$ and $\dist_H(u, T)= \dist_H(u,u_T)$ for $u_T \in T$. By~\eqref{eq-triangle-1}, we have
\begin{align*}
\dist_H(S, T) \overset{(\star)}{\leq} \dist_H(u_S,u_T) \leq \dist_H(u_S,u) + \dist_H(u,u_T) = 	\dist_H(S, u) + \dist_H(u, T),
\end{align*}
where $(\star)$ holds because $u_S \in S$ and $u_T \in T$. Remark that~\eqref{eq-triangle-2} covers~\eqref{eq-triangle-1}, because $S$ and $T$ may only contain a single variable.


We have the following lemma.
\begin{lemma}
\label{lemma-path-strong}
For any $u \in V_1 \setminus \{v_0\} $, there exists a sequence of sets $S_1,S_2,\ldots,S_{\ell}$, where each $S_i$ is either a hyperedge or a set containing a single variable, such that
\begin{itemize}
\item $S_1,S_2,\ldots,S_\ell$ are mutually disjoint;
\item $\dist_H(v_0,S_1) \leq 2$ and $\dist_H(u,S_{\ell}) = 0$;
\item for any $1\leq i \leq \ell - 1$, $\dist_H(S_i,S_{i+1}) \leq 2$.	
\item for each $1\leq i \leq \ell$, $S_i$ either contains a failed variable or $S_i$ is a failed hyperedge.
\end{itemize}
\end{lemma}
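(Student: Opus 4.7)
The plan is to first build an initial chain from the path supplied by Lemma~\ref{lemma-path}, then iteratively shortcut it to become pairwise disjoint while keeping all endpoint and consecutive-distance bounds. Fix the path $v_0 = u_0, u_1, \ldots, u_\ell = u$ with each $u_i$ ($i \geq 1$) either failed or incident to a failed hyperedge $e_i$, and set $W_i \triangleq \{u_i\}$ if $u_i$ is failed, and $W_i \triangleq e_i$ otherwise. Then every $W_i$ either is a failed hyperedge or contains a failed variable, $u_i \in W_i$, and because $u_{i+1} \in \Nv(u_i)$ we immediately obtain $\dist_H(W_i, W_{i+1}) \leq 1$ and $\dist_H(v_0, W_1) \leq 1$, while $u \in W_\ell$ yields $\dist_H(u, W_\ell) = 0$. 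Thus the raw sequence $W_1, \ldots, W_\ell$ already realises every requirement of the lemma apart from pairwise disjointness.

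To enforce disjointness, I greedily shortcut. While the current sequence $T_1, \ldots, T_{m'}$ (with $T_{m'} = W_\ell$) contains indices $s < t$ satisfying $T_s \cap T_t \neq \emptyset$, delete witnesses in between: if $t < m'$ keep $T_s$ and drop $T_{s+1}, \ldots, T_t$; if $t = m'$ keep $T_{m'} = W_\ell$ and instead drop $T_s, T_{s+1}, \ldots, T_{m'-1}$. The second rule protects the endpoint carrying $u$, so the last witness is always $W_\ell$. Each shortcut strictly decreases the length, so the procedure terminates after at most $\ell - 1$ iterations.

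The central verification is that two invariants are preserved by every shortcut: the leftmost surviving witness is within $H$-distance $1$ of $v_0$, and every consecutive pair of surviving witnesses is within $H$-distance $1$. Since any intersecting witnesses satisfy $\dist_H(T_s, T_t) = 0$, the triangle inequality~\eqref{eq-triangle-2} gives, for the first shortcut rule,
\begin{align*}
\dist_H(T_s, T_{t+1}) \leq \dist_H(T_s, T_t) + \dist_H(T_t, T_{t+1}) \leq 0 + 1 = 1,
\end{align*}
and a symmetric computation through $T_{s-1}, T_s, T_{m'}$ (or through $v_0, T_1, T_{m'}$ when $s = 1$) handles the second rule. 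A short induction on the number of shortcut operations extends the bound to the entire run. Upon termination, the surviving chain $S_1, \ldots, S_m$ is pairwise disjoint, ends at $S_m = W_\ell$ (so $\dist_H(u, S_m) = 0$), satisfies $\dist_H(v_0, S_1) \leq 1 \leq 2$, and has consecutive distances $\leq 1 \leq 2$, which is everything the lemma demands.

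The only subtle point I expect is that in the $t = m'$ branch one has to remove $T_s$ \emph{itself} (not just the block to its right), so that $W_\ell$ becomes directly adjacent to $T_{s-1}$ in the chain; the triangle-inequality step $\dist_H(T_{s-1}, W_\ell) \leq \dist_H(T_{s-1}, T_s) + \dist_H(T_s, W_\ell) \leq 1 + 0 = 1$ is exactly what makes this move legal while retaining the invariants. Apart from that bookkeeping, everything is standard.
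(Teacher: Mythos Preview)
Your argument has a genuine gap at the triangle-inequality step. You invoke~\eqref{eq-triangle-2} to conclude
\[
\dist_H(T_s, T_{t+1}) \le \dist_H(T_s, T_t) + \dist_H(T_t, T_{t+1}),
\]
but~\eqref{eq-triangle-2} only allows a single \emph{vertex} in the middle position, not a set; the paper explicitly warns that the set--set--set triangle inequality can fail for $\dist_H$. When $T_t$ is a hyperedge, the vertex witnessing $T_s \cap T_t \neq \emptyset$ need not coincide with the vertex of $T_t$ realising $\dist_H(T_t,T_{t+1})$; they can be one hop apart inside $T_t$. Hence a single shortcut can raise the consecutive distance from $1$ to $2$, and a second shortcut through another hyperedge can raise it to $3$, so your invariant ``$\le 1$'' is not preserved and even the target bound ``$\le 2$'' is not stable under iteration.

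Concretely: take $W_1=\{v_1\}$, $W_2=e_2=\{v_1,v_2,a\}$, $W_3=e_3=\{v_3,a,b\}$, $W_4=\{v_4\}$, $W_5=\{v_5\}$, with the only hyperedges touching $v_4$ being the path edges $\{v_3,v_4\}$ and $\{v_4,v_5\}$. Your rule permits shortcut $(s,t)=(2,3)$ (drop $W_3$) and then $(s,t)=(1,2)$ (drop $W_2$), leaving $W_1,W_4,W_5$; but $\dist_H(W_1,W_4)=\dist_H(v_1,v_4)=3$. The paper avoids this by processing from right to left and, when pushing $e_i$, deleting everything back to the \emph{minimum} index $j$ with $e_j\cap e_i\neq\emptyset$. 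The distance bound between successive stack entries $e_{j'}$ and $e_j$ is then re-derived by inserting the concrete vertex $v_{j'+1}$ in the middle of~\eqref{eq-triangle-2}: one hop from $e_{j'}$ to $v_{j'+1}$ (path edge), one hop from $v_{j'+1}$ to $e_j$ (since $v_{j'+1}\in e_{j'+1}$ and $e_{j'+1}\cap e_j\neq\emptyset$). This is why the paper obtains $\le 2$ rather than $\le 1$, and why the bound does not degrade across iterations.
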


\begin{proof}
Fix a variable $u \in V_1 \setminus \{v_0\}$. Let $v_0,v_1,\ldots,v_m$ where $v_m = u$ denote the path in \Cref{lemma-path}. 	For each $1 \leq i \leq m$ if $v_i$ is not a failed variable, we use $e_i$ to denote the failed hyperedge incident to $v_i$; if $v_i$ is a failed variable, we let $e_i = \{v_i\}$. 
We first show that how to construct the sequence $S_1,S_2,\ldots,S_{\ell}$, then we show that such sequence satisfies the properties in the lemma.

Let $\+S$ be an empty stack. 
Let $P$ denote the path $(v_1,v_2,\ldots,v_m)$.
Remark that $P$ does not contain variable $v_0$.
We repeat the following procedure until $P$ becomes an empty path.
We pick the last variable in the path $P$, denote this variable as $v_i$.
We search for the minimum index $j$ such that $j < i$ and $e_i \cap e_j \neq \emptyset$. Here are two cases depending on whether such index $j$ exists.
\begin{itemize}
\item If such index $j$ does not exist, then push $e_i$ into the stack $\+S$, remove $v_i$ from the path $P$.
\item If such index $j$ exists, then push $e_i$ into the stack $\+S$, remove all $v_t$ for $j\leq t \leq i$ from the path $P$.
\end{itemize}
Let $S_1,S_2,\ldots,S_{\ell}$ be the elements in stack $\+S$ from top to bottom.
 
We now prove that all $S_i$ are disjoint. Suppose there are two indices $j<i$ such that $S_i \cap S_j \neq\emptyset$.
Suppose  $S_i = e_{i*}$ and $S_j = e_{j^*}$. It holds that $i^* > j^*$. $e_{j^*}$ must be removed when processing $e_{i^*}$, thus $e_{j^*}$ cannot be added into stack $\+S$. Contradiction. This proves the first property.

We now prove the second property.
Note that $u \in e_m$ and $S_\ell = e_m$, thus $\dist(u,S_\ell) = 0$. 
To bound $\dist_H(v_0,S_1)$, we consider two cases. 
\begin{itemize}
\item Case $S_1 = e_1$. Note that $v_0$ and $v_1$ are adjacent in $H$, i.e. $\dist_H(v_0,v_1) = 1$.  It holds that $v_1 \in S_1 = e_1$. Hence, $\dist_H(v_0, S_1) \leq \dist_H(v_0,v_1) = 1$;
\item Case $S_1 \neq e_1$. Suppose $S_1 = e_t$. In this case, it must hold that $e_1 \cap e_t \neq \emptyset$, thus $\dist_H(v_1,e_t) \leq \dist_H(v_1,v^*) = 1$. where $v^* \in e_1 \cap e_t$ is an arbitrary variable. Note that $\dist_H(v_0,v_1) = 1$. By triangle inequality in~\eqref{eq-triangle-2}, we have $\dist_H(v_0,e_t)\leq \dist_H(v_0,v_1)  + \dist_H(v_1,e_t)\leq 2$.
\end{itemize}

Finally, we bound the distance $\dist_H(S_i,S_{i+1})$. Suppose $S_{i+1} = e_{j}$ and $S_i = e_{j'}$. Here are  two cases.
\begin{itemize}
\item Case $j' = j - 1$: Note that $\dist_H(v_j,v_{j'}) = 1$, $v_j \in e_j$ and $v_{j'} \in e_{j'}$. We have $\dist_H(e_j, e_{j'})\leq \dist_H(v_j,v_{j'}) \leq 1$. Hence, $\dist_H(S_i,S_{i+1}) = \dist_H(e_j,e_{j'}) \leq 1$.
\item Case $j' < j - 1$: Consider the moment when $S_{i+1} = e_{j}$ is added into $\+S$. It must hold that $e_{j'+1} \cap e_j \neq \emptyset$. 
	Note that $v_{j'} \in e_{j'}$ and $\dist_H(v_{j'},v_{j'+1}) = 1$ . We have $\dist_H(e_{j'},v_{j'+1}) \leq \dist_H(v_{j'},v_{j'+1})= 1$. 
	Note that $v_{j'+1} \in e_{j'+1}$ and $e_{j'+1} \cap e_j \neq \emptyset$. It holds that $\dist_H(v_{j'+1}, e_j) \leq \dist_H(v_{j'+1},v^*) = 1$, where $v^* \in e_{j'+1} \cap e_j$ is an arbitrary variable.
	By triangle inequality in~\eqref{eq-triangle-2}, $\dist_H(e_{j'},e_j) \leq \dist(e_{j'},v_{j'+1})+ \dist_H(v_{j'+1},e_j)  \leq 2$.
\end{itemize}
Combining two cases proves the third property.

For the last property, by \Cref{lemma-path},
it is easy to see that each $S_i$ is either a failed hyperedge or a set containing a single failed variable.
\end{proof}

We say a sequence of sets $S_1,S_2,\ldots,S_\ell$ is a \emph{percolation sequence (PS)} if the following three properties are satisfied:
\begin{itemize}
\item $S_1,S_2,\ldots,S_\ell$ are mutually disjoint;
\item $\dist_H(v_0,S_1) \leq 2$;
\item for any $1\leq i \leq \ell - 1$, $\dist_H(S_i,S_{i+1}) \leq 2$.		
\end{itemize}
We say a percolation sequence $S_1,S_2,\ldots,S_\ell$ is a percolation sequence for $\vst$ if $\dist_H(\vst,e_\ell) = 0$, i.e. $\vst \in e_{\ell}$.
For any $S_{i}$ in sequence, we say $S_{i}$ fails if either $S_{i}$ contains a failed variable or $S_{i}$ is a failed hyperedge. By~\eqref{eq-bound-Iv} and \Cref{lemma-path-strong}, we have
\begin{align}
\label{eq-bound-Iv-path}
I_{\vst} \leq \Pr[\cadp]{X^{\cadp}_{\vst} \neq Y^{\cadp}_{\vst}} \leq \sum_{\text{PS for }\vst: S_1,S_2,\ldots,S_{\ell}}\Pr[\cadp]{\forall 1 \leq i \leq \ell, S_{i} \text{ fails}}.	
\end{align}
The following lemma bounds the probability that all elements in a PS fail.
\begin{lemma}
\label{lemma-failed-path}
Fix a percolation sequence (PS) $S_1,S_2,\ldots,S_\ell$ to $\vst$. It holds that
\begin{align*}
\Pr[\cadp]{\forall 1 \leq i \leq \ell, S_{i} \text{ fails}} \leq \prod_{\substack{1\leq i\leq \ell\\ \text{$S_i$ contains a single variable}}}\frac{1}{8 k^3 d^2}\ \prod_{\substack{1\leq i\leq \ell\\ \text{$S_i$ is a hyperedge}}}\frac{1}{8k^3d^3}. 
\end{align*}
\end{lemma}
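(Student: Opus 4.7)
The plan is to bound the joint probability by ordering the events in the sequence according to when the coupling procedure $\cadp$ ``determines'' them, and then multiplying conditional one-step bounds. Concretely, for each $1\le i\le\ell$, let $\tau_i$ be the first time in the main while-loop at which the fate of $S_i$ becomes determined: if $S_i=\{v\}$, this is the step in which $\cadp$ samples $(X^{\cadp}_v,Y^{\cadp}_v)$ in \Cref{line-set-u}; if $S_i=e$ is a hyperedge, this is the first step at which all variables in $e\cap V_2$ have either been assigned or frozen (so the failure status of $e$ can no longer change). Since the $S_i$ are mutually disjoint, the times $\tau_i$ are distinct, and we can order them as $\tau_{i_1}<\tau_{i_2}<\dots<\tau_{i_\ell}$. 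We then write
\begin{align*}
\Pr[\cadp]{\forall i,\; S_i \text{ fails}} \;=\; \prod_{j=1}^{\ell} \Pr[\cadp]{S_{i_j}\text{ fails} \,\bigm|\, \+H_j,\; S_{i_1},\dots,S_{i_{j-1}} \text{ all fail}},
\end{align*}
where $\+H_j$ is the history of $\cadp$ up to but not including time $\tau_{i_j}$.

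The next step is to bound each factor. For a single-variable $S_{i_j}=\{v\}$, failure requires $X^{\cadp}_v\neq Y^{\cadp}_v$. Since $(X^{\cadp}_v,Y^{\cadp}_v)$ is drawn from the optimal coupling of $\mu_{v,\Phi^{X}}$ and $\mu_{v,\Phi^{Y}}$ under the residual formulas at time $\tau_{i_j}$, this probability equals $\DTV{\mu_{v,\Phi^X}}{\mu_{v,\Phi^Y}}$. Using the LLL marginal bound in the spirit of \Cref{lemma-uniform-general} applied to $\Phi^X$ and $\Phi^Y$---whose domains differ only on the current $V_1$---together with the invariant $\vol_{\Phi^X}(e),\vol_{\Phi^Y}(e)\ge\gamma/q$ from \Cref{lemma-coupling-Cv}, one can couple the two marginals through at most one ``discrepancy'' constraint incident to $v$ and obtain a bound of the form $O(q^2d\cdot(q/\gamma))\le \frac{1}{8k^3d^2}$; here the saving comes from our choice $\gamma=32\mathrm{e}q^2d^3k^3$ in~\eqref{eq-def-beta-T}. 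For a hyperedge $S_{i_j}=e$, failure requires both that the constraint of $e$ is violated by $\Ass{X}^\cadp$ and by $\Ass{Y}^\cadp$, and that $\min(\vol_{\Phi^X}(e),\vol_{\Phi^Y}(e))<\gamma$. Conditional on $\+H_j$, the residual marginal on $e\cap V_2$ under $\Phi^X$ (and $\Phi^Y$) is again controlled by \Cref{lemma-uniform-general}, giving violation probability at most a constant factor times $\prod_{v\in e\cap V_2}|Q^X_v|^{-1}$. Combined with the volume-below-$\gamma$ condition and the fact that both $X$- and $Y$-events must occur, this product is at most $O((q/\gamma)^2)\le\frac{1}{8k^3d^3}$.

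The main obstacle is the adaptivity of $\cadp$: the very identity of the sets $S_i$, the order $\tau_{i_j}$, and the residual formulas all depend on earlier random choices, so we must argue that conditional on $\+H_j$ and the hypothesised failures of $S_{i_1},\dots,S_{i_{j-1}}$, the residual CSPs used to sample $S_{i_j}$ still satisfy the LLL hypotheses needed by \Cref{lemma-uniform-general}. This follows because (i) by \Cref{lemma-coupling-Cv} the invariant $\vol\ge\gamma/q$ is preserved throughout; (ii) disjointness of the $S_i$ guarantees that the variables or hyperedge underlying $S_{i_j}$ have not yet been touched at time $\tau_{i_j}$; and (iii) the hypothesis~\eqref{eq-cond-path-coupling-uniform} is strong enough that the LLL condition $\sum_{v\in\vbl{c}}\log q_v\ge \beta^{-1}\log(40\mathrm{e}D^2)$ still holds for every surviving constraint even after the pinnings imposed by $\+H_j$. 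With these points verified, multiplying the per-step bounds over $j=1,\dots,\ell$ yields exactly the product claimed in the lemma.
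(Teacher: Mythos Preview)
Your single-variable case is essentially right: at the step where $v$ is sampled, the optimal coupling fails with probability $\DTV{\mu_{v,\Phi^X}}{\mu_{v,\Phi^Y}}$, and the LLL-style marginal bound (the paper's \Cref{lemma-coupling-uniform}) gives exactly $\frac{4}{sk}=\frac{1}{8k^3d^2}$.

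The gap is in the hyperedge case. Your $\tau_i$ for a hyperedge $e$ is the step at which the \emph{last} relevant variable of $e$ is handled; hence your history $\+H_j$ already contains the values of all earlier variables of $e$. Conditional on $\+H_j$, the event ``$e$ is not yet satisfied'' is no longer a probabilistic statement about $e\cap V_2$ --- it is a deterministic fact about $\+H_j$ --- and the only randomness left is one marginal draw. So the conditional failure probability you can extract at time $\tau_{i_j}$ is of order $1/|Q^X_u|$ for a single $u$, not $\prod_{v\in e\cap V_2}|Q^X_v|^{-1}$. The product bound you write down is not what the chain rule with this filtration yields. If instead you take $\tau_i$ to be the \emph{first} time a variable of $e$ is touched, the problem is that the subsequent steps for $e$ are interleaved with steps for other $S_{i'}$, so the events no longer sit inside a nested filtration and the chain rule does not factor cleanly.

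The paper handles exactly this interleaving by a different mechanism: it fixes, in advance, one independent uniform seed $r_i\in[0,1]$ per $S_i$, and then \emph{re-implements} the optimal couplings inside $\cadp$ so that failure of $S_i$ deterministically implies $r_i\le\theta_i$ for an explicit threshold $\theta_i$. For a singleton $S_i=\{u\}$ this is straightforward (use $r_i$ to decide whether $X^{\cadp}_u\neq Y^{\cadp}_u$). For a hyperedge $S_i=e$, the same seed $r_i$ is reused across \emph{all} variables $u\in e$: one maintains a running product $M_i$ of the per-step probabilities $p_u=\max\{\mu_{u,\Phi^X}(\sigma_u),\mu_{u,\Phi^Y}(\sigma_u)\}$, declares ``$X^{\cadp}_u=\sigma_u\lor Y^{\cadp}_u=\sigma_u$'' iff $r_i\le M_ip_u$, and updates $M_i\leftarrow M_ip_u$. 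Conditional on $r_i\le M_i$, $r_i$ remains uniform on $[0,M_i]$, so this is a valid implementation; and ``$e$ fails'' forces $r_i\le\prod_j p_{u_j}$, which by the volume drop from $\ge 3000q^2d^6k^6$ to $<\gamma$ is at most $\frac{1}{8d^3k^3}$. Since the $r_i$ are jointly independent, the product in the lemma follows immediately. This seed-sharing trick is the missing idea in your plan; without it, the adaptive interleaving of hyperedge variables prevents the per-edge bound from being isolated via a filtration argument.
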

We need the following technical lemma to prove \Cref{lemma-failed-path}.
We introduce a parameter $s$ to write $\gamma$ defined in~\eqref{eq-def-beta-T} as
\begin{align}
\label{eq-def-s-in-gamma}
\gamma  = s \mathrm{e} q^2 d k, \quad\text{where } s \triangleq 32k^2d^2.
\end{align}
\begin{lemma}
\label{lemma-coupling-uniform}
During the coupling procedure $\cadp$, the CSP formulas $\Phi^{X} = (V,(Q^{X}_u)_{u \in V},\Cons{C})$ and $\Phi^{Y} = (V, (Q^{Y}_u)_{u \in V},\Cons{C})$ always satisfies that for any $u \in V \setminus (\Vcol \cup \{v_0\})$, $Q^{Y}_u = Q^{X}_u$ and for any $j \in Q^{X}_u=Q^{Y}_u$, 
\begin{equation}
\label{eq-coupling-local-uniform}
\begin{split}
\frac{1}{q_u}\tp{1 - \frac{\mathrm{4}}{s k}} &\leq \mu_{u,\Phi^{X}}(j) \leq \frac{1}{q_u}\tp{1 + \frac{4}{sk}}\\
\frac{1}{q_u}\tp{1 - \frac{\mathrm{4}}{s k}} &\leq \mu_{u,\Phi^{Y}}(j) \leq \frac{1}{q_u}\tp{1 + \frac{4}{s k}},
\end{split}
\end{equation}
where $q_u = \abs{Q_u^X} = \abs{Q^{Y}_u}$, thus $\DTV{\mu_{u,\Phi^{X}}}{\mu_{u,\Phi^{Y}}} \leq \frac{4}{s k}$.

Furthermore, for any optimal coupling $(x,y) \in Q^{X}_u\times Q^{Y}_u$ between $\mu_{u,\Phi^{X}}$ and $\mu_{u,\Phi^{Y}}$, it holds that
\begin{align*}
\forall j \in Q^{X}_u=Q^{Y}_u \quad \Pr{x = j \lor y = j} = \max\left\{ \mu_{u,\Phi^{X}}(j),\mu_{u,\Phi^{Y}}(j) \right\}	 \leq \frac{1}{q_u}\tp{1 + \frac{4}{s k}}.
\end{align*}
\end{lemma}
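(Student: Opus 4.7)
The plan is to prove the four claims in order—domain equality, the two-sided marginal bound, the resulting total variation bound, and the optimal coupling formula—since each builds on the previous one.

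First, I would verify $Q^X_u = Q^Y_u$ for every $u \in V \setminus (\Vcol \cup \{v_0\})$ by a simple induction on the iterations of the main while-loop in \Cref{alg-coupling-v}. At initialization, $X_u = Y_u$ for every $u \neq v_0$, so $Q^X_u = h_u^{-1}(X_u) = h_u^{-1}(Y_u) = Q^Y_u$ holds (and both equal $[q]$ when $u = \vst$). Only \Cref{line-update-formula} ever changes a domain, and it touches only the $Q^X_u, Q^Y_u$ of the variable just added to $\Vcol$, so the equality survives for every $u$ that remains outside $\Vcol \cup \{v_0\}$.

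For the main claim, the nearly uniform marginal bounds, I would apply the Lov\'asz local lemma to each of $\Phi^X$ and $\Phi^Y$ separately. By \Cref{lemma-coupling-Cv}, every active $e \in \mathcal{E}$ satisfies $\vol_{\Phi^X}(e) \geq \gamma/q$, so under the product distribution on $\Phi^X$ the maximum violation probability is $p_X = \max_c \prod_{v \in \vbl c} 1/|Q^X_v| \leq q/\gamma = 1/(seqdk)$ by~\eqref{eq-def-s-in-gamma}; the dependency graph has maximum degree at most $kd$, and the event $A = \{X_u = j\}$ has $|\Gamma(A)| \leq d$ since $u$ lies in at most $d$ constraints. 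Choosing $x(B_c) = 2 p_X$ for every $c$, the LLL condition is easily checked from $2 p_X \cdot kd \leq 2/(seq) \ll 1$, and the second part of \Cref{theorem-LLL} together with Bernoulli's inequality and $1/(1-z) \leq 1 + 2z$ for $z \leq 1/2$ gives
\begin{align*}
\mu_{u, \Phi^X}(j) \leq \frac{1}{q_u}(1 - 2 p_X)^{-d} \leq \frac{1}{q_u}\tp{1 + 4 p_X d} \leq \frac{1}{q_u}\tp{1 + \frac{4}{seqk}} \leq \frac{1}{q_u}\tp{1 + \frac{4}{sk}},
\end{align*}
where the final step uses $eq \geq 1$. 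The matching lower bound then follows without any further use of LLL by summing $\sum_{j'} \mu_{u, \Phi^X}(j') = 1$:
\begin{align*}
\mu_{u, \Phi^X}(j) \geq 1 - (q_u - 1)\cdot \frac{1}{q_u}\tp{1 + \frac{4}{seqk}} = \frac{1}{q_u} - \frac{q_u - 1}{q_u}\cdot \frac{4}{seqk} \geq \frac{1}{q_u}\tp{1 - \frac{4}{sk}},
\end{align*}
where the last step uses $q_u \leq q$ and $e > 1$ to absorb the factor $(q_u - 1)/q_u$. The same argument applied to $\Phi^Y$ gives the corresponding pair of bounds for $\mu_{u, \Phi^Y}$.

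Finally, the TV bound is a one-line consequence: for each $j \in Q^X_u = Q^Y_u$, $|\mu_{u, \Phi^X}(j) - \mu_{u, \Phi^Y}(j)| \leq 8/(sk q_u)$, so summing over the $q_u$ possible $j$'s and halving yields $\DTV{\mu_{u, \Phi^X}}{\mu_{u, \Phi^Y}} \leq 4/(sk)$. The formula for the optimal coupling is the standard identity: when $(x, y)$ is coupled so that $\Pr[x = y = j] = \min\{\mu_{u, \Phi^X}(j), \mu_{u, \Phi^Y}(j)\}$, then $\Pr[x = j \lor y = j] = \mu_{u, \Phi^X}(j) + \mu_{u, \Phi^Y}(j) - \min\{\mu_{u, \Phi^X}(j), \mu_{u, \Phi^Y}(j)\} = \max\{\mu_{u, \Phi^X}(j), \mu_{u, \Phi^Y}(j)\}$, which is bounded by $(1 + 4/(sk))/q_u$ by the preceding step. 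The main obstacle I anticipate is the constant bookkeeping: the volume threshold $\gamma$ was engineered in~\eqref{eq-def-beta-T} and~\eqref{eq-def-s-in-gamma} precisely so that $4 p_X d$ equals $4/(seqk)$, and the factor-$(q_u - 1)$ loss in the sum-based lower bound is then absorbed by the $1/(eq)$ cushion to land at $4/(sk)$; a coarser choice of $\gamma$ would not yield the stated slack.
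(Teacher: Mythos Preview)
Your proposal is correct and follows essentially the same approach as the paper: both invoke \Cref{lemma-coupling-Cv} for the volume lower bound, apply the LLL (\Cref{theorem-LLL}) with a single-event bound of order $1/(s e q d k)$ to get the marginal upper bound, and then derive the optimal-coupling identity from the maximal-coupling formula $\Pr{x=y=j}=\min\{\mu_{u,\Phi^X}(j),\mu_{u,\Phi^Y}(j)\}$. The only cosmetic difference is that you obtain the marginal lower bound by summing the upper bounds over the other $q_u-1$ values, whereas the paper applies the LLL a second time to the complement event $A'=\{X_u\neq j\}$; both routes land on the same $\frac{1}{q_u}(1-\frac{4}{sk})$ after using $q_u\leq q$.
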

\begin{proof}
Initially, the input $\Phi^{X}$ and $\Phi^{Y}$ satisfy 	$Q^{X}_u = Q^{Y}_u$ for any $u \in V \setminus \{v_0\}$. Consider each update step in \Cref{line-update-formula}. After the value of $u$ is assigned, we put the variable $u$ into $\Vcol$ in \Cref{line-add-vset-Cv}. 
It still holds that   $Q^{Y}_v = Q^{X}_v$ for any $v \in V \setminus (\Vcol \cup \{v_0\})$.
By \Cref{lemma-coupling-Cv}, at any time, for any $e$ in current $\+E$, it holds that
\begin{align*}
\vol_{\Phi^{X}}(e) &= \prod_{u \in e}q_u \geq \frac{\gamma}{q} = s\mathrm{e}qdk \\
\vol_{\Phi^{Y}}(e) &= \prod_{u \in e}q_u \geq \frac{\gamma}{q} = s \mathrm{e}qdk.
\end{align*}
We now prove~\eqref{eq-coupling-local-uniform} for $\Phi^{X}$. The result for $\Phi^{Y}$ can be proved in a similar way.
Let $\+D$ denote the product distribution such that each variable $v \in V$ takes a value from $Q^{X}_v$ uniformly at random. 
Let $B_c$ to denote the bad event that the constraint $c$ is not satisfied. 
Let $\+B = (B_c)_{c \in \Cons{C}}$ denote the collection of bad events.
Let $\Gamma(\cdot)$ be defined as in the Lov\'asz local lemma (\Cref{theorem-LLL}).
For each $c \in \Cons{C}$, let $x(B_c) = \frac{1}{sqdk}$.
For each constraint $c \in \Cons{C}$,
\begin{align*}
\Pr[\+D]{B_c} &= \prod_{u \in \vbl{c}} \frac{1}{q_u} \leq \frac{1}{seqdk} \leq \frac{1}{sqdk}\tp{1 - \frac{1}{sqdk} }^{sqdk - 1} \leq \frac{1}{sqdk}\tp{1 - \frac{1}{sqdk} }^{dk-1}\\	
&\leq x(B_c)\prod_{B_{c'} \in \Gamma(B_c)}(1 - x(B_{c'})),
\end{align*}
where the last inequality holds because the maximum degree of the dependency graph is at most $k(d-1) \leq dk - 1$. Fix a $j \in Q^{X}_u = Q^{Y}_u$. Let $A$ denote the event that $v$ takes the value $j$. Note that $\abs{\Gamma(A)} \leq d$. By Lov\'asz local lemma (\Cref{theorem-LLL}), we have
\begin{align*}
 \mu_{u,\Phi^{X}}(j) = \Pr[\mu_{\Phi^{X}}]{A} \leq \frac{1}{q_u}\tp{1 - \frac{1}{sqdk}}^{-d}	 \leq \frac{1}{q_u} \exp \tp{\frac{2}{sqk}} \leq \frac{1}{q_u}  \tp{1 + \frac{4}{sqk}},
\end{align*}
which implies the upper bound in~\eqref{eq-coupling-local-uniform}. 
Let $A'$ denote the event that $v$ does not take the value $j$. Note that $\abs{\Gamma(A')} \leq d$. By Lov\'asz local lemma (\Cref{theorem-LLL}), we have
\begin{align*}
\Pr[\mu_{\Phi^{X}}]{A'} \leq \tp{1 - \frac{1}{q_u}}\tp{1 - \frac{1}{sqdk}}^{-d}	 \leq \tp{1-\frac{1}{q_u}} \exp \tp{\frac{2}{sqk}} \leq \tp{1-\frac{1}{q_u}}  \tp{1 + \frac{4}{sqk}}.
\end{align*}
We have
\begin{align*}
 \mu_{u,\Phi^{X}}(j) = 1 - \Pr[\mu_{\Phi^{X}}]{A'} \geq 	1 - \tp{1-\frac{1}{q_u}}  \tp{1 + \frac{4}{sqk}} = \frac{1}{q_u}\tp{1 - \frac{4q_u}{sqk} + \frac{4}{sqk}}\geq \frac{1}{q_u}\tp{1 - \frac{4}{sk}},
\end{align*}
where the last inequality holds because $q_u \leq q$. This proves the lower bound in~\eqref{eq-coupling-local-uniform}. The inequalities in~\eqref{eq-coupling-local-uniform} imply
\begin{align*}
\DTV{\mu_{u,\Phi^{X}}}{\mu_{u,\Phi^{Y}}} \leq \frac{1}{2}\sum_{j \in Q_u^X = Q_u^Y}\abs{\mu_{u,\Phi^{X}}(j) - \mu_{u,\Phi^{Y}}(j)} = \frac{4}{s k}.
\end{align*}

Let $(x,y)\in Q_u^X \times Q_u^Y$ be the optimal coupling between  $\mu_{u,\Phi^{X}}$ and $\mu_{u,\Phi^{Y}}$. It holds that
\begin{align*}
\Pr{x = {y}} = 1 - \DTV{\mu_{u,\Phi^{X}}}{\mu_{u,\Phi^{Y}}}	
\end{align*}
Define a set $S = \{j \in Q^{X}_u = Q^{Y}_u \mid \mu_{u,\Phi^{X}}(j) \geq \mu_{u,\Phi^{Y}}(j) \}$. Note that $\sum_{j\in Q^{X}_u}\mu_{u,\Phi^{X}}(j) = \sum_{j \in Q^{Y}_u}\mu_{u,\Phi^{Y}}(j) = 1$. We have
$
\DTV{\mu_{u,\Phi^{X}}}{\mu_{u,\Phi^{Y}}} = \sum_{j \in S}(\mu_{u,\Phi^{X}}(j) - \mu_{u,\Phi^{Y}}(j))
$,
which implies
\begin{align}
\label{eq-opt-coupling-1}
\Pr{{x} = {y}} &= 1 - \sum_{j \in S}(\mu_{u,\Phi^{X}}(j) - \mu_{u,\Phi^{Y}}(j)) = \tp{1 -  \sum_{j \in S}\mu_{u,\Phi^{X}}(j)} + \sum_{j \in S} \mu_{u,\Phi^{Y}}(j)\notag\\
&= \sum_{j \in Q_u^X \setminus S}\mu_{u,\Phi^{X}}(j) +  \sum_{j \in S} \mu_{u,\Phi^{Y}}(j)\notag\\
& = \sum_{j \in Q_u^X} \min\{\mu_{u,\Phi^{X}}(j), \mu_{u,\Phi^{Y}}(j)\}.
\end{align}
On the other hand, since $({x}, {y}) \in Q_u^X \times Q_u^Y$ as a valid coupling, we have
\begin{align*}
\forall j \in Q_u^X,\quad \Pr{{x}={y}=j} \leq \min\{\mu_{u,\Phi^{X}}(j), \mu_{u,\Phi^{Y}}(j)\}.	
\end{align*}
This implies that
\begin{align}
\label{eq-coupling-min}
\forall j \in Q_u^X	\quad \Pr{{x}={y}=j} = \min\{\mu_{u,\Phi^{X}}(j), \mu_{u,\Phi^{Y}}(j)\}.	
\end{align}
Fix a $j \in Q_u^X$. Without loss of generality, assume $\mu_{u,\Phi^{X}}(j) \geq \mu_{u,\Phi^{Y}}(j)$ (the case $\mu_{u,\Phi^{X}}(j)< \mu_{u,\Phi^{Y}}(j)$ follows from symmetry). By~\eqref{eq-coupling-min}, ${y} = j$ implies ${x} = j$. Thus ${x} = j \lor {y} = j$ if and only if ${x} = j$. Thus,
\begin{align*}
\Pr{{x} = j \lor {y} = j} = \max\left\{ \mu_{u,\Phi^{X}}(j),\mu_{u,\Phi^{Y}}(j) \right\}	 \leq \frac{1}{q_u}\tp{1 + \frac{4}{s k}}. &\qedhere
\end{align*}
\end{proof}

Now, we are ready to prove \Cref{lemma-failed-path}.
\begin{proof}[Proof of \Cref{lemma-failed-path}]
Given $\+S = S_1,S_2,\ldots,S_\ell$, we define a set of variables
$
\vbl{\+S} = \cup_{i=1}^\ell S_i.
$
For each  $1\leq i \leq \ell$, sample a random real number $r_i \in [0,1]$ uniformly and independently.

Consider the following implementation of coupling $\cadp$. In \Cref{line-set-u}, we need to sample $X^{\cadp}_u$ and $Y^{\cadp}_u$ from the optimal coupling between marginal distributions $\mu_{u,\Phi^{X}}$ and $\mu_{u,\Phi^{Y}}$. If $u \in \vbl{\+S}$, then we use the following implementation.
We can find a unique $S_i$ such that $u \in S_i$, because all $S_i$ are mutually disjoint.
We use random number $r_i$ to implement the optimal coupling between $\mu_{u,\Phi^{X}}$ and $\mu_{u,\Phi^{Y}}$. 
Here are two case for $S_i$: (1) $S_i = \{u\}$; (2) $S_i$ is a hyperedge and $u \in S_i$. We handle two cases separately. 

Suppose $S_i = \{u\}$. The optimal coupling satisfies $\Pr[\cadp]{X^{\cadp}_u \neq Y^{\cadp}_u} = \DTV{\mu_{u,\Phi^{X}}}{\mu_{u,\Phi^{Y}}}$. The optimal coupling can be implemented as follows.
\begin{itemize}
\item If $r_i \leq \DTV{\mu_{u,\Phi^{X}}}{\mu_{u,\Phi^{Y}}}$, then sample a pair $(X^{\cadp}_u,Y^{\cadp}_u)$ from the optimal coupling conditional on $X^{\cadp}_u \neq Y^{\cadp}_u$;
\item If $r_i > \DTV{\mu_{u,\Phi^{X}}}{\mu_{u,\Phi^{Y}}}$, then sample a pair $(X^{\cadp}_u,Y^{\cadp}_u)$ from the optimal coupling conditional on $X^{\cadp}_u = Y^{\cadp}_u$.
\end{itemize}
By \Cref{lemma-coupling-uniform}, it holds that
$\DTV{\mu_{u,\Phi^{X}}}{\mu_{u,\Phi^{Y}}} \leq \frac{4}{s k} = \frac{1}{8k^3d^2}$.  Define the following event for $S_i$:
\begin{align}
\label{eq-def-B-variable}
\+B_i: \quad r_i \leq 	\frac{4}{s k} = \frac{1}{8k^3d^2}.
\end{align}
According to the implementation, if variable $u$ fails in $\cadp$, then event $\+B_i$ must occur. 

Suppose $S_i = e$ is a hyperedge. 
Suppose $e$ represents the constraint $c$ such that $c$ forbids a unique configuration $\sigma \in [q]^{\vbl{c}}$, i.e. $c(\sigma) = \False$.
In addition to $r_i$, we maintain two variables $M_i$ and $D_i$ for $S_i$, where $M_i \in [0,1]$ is a real number, $D_i \in \{0,1\}$ is a Boolean variable. Initially, $M_i = 1$ and $D_i = 0$. Suppose the coupling $\cadp$ pick a variable $u \in e$.
 We sample $X_u^{\cadp}$ and $Y_u^{\cadp}$ via following procedure $\mathsf{Couple}(u)$.
\begin{itemize}
\item If $D_i = 1$, sample $X_u^{\cadp}$ and $Y_u^{\cadp}$ from the optimal coupling between $\mu_{u,\Phi^{X}}$ and $\mu_{u,\Phi^{Y}}$. We does not need to use $r_i$ to implement this sampling step.
\item If $D_i = 0$, let $p_u = \max\{\mu_{u,\Phi^{X}}(\sigma_u), \mu_{u,\Phi^{Y}}(\sigma_u)\}$, then check whether $r_i \leq M_ip_u$.
\begin{enumerate}
\item if $r_i > M_i p_u$, sample  $X_u^{\cadp}$ and $Y_u^{\cadp}$ from the optimal coupling between $\mu_{u,\Phi^{X}}$ and $\mu_{u,\Phi^{Y}}$ conditional on $X_u^{\cadp} \neq \sigma_u \land Y_u^{\cadp} \neq\sigma_u$; then set $D_i \gets 1$;
\item if $r_i \leq M_i p_u$, sample  $X_u^{\cadp}$ and $Y_u^{\cadp}$ from the optimal coupling between $\mu_{u,\Phi^{X}}$ and $\mu_{u,\Phi^{Y}}$ conditional on $X_u^{\cadp}= \sigma_u \lor Y_u^{\cadp} = \sigma_u$; then set $M_i \gets M_ip_u$.
\end{enumerate}
\end{itemize}

We first prove that above implementation is a valid coupling between $\mu_{u,\Phi^{X}}$ and $\mu_{u, \Phi^{Y}}$.
Note that if $D_i =1 $, then there is a variable $u \in e = S_i$ such that $e$ is satisfied by both $X^{\cadp}_u$ and $Y^{\cadp}_u$, thus $D_i$ indicates whether $e$ is removed by the coupling. 
We claim 
\begin{align}
\label{eq-property-proof}
\text{conditional on $D_i = 0$ and $M_i = m_i$, $r_i$ is a uniform random real number in $[0,m_i]$	}.
\end{align}
Let $\+R$ denote all the randomness of the coupling $\cadp$ except the randomness of $r_i$.
We first fix $\+R$, then prove~\eqref{eq-property-proof} by induction. Initially, $r_i$ is sampled from $[0,1],M_i=1,D_i=0$, the property holds. 
Consider one execution of $\mathsf{Couple}(u)$. 
Suppose $D_i = 0$ and $M_i = m_i$ before the execution.
We show that~\eqref{eq-property-proof} still holds after we  sampled $X_u^{\cadp}$ and $Y_u^{\cadp}$ according to $\mathsf{Couple}(u)$. 
By induction hypothesis, $r_i$ is  a uniform random real number in $[0,m_i]$.
Note that conditional on $\+R$ and $D_i = 0$,  the value of $p_u$ is fixed.\footnote{This is because $\+R$ fixes all the randomness except the randomness of $r_i$. In our implementation, we only use $r_i$ to compare with a threshold $M_ip_u$ when we couple $X^{\cadp}_u$ and $Y^{\cadp}_u$ in \Cref{line-set-u} for some $u \in e = S_i$. Conditional further on $D_i = 0$, the results of all previous comparisons are fixed, namely, $r_i$ is smaller or equal to all the thresholds $m_ip_u$.  Hence, given $\+R$ and $D_i = 0$, the previous procedure of $\cadp$ is fully determined, which implies $p_u$ is fixed.}
After the procedure $\mathsf{Couple}(u)$, $D_i = 0$ if and only if $r_i \leq m_ip_u$.
Since $r_i$ is a uniform random real number in $[0,m_i]$, conditional on $r_i \leq m_ip_u$, $r_i$ is a uniform random real number in $[0,m_ip_u]$.
Since we set $m_i \gets m_ip_u$ at the end of the procedure, thus $r_i$ is a uniform random real number in $[0,m_i]$ after the procedure $\mathsf{Couple}(u)$, and~\eqref{eq-property-proof} still holds.

To prove the validity of the implementation. First note that if $D_i = 1$, the validity holds trivially. If $D_i = 0$, by~\eqref{eq-property-proof}, $r_i$ is a uniform random real number in $[0,M_i]$. Thus $r_i > M_i p_u$ with probability $1-p_u$, and $r_i \leq M_ip_u$ with probability $p_u$. By \Cref{lemma-coupling-uniform}, in the optimal coupling, the event $X^{\cadp}_u = c_u \lor Y^{\cadp}_u = c_u $ has probability $p_u$. Thus, the validity holds due to the chain rule.

Next, for hyperedge $S_i=e$, we define the following bad event
\begin{align}
\label{eq-def-B-edge}
\+B_i:\quad r_i \leq 	\frac{1}{8d^3k^3}.
\end{align}
We show that if the hyperedge $S_i = e$ fails, then $\+B_i$ must occur.

Suppose $S_i = e$ is a hyperedge.
Consider the input CSP formulas $\Phi^{X} = (V, (Q^{X}_u)_{u \in V}, \Cons{C})$ and $\Phi^{Y} = (V, (Q^{Y}_u)_{u \in V},\Cons{C})$. For any $u \neq v_0$, let $q_u = \abs{Q^{X}_u} =\abs{Q^{Y}_u}$.
Suppose $e$ represents the atomic constraint $c$ such that $c(\sigma) = \False$ for some unique $\sigma \in [q]^{e}$.
Suppose after the coupling procedure $\cadp$, variables $u_1,u_2,\ldots,u_m \in \Vcol \cap e$.
Since the hyperedge $S_i$ fails, it holds that
\begin{itemize}
\item after the coupling procedure, $\vol_{\Phi^{X}}(e) < \gamma$ or $\vol_{\Phi^{Y}}(e) < \gamma$;
\item for any $1\leq i \leq m$, $X^{\cadp}_{u_i} = \sigma_{u_i}$ or $Y^{\cadp}_{u_i} = \sigma_{u_i}$. 	
\end{itemize}
The second property holds because otherwise $e$ is satisfied by both $\Ass{X}^{\cadp}$ and $\Ass{Y}^{\cadp}$, thus must be removed by the coupling. According to our implementation, at the end of the coupling, we have
\begin{align*}
D_i = 0 \quad\text{and}\quad r_i \leq M_i = \prod_{j = 1}^m p_{u_j}. 	
\end{align*}
Note that $s > 32$ ($s$ is defined in \eqref{eq-def-s-in-gamma}), $m \leq k$ because $\abs{e} = k$.
By \Cref{lemma-coupling-uniform}, we have
\begin{align*}
\prod_{j = 1}^m p_{u_j} \leq \prod_{j=1}^m \frac{1}{q_{u_j}}\tp{1 + \frac{4}{s k}} \leq \exp\tp{\frac{4m}{s k}}\prod_{j=1}^m \frac{1}{q_{u_j}}\leq \mathrm{e}\prod_{j=1}^m \frac{1}{q_{u_j}}.
\end{align*}
At the end of the coupling, we have $\vol_{\Phi^{X}}(e) < \gamma$ or $\vol_{\Phi^{Y}}(e) < \gamma$. But in the beginning of the coupling, by~\eqref{eq-initial-vol}, we have $\vol_{\Phi^{X}}(e) \geq {3000q^2d^6k^6} $ and $\vol_{\Phi^{Y}}(e)\geq {3000 q^2d^6k^6}$. The volume of $e$ decreases because we update $\Phi^{X}$ and $\Phi^{Y}$ in \Cref{line-update-formula} for $u = u_1,u_2,\ldots,u_m$. Note that $v_0 \notin \Vcol$, thus $u_j \neq v_0$ for all $1\leq j \leq m$. We have
\begin{align*}
\prod_{j = 1}^m q_{u_j} \geq \frac{3000 q^2d^6k^6}{ \gamma} = \frac{3000q^2d^6k^6}{32\mathrm{e}q^2 d^3k^3} = \frac{3000d^3 k ^3}{32\mathrm{e}}	.
\end{align*}
If the hyperedge $S_i$ fails, then it holds that
\begin{align*}
r_i \leq \prod_{j = 1}^m p_{u_j} \leq \mathrm{e}\prod_{j=1}^m \frac{1}{q_{u_j}} \leq \frac{32\mathrm{e}^2 }{3000d^3k^3} \leq \frac{1}{8d^3k^3}.
\end{align*}
Thus the event $\+B_i$ must occur.

Combining two cases together, we have
\begin{align*}
\Pr[\cadp]{\forall 1 \leq i \leq \ell, S_{i} \text{ fails}} &\leq \Pr{\forall 1 \leq i \leq \ell, \+B_i}\\
\text{(all $r_i$ are mutually independent)}\quad &\leq \prod_{i=1}^\ell\Pr{\+B_i}\\
\text{(by~\eqref{eq-def-B-variable} and~\eqref{eq-def-B-edge})}\quad &\leq \prod_{\substack{1\leq i\leq \ell\\ \text{$S_i$ contains a single variable}}}\frac{1}{8d^2k^3}\ \prod_{\substack{1\leq i\leq \ell\\ \text{$S_i$ is a hyperedge}}}\frac{1}{8d^3k^3}. &\qedhere
\end{align*}
\end{proof}

Recall a sequence of sets $S_1,S_2,\ldots,S_\ell$ is called   a \emph{percolation sequence (PS)} to $u \in V$ if it satisfies first three properties in \Cref{lemma-path-strong}.
We call a  sequence of sets $S_1,S_2,\ldots,S_\ell$   a \emph{percolation sequence (PS)} if it satisfies first three properties in \Cref{lemma-path-strong} except $\dist_H(u,s_\ell) = 0$. 
For any $S_i$, let
\begin{align}
\label{eq-def-pfail}
p_{\mathrm{fail}}(S_i) = \begin{cases}
  \frac{1}{8d^2k^3} &\text{if $S_i$ contains a single variable};\\
  \frac{1}{8d^3k^3} &\text{if $S_i$ is a hyperedge}.	
 \end{cases}
 \end{align}
 Combining~\eqref{eq-bound-Iv-path} and \Cref{lemma-failed-path}, we have
\begin{align*}
I_{\vst} \leq \sum_{\text{PS for }\vst: e_1,e_2,\ldots,e_{\ell}}\Pr[\cadp]{\forall 1 \leq i \leq \ell, S_{i} \text{ fails}}\leq 	\sum_{\text{PS for }\vst: e_1,e_2,\ldots,e_{\ell}} \prod_{i=1}^{\ell} p_{\mathrm{fail}}(S_i).
\end{align*}
Note that the hypergraph $H$ is same for any $\vst \in V \setminus \{v_0\}$.
We can use the above inequality with $\vst = v$ for all $v \in V \setminus \{v_0\}$. This implies
\begin{align*}
\sum_{v \in V:v \neq v_0}I_v \leq \sum_{v\in V: v\neq v_0}\sum_{\text{PS to $v$}:S_1,S_2,\ldots,S_\ell} \prod_{1\leq i\leq \ell}p_{\mathrm{fail}}(S_i)\leq k\sum_{\text{PS}:S_1,S_2,\ldots,S_\ell}  \prod_{1\leq i\leq \ell}p_{\mathrm{fail}}(S_i),
\end{align*}
where the last inequality holds because there are at most $k$ variables $v$ that satisfies $\dist(v,S_{\ell}) = 0$ (if $S_{\ell}$ contains a single variable, there are only one variable $v$; if $S_{\ell}$ is a hyperedge, there are $k$ variables $v$). 
We can enumerate all the PSs according the length. We have
\begin{align*}
\sum_{v \in V:v \neq v_0}I_v  \leq 	k\sum_{\ell = 1}^\infty \sum_{\substack{\text{PS of length $\ell$}\\S_1,S_2,\ldots,S_\ell}} \prod_{1\leq i\leq \ell}p_{\mathrm{fail}}(S_i) = k\sum_{\ell = 1}^\infty N(\ell),
\end{align*}
where 
\begin{align*}
N(\ell) \triangleq \sum_{\substack{\text{PS of length $\ell$}\\S_1,S_2,\ldots,S_\ell}} \prod_{1\leq i\leq \ell}p_{\mathrm{fail}}(S_i).
\end{align*}
We then show that
\begin{align}
\label{eq-bound-N-l}
N(\ell) \leq \tp{k^2d^2  \frac{1}{8d^2k^3} + k^2d^3 \frac{1}{8d^3k^3}}	\tp{k^3d^2  \frac{1}{8d^2k^3} + k^3d^3   \frac{1}{8d^3k^3} }^{\ell - 1}.	
\end{align}
We need the following basic facts to prove~\eqref{eq-bound-N-l}.
We may assume $d,k\geq 2$, otherwise the sampling problem is trivial. 
Fix a variable $v \in V$. 
The number of variables $u$ satisfying $\dist_H(v,u) \leq 2$ is at most
\begin{align*}
1 + d(k-1) + d(d-1)(k-1)^2 \leq k^2d^2.	
\end{align*}
The number of hyperedges $e'$ satisfying $\dist_H(v, e') \leq 2$ is at most
\begin{align*}
d + d(k-1)(d-1) + d(d-1)^2(k-1)^2 \leq k^2d^3. 	
\end{align*}
Fix a hyperedge $e \in \+E$. The number of variables $u$ satisfying $\dist_H(e,u) \leq 2$ is at most 
\begin{align*}
k + k(d-1)(k-1) + k(d-1)^2(k-1)^2 \leq k^3d^2.	
\end{align*}
The number of hyperedges $e'$ satisfying $\dist_H(e, e') \leq 2$ is at most
\begin{align*}
(1 + k(d-1)) + k(k-1)(d-1)^2 + k(k-1)^2(d-1)^3 \leq k^3d^3. 	
\end{align*}

We prove~\eqref{eq-bound-N-l} by induction on $\ell$. Suppose $\ell = 1$. It holds that $\dist_H(v_0, S_1) \leq 2$. By~\eqref{eq-def-pfail}, we have
\begin{align*}
N(1) \leq k^2d^2  \frac{1}{8d^2k^3}+ k^2d^3   \frac{1}{8d^3k^3}. 	
\end{align*}
Suppose~\eqref{eq-bound-N-l} holds for all $\ell \leq k$. We prove~\eqref{eq-bound-N-l} for $\ell = k + 1$. 
For PS $S_1,S_2,\ldots,S_{k+1}$ of length $k+1$, $S_1,S_2,\ldots,S_k$ is a PS of length $k$ and $\dist_H(S_k,S_{k+1}) \leq 2$.
For any $S_k$, there are at most $k^3d^2$ ways to choose $S_{k+1}$ as a variable, and at most $k^3d^3$ ways to choose $S_{k+1}$ as a hyperedge. This implies
\begin{align*}
N(k+1) &\leq N(k)\tp{k^3d^2  \frac{1}{8d^2k^3} + k^3d^3   \frac{1}{8d^3k^3}}\\
&\overset{\text{by I.H.}}{\leq} \tp{k^2d^2  \frac{1}{8d^2k^3} + k^2d^3   \frac{1}{8d^3k^3}}	\tp{k^3d^2  \frac{1}{8d^2k^3} + k^3d^3   \frac{1}{8d^3k^3}}^{k}.  	
\end{align*}
This proves~\eqref{eq-bound-N-l}. Now, we have
\begin{align*}
\sum_{v \in V:v \neq v_0}I_v \leq 	k\sum_{\ell = 1}^\infty N(\ell) \leq \sum_{\ell = 1}^\infty \tp{k^3d^2  \frac{1}{8d^2k^3} + k^3d^3   \frac{1}{8d^3k^3}}^{\ell} = \sum_{\ell = 1}^\infty \tp{\frac{1}{4}}^\ell \leq \frac{1}{2}.
\end{align*}

\subsection{Non-adaptive coupling analysis}
\label{section-non-adp}
We now analyze the general CSP formula $\Phi=(V,\Dom{Q},\Cons{C})$ with atomic constraints, where each variable $v \in V$ has an arbitrary domain $Q_v$ and each constraint contains arbitrary number of variables.
We will prove the following lemma is this section.
\begin{lemma}
\label{lemma-path-coupling-gen}
Let $\Phi=(V,\Dom{Q},\Cons{C})$ be the input CSP formula with atomic constraints in \Cref{alg-mcmc}.
Let $\Proj{h} = (h_v)_{v \in V}$ be the projection scheme for $\Phi$ satisfying \Cref{condition-projection} with parameters $\alpha$ and $\beta$.
Let $q_v = \abs{Q_v}$, $p = \max_{c \in \Cons{C}}\prod_{v \in \vbl{c}}\frac{1}{q_v}$ and $D$ denote the maximum degree of the dependency graph of $\Phi$. If
\begin{align*}
\log \frac{1}{p} \geq \frac{50}{\beta} \log \tp{\frac{2000D^4}{\beta}},	
\end{align*}
then it holds that $\sum_{v \in V \setminus \{v_0\}}\DTV{\nu_{v}^{X_{V \setminus \{v\}}}}{\nu_v^{Y_{V \setminus \{v\}}}}	\leq \frac{1}{2}$.
\end{lemma}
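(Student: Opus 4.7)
The plan is to adapt the path-coupling analysis of \Cref{lemma-path-coupling-uniform}, but replace the \emph{adaptive} (volume-based) Moitra-style coupling in \Cref{alg-coupling-v} by a \emph{non-adaptive} one, because the domains $Q_v$ are now heterogeneous and there is no single threshold $\gamma$ that plays the role of~\eqref{eq-def-beta-T} uniformly. First I would set up the same reduction as in \Cref{section-adaptive-coupling}: for each $v\in V\setminus\{v_0\}$ the projection step contracts the total variation distance, so it suffices to prove
\begin{align*}
\sum_{v\in V\setminus\{v_0\}} I_v = \sum_{v\in V\setminus\{v_0\}} \DTV{\mu_v^{X_{V\setminus\{v\}}}}{\mu_v^{Y_{V\setminus\{v\}}}} \leq \tfrac{1}{2},
\end{align*}
and to bound each $I_{\vst}$ by the probability that a certain discrepancy set $V_1$ reaches $\vst$ in a carefully designed coupling of $\mu_{\Phi^X}$ and $\mu_{\Phi^Y}$, where $\Phi^X,\Phi^Y$ are the CSP formulas obtained by pinning $v\neq\vst$ to $h_v^{-1}(X_v)$ and $h_v^{-1}(Y_v)$ respectively.

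The coupling itself proceeds variable-by-variable in the standard BFS-from-$v_0$ fashion: at each step we take the first unresolved constraint incident to the current $V_1$, pick the first unresolved variable $u$ in it, and sample $(X^{\mathcal{C}}_u, Y^{\mathcal{C}}_u)$ from the optimal coupling of the two current marginals $\mu_{u,\Phi^X}$, $\mu_{u,\Phi^Y}$. Because the domains vary, I would not threshold on the volume; instead, for every constraint $c\in\Cons{C}$ I would fix, once and for all, a minimal subset $R(c)\subseteq\vbl{c}$ such that $\prod_{v\in R(c)}\lceil q_v/s_v\rceil \cdot q_v^{-1}\le 1/(20D^2)$, exactly as in the proof of~\eqref{eq-bound-Be-general}. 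Such $R(c)$ has bounded size $|R(c)|\le r=O(1/(1-\alpha))$ by a greedy argument (each variable in $R(c)$ contributes at least a factor $2/3$), and \Cref{condition-projection} with parameter $\alpha$ guarantees its existence. The role of $R(c)$ is to take the place of the failed-hyperedge step of \Cref{alg-coupling-v}: a constraint $c$ is declared \emph{bad} in the coupling only if all variables in $R(c)$ have been exposed with agreeing values $X^{\mathcal{C}}_u=Y^{\mathcal{C}}_u$ matching the forbidden pattern.

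Next I would repeat the structural argument of \Cref{lemma-path,lemma-path-strong}: every $u\in V_1$ is reachable from $v_0$ by a \emph{percolation sequence} of disjoint sets $S_1,\dots,S_\ell$ of pairwise line-graph distance at most $2$, where each $S_i$ is either a failed single variable (a disagreement in the optimal coupling) or a failed constraint. The single-variable failure probability at $u$ is bounded by $\DTV{\mu_{u,\Phi^X}}{\mu_{u,\Phi^Y}}$, which an LLL argument in the spirit of \Cref{lemma-coupling-uniform} (this time invoking the conditional LLL criterion \Cref{conditional-LLL-condition} with parameter $\beta$) bounds by $O(1/D^{25})$ under our hypothesis $\log(1/p)\ge (50/\beta)\log(2000D^4/\beta)$. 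The failed-constraint probability is bounded using the fact that all $|R(c)|$ variables of $R(c)$ are simultaneously pinned to the unique forbidden pattern; by \Cref{lemma-uniform-general} applied to $R(c)$ and the choice of $R(c)$, this probability is at most $(1/(20D^2))\cdot \exp(r/(20D))\le 1/(12D^2)$, again matching the hyperedge-bound argument of \Cref{section-subroutine}.

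Finally, I would enumerate percolation sequences by their length using an analogue of \Cref{proposition-number} for line graphs of hypergraphs with maximum dependency-degree $D$: the number of percolation sequences of length $\ell$ rooted near $v_0$ whose $i$-th entry is a constraint is $O(D)^{O(\ell)}$, and similarly for variable entries. Combining with the per-entry failure bound $1/D^{\Theta(1)}$ with a large enough exponent (which is where the $50/\beta$ factor in the hypothesis is spent), the geometric series converges and we obtain $\sum_v I_v\le 1/2$. The main obstacle I anticipate is choosing the right constants in the enumeration so that the per-step growth $D^{O(1)}$ is dominated by the per-step decay extracted from $\log(1/p)\geq (50/\beta)\log(2000D^4/\beta)$; once the bookkeeping is fixed (analogous to~\eqref{eq-bound-N-l}), the telescoping sum closes in exactly the same shape as in \Cref{section-adaptive-coupling}.
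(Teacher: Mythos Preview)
Your plan diverges from the paper's argument in an essential way and leaves a genuine gap. The paper does \emph{not} couple $\mu_{\Phi^X}$ and $\mu_{\Phi^Y}$; it first builds a \emph{second} projection scheme $\Proj{h}^X,\Proj{h}^Y$ on $\Phi^X,\Phi^Y$ (\Cref{condition-projection-coupling}, \Cref{lemma-find-projection-coupling}) with $h^X_{\vst}=h_{\vst}$, and runs the non-adaptive coupling $\cnon$ (\Cref{alg-coupling-gen}) on the projected distributions $\nu_X,\nu_Y$. In that projected space every exposed variable matches the forbidden projected pattern with probability at most $3/4$ uniformly (see~\eqref{eq-tmax-1}), which supplies the geometric decay the analysis needs regardless of the original domain sizes. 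Your $R(c)$ sets are defined via $\lceil q_v/s_v\rceil/q_v$, a projected-space quantity borrowed from the connected-component analysis in \Cref{section-subroutine}; in a coupling of $\mu_{\Phi^X}$ the relevant per-variable probability of hitting the forbidden value is $1/|Q^X_u|\approx 1/\lfloor q_u/s_u\rfloor$, a different object that equals $1$ whenever $s_u=q_u$, so the two do not match. You also never specify how the BFS order interacts with $R(c)$: if variables of $c$ outside $R(c)$ get exposed first, the conditional LLL slack behind your $O(1/D^{25})$ TV bound may already be exhausted before $R(c)$ is reached.

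The decisive omission is in the final summation. When you exchange $\sum_{\vst}$ with the percolation enumeration, each path is counted once for every $\vst$ in its last element $e_\ell$. In the $(k,d)$ case this factor $k$ is absorbed by the explicit $1/k^3$ in~\eqref{eq-def-pfail}; here $|e_\ell|$ is unbounded and your per-element failure bound carries no $|e_\ell|$-dependence, so the sum does not close. The paper fixes this by extracting an additional factor $(1/2)^{\beta|e_\ell|/50}$ in the failure probability of the \emph{last} edge (\Cref{lemma-PP}), using condition~\eqref{eq-condition-coupling-projection} of the second projection at $\vst$ and the case split~\eqref{eq-cond-1}/\eqref{eq-cond-2}; then $\frac{\beta|e_\ell|}{50}\cdot(1/2)^{\beta|e_\ell|/50}\le 1$ absorbs the double-count. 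Nothing in your outline produces this exponential-in-$|e_\ell|$ decay, and without it the argument cannot conclude.
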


Fix a variable $\vst \in V \setminus \{v_0\}$. The goal of this section is to construct a non-adaptive coupling $\cnon$ to bound the total variation distance $\DTV{\nu_{\vst}^{X_{V\setminus \{\vst\} }}}{\nu_{\vst}^{Y_{V\setminus \{\vst\} }}}$.

Recall that $\Phi=(V,\Dom{Q},\Cons{C})$ is the original input CSP formula.
Recall that two CSP formulas $\Phi^{X}=(V,\Dom{Q}^{X}=(Q^{X}_u)_{u \in V},\Cons{C})$  and $\Phi^{Y}=(V,\Dom{Q}^{Y}=(Q^{Y}_v)_{v \in V},\Cons{C})$  are defined by
\begin{align}
\label{eq-def-QX-QY-gen}
Q^{X}_u = \begin{cases}
h^{-1}_u(X_u) &\text{if } u \neq \vst;\\
Q_u &\text{if } u = \vst, 	
 \end{cases}
 \qquad\qquad
 Q^{Y}_u = \begin{cases}
h^{-1}_u(Y_u) &\text{if } u \neq \vst;\\
Q_u &\text{if } u = \vst. 	
\end{cases}
\end{align}
By definition, $(Q^{X}_u)_{u \in V}$ and $(Q^{Y}_u)_{u \in V}$ differ only at variable $v_0$. 
Let $\mu_{\Phi^{X}}$ denote the uniform distribution over all satisfying assignments to $\Phi^{X}$, and 
$\mu_{\Phi^{Y}}$ denote the uniform distribution over all satisfying assignments to $\Phi^{Y}$. 
The first step for non-adaptive coupling analysis is to construct another projection schemes on instances $\Phi^{X}$ and $\Phi^{Y}$. 
Let $\PX=(h^X_v)_{v \in V}$ denote the projection scheme for $\Phi^{X}$ and $\PY = (h^Y_v)_{v \in V}$ denote the projection scheme for $\Phi^{Y}$, where $h^X_v:Q^{X}_v \to \Sigma_v^X$ and $h^Y_v:Q^{Y}_v\to \Sigma_v^Y$. 
For each $v \in V$, define
\begin{align*}
s_v^X \triangleq \abs{\Sigma_v^X},\quad s_v^Y \triangleq \abs{\Sigma_v^Y}, \quad q^X_v = \abs{Q^{X}_v}, \quad q^Y_v = \abs{Q^{Y}_v}.	
\end{align*}

In our analysis, we construct a pair of projection schemes $\PX,\PY$ satisfying the following condition.
\begin{condition}
\label{condition-projection-coupling}
Let $\Phi=(V,\Dom{Q},\Cons{C})$ be the original input CSP formula of \Cref{alg-mcmc} and $\Proj{h}=(h_v)_{v \in V}$ be the original projection scheme for $\Phi$ satisfying \Cref{condition-projection} with parameters $\alpha$ and $\beta$.
The projection scheme 	$\PX$ for $\Phi^{X}$  and the projection scheme $\PY$ for $\Phi^{Y}$ satisfy the following conditions:
\begin{itemize}
\item both $\PX$ and $\PY$ are balanced, i.e. for each $v \in V$ and $c_v^X \in \Sigma^X_v$, $\ftp{q^X_v/s^X_v} \leq \abs{(h^X_v)^{-1}(c_v^X)}\leq \ctp{q^X_v / s^X_v}$; for each $v \in V$ and $c_v^Y \in \Sigma^Y_v$, $\ftp{q^Y_v/s^Y_v} \leq \abs{(h^Y_v)^{-1}(c_v^Y)}\leq \ctp{q^Y_v / s^Y_v}$;
\item $\Sigma^X_{v_0} = \Sigma^Y_{v_0}$; and $h^X_u = h^Y_u$ for all $u \in V \setminus \{v_0\}$;
\item $h^X_{\vst} = h^Y_{\vst} = h_{\vst}$, where $h_{\vst}$ is the original projection scheme $\Proj{h}$ restricted on variable $\vst$;
\item for any constraint $c \in \Cons{C}$, 
\begin{align}
\label{eq-condition-upper-bound}
\min \tp{\sum_{v \in \vbl{c}}\log\ftp{\frac{q_v^X}{s_v^X}},\sum_{v \in \vbl{c}}\log\ftp{\frac{q_v^Y}{s_v^Y}}} \geq \frac{\beta}{10} \tp{\sum_{v \in \vbl{c}}\log q_v};
\end{align}
for any constraint $c \in \Cons{C}$ satisfying $\vst \notin \vbl{c}$,
\begin{align}
\label{eq-condition-lower-bound}
\min\tp{\sum_{v \in \vbl{c}}\log\frac{q_v^X}{\ctp{q_v^X / s_v^X}},\sum_{v \in \vbl{c}}\log\frac{q_v^Y}{\ctp{q_v^Y / s_v^Y}}}	 \geq \frac{\beta}{10} \tp{\sum_{v \in \vbl{c}}\log q_v};
\end{align}
for any constraint $c \in \Cons{C}$ satisfying $\vst \in \vbl{c}$,
\begin{align}
\label{eq-condition-coupling-projection}
&\min\tp{\log \ftp{\frac{ q^X_{\vst}}{s^X_{\vst}}} + \sum_{v \in \vbl{c} \setminus \{\vst\} }\log\frac{q_v^X}{\ctp{q_v^X / s_v^X}},\,\, \log\ftp{\frac{ q^Y_{\vst}}{s^Y_{\vst}}} + \sum_{v \in \vbl{c} \setminus \{\vst\} }\log\frac{q_v^Y}{\ctp{q_v^Y / s_v^Y}}}\notag\\
&\,\geq \frac{\beta}{10} \tp{\sum_{v \in \vbl{c}}\log q_v},
\end{align}
where $q^X_v = \abs{Q^{X}_v}$,$q^Y_v = \abs{Q^{Y}_v}$  and $q_v = \abs{Q_v}$ for all $v \in V$.
\end{itemize}
\end{condition}
\Cref{condition-projection-coupling} is a variation of \Cref{condition-projection}. The lower bound in~\eqref{eq-condition-lower-bound} can be transformed to the upper bounds on $\sum_{v \in \vbl{c}} \ctp{{q_v^X}/{s_v^X}}$ and $\sum_{v \in \vbl{c}} \ctp{{q_v^Y}/{s_v^Y}}$.
Thus,~\eqref{eq-condition-lower-bound} and~\eqref{eq-condition-upper-bound} are similar to~\eqref{eq:entropy-lower-bound} and~\eqref{eq:entropy-upper-bound} in \Cref{condition-projection}.
Moreover, for constraint $c \in \Cons{C}$ satisfying $\vst \in \vbl{c}$, we need an extra condition in~\eqref{eq-condition-coupling-projection}. The purpose of this extra condition is to handle the case that $\abs{\vbl{c}}$ can be very large.

The following lemma shows that the projection schemes satisfying \Cref{condition-projection-coupling} exist under a Lov\'asz local lemma condition. Since we only use $\PX$ and $\PY$ for analysis, we only need to show such projection schemes exist, we do not need an algorithm to construct  specific projection schemes.
\begin{lemma}
\label{lemma-find-projection-coupling}
Let $\Phi=(V,\Dom{Q},\Cons{C})$ be the original input CSP formula of \Cref{alg-mcmc} and $\Proj{h}=(h_v)_{v \in V}$ be the original projection scheme for $\Phi$ satisfying \Cref{condition-projection} with parameters $\alpha$ and $\beta$.
Let $q_v = \abs{Q_v}$ and $D$ denote the maximum degree of the dependency graph of $\Phi$. 
Let $p \triangleq \max_{c \in \Cons{C}} \prod_{v \in \vbl{c}}\frac{1}{q_v}$.
Suppose
\begin{align*}
 \log \frac{1}{p} \geq \frac{55}{\beta}(\log D + 3).
\end{align*}
There exist  projection schemes $\PX,\PY$ for $\Phi^{X},\Phi^{Y}$ satisfying \Cref{condition-projection-coupling}.
\end{lemma}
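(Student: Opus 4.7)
The plan is to construct $\PX$ and $\PY$ by a randomized procedure and then invoke the Lov\'asz local lemma to show that with positive probability the result satisfies all of \Cref{condition-projection-coupling}. The first observation is structural: since $X$ and $Y$ agree off $v_0$, the domains $Q^X_u$ and $Q^Y_u$ coincide for every $u \neq v_0$, so we may set $h^X_u = h^Y_u$ on all $u \in V \setminus \{v_0\}$ at no loss, automatically enforcing the second bullet of \Cref{condition-projection-coupling}. The third bullet forces $h^X_{\vst} = h^Y_{\vst} = h_{\vst}$, which is a deterministic assignment with alphabet size $s_{\vst}$ inherited from $\Proj{h}$. The only remaining freedom is at $v_0$ (two projections sharing a common alphabet size $s^X_{v_0} = s^Y_{v_0}$) and at each $v \in V \setminus \{v_0,\vst\}$ (a single projection $h^X_v = h^Y_v$).

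To set these remaining projections, I would mirror the randomized two-case construction used in the proof of \Cref{theorem-projection-general}. Split $V \setminus \{\vst\}$ into large-domain variables (those with $\log q^X_v$ exceeding a threshold of order $1/(\alpha-\beta)$) and small-domain variables. On each large variable, deterministically pick $s^X_v = \lceil (q^X_v)^{(2-\gamma)/2}\rceil$ for a $\gamma$ chosen so that both $\log \ftp{q^X_v/s^X_v}$ and $\log(q^X_v/\ctp{q^X_v/s^X_v})$ lie within a constant factor of $\tfrac{1}{2}\log q_v$. On each small variable, mark independently with a probability chosen so that the expected per-variable contribution comfortably exceeds $\tfrac{\beta}{10}\log q_v$ (by a constant factor, using the extra slack between the parameter $\beta$ of $\Proj{h}$ and the weaker target $\beta/10$). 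At $v_0$, pick the common alphabet size $s^X_{v_0}=s^Y_{v_0}$ by the same rule and sample the two partitions of $Q^X_{v_0}$ and $Q^Y_{v_0}$ independently.

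For each constraint $c \in \Cons{C}$ define a bad event $B_c$ that at least one of \eqref{eq-condition-upper-bound}, \eqref{eq-condition-lower-bound} (when $\vst\notin\vbl{c}$), or \eqref{eq-condition-coupling-projection} (when $\vst\in\vbl{c}$) fails at $c$. Each of the sums appearing in these inequalities is a sum of bounded and mutually independent random contributions (the only potentially correlated pair would be at $v_0$, and by construction $h^X_{v_0}$ and $h^Y_{v_0}$ are independent conditioned on the common $s^X_{v_0}=s^Y_{v_0}$), and each has expectation above the $\tfrac{\beta}{10}$ threshold by a constant factor. Hoeffding's inequality then gives $\Pr[B_c]\leq \exp\bigl(-\Omega(\beta)\sum_{v\in\vbl{c}}\log q_v\bigr) \leq p^{\Omega(\beta)}$, and since the dependency degree is at most $D$, the symmetric LLL condition $eD\cdot\Pr[B_c]\leq 1$ is implied by the hypothesis $\log(1/p)\geq (55/\beta)(\log D + 3)$ with room to spare; existence then follows from \Cref{theorem-LLL}.

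The main obstacle is handling constraints with $\vst\in\vbl{c}$. For these $c$ the $\vst$-contribution in \eqref{eq-condition-coupling-projection} is deterministic and asymmetric, appearing only as $\log\ftp{q_{\vst}/s_{\vst}}$ rather than as $\log(q_{\vst}/\ctp{q_{\vst}/s_{\vst}})$, so in the corner case $s_{\vst}=q_{\vst}$ the term from $\vst$ vanishes and the entire $\tfrac{\beta}{10}\sum_{v\in\vbl{c}}\log q_v$ slack must be produced by $\vbl{c}\setminus\{\vst\}$. However, because the original $\Proj{h}$ satisfies \Cref{condition-projection} with parameter $\beta$ (ten times the parameter we are now aiming for), the single deterministic term at $\vst$ can absorb at most an $O(1/k)$-fraction of the available entropy, and a Hoeffding calculation on the $|\vbl{c}|-1$ remaining independent contributions still yields $\Pr[B_c]\leq p^{\Omega(\beta)}$, so the LLL union bound goes through unchanged.
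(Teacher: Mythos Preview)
Your overall strategy (randomized balanced projection plus LLL, mirroring \Cref{theorem-projection-general}) is the same idea the paper uses, but the paper handles the $\vst$ special case by a cleaner reduction that avoids the obstacle you describe. Rather than running the LLL directly on $\Phi^X$ with the deterministic choice $h^X_{\vst}=h_{\vst}$ special-cased, the paper first replaces the domain at $\vst$ by a size-$\lfloor q_{\vst}/s_{\vst}\rfloor$ preimage $h_{\vst}^{-1}(j)$, applies \Cref{theorem-projection-general} (with parameters $8/9,\,1/9$) as a black box to this auxiliary instance $\widetilde\Phi^X$, and only afterwards swaps the resulting $\widetilde h^X_{\vst}$ out for $h_{\vst}$. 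The identity $\lfloor q^X_{\vst}/s^X_{\vst}\rfloor = \lfloor q_{\vst}/s_{\vst}\rfloor = \widetilde q^X_{\vst} \ge \widetilde q^X_{\vst}/\lceil\widetilde q^X_{\vst}/\widetilde s^X_{\vst}\rceil$ then makes the $\vst$-term on the left of \eqref{eq-condition-coupling-projection} at least as large as whatever \Cref{theorem-projection-general} produced at $\vst$, so \eqref{eq-condition-coupling-projection} follows immediately from the black-box guarantee with no separate concentration step. The scheme $\PY$ is then obtained from $\PX$ by re-partitioning only at $v_0$; since the original $\Proj{h}$ is balanced one has $|q^Y_{v_0}-q^X_{v_0}|\le 1$, which perturbs the relevant sums by at most an additive constant, absorbed in the $\beta/9$-to-$\beta/10$ slack.

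Your direct route can also be made to work, but the specific justification you give for the $\vst$ case is wrong. You claim the deterministic $\vst$-term ``can absorb at most an $O(1/k)$-fraction of the available entropy''; this is false for general CSPs, where domain sizes are non-uniform and a single $\log q_{\vst}$ may dominate $\sum_{v\in\vbl{c}}\log q_v$ (there is no $k$-uniformity hypothesis here). The correct accounting is global, not per-variable: \Cref{condition-projection} gives $\sum_{v\in\vbl{c}}\log\lfloor q_v/s_v\rfloor \ge \beta\sum_{v\in\vbl{c}}\log q_v$, and the $\vst$-contribution to that left side is exactly $\log\lfloor q_{\vst}/s_{\vst}\rfloor$, the same deterministic term appearing on the left of \eqref{eq-condition-coupling-projection}. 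Hence the remaining variables $v\neq\vst$ carry $\sum_{v\neq\vst}\log q^X_v \ge \beta\sum_v\log q_v - \log\lfloor q_{\vst}/s_{\vst}\rfloor$, and after adding $\log\lfloor q_{\vst}/s_{\vst}\rfloor$ back the expected left side of \eqref{eq-condition-coupling-projection} exceeds the $\beta/10$ target with the margin needed for Hoeffding. That is the argument; the $1/k$ heuristic plays no role.
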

\noindent The proof of \Cref{lemma-find-projection-coupling} is deferred to \Cref{section-proof-find-coupling-gen}.

Let $\PX=(h^X_v)_{v \in V}$ and  $\PY = (h^Y_v)_{v \in V}$ denote the projection schemes for $\Phi^{X}$ and $\Phi^{Y}$, where $h^X_v:Q^{X}_v \to \Sigma^X_v$ and $h^Y_v:Q^{Y}_v \to \Sigma^Y_v$. 
Suppose $\PX$ and $\PY$ satisfy \Cref{condition-projection-coupling}.
By \Cref{condition-projection-coupling}, for any variable $v \in V$, $\Sigma^X_v = \Sigma^Y_v$ and $s^X_v = s^Y_v = \abs{\Sigma_v^X} = \abs{\Sigma_v^Y}$. Denote
\begin{align*}
\forall v \in V, \quad s'_v &\triangleq s^X_v = s^Y_v \text{ and } \Sigma_v' \triangleq \Sigma^X_v = \Sigma^Y_v;\\
\Sigma' &\triangleq \bigotimes_{v \in V}\Sigma'_v.
\end{align*}
Recall $\mu_{\Phi^{X}}$ and $\mu_{\Phi^{Y}}$ are the uniform distributions over all satisfying assignments to $\Phi^{X}$ and $\Phi^{Y}$.
We define the following two projected distributions:
\begin{itemize}
\item $\nu_{X}$: the projected distribution (defined in \Cref{def:proj-distr}) over $\Sigma' = \bigotimes_{v \in V}\Sigma'_v$ induced from the instance $\Phi^{X}$ and the projection scheme $\Proj{h}^X$;
\item $\nu_{Y}$: the projected distribution (defined in \Cref{def:proj-distr}) over $\Sigma' = \bigotimes_{v \in V}\Sigma'_v$ induced from the instance $\Phi^{Y}$ and the projection scheme $\Proj{h}^Y$. 
\end{itemize}
For any variable $v \in V$, let $\nu_{v,X}$ and $\nu_{v,Y}$ denote the marginal distributions on $v$ projected from $\nu_{X}$ and $\nu_Y$. Recall the goal of this section is to bound $\DTV{\nu_{\vst}^{X_{V\setminus \{\vst\} }}}{\nu_{\vst}^{Y_{V\setminus \{\vst\} }}}$. By \Cref{condition-projection-coupling}, $h^X_{\vst} = h^Y_{\vst} = h_{\vst}$. By the definitions $\Phi^{X}$, $\Phi^{Y}$ and the projected distribution in~\Cref{def:proj-distr},
\begin{align*}
\nu_{\vst}^{X_{V\setminus \{\vst\} }} = \nu_{\vst, X} \quad\text{and}\quad 	\nu_{\vst}^{Y_{V\setminus \{\vst\} }} = \nu_{\vst,Y}.
\end{align*}

Recall that $\Phi = (V, \Dom{Q}, \Cons{C})$ is the original input CSP formula of \Cref{alg-mcmc}.
Recall that $H=(V,\+E)$ denotes the (multi-)hypergraph that models $\Phi$, where $\+E \triangleq \{\vbl{c} \mid c \in \+C\}$.
Note that $H$ also models $\Phi^{X}$ and $\Phi^{Y}$, because  $\Phi,\Phi^{X},\Phi^{Y}$ have the same sets of variables and constraints.
Let $e \in \+E$ be a hyperedge and $u \in e$ a variable in $e$.
Let $X_u^{\cnon},Y_u^{\cnon} \in \Sigma'_u$ be two values.
Let $c_e \in \Cons{C}$ denote the atomic constraint represented by $e$.
Let $\sigma \in \Dom{Q}_{e}$ denote the unique configuration forbidden by $c_e$, i.e. $c_e(\sigma)=\False$.
We say $e$ is satisfied by $X^{\cnon}_u$ if $\sigma_u \notin (h^X_u)^{-1}(X_u^{\cnon})$, because in the projected distribution $\nu_X$, conditional on the value of $u$ is $X^{\cnon}_u$, the constraint $c_e$ must be satisfied.
Similarly, We say $e$ is satisfied by $Y^{\cnon}_u$ if $\sigma_u \notin (h^Y_u)^{-1}(Y_u^{\cnon})$.
The coupling procedure $\cnon$ is given in \Cref{alg-coupling-gen}.
\begin{algorithm}[h]
  \SetKwInOut{Input}{Input} \SetKwInOut{Output}{Output} 
  \Input{CSP formulas $\Phi^{X}=(V,\Dom{Q}^{X}=(Q^{X}_u)_{u \in V},\Cons{C})$  and $\Phi^{Y}=(V,\Dom{Q}^{Y}=(Q^{Y}_v)_{v \in V},\Cons{C})$, the hypergraph
    $H= (V,\mathcal{E})$ modeling $\Phi^{X}$ and $\Phi^{Y}$, projection schemes $\Proj{h}^X$ and $\Proj{h}^Y$ satisfying \Cref{condition-projection-coupling}, variables $v_0, \vst \in V$, an index function $\mathrm{ID}: V \to [n]$ such that  $\mathrm{ID}(u) \neq \mathrm{ID}(v)$ for all $u \neq v$ and $\mathrm{ID}(\vst) = n$.}  
  \Output{a pair of assignments $\Ass{X}^{\cnon},\Ass{Y}^{\cnon} \in \Sigma'$.}
  sample $X^{\cnon}_{v_0} \sim \nu_{v_0,X}$ and $Y^{\cnon}_{v_0} \sim \nu_{v_0,Y}$ independently\label{line-v0-gen}\;
    $V_1 \gets \{v_0\}$, $V_2 \gets V \setminus V_1$, $\Vcol \gets \{v_0\}$\; 
  remove all $e$ from $\+E$ s.t. the constraint $c$ represented by $e$ is satisfied by both $X^{\cnon}_{v_0}$ and $Y^{\cnon}_{v_0}$\label{line-remove-e-v0}\;
  \While{$\exists e \in \mathcal{E}$ s.t.\ $e \cap V_1 \neq \emptyset, (e \cap V_2) \setminus \Vcol \neq \emptyset$ }
    {let $e$ be the first such hyperedge and $u$ the variable in $(e \cap V_2) \setminus \Vcol$ with lowest ID\label{line-pick-gen}\;
    sample $(c_X,c_Y) \in \Sigma'_u \times \Sigma'_u$ from the optimal coupling between $\nu_{u,X}(\cdot \mid \Ass{X}^{\cnon})$ and $\nu_{u,Y}(\cdot \mid \Ass{Y}^{\cnon})$ and extend $\Ass{X}^{\cnon}$ and $\Ass{Y}^{\cnon}$ to $u$ by setting $(X^{\cnon}_u, Y^{\cnon}_u) \gets (c_X,c_Y)$\label{line-optimal-coupling-gen}\;
    $\Vcol \gets \Vcol \cup \{u\}$\; 
    \If{$X^{\cnon}_u \neq Y^{\cnon}_u$}{$V_1 \gets V_1 \cup \{u\}, V_2 \gets V \setminus V_1$\label{line-add-v1-gen}\;}
    \For{$e \in \+E$ s.t. the constraint $c$ represented by $e$ is satisfied by both $X^{\cnon}_u$ and $Y^{\cnon}_u$}
      {$\mathcal{E} \gets\mathcal{E} \setminus \{e\}$ \label{line-remove-e-gen}} 
	\For{$e \in \+E$ s.t. $e \subseteq \Vcol$\label{line-add-set-cond-gen}}
      {$V_1 \gets V_1 \cup \{e\}, V_2 \gets V \setminus V_1$\label{line-add-set-gen}\;} 
    } 
    extend $\Ass{X}^{\cnon}$ and $\Ass{Y}^{\cnon}$ to the set $V_2 \setminus \Vcol$ 
    by sampling $(X^{\cnon}_{V_2 \setminus \Vcol}, Y^{\cnon}_{V_2 \setminus \Vcol})$ from the optimal coupling between $\nu_{V_2 \setminus \Vcol,X}(\cdot\mid \Ass{X}^{\cnon})$ and $\nu_{V_2 \setminus \Vcol,Y}(\cdot\mid \Ass{Y}^{\cnon})$\label{line-extend-gen}\;
    \Return{$(\Ass{X}^{\cnon}, \Ass{Y}^{\cnon})$\;}
  \caption{The coupling procedure $\cnon$}\label{alg-coupling-gen}
\end{algorithm}

The input of the coupling $\cnon$ contains CSP formulas $\Phi^{X} $ and $\Phi^{Y}$, together with projection schemes $\Proj{h}^X$ and $\Proj{h}^Y$ satisfying \Cref{condition-projection-coupling}. 
We also give an index function $\mathrm{ID}: V \to [n]$ such that each variable has a distinct index and the variable $\vst$ has the largest index. The coupling will use this index to pick the variable  in \Cref{line-pick-gen}. 
Compared with the adaptive coupling in \Cref{alg-coupling-v}, the coupling $\cnon$ is non-adaptive, i.e. it does not need to maintain the current volume of each hyperedge. 
Instead, the coupling $\cnon$ is given  projection schemes $\Proj{h}^X$ and $\Proj{h}^Y$ in advance. 
Once the coupling $\cnon$ picks a variable $u$, it assigns the values in $\Sigma'_u$ to variable $u$, where the domain $\Sigma'_u$ is determined by $\Proj{h}^X$ and $\Proj{h}^Y$.
The coupling $\cnon$ will put $u$ into $V_1$ if the coupling on $u$ fails.
After that, the coupling will remove all the hyperedges satisfied by both $X^{\cnon}_u$ and $Y^{\cnon}_u$ in \Cref{line-remove-e-gen}. 
If all variables in a hyperedge $e$ are assigned values  and $e$ is still not satisfied, the coupling $\cnon$ puts $e$ into $V_1$ in \Cref{line-add-set-gen}.
Remark that after the while-loop, $\cnon$ only samples the value for $V_2 \setminus \Vcol$ because $V_1 \subseteq \Vcol$.
\begin{lemma}
\label{lemma-coupling-gen}	
The coupling procedure $\cnon$  satisfies the following properties:
\begin{itemize}
\item the coupling procedure will terminate eventually;
\item the output $\Ass{X}^{\cnon} \in \Sigma'$ follows $\nu_X$ and the output $\Ass{Y}^{\cnon} \in \Sigma'$ follows $\nu_{Y}$;
\item for any variable $u \in V$, if $X^{\cnon}_u \neq Y^{\cnon}_u$ in the final output, then $u \in V_1$.
\end{itemize}
\end{lemma}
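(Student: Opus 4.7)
The proof follows the template used for the adaptive coupling (Lemma 9.3), simplified by the fact that $\cnon$ fixes the projection schemes $\PX,\PY$ in advance rather than adjusting to the current volumes. I address the three claims in order.

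For \emph{termination}, each iteration of the main while-loop inserts exactly one new variable into $\Vcol$ at \Cref{line-pick-gen}, so the loop halts after at most $|V|$ iterations. The concluding sampling step in \Cref{line-extend-gen} executes once, so the entire procedure terminates. For the \emph{marginal law} I would apply the chain rule. \Cref{line-v0-gen} draws $X^{\cnon}_{v_0}$ from the correct marginal $\nu_{v_0,X}$; each subsequent iteration of the while-loop extends $\Ass{X}^{\cnon}$ by drawing $X^{\cnon}_u$ from the first marginal of the optimal coupling, which is $\nu_{u,X}(\cdot\mid \Ass{X}^{\cnon})$; and \Cref{line-extend-gen} fills in $V_2\setminus\Vcol$ according to $\nu_{V_2\setminus\Vcol,X}(\cdot\mid \Ass{X}^{\cnon})$. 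Note that $V_1\subseteq\Vcol$ throughout (preserved because \Cref{line-add-v1-gen} only inserts the just-assigned $u$, and \Cref{line-add-set-gen} fires only for $e\subseteq\Vcol$), so no further sampling is required. Composing these conditionals gives $\Ass{X}^{\cnon}\sim\nu_X$, and by symmetry $\Ass{Y}^{\cnon}\sim\nu_Y$.

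The substantive claim is the \emph{discrepancy property}. Mirroring the proof of Lemma~9.3, I would establish two invariants holding when the while-loop exits: (I) $X^{\cnon}_u = Y^{\cnon}_u$ for every $u\in V_2\cap\Vcol$; (II) every $e\in\+E$ with $e\cap V_1\neq\emptyset$ and $e\cap V_2\neq\emptyset$ satisfies $e\cap V_2\subseteq\Vcol$. Property (I) is immediate from \Cref{line-add-v1-gen}: any variable added to $\Vcol$ on which the coupling fails is placed in $V_1$. Property (II) is forced by the while-loop guard, since otherwise the loop would not have exited. Define $R\triangleq\bigcup_{e\in\+E:\,e\cap V_1\neq\emptyset,\,e\cap V_2\neq\emptyset}(e\cap V_2)$; by (II), $R\subseteq\Vcol\cap V_2$, and by (I), $X^{\cnon}_R=Y^{\cnon}_R$.

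With these in hand, I would argue that the two distributions $\nu_{V_2\setminus\Vcol,X}(\cdot\mid\Ass{X}^{\cnon})$ and $\nu_{V_2\setminus\Vcol,Y}(\cdot\mid\Ass{Y}^{\cnon})$ coupled in \Cref{line-extend-gen} are identical, so the optimal coupling yields equal samples on $V_2\setminus\Vcol$; together with (I) this gives $X^{\cnon}_v=Y^{\cnon}_v$ for every $v\in V_2$, which is precisely the claim. To see the two conditional laws coincide: every constraint in $\Cons{C}$ is either contained in $V_1$, contained in $V_2$, or bridges $V_1$ and $V_2$ only through $R$; conditional on the common values on $R$, the distribution on $V_2\setminus\Vcol$ is therefore independent of the $V_1$-values. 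The two formulas $\Phi^X$ and $\Phi^Y$ differ only at $v_0\in V_1$, so on $V_2$ they impose identical constraints with identical domains; and by \Cref{condition-projection-coupling}, $h^X_u=h^Y_u$ for every $u\neq v_0$, so the projections onto $\Sigma'_{V_2\setminus\Vcol}$ agree as well. The main subtlety lies in confirming that this coincidence of conditional laws survives passage through the projection, which is precisely what the ``$h^X_u=h^Y_u$ outside $v_0$'' clause of \Cref{condition-projection-coupling} is designed to guarantee; all other steps are routine bookkeeping.
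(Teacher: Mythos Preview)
Your termination and marginal-law arguments are correct and match the paper. The gap is in the discrepancy claim, precisely where you flag the ``main subtlety.'' Your invariant (II) is correct, but the conditional-independence step you borrow from the adaptive coupling (\Cref{lemma-coupling-Cv}) does not transfer. In $\cadp$ the partial assignments are actual values in $[q]^V$, so fixing $R$ makes it a genuine Markov separator for the Gibbs measure $\mu_{\Phi^X}$. In $\cnon$ the partial assignments live in $\Sigma'$, so conditioning on $\Ass{X}^{\cnon}_R$ only restricts each $v\in R$ to its projection class $(h^X_v)^{-1}(X^{\cnon}_v)$; the underlying value of $v$ remains random. A surviving bridging constraint then still correlates $V_1$ with these residual values on $R$, which in turn are correlated with $V_2\setminus\Vcol$ through the $V_2$-constraints. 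Concretely, the marginal on $V_2\setminus\Vcol$ carries a weight $g(x_R)=\sum_{x_{V_1}}\one{\Phi_1(x_{V_1})}\one{\Phi_{12}(x_{V_1},x_R)}$ inside the sum over $x_R$, and this $g$ depends on the $V_1$-domains and $V_1$-conditioning, which differ between the $X$- and $Y$-sides. The clause $h^X_u=h^Y_u$ for $u\neq v_0$ does not neutralize this; it only ensures the projection maps on $V_2$ agree, not that the $V_1$-side weight cancels.

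The paper closes this gap not by a conditional-independence argument through $R$, but by proving that no bridging edge survives at all, i.e.\ $R=\emptyset$. This is exactly what \Cref{line-add-set-cond-gen}--\Cref{line-add-set-gen} are for: if $e\in\+E$ had $e\cap V_1\neq\emptyset$ and $e\cap V_2\neq\emptyset$ after the loop, then by your (II) and $V_1\subseteq\Vcol$ one gets $e\subseteq\Vcol$, so the for-loop has already absorbed all of $e$ into $V_1$, contradicting $e\cap V_2\neq\emptyset$. With $R=\emptyset$ the remaining constraints factor over $V_1$ and $V_2$, and the two conditional laws on $V_2\setminus\Vcol$ coincide for the reasons you give. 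You invoke \Cref{line-add-set-gen} only to maintain $V_1\subseteq\Vcol$; its real role is to eliminate the bridging edges that a projected-value separation argument cannot handle.
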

\begin{proof}
After each execution of the while-loop, the size of $\Vcol$ will increase by $1$. The size of $\Vcol$ is at most $n$. Thus, the coupling procedure will terminate eventually.

We prove the second property for $\Ass{X}^{\cnon}$. The result for $\Ass{Y}^{\cnon}$ can be proved in a similar way.
In \Cref{line-v0-gen}, the coupling samples the $X^{\cnon}_{v_0}$ independently from the distribution $\nu_{v_0,X}$.  Given the current configuration $\Ass{X}^{\cnon}$,  the coupling picks an unassigned variable $u$, then draw $X^{\cnon}_{u}$ from the conditional marginal distribution $\nu_{u,X}(\cdot \mid \Ass{X}^{\cnon})$ in \Cref{line-optimal-coupling-gen}. 
Finally, the coupling samples $X^{\cnon}_{V \setminus V_2}$ from the conditional distribution. 
Note that $V_1 \subseteq \Vcol$. When the coupling terminates, all variables $v \in V$ gets a value $X^{\cnon}_v \in\Sigma'_v$.
By the chain rule, the output $\Ass{X}^{\cnon} \in \Sigma'$ follows the law  $\nu_X$.

To prove the last property, we show that after the while loop, it holds that
\begin{itemize}
\item $X^{\cnon}_{V_2 \cap \Vcol} = Y^{\cnon}_{V_2 \cap \Vcol}$;
\item 	$\nu_{V_2 \setminus \Vcol,X}(\cdot\mid \Ass{X}^{\cnon})$ and $\nu_{V_2 \setminus \Vcol,Y}(\cdot\mid \Ass{Y}^{\cnon})$ are identical distributions, thus all variables in $V_2 \setminus \Vcol$ can be coupled perfectly.
\end{itemize}
Combining these two properties proves the last property in the lemma.
The first property is easy to verify, because if $X^{\cnon}_u \neq Y^{\cnon}_u$, then $u$ must be added into $V_1$ in \Cref{line-add-v1-gen}.
To prove the second property, we claim that, after the while-loop, there is no hyperedge $e \in \+E$ such that $e \cap V_1 \neq \emptyset$ and $e \cap V_2 \neq \emptyset$. Suppose such hyperedge $e$ exists. There are two possibilities for such hyperedge.
\begin{itemize}
\item 	$(e \cap V_2) \setminus \Vcol \neq \emptyset$: In this case, the while-loop cannot terminate. Contradiction.
\item $(e \cap V_2) \setminus \Vcol = \emptyset$: Note that it always holds that $V_1 \subseteq \Vcol$. In this case, it holds that $e \subseteq \Vcol$. Note that $e \cap V_1 \neq  \emptyset$ and $e \cap V_2 \neq \emptyset$. Hence, after the \Cref{line-v0-gen}, there is no such hyperedge $e$. If such hyperedge $e$ exists, it must be produced by the while-loop.  Since $e \subseteq \Vcol$, such hyperedge $e$ will either be removed in \Cref{line-remove-e-gen}, or added into $V_1$ in \Cref{line-add-set-gen} (after which $e \cap V_2=\emptyset$). This implies that such hyperedge does not exist when the while-loop terminates. Contradiction.
\end{itemize}
Hence, after the while-loop, all variables are divided into two parts $V_1$ and $V_2$. Besides, all the constraints $c \in \Cons{C}$ such that $\vbl{c} \cap V_1 \neq \emptyset$ and $\vbl{c} \cap V_2 \neq \emptyset$ are satisfied by both $\Ass{X}^{\cnon}$ and $\Ass{Y}^{\cnon}$. This implies, conditional on $\Ass{X}^{\cnon}$, the variables in $V_2$ is independent with the variables in $V_1$, and the same result holds for $\Ass{Y}^{\cnon}$.  Note that two instances $\Phi^{X}$ and $\Phi^{Y}$ differ only at variable $v_0$, two projection schemes $\Proj{h}^X$ and $\Proj{h}^Y$ also differ only at $v_0$, and $v_0 \in V_1$. Since $X^{\cnon}_{V_2 \cap \Vcol} = Y^{\cnon}_{V_2 \cap \Vcol}$, $\nu_{V_2 \setminus \Vcol,X}(\cdot\mid \Ass{X}^{\cnon})=\nu_{V_2 \setminus \Vcol,X}(\cdot\mid X^{\cnon}_{V_2 \cap \Vcol})$ and $\nu_{V_2 \setminus \Vcol,Y}(\cdot\mid \Ass{Y}^{\cnon}) = \nu_{V_2 \setminus \Vcol,Y}(\cdot\mid Y^{\cnon}_{V_2 \cap \Vcol})$ are identical distributions.
\end{proof}

For each hyperedge $e \in \+E$, we say $e$ is failed in coupling $\cnon$ if the following condition holds.
\begin{definition}
\label{defintiion-fails-cnon}
A  hyperedge $e \in \+E$ fails in the coupling $\cnon$ if one of the following two events occur.
\begin{itemize}
\item \textbf{Type-I failure}: there is a variable $u \in e \setminus \{v_0\}$ such that the coupling picks $e$ and $u$ in \Cref{line-pick-gen}, and $X^{\cnon}_u \neq Y^{\cnon}_u$ after the coupling.
\item \textbf{Type-II failure}: consider the time when the while-loop terminates. It holds that $e \subseteq \Vcol$ and the constraint represented by $e$ is not satisfied by  both $\Ass{X}^{\cnon}$ and $\Ass{Y}^{\cnon}$.
\end{itemize}
\end{definition}
Let $\Lin(H)$ denote the line graph of $H$, where each vertex in $\Lin(H)$ is a hyperedge in $H$, two hyperedges $e,e' \in \+E$ are connected if $e \cap e' \neq \emptyset$. 
Let $\Lin^k(H)$ denote the $k$-th power graph of $\Lin(H)$, two hyperedges $e$ and $e'$ are adjacent in $\Lin^k(H)$ if their distance in $\Lin(H)$ is no more than $k$. For each variable, we use $N(v)$ to denote the set of hyperedges incident to $v$:
\begin{align*}
N(v) &\triangleq \{e \in \+E \mid v \in e\}.	
\end{align*}
For any $k \geq 1$, define
\begin{align}
\label{eq-def-Nkv}
N^k(v) & \triangleq \left\{e \in \+E \mid \exists e' \in N(v) \text{ s.t. } \dist_{\Lin(H)}(e,e') \leq k - 1 \right\},
\end{align}
where $\dist_{\Lin(H)}(e,e')$ denotes the length of the shortest path between $e$ and $e'$ in graph $\Lin(H)$. Remark that $N(v) = N^1(v)$ by definition. 

When the coupling $\cnon$ terminates, each variable $v \in V_1$ satisfies the following property.
\begin{lemma}
\label{lemma-path-gen}
For any $v \in V_1 \setminus \{v_0\}$, there exists a path $e_1,e_2,\ldots,e_\ell$ in $\Lin^2(H)$ such that
\begin{itemize}
\item $e_1 \in N^2(v_0)$ and $v \in e_\ell$;
\item for all $1\leq i \leq \ell$, the hyperedge $e_i$ fails in the coupling.
\end{itemize}
\end{lemma}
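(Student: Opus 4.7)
The plan is to prove the lemma by induction on the iteration of the while-loop in $\cnon$ at which $v$ is added to $V_1$. For each such $v\in V_1\setminus\{v_0\}$, I will first identify the failed hyperedge $e$ that ``caused'' $v$ to join $V_1$: if $v$ was added in \Cref{line-add-v1-gen} at iteration $t$ as $v=u_t$, then $e$ is the hyperedge $e_t$ picked in \Cref{line-pick-gen}, and by construction $X^{\cnon}_v\neq Y^{\cnon}_v$, so $e$ is a Type-I failed hyperedge containing $v$; if $v$ was added in \Cref{line-add-set-gen} via some hyperedge $e$, then $e\in\+E$ (it was never removed in \Cref{line-remove-e-gen}) and $e\subseteq\Vcol$ at the end of iteration $t$, so $e$ is a Type-II failure containing $v$. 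The goal is then to link $e$ back to $N^2(v_0)$ through a path of failed hyperedges in $\Lin^2(H)$ that ends at $e$.

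In the Type-I case, the while-loop guard in \Cref{line-pick-gen} forces $e\cap V_1\neq\emptyset$ at the start of iteration $t$, so I can pick some $v'\in e\cap V_1$ that was added strictly earlier than $v$. If $v'=v_0$, then $v_0\in e$ gives $e\in N(v_0)\subseteq N^2(v_0)$ and the one-hyperedge path $e_1=e$ suffices. Otherwise, the induction hypothesis yields a path $e_1,\ldots,e_\ell$ of failed hyperedges in $\Lin^2(H)$ with $v'\in e_\ell$ and $e_1\in N^2(v_0)$; since $e$ and $e_\ell$ both contain $v'$, they are adjacent in $\Lin(H)$ and hence in $\Lin^2(H)$, so appending $e$ produces the required path ending at $e$ with $v\in e$.

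For the Type-II case, at iteration $t$ the variable $u_t$ picked in \Cref{line-pick-gen} must lie in $e$ (otherwise $e\subseteq\Vcol$ would have held before iteration $t$), and the hyperedge $e'$ chosen in \Cref{line-pick-gen} satisfies $u_t\in e'$ and $e'\cap V_1\neq\emptyset$ for the $V_1$ existing before iteration $t$. Pick $v'\in e'\cap V_1$. Because $u_t\in e\cap e'$, the hyperedges $e$ and $e'$ are adjacent in $\Lin(H)$. If $v'=v_0$, then $v_0\in e'$ and $\dist_{\Lin(H)}(e,N(v_0))\leq 1$, giving $e\in N^2(v_0)$ so that $e_1=e$ is a valid path. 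Otherwise, the induction hypothesis provides a path $e_1,\ldots,e_\ell$ of failed hyperedges in $\Lin^2(H)$ with $v'\in e_\ell$; then $e_\ell$ shares $v'$ with $e'$ and $e'$ shares $u_t$ with $e$, so $\dist_{\Lin(H)}(e_\ell,e)\leq 2$, and appending $e$ yields the desired path.

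The main subtlety will be carefully justifying the Type-II case: I must verify that $e$ is still in $\+E$ when \Cref{line-add-set-gen} executes (so that $e$ genuinely satisfies the Type-II definition in \Cref{defintiion-fails-cnon}), which follows from the order of \Cref{line-remove-e-gen} preceding \Cref{line-add-set-cond-gen}, and that the ``previous'' variable $v'$ was indeed added in a strictly earlier iteration so that the induction is well-founded (this holds because the while guard references $V_1$ as it stands at the beginning of iteration $t$, before any updates in this iteration). Once these order-of-operations details are nailed down, the path-construction argument itself is a routine concatenation, two hops at a time, in the line graph.
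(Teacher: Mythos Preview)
Your proposal is correct and follows essentially the same inductive argument as the paper's proof. The only minor variation is in the Type-II case: you bridge $e$ to an earlier $V_1$-vertex via the current iteration's picked vertex $u_t\in e\cap e'$, whereas the paper bridges via $v$ itself (observing that $v\in\Vcol$ means $v$ was picked in some earlier iteration through some hyperedge $e'$); both routes yield $\dist_{\Lin(H)}(e_\ell,e)\leq 2$ and the induction goes through identically.
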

\begin{proof}
Let $V_1 = \{v_0, v_1,v_2,\ldots,v_m\}$ denote the variables in $V_1$, where $v_i$ is the $i$-th variables added into $V_1$. Remark that if a set of variables are added into $V_1$ at the same time (\Cref{line-add-set-gen}), we break tie arbitrarily. We prove the lemma by induction on index $i$.

The base case is $v_0$, the lemma holds for $v_0$ trivially. Suppose the lemma holds for $v_0,v_1,\ldots,v_{k-1}$. We prove the lemma for variable $v_{k}$. 
The variable $v_k$ is added into $V_1$ either in \Cref{line-add-v1-gen} or \Cref{line-add-set-gen}. 
\begin{itemize}
\item Suppose $v_k$ is added into $V_1$ in \Cref{line-add-v1-gen}. 
Variable $v_k$ must be picked in \Cref{line-pick-gen}.
Consider the hyperedge $e$ picked in \Cref{line-pick-gen}. The hyperedge $e$ fails in type-I because $v_k \in e$ and $X^{\cnon}_{v_k} \neq Y^{\cnon}_{v_k}$.
Besides, it holds that $v_k \in e$ and $v_j \in e$ for some $j < k$. If $j = 0$, the lemma holds trivially. If $0< j <k$, by induction hypothesis, there is a path $e_1,e_2,\ldots,e_{t}$ for $v_j$. Since $v_j \in e_t$ and $v_j \in e$, the lemma holds for $v_k$ with the path $e_1,e_2,\ldots,e_{t},e$.
\item Suppose $v_k$ is added into $V_1$ in \Cref{line-add-set-gen}. 
Let $e$ denote the hyperedge in \Cref{line-add-set-gen}. It holds that that $v_k \in e$. 
By \Cref{line-add-set-cond-gen}, $e \subseteq \Vcol$.
Since $e$ is not deleted in \Cref{line-remove-e-v0} or \Cref{line-remove-e-gen}, the constraint represented by $e$ is not satisfied by  both $\Ass{X}^{\cnon}$ and $\Ass{Y}^{\cnon}$.
This property holds up to the end of the coupling. Thus $e$ fails in type-II.
Since $e \subseteq \Vcol$ and $v_k \neq v_0$, the while-loop must have picked a hyperedge $e'$ and $v_k \in e'$ in \Cref{line-pick-gen}. 
Thus, $e'$ contains a variable $v_j$ for $j < k$ ($e'$ may not fail). 
If $j = 0$, then $e \in N^2(v_0)$, and the lemma holds for $v_k$ with single hyperedge $e$.
If $0< j < k$, by induction hypothesis, there is a path $e_1,e_2,\ldots,e_{t}$ for $v_j$. 
Since $e_t \cap e' \neq \emptyset$ and $e' \cap e \neq \emptyset$, $e$ and $e_t$ are adjacent in $\Lin^2(H)$. the lemma holds for $v_k$ with the path $e_1,e_2,\ldots,e_{t},e$.
\end{itemize}
Combining two cases proves the lemma.
\end{proof}
If the $X^{\cnon}_{\vst} \neq Y^{\cnon}_{\vst}$, we have the following result.
\begin{lemma}
\label{lemma-path-v-gen}
If $X^{\cnon}_{\vst} \neq Y^{\cnon}_{\vst}$, then there exists a path $e_1,e_2,\ldots,e_\ell$ in $\Lin^2(H)$ such that
\begin{itemize}
\item $e_1 \in N^2(v_0)$ and $\vst \in e_{\ell}$;
\item for all $1\leq i \leq \ell - 1$, the hyperedge $e_i$ fails in the coupling;
\item the hyperedge $e_{\ell}$ is not satisfied by both $X^{\cnon}_{S}$ and $Y^{\cnon}_{S}$, where $S = e_\ell \setminus \{\vst\}$.
\end{itemize}
\end{lemma}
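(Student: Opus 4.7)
The plan is to adapt the argument of Lemma~\ref{lemma-path-gen} with one additional case analysis tailored to $\vst$. First, by the third conclusion of Lemma~\ref{lemma-coupling-gen}, the hypothesis $X^{\cnon}_{\vst}\neq Y^{\cnon}_{\vst}$ forces $\vst\in V_1$ at the termination of the coupling, and since $V_1\subseteq\Vcol$ always holds, $\vst$ was actually processed inside the main while-loop (rather than merely assigned in Line~\ref{line-extend-gen}).

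Next, I would determine which step placed $\vst$ into $V_1$. If it were Line~\ref{line-add-set-gen}, then $\vst$ would belong to a Type-II failed edge $e\subseteq\Vcol$; letting $\sigma$ denote the unique forbidden configuration of the atomic constraint represented by $e$, the definition of Type-II failure would force $X^{\cnon}_{\vst}=h^X_{\vst}(\sigma_{\vst})$ and $Y^{\cnon}_{\vst}=h^Y_{\vst}(\sigma_{\vst})$. Condition~\ref{condition-projection-coupling} stipulates $h^X_{\vst}=h^Y_{\vst}=h_{\vst}$, so this would contradict $X^{\cnon}_{\vst}\neq Y^{\cnon}_{\vst}$. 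Hence $\vst$ was placed into $V_1$ through Line~\ref{line-add-v1-gen}, meaning it was picked together with some edge $e^*$ for which Line~\ref{line-optimal-coupling-gen} produced a disagreement at $\vst$. Setting $e_\ell:=e^*$ and applying Lemma~\ref{lemma-path-gen} to $\vst\in V_1\setminus\{v_0\}$ yields a path $e_1,\ldots,e_{\ell-1},e_\ell$ of failed edges in $\Lin^2(H)$ with $e_1\in N^2(v_0)$ and $\vst\in e_\ell$, since the inductive construction in the proof of Lemma~\ref{lemma-path-gen} terminates precisely at the edge processed with $\vst$.

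The third bullet is the main new ingredient. Because $\mathrm{ID}(\vst)=n$ is maximal, when Line~\ref{line-pick-gen} chooses $\vst$ from $e^*$, every other variable of $e^*\cap V_2$ must already lie in $\Vcol$; combined with $V_1\subseteq\Vcol$, this gives $S:=e^*\setminus\{\vst\}\subseteq\Vcol$. Moreover, $e^*$ was never removed in Line~\ref{line-remove-e-v0} or Line~\ref{line-remove-e-gen}, so at every processing of a $u\in S$ the removal condition failed, i.e., it was not the case that both $X^{\cnon}_u$ and $Y^{\cnon}_u$ satisfy the constraint at $u$. I would combine this per-step survival with the atomicity of the constraint, together with the fact that $\Proj{h}^X$ and $\Proj{h}^Y$ coincide on $V\setminus\{v_0\}$, to deduce the desired joint non-satisfaction of $e_\ell$ by $\Ass{X}^{\cnon}_S$ and $\Ass{Y}^{\cnon}_S$. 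The main obstacle I anticipate is precisely this last translation: care will be needed to distinguish variables of $S$ processed before any disagreement appeared along $e^*$ from those processed afterwards, and to match the per-step removal criterion to the global non-satisfaction statement required by the lemma.
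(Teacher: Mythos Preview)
Your overall strategy matches the paper's: show $\vst$ must enter $V_1$ through Line~\ref{line-add-v1-gen}, take $e_\ell=e^*$ to be the edge picked at that moment, use the maximality of $\mathrm{ID}(\vst)$ to conclude $S=e^*\setminus\{\vst\}\subseteq\Vcol$, and build the rest of the path from Lemma~\ref{lemma-path-gen}. Two corrections are needed.

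Your argument ruling out Line~\ref{line-add-set-gen} is not valid. In this paper, ``not satisfied by both $\Ass{X}^{\cnon}$ and $\Ass{Y}^{\cnon}$'' (in Definition~\ref{defintiion-fails-cnon} and in the lemma's third bullet) is the per-variable survival condition inherited from Lines~\ref{line-remove-e-v0} and~\ref{line-remove-e-gen}: for each $u\in e$, it is \emph{not} the case that both $X^{\cnon}_u$ and $Y^{\cnon}_u$ satisfy the constraint; equivalently, for each $u$, at least one of the two takes the forbidden projected value. This does not force $X^{\cnon}_{\vst}=h_{\vst}(\sigma_{\vst})$ \emph{and} $Y^{\cnon}_{\vst}=h_{\vst}(\sigma_{\vst})$ simultaneously, so your contradiction does not follow. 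The paper's fix is direct: Line~\ref{line-add-set-gen} requires $\vst\in e\subseteq\Vcol$, so $\vst$ was processed earlier; had $X^{\cnon}_{\vst}\neq Y^{\cnon}_{\vst}$ then, Line~\ref{line-add-v1-gen} would already have placed $\vst$ into $V_1$.

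Conversely, the ``main obstacle'' you flag for the third bullet is a non-issue: under the same interpretation, the third bullet is precisely the statement that $e^*$ was never removed while the variables of $S$ were processed, which you already have. No translation between a per-step criterion and a global non-satisfaction statement is required. For the path itself, the paper avoids reopening the proof of Lemma~\ref{lemma-path-gen}: it picks $v_j\in e^*\cap V_1$ with $j<k$ (guaranteed since $e^*$ was selected in Line~\ref{line-pick-gen}), applies Lemma~\ref{lemma-path-gen} to $v_j$ to obtain failed edges $e_1,\ldots,e_{\ell-1}$ with $v_j\in e_{\ell-1}$, and appends $e_\ell:=e^*$ (adjacent via $v_j$), handling the case $j=0$ with the single-edge path $e^*\in N^2(v_0)$.
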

\begin{proof}
If 	$X^{\cnon}_{\vst} \neq Y^{\cnon}_{\vst}$, by \Cref{lemma-coupling-gen}, it must hold that $\vst \in V_1$ and $\vst$ is added into $V_1$ in \Cref{line-add-v1-gen}, because $\vst \neq v_0$, and if  $\vst$ is added into $V_1$ in \Cref{line-add-set-gen}, then $X^{\cnon}_{\vst} = Y^{\cnon}_{\vst}$.
Consider the moment when $\vst$ is added into $V_1$. Suppose the while-loop picks the hyperedge $e_{\star}$. It must hold that $\vst \in e_{\star}$ and the while loop picks $\vst$ to sample its values in $\Ass{X}^{\cnon}$ and $\Ass{Y}^{\cnon}$. In \Cref{line-pick-gen}, the algorithm always picks the variable in $e_{\star}$ with lowest ID and the ID of $\vst$ is the $n$. This implies all $(e_{\star} \cap V_2) \setminus \Vcol = \{\vst\}$. Note that $V_1 \subseteq \Vcol$. Thus, all variables in $e_{\star} \setminus \{\vst\}$ get the value and $e_{\star}$ is not satisfied in both $X^{\cnon}_S$ and $Y^{\cnon}_S$, where $S = e_{\star} \setminus \{\vst\}$. Otherwise, $e_{\star}$ is removed in \Cref{line-remove-e-v0} or \Cref{line-remove-e-gen}, the while-loop cannot pick $e_{\star}$. 

Let $V_1 = \{v_0, v_1,v_2,\ldots,v_m\}$ denote the variables in $V_1$, where $v_i$ is the $i$-th variables added into $V_1$. Remark that if a set of variables are added into $V_1$ at the same time (\Cref{line-add-set-gen}), we break tie arbitrarily. Suppose $\vst = v_k$. Since $e_{\star}$ is picked in \Cref{line-pick-gen}, it must hold that $v_j \in e_{\star}$ for some $j < k$. If $j = 0$, the lemma holds with single hyperedge $e_{\star}$. If $0 < j < k$, there exists a path $e_1,e_2,\ldots,e_{\ell - 1}$ in $\Lin^2(H)$ satisfying the condition in \Cref{lemma-path-gen} for $v_j$. Since $v_j \in e_{\ell-1}$ and $v_j \in e_{\star}$, the lemma holds with the path $e_1,e_2,\ldots,e_{\ell - 1}, e_{\star}$. 
\end{proof}
We modify the path in \Cref{lemma-path-v-gen} to the following sequence of hyperedges, which will be used in the analysis.
\begin{corollary}
\label{corollary-path-v-gen}
If $X^{\cnon}_{\vst} \neq Y^{\cnon}_{\vst}$, then there exists a path $e_1,e_2,\ldots,e_\ell$ in $\Lin^3(H)$ such that
\begin{itemize}
\item $e_1 \in N^3 (v_0)$, $\vst \in e_{\ell}$, and $e_1,e_2,\ldots,e_\ell$  are mutually disjoint.
\item for all $1\leq i \leq \ell - 1$, the hyperedge $e_i$ fails in the coupling;
\item the hyperedge $e_{\ell}$ is not satisfied by both $X^{\cnon}_{S}$ and $Y^{\cnon}_{S}$, where $S = e_\ell \setminus \{\vst\}$.
\end{itemize}
\end{corollary}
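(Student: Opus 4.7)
\medskip

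The plan is to upgrade the path supplied by Lemma~\ref{lemma-path-v-gen} to a subsequence of mutually disjoint hyperedges, trading one extra step in the line-graph distance (going from $\Lin^2(H)$ to $\Lin^3(H)$) for the new disjointness condition. Concretely, let $e_1',\ldots,e_\ell'$ be the path from Lemma~\ref{lemma-path-v-gen}: consecutive pairs are at $\Lin(H)$-distance $\le 2$, $e_1'\in N^2(v_0)$, $\vst\in e_\ell'$, each $e_i'$ for $i<\ell$ fails in $\cnon$, and $e_\ell'$ is not satisfied by either $X^{\cnon}_S$ or $Y^{\cnon}_S$ for $S=e_\ell'\setminus\{\vst\}$. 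I aim to extract a subsequence $e_{i_1}',\ldots,e_{i_s}'$ which is mutually disjoint, starts close enough to $v_0$, still ends at $e_\ell'$, and whose consecutive pairs are at $\Lin(H)$-distance $\le 3$.

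The procedure I would run is a greedy left-to-right extraction anchored at the right end. Force $e_{i_s}'=e_\ell'$ into the output. Starting from $e_1'$, at each step I either keep the current hyperedge (if it is disjoint from all previously kept hyperedges and from $e_\ell'$) or drop it; iterate. To guarantee the consecutive-distance bound, combine this with a shortcut rule: whenever a kept $e_{i_t}'$ and a strictly-later hyperedge $e_j'$ in the original path intersect, treat every $e_k'$ with $i_t<k\le j$ as ``absorbed'' and skip them all at once. Any subsequently skipped hyperedge must intersect some already-kept $f$, so the newly created adjacency between $f_t$ and $f_{t+1}$ is bridged by an intermediate skipped hyperedge $e_{i_{t+1}-1}'$ that is at distance $\le 1$ from $f_t$ (by the skip rule) and at distance $\le 2$ from $f_{t+1}$ (in the original $\Lin^2(H)$ path), hence $\dist_{\Lin(H)}(f_t,f_{t+1})\le 3$.

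Endpoint preservation and the failure properties should then fall out cleanly. The first selected hyperedge is either $e_1'$ itself (in which case $e_1'\in N^2(v_0)\subseteq N^3(v_0)$) or an $e_{i_1}'$ with $i_1>1$, in which case $e_{i_1}'$ is at $\Lin(H)$-distance $\le 2$ from $e_1'$, giving $e_{i_1}'\in N^3(v_0)$ through the triangle inequality~\eqref{eq-triangle-2}. The last element is $e_\ell'$ by construction, so $\vst\in e_s$ and the non-satisfaction condition at $\vst$ survives. Every intermediate $e_{i_t}'$ with $t<s$ is one of the $e_i'$ with $i<\ell$, so it fails in $\cnon$ by Lemma~\ref{lemma-path-v-gen}.

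The main obstacle is the tension between three requirements: mutual disjointness (not merely pairwise distance $\ge 2$ between consecutive ones), simultaneous preservation of both endpoints $e_1'$ and $e_\ell'$, and the uniform distance bound $\le 3$ on consecutive survivors. A naive greedy may yield a ``tree of kept hyperedges'' rather than a path, because each new survivor is guaranteed close to \emph{some} earlier survivor but not necessarily the immediate predecessor. The extraction must therefore be designed so that the bridging skipped hyperedge $e_{i_{t+1}-1}'$ always witnesses closeness to $f_t$ specifically (e.g., by stopping the scan at the first index disjoint from $f_t$ rather than from all previous survivors), and then a separate argument handles the consecutive-intersection case by dropping one of the two offending hyperedges while inflating the distance from $2$ to at most $3$. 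This careful bookkeeping of which hyperedges to drop vs.~shortcut is where the proof will spend most of its effort.
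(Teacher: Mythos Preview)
Your high-level plan---extract a vertex-disjoint subsequence from the path supplied by Lemma~\ref{lemma-path-v-gen}, trading one extra step of $\Lin(H)$-distance for disjointness---is exactly what the paper does, and your distance calculation (bridge via one shared vertex plus one $\Lin^2$-step, total $\le 3$) is the right one.

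The gap is in the extraction itself. You correctly identify the tension: the rule ``keep only if disjoint from all previous survivors'' secures mutual disjointness but loses control of which survivor the bridging hyperedge is close to, while the rule ``stop the scan at the first index disjoint from $f_t$'' secures consecutive closeness to $f_t$ but only consecutive disjointness---$f_3$ may well intersect $f_1$. Your proposed patch (``drop one of the two offending hyperedges'') only addresses consecutive intersections and does not cover this non-consecutive case; iterating the patch can cascade and blow up the distance bound. So the left-to-right scheme, as described, does not close.

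The paper sidesteps the tension by processing \emph{right to left}. Maintain a stack; repeatedly push the current last hyperedge $e'_i$ of the remaining path, then find the \emph{minimum} index $j<i$ with $e'_i\cap e'_j\neq\emptyset$ and delete all of $e'_j,\ldots,e'_i$ from the path (if no such $j$, just delete $e'_i$). Reading the stack top-to-bottom gives $e_1,\ldots,e_\ell$. Mutual disjointness is now immediate: if two pushed hyperedges $e'_a,e'_b$ with $b<a$ intersected, then when $e'_a$ was pushed the minimum such $j$ would be $\le b$, so $e'_b$ would already have been deleted. The consecutive-distance bound is exactly your computation: the next pushed hyperedge is $e'_{j-1}$, which is at $\Lin(H)$-distance $\le 2$ from $e'_j$ (consecutive in the $\Lin^2$ path) and $e'_j$ intersects $e'_a$, giving $\le 3$. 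Since $e'_m$ is pushed first we have $e_\ell=e'_m\ni\vst$, and the last pushed element is either $e'_1$ or some $e'_k$ that intersects $e'_1$, so $e_1\in N^3(v_0)$ as you argued. The failure/non-satisfaction properties are inherited because the output is a subsequence.
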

\begin{proof}
Let $e'_1,e'_2,\ldots,e'_{m}$ denote the path in \Cref{lemma-path-v-gen}. We first show that how to construct the path $e_1,e_2,\ldots,e_{\ell}$ in $\Lin^3(H)$, then we show that such path satisfies the properties in the corollary.

Let $\+S$ be an empty stack. 
Let $P$ denote the sequence $(e'_1,e'_2,\ldots,e'_m)$.
We pick the last hyperedge in the path $P$, denote this hyperedge as $e'_i$.
We  push $e'_i$ into the stack $\+S$.
We search for the minimum index $j$ such that $j < i$ and $e'_i \cap e'_j \neq \emptyset$. Here are two cases depending on whether such index $j$ exists.
\begin{itemize}
\item If such index $j$ does not exist,  remove $e'_i$ from the path $P$.
\item If such index $j$ exists, remove all $e'_k$ for $j\leq k \leq i$ from the path $P$.
\end{itemize}
Repeat the above procedure until $P$ becomes an empty sequence.
Let $e_1,e_2,\ldots,e_{\ell}$ be the elements in stack $\+S$ from top to bottom.	

It is easy to verify $e_\ell =e'_m$. By \Cref{lemma-path-v-gen}, $\vst \in e_\ell$ and $e_\ell$ satisfies the last property in the corollary.
It is also easy to see all $e_1,e_2,\ldots,e_{\ell}$ are mutually disjoint.
By \Cref{lemma-path-v-gen}, the hyperedge $e_i$ fails in the coupling for all $1 \leq i \leq \ell - 1$.
We only need to prove the following two properties 
\begin{itemize}
\item $e_1 \in N^3(v_0)$;
\item $e_1,e_2,\ldots,e_\ell$ forms a path in $\Lin^3(H)$.	
\end{itemize}

We first prove $e_1 \in N^3(v_0)$. If $e_1 = e'_1$, then the property holds trivially.
Suppose $e_1 = e'_k$ for some $k > 1$. When the procedure adds $e'_k$ into the stack, the hyperedge $e'_1$ must be removed. This implies $e'_k \cap e'_1 \neq \emptyset$. By \Cref{lemma-path-v-gen}, $e'_1 \in N^2(v_0)$. It holds that $e_1 = e'_k \in N^3(v_0)$.

Next, we prove that  $e_1,e_2,\ldots,e_\ell$ forms a path in $\Lin^3(H)$. Consider two adjacent hyperedges $e_{i-1}$ and $e_{i}$. Suppose $e_i = e'_j$ and $e_{i-1} = e'_{k}$. If $j = k + 1$, since $e'_j$ and $e'_k$ are adjacent in $\Lin^2(H)$, $e_i$ and $e_{i-1}$ are adjacent in $\Lin^3(H)$. Suppose $j > k + 1$. In this case, $e'_{k+1}$ is removed and $e'_k$ is not removed, thus $e'_j \cap e'_{k+1} \neq \emptyset$. Since $e'_k$ and $e'_{k+1}$ are adjacent in $\Lin^2(H)$, $e'_j$ and $e'_k$ are adjacent in $\Lin^3(H)$.
\end{proof}

Fix a path $e_1,e_2,\ldots,e_{\ell}$ in $\Lin^3(H)$ such that it satisfies the first property except $\vst \in e_{\ell}$ in \Cref{corollary-path-v-gen}, i.e. $e_1 \in N^3(v_0)$, and $e_1,e_2,\ldots,e_{\ell}$  are mutually disjoint. 
We call such path a \emph{percolation path}~(PP). 
We say a percolation path $e_1,e_2,\ldots,e_{\ell}$ is a percolation path for $\vst$ if $\vst \in e_\ell$.
\begin{definition}
\label{definition-bad-edge-gen}
Fix a percolation path $e_1,e_2,\ldots,e_\ell$.
For each $1 \leq i \leq \ell$, a hyperedge $e_i$ is \emph{bad} if
\begin{itemize}
\item for $1\leq i \leq \ell - 1$: the hyperedge $e_i$ fails in the coupling $\cnon$ (\Cref{defintiion-fails-cnon});
\item for $i = \ell$: the hyperedge $e_{\ell}$ is not satisfied by both $X^{\cnon}_{S}$ and $Y^{\cnon}_{S}$, where $S = e_\ell \setminus \{\vst\}$; and $\vst$ is assigned  different values in $\Ass{X}^{\cnon}$ and $\Ass{Y}^{\cnon}$, i.e. $X^{\cnon}_{\vst} \neq Y^{\cnon}_{\vst}$.	
\end{itemize}
\end{definition}
By \Cref{corollary-path-v-gen}, if $X^{\cnon}_{\vst} \neq Y^{\cnon}_{\vst}$ in coupling $\cnon$, then there is a percolation path for $\vst$: $e_1,e_2,\ldots,e_{\ell}$ such that $e_i$ is bad for all $1 \leq i \leq \ell$.
We give the following key lemma in this proof.
\begin{lemma}
\label{lemma-PP}
Suppose the original input CSP formula $\Phi = (V,\Dom{Q},\Cons{C})$ of \Cref{alg-mcmc} satisfies
\begin{align}
\label{eq-cond-coupling-gen}
\log \frac{1}{p} \geq \frac{50}{\beta}\log\tp{\frac{2000 D^{4}}{\beta}}.		
\end{align}
Fix a percolation path (PP) $e_1,e_2,\ldots,e_{\ell}$ for $\vst$ in $\Lin^3(H)$	. It holds that
\begin{align*}
\Pr[\cnon]{\forall 1 \leq i \leq \ell, e_i \text{ is bad}}	 \leq  \tp{\frac{1}{4 D^3}}^{\ell} \frac{\beta}{50}\tp{\frac{1}{2}}^{\frac{\beta \abs{e_\ell}}{50}},
\end{align*}
which implies
\begin{align*}
\Pr[\cnon]{X^{\cnon}_{\vst} \neq Y^{\cnon}_{\vst} } \leq \sum_{e_1,e_2,\ldots e_{\ell} \text{ is a PP for }\vst}\tp{\frac{1}{4 D^3}}^{\ell} \frac{\beta}{50}\tp{\frac{1}{2}}^{\frac{\beta \abs{e_\ell}}{50}}.	
\end{align*}
\end{lemma}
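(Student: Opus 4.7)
The plan is to expose the bad edges $e_1,\dots,e_\ell$ in order along the percolation path, bound the probability that each becomes bad conditioning on the history, and multiply the bounds. Since $e_1,\dots,e_\ell$ are mutually \emph{disjoint}, every variable of $e_i$ is fresh when we process $e_i$, and by the structure of $\cnon$ the distributions along which the coupling samples $(X^{\cnon}_u,Y^{\cnon}_u)$ for $u\in e_i$ are conditional marginals of $\nu_X$ and $\nu_Y$ given some partial configuration $\tau$ on $V_{<i}\cup\Vcol_{<i}$ that has already been sampled. For any such conditioning, the formulas $\Phi^X|_\tau$ and $\Phi^Y|_\tau$ differ only at $v_0$, and on each remaining constraint the entropy lower bound \eqref{eq-condition-upper-bound} of \Cref{condition-projection-coupling} gives violation probability at most $p^{\beta/10}$, so by the LLL hypothesis \eqref{eq-cond-coupling-gen} the Lov\'asz local lemma (\Cref{theorem-LLL}) is applicable throughout the coupling, with slack to absorb the $\Gamma$-factor $(1-x)^{-D}\le\exp(O(1)/D)$.

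Next I bound the ``$e_i$ is bad'' event. Type-I failure at $e_i$ means some $u\in e_i\setminus\{v_0\}$ has $X^{\cnon}_u\neq Y^{\cnon}_u$; in the optimal coupling of \Cref{line-optimal-coupling-gen} this happens with probability $\DTV{\nu_{u,X}(\cdot\mid\cdot)}{\nu_{u,Y}(\cdot\mid\cdot)}$, and the standard local-lemma bound from \Cref{lemma-uniform-general} applied to $\Phi^X|_\tau$ vs.\ $\Phi^Y|_\tau$ (which differ only at $v_0$, at distance at least $2$ from $u$ once the earlier edges are processed, via a path through $e_1,\dots,e_{i-1}$ in $\Lin^3(H)$) yields a bound of the form $p^{\beta/10}$ per such edge. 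Type-II failure at $e_i$ means all variables of $e_i$ have been assigned yet the atomic constraint on $e_i$ is violated by both $\Ass{X}^{\cnon}_{e_i}$ and $\Ass{Y}^{\cnon}_{e_i}$; the projected marginal over $\Sigma'_{e_i}$ hits the unique bad projected pattern with probability at most, by LLL on the round-down instance of $\Phi^X$ and \eqref{eq-condition-lower-bound},
\[
  \prod_{v\in e_i}\frac{1}{q^X_v}\Bigl\lceil \frac{q^X_v}{s^X_v}\Bigr\rceil \cdot\mathrm{e}^{O(1)/D}\;\le\; p^{\beta/10}\cdot\mathrm{e}^{O(1)/D}.
\]
Combining the two failure types and using \eqref{eq-cond-coupling-gen} in the form $p^{\beta/50}\le 1/(8D^4)$ (with room to spare) gives a per-edge factor of at most $1/(4D^3)$, accounting for the union over the $O(D)$ choices of ``which variable of $e_i$ Type-I failed at'' and the small constant from the LLL slack.

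For the terminal edge $e_\ell$, badness is the conjunction of ``$e_\ell$ not satisfied by $\Ass{X}^{\cnon}_{e_\ell\setminus\{\vst\}}$ and $\Ass{Y}^{\cnon}_{e_\ell\setminus\{\vst\}}$'' and ``$X^{\cnon}_{\vst}\neq Y^{\cnon}_{\vst}$'', and because $h^X_{\vst}=h^Y_{\vst}=h_{\vst}$ the disagreement at $\vst$ forces, in the coupling on $\nu_{\vst,X},\nu_{\vst,Y}$, the value at $\vst$ to land in a preimage block of size at most $\lfloor q_{\vst}/s_{\vst}\rfloor$. Thus the joint probability is bounded (again by LLL on $\Phi^X|_\tau$) by
\[
  \Bigl(\prod_{v\in e_\ell\setminus\{\vst\}}\frac{\lceil q^X_v/s^X_v\rceil}{q^X_v}\Bigr)\cdot\frac{\lfloor q^X_{\vst}/s^X_{\vst}\rfloor}{q^X_{\vst}}\cdot\mathrm{e}^{O(1)/D},
\]
and \eqref{eq-condition-coupling-projection} lets us bound this by $p^{\beta/10}\le(1/2)^{\beta\abs{e_\ell}/50}\cdot\beta/(50\cdot 4D^3)$ under \eqref{eq-cond-coupling-gen}. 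Multiplying the $\ell-1$ interior per-edge factors with the terminal factor yields exactly the claimed bound. The second inequality in the lemma then follows from \Cref{corollary-path-v-gen} via a union bound over all percolation paths for $\vst$.

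The main obstacle is the terminal bound: we cannot simply apply the Type-II estimate at $e_\ell$ because we need to \emph{simultaneously} force disagreement at $\vst$, not merely violation of the constraint. This is exactly the reason \Cref{condition-projection-coupling} carves out the special clause \eqref{eq-condition-coupling-projection} with $\lfloor q_{\vst}/s_{\vst}\rfloor$ in place of the ``upper'' quantity used for the other variables, and the proof must carefully align the LLL estimate with this asymmetric entropy accounting so that the $(1/2)^{\beta\abs{e_\ell}/50}$ factor, rather than a weaker $(1/2)^{\beta}$ factor, is obtained.
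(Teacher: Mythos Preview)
Your sketch has two genuine gaps that the paper's proof works hard to close.

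\textbf{The product structure is not automatic.} You write ``multiply the bounds'' because the $e_i$ are disjoint, and you justify this by saying the marginals $(X^{\cnon}_u,Y^{\cnon}_u)$ for $u\in e_i$ are conditional on ``the history'' of earlier edges. But the coupling $\cnon$ does \emph{not} process the percolation edges $e_1,\dots,e_\ell$ in order; the while-loop in \Cref{alg-coupling-gen} follows its own agenda, interleaving variables from different $e_i$ (and from edges off the path). So the bad events $\{e_i\text{ is bad}\}$ are genuinely correlated through the adaptive order. The paper is explicit about this: it introduces a family of \emph{independent} auxiliary uniforms $r_{e_i}(j)$ and $r_v$, implements each sampling step of $\cnon$ using the appropriate auxiliary variable, and shows that ``$e_i$ is bad'' implies a threshold event $\mathcal A_i$ defined purely in terms of the auxiliaries attached to $e_i$. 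The $\mathcal A_i$ are then independent by construction, which is what licenses the product. Your argument needs this device (or something equivalent); ``condition on the history'' does not do it.

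\textbf{The Type-I union bound does not close.} You bound Type-I failure by a union ``over the $O(D)$ choices of which variable of $e_i$ failed''. In the general CSP setting the arity $|e_i|$ is \emph{not} bounded by $D$ (the regime \eqref{eq-cond-coupling-gen} places no bound on $k$). Moreover, the per-variable total variation distance $\DTV{\nu_{u,X}(\cdot\mid\cdot)}{\nu_{u,Y}(\cdot\mid\cdot)}$ you get from the LLL is of order $1/D^3$, not $p^{\beta/10}$; there is no decay-of-correlations in the distance from $v_0$ available here. A naive union over $u\in e_i$ would give $|e_i|/D^3$, which is useless when $|e_i|$ is large. The paper's fix is the key idea you are missing: it refines Type-I failure by \emph{which} variable $j\in\{1,\dots,k(e_i)\}$ first decouples, and observes that for this to happen the constraint $c(e_i)$ must remain unsatisfied through the first $j-1$ assignments. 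Each such ``still unsatisfied'' event forces $c_x=c^\star$ or $c_y=c^\star$, which by \Cref{lemma-local-uniform-coupling-gen} has probability at most $3/4$ (since $s'_u\ge 2$). So the $j$-th summand picks up a geometric factor $(3/4)^{j-1}$, and the sum over $j$ converges to $O(1/D^3)$ regardless of $|e_i|$. This is also what makes the later sum over the index vector $z\in\prod_i[k(e_i)+1]$ collapse to the clean $(1/(4D^3))^\ell$.

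A smaller point: your terminal-edge argument conflates two different things. Disagreement $X^{\cnon}_{\vst}\neq Y^{\cnon}_{\vst}$ does not force the value at $\vst$ into a specific preimage block; the total variation at $\vst$ is again only $O(1/D^3)$ in general. The paper instead splits $B_\ell$ into ``constraint not satisfied on $e_\ell\setminus\{\vst\}$'' and ``coupling at $\vst$ fails'', and uses whichever half of \eqref{eq-condition-coupling-projection} contributes at least $\frac{\beta}{20}\sum_{v\in e_\ell}\log q_v$: in the first case it bounds the round-down violation probability on $e_\ell\setminus\{\vst\}$; in the second it invokes the refined part of \Cref{lemma-local-uniform-coupling-gen} with the parameter $t=\frac{\beta}{40}|e_\ell|$ to get the extra $2^{-t}$ in the per-variable bound. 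That case split is how the $(1/2)^{\beta|e_\ell|/50}$ factor actually arises.
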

The proof of \Cref{lemma-PP} is deferred to \Cref{section-proof-lemma-PP}. We now use \Cref{lemma-PP} to prove \Cref{lemma-path-coupling-gen}.

\begin{proof}[Proof of \Cref{lemma-path-coupling-gen}]
We will use \Cref{lemma-PP} to show that
\begin{align*}
\sum_{v \in V \setminus \{v_0\}}\DTV{\nu_{v}^{X_{V \setminus \{v\}}}}{\nu_{v}^{Y_{V \setminus \{v\}}}}	\leq \frac{1}{2}. 
\end{align*}
By the assumption in \Cref{lemma-path-coupling-gen}, it holds that $\log \frac{1}{p} \geq \frac{50}{\beta} \log \tp{\frac{2000D^4}{\beta}}$. Note that the condition in \Cref{lemma-PP} holds. 
Note that $\log \frac{1}{p} \geq \frac{50}{\beta} \log \tp{\frac{2000D^4}{\beta}}\geq\frac{55}{\beta}\tp{\log D + 3}.$
By \Cref{lemma-find-projection-coupling}, the projection schemes satisfying \Cref{condition-projection-coupling} exists.	
By \Cref{lemma-coupling-gen}, the $\Ass{X}^{\cnon}$ in $\cnon$ follows the distribution $\nu_{X}$ and  the $\Ass{Y}^{\cnon}$ in $\cnon$ follows the distribution $\nu_{Y}$. By the definition of $\nu_X$ and $\nu_Y$, it holds that $\nu_{\vst,X}=\nu_{\vst}^{X_{V \setminus \{\vst\}}}$ and $\nu_{\vst,Y}=\nu_{\vst}^{Y_{V \setminus \{\vst\}}}$. By the coupling lemma and \Cref{lemma-PP}, it holds that
\begin{align*}
\DTV{\nu_{\vst}^{X_{V \setminus \{\vst\}}}}{\nu_{\vst}^{Y_{V \setminus \{\vst\}}}}		\leq \Pr[\cnon]{X^{\cnon}_{\vst} \neq Y^{\cnon}_{\vst} }\leq \sum_{e_1,e_2,\ldots e_{\ell} \text{ is a PP for }\vst}\tp{\frac{1}{4 D^3}}^{\ell} \frac{\beta}{50}\tp{\frac{1}{2}}^{\frac{\beta \abs{e_\ell}}{50}}.
\end{align*}
Note that the hypergraph $H$ is same for any $\vst \in V \setminus \{v_0\}$.
We can use the above inequality with $\vst =v$ for all $v \in V \setminus \{v_0\}$. Thus,
\begin{align*}
\sum_{v \in V \setminus \{v_0\}}\DTV{\nu_{v}^{X_{V \setminus \{v\}}}}{\nu_{v}^{Y_{V \setminus \{v\}}}}	&\leq \sum_{v \in V \setminus \{v_0\}}\sum_{e_1,e_2,\ldots e_{\ell} \text{ is a PP for }v}\tp{\frac{1}{4 D^3}}^{\ell} \frac{\beta}{50}\tp{\frac{1}{2}}^{\frac{\beta \abs{e_\ell}}{50}}\\
\tp{\text{by double counting}}\quad&\leq \sum_{e_1,e_2,\ldots e_{\ell} \text{ is a PP}}\tp{\frac{1}{4 D^3}}^{\ell} \frac{\beta\abs{e_\ell}}{50}\tp{\frac{1}{2}}^{\frac{\beta \abs{e_\ell}}{50}}.
\end{align*}
Note that $x\tp{\frac{1}{2}}^x \leq 1$ for all $x \geq 0$. We have
\begin{align*}
\sum_{v \in V \setminus \{v_0\}}\DTV{\nu_{v}^{X_{V \setminus \{v\}}}}{\nu_{v}^{Y_{V \setminus \{v\}}}} \leq \sum_{e_1,e_2,\ldots e_{\ell} \text{ is a PP}}\tp{\frac{1}{4 D^3}}^{\ell}.	
\end{align*}
If $e_1,e_2,\ldots e_{\ell}$ is a percolation path, then $e_1,e_2,\ldots e_{\ell}$ is a path in $\Lin^3(H)$ and $e_1 \in N^3(v_0)$. 
Note that $\abs{N^3(v_0)}\leq D + D(D-1) + D(D-1)^2 \leq D^3$ (due to~\eqref{eq-def-Nkv}) and the maximum degree of $\Lin^3(H)$ is at most $D^3$.
The number of such paths is at most $D^{3\ell}$. We have
\begin{align*}
\sum_{v \in V \setminus \{v_0\}}\DTV{\nu_{v}^{X_{V \setminus \{v\}}}}{\nu_{v}^{Y_{V \setminus \{v\}}}} \leq \sum_{e_1,e_2,\ldots e_{\ell} \text{ is a PP}}\tp{\frac{1}{4 D^3}}^{\ell} \leq \sum_{\ell = 1}^{\infty}D^{3\ell}\tp{\frac{1}{4 D^3}}^{\ell} \leq \frac{1}{2}.	&\qedhere
\end{align*}
\end{proof}

\subsubsection{Proof of \Cref{lemma-PP}}
\label{section-proof-lemma-PP}
We first introduce some notations for proving \Cref{lemma-PP}. 
Let $\Phi = (V,\Dom{Q},\Cons{C})$ to denote the original input CSP formula of \Cref{alg-mcmc}.
Let $D$ denote the maximum degree of the dependency graph of $\Phi$.
For each $v \in V$, let $q_v = \abs{Q_v}$. Let
\begin{align*}
p \triangleq \max_{c \in \Cons{C}}\prod_{v \in \vbl{c}}\frac{1}{q_v}.	
\end{align*}
Let $\Proj{h}$ denote the original projection scheme for $\Phi$ satisfying \Cref{condition-projection} with parameters $\alpha$ and $\beta$.
Recall that $\Phi^{X}=(V,\Dom{Q}^{X}=(Q^{X}_u)_{u \in V},\Cons{C})$  and $\Phi^{Y}=(V,\Dom{Q}^{Y}=(Q^{Y}_v)_{v \in V},\Cons{C})$ are defined in~\eqref{eq-def-QX-QY-gen}.
Recall that $\PX=(h^X_v)_{v \in V}$ and  $\PY = (h^Y_v)_{v \in V}$ denote the projection schemes for $\Phi^{X}$ and $\Phi^{Y}$, where $h^X_v: Q^{X}_v \to \Sigma'_v$ and $h^Y_v: Q^{Y}_v \to \Sigma'_v$. 
Recall that $\PX$ and $\PY$ satisfy \Cref{condition-projection-coupling}.
For each $v \in V$, $s^X_v =s^Y_v =s'_v$.
The following lemma gives the key property for $\nu_{v,X}$ and $\nu_{v ,Y}$ in \Cref{line-optimal-coupling-gen}.
\begin{lemma}
\label{lemma-local-uniform-coupling-gen}
Suppose the original input CSP formula $\Phi$ of \Cref{alg-mcmc} satisfies
\begin{align*}
\log \frac{1}{p} \geq \frac{50}{\beta}\log\tp{\frac{2000 D^{4}}{\beta}}.	
\end{align*}
Let $\Lambda \subseteq V$ and $v \in V \setminus \Lambda$. 
Let $\sigma_X, \sigma_Y \in \Sigma'_{\Lambda} = \bigotimes_{u \in \Lambda}\Sigma'_u$ be two partial assignments on $\Lambda$.
For any $c_X,c_Y \in \Sigma'_v$, 
\begin{align*}
\frac{\abs{(h^X_v)^{-1}(c_X)}}{q_v^X} \tp{1 - \frac{\beta}{500 D^3}} \leq \nu_{v,X}(c_X \mid \sigma_X) \leq \frac{\abs{(h^X_v)^{-1}(c_X)}}{q_v^X} \tp{1 + \frac{\beta}{500 D^3}},\\
\frac{\abs{(h^Y_v)^{-1}(c_Y)}}{q_v^Y} \tp{1 - \frac{\beta}{500 D^3}} \leq \nu_{v,Y}(c_Y \mid \sigma_Y) \leq \frac{\abs{(h^Y_v)^{-1}(c_Y)}}{q_v^Y} \tp{1 + \frac{\beta}{500 D^3}}.
\end{align*}
Furthermore, if the variable $v$ satisfies $\log \ftp{\frac{q^X_v}{s'_v}} \geq t + \frac{5}{4}\log \tp{\frac{2000 D^{4}}{\beta}}$ and $\log \ftp{\frac{q^Y_v}{s'_v}} \geq t + \frac{5}{4}\log \tp{\frac{2000 D^{4}}{\beta}}$ for some $t \geq 0$,then for any $c_X,c_Y \in \Sigma'_v$,
 \begin{align*}
\frac{\abs{(h^X_v)^{-1}(c_X)}}{q_v^X} \tp{1 - \frac{\beta 2^{-t}}{500 D^3}} &\leq \nu_{v,X}(c_X \mid \sigma_X) \leq \frac{\abs{(h^X_v)^{-1}(c_X)}}{q_v^X} \tp{1 + \frac{\beta 2^{-t}}{500 D^3}},\\
\frac{\abs{(h^Y_v)^{-1}(c_Y)}}{q_v^Y} \tp{1 - \frac{\beta 2^{-t}}{500 D^3}} &\leq \nu_{v,Y}(c_Y \mid \sigma_Y) \leq \frac{\abs{(h^Y_v)^{-1}(c_Y)}}{q_v^Y} \tp{1 + \frac{\beta 2^{-t}}{500 D^3}}.
\end{align*}
\end{lemma}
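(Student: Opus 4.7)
\textbf{Proof plan for Lemma \ref{lemma-local-uniform-coupling-gen}.}
The plan is to reduce both conditional marginals $\nu_{v,X}(\cdot\mid\sigma_X)$ and $\nu_{v,Y}(\cdot\mid\sigma_Y)$ to marginal probabilities of local events in suitably restricted LLL instances, and then apply the asymmetric LLL of \Cref{theorem-LLL} to both the event and its complement. I will describe the argument for the ``$X$-side''; the ``$Y$-side'' is identical by symmetry. Define the restricted CSP formula $\widehat{\Phi}^X=(V,\widehat{\Dom{Q}}^X,\Cons{C})$ by $\widehat{Q}^X_u\triangleq (h^X_u)^{-1}(\sigma_X(u))$ for $u\in\Lambda$ and $\widehat{Q}^X_u\triangleq Q^X_u$ for $u\notin\Lambda$. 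Because $\Proj{h}^X$ is balanced, $|\widehat{Q}^X_u|\ge\lfloor q^X_u/s^X_u\rfloor$, and writing $\widehat{\mu}^X$ for the uniform distribution over satisfying assignments to $\widehat{\Phi}^X$, one checks that $\nu_{v,X}(c_X\mid\sigma_X)=\Pr_{\widehat{\mu}^X}[X_v\in(h^X_v)^{-1}(c_X)]$.

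Next I would verify a uniform LLL hypothesis for $\widehat{\Phi}^X$. Under the product distribution $\widehat{\mathcal{D}}$ on $\widehat{\Dom{Q}}^X$, the bad event $B_c$ that constraint $c$ is violated has probability at most $\prod_{u\in\vbl{c}}1/\lfloor q^X_u/s^X_u\rfloor$, which by \eqref{eq-condition-upper-bound} in \Cref{condition-projection-coupling} is bounded by $p^{\beta/10}$. The assumption $\log(1/p)\ge(50/\beta)\log(2000 D^4/\beta)$ then gives $\Pr_{\widehat{\mathcal{D}}}[B_c]\le(2000D^4/\beta)^{-5}$. With the symmetric choice $x(B_c)\triangleq \beta/(2000D^4)$ for every $c$, one has $x(B_c)\prod_{B'\in\Gamma(B_c)}(1-x(B'))\ge (\beta/(2000D^4))\cdot(1-\beta/(2000D^4))^D\ge \beta/(4000D^4)$, which comfortably dominates $(2000D^4/\beta)^{-5}$; hence the LLL hypothesis \eqref{eqn:LLL} holds.

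For the upper bound, I would apply the conditional probability bound in \Cref{theorem-LLL} to the event $A_v(c_X)\triangleq\{X_v\in(h^X_v)^{-1}(c_X)\}$. Since $v\notin\Lambda$, $\Pr_{\widehat{\mathcal{D}}}[A_v(c_X)]=|(h^X_v)^{-1}(c_X)|/q^X_v$, and $|\Gamma(A_v(c_X))|\le D$, giving
\[
\nu_{v,X}(c_X\mid\sigma_X)\le\frac{|(h^X_v)^{-1}(c_X)|}{q^X_v}\Bigl(1-\frac{\beta}{2000D^4}\Bigr)^{-D}\le \frac{|(h^X_v)^{-1}(c_X)|}{q^X_v}\exp\!\Bigl(\frac{\beta}{1000D^3}\Bigr)\le \frac{|(h^X_v)^{-1}(c_X)|}{q^X_v}\Bigl(1+\frac{\beta}{500D^3}\Bigr).
\]
For the lower bound I would apply the same theorem to the complementary event $\overline{A_v(c_X)}$, which has product-distribution probability $1-|(h^X_v)^{-1}(c_X)|/q^X_v$ and the same neighborhood in the dependency graph. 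Rearranging the resulting upper bound on $1-\nu_{v,X}(c_X\mid\sigma_X)$ gives the claimed inequality in the desired form (using $|(h^X_v)^{-1}(c_X)|/q^X_v\le 1$ to convert the additive expression to the multiplicative one at the cost of the same constant).

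The ``furthermore'' part will be the main technical obstacle. When the preimage of $v$ itself is large, $\log\lfloor q^X_v/s'_v\rfloor\ge t+\tfrac{5}{4}\log(2000D^4/\beta)$, I would use a two-scale asymmetric LLL: keep $x(B_c)=\beta/(2000D^4)$ for constraints $c$ not incident to $v$, but for the at most $D$ constraints with $v\in\vbl{c}$, replace the product-distribution bound by the sharper $\prod_{u\in\vbl{c}\setminus\{v\}}1/\lfloor q^X_u/s^X_u\rfloor\cdot 1/\lfloor q^X_v/s'_v\rfloor$ and exploit the extra factor $2^{-t}$ coming from the hypothesis on $\lfloor q^X_v/s'_v\rfloor$ to justify choosing $x(B_c)=\beta\cdot 2^{-t}/(2000D^4)$ for those constraints. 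Re-running the computation of $\prod_{B\in\Gamma(A_v)}(1-x(B))^{-1}$ with the reduced weights at the $D$ neighbors of $A_v(c_X)$ yields the sharpened bound $1\pm\beta\cdot 2^{-t}/(500D^3)$. The delicate part is checking that the asymmetric LLL hypothesis still holds for those upweighted constraints given the additional slack provided by the condition on $\lfloor q^X_v/s'_v\rfloor$, and symmetrically for the $Y$-side using~\eqref{eq-condition-coupling-projection}.
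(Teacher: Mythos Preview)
Your overall approach matches the paper's: reduce to a restricted LLL instance $\widehat{\Phi}^X$, identify the conditional marginal with $\Pr_{\widehat{\mu}^X}[X_v\in(h^X_v)^{-1}(c_X)]$, and apply \Cref{theorem-LLL} to both the event and its complement. Your upper bound is fine.

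The gap is in the lower bound. With the symmetric choice $x(B_c)=\beta/(2000D^4)$, applying the LLL to $\overline{A_v(c_X)}$ yields (writing $a\triangleq|(h^X_v)^{-1}(c_X)|/q^X_v$ and $\epsilon\triangleq\beta/(500D^3)$, up to constants)
\[
\nu_{v,X}(c_X\mid\sigma_X)\;\ge\;1-(1-a)(1+\epsilon)\;=\;a-(1-a)\epsilon.
\]
To conclude $\nu\ge a(1-\epsilon)$ you would need $(1-a)\epsilon\le a\epsilon$, i.e.\ $a\ge 1/2$. But $a$ can be as small as $\lfloor q^X_v/s'_v\rfloor/q^X_v$, which may be tiny; the inequality ``$a\le 1$'' you invoke goes in the wrong direction and cannot rescue this step.

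The paper's fix is to make $x$ asymmetric already in the basic part, not just in the ``furthermore''. The key observation is that since $v\notin\Lambda$, the domain of $v$ in $\widehat{\Phi}^X$ is the full $Q^X_v$, so for any constraint $c$ with $v\in\vbl{c}$,
\[
\Pr_{\widehat{\mathcal D}}[B_c]\;=\;\frac{1}{q^X_v}\prod_{u\in\vbl{c}\setminus\{v\}}\frac{1}{|\widehat Q^X_u|}\;\le\;\frac{\lfloor q^X_v/s'_v\rfloor}{q^X_v}\cdot\prod_{u\in\vbl{c}}\frac{1}{\lfloor q^X_u/s'_u\rfloor}.
\]
This extra factor $b\triangleq\lfloor q^X_v/s'_v\rfloor/q^X_v$ lets one take $x(B_c)=\beta b/(2000D^4)$ for precisely those constraints. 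Since every $B\in\Gamma(A_v(c_X))$ is of this type, the correction factor becomes $(1-\beta b/(2000D^4))^{-D}\le 1+\beta b/(500D^3)$, and now the lower bound reads $\nu\ge a-(1-a)\beta b/(500D^3)\ge a(1-\beta/(500D^3))$ because $b\le a$ by balancedness of $\Proj{h}^X$. The same trick, with the additional $2^{-t}$ folded into $x$, handles the ``furthermore'' part; your proposed weight $\beta 2^{-t}/(2000D^4)$ omits the factor $b$ and so runs into the identical lower-bound obstruction. (Also, your ``sharper product-distribution bound'' there should carry a factor $1/q^X_v$ at $v$, not $1/\lfloor q^X_v/s'_v\rfloor$; the slack comes from $v$'s domain being unrestricted, not merely from $\lfloor q^X_v/s'_v\rfloor$ being large.)
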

\begin{proof}
We prove the lemma for $\nu_{v,X}(c_X \mid \sigma_X)$. The result for  $\nu_{v,Y}(c_Y \mid \sigma_Y) $ can be proved in a similar way.
To simplify the notation, denote $\sigma = \sigma_X$, $c^\star  = c_X$.
We define a new instance $\widetilde{\Phi}=(V,\widetilde{\Dom{Q}}=(\widetilde{Q}_u)_{u \in V}, \Cons{C})$:
\begin{align*}
\forall u \in V, \quad \widetilde{Q}_u = \begin{cases}
(h^X_u)^{-1}(\sigma_u) &\text{if } u \in \Lambda;\\
Q^{X}_u &\text{if } u \notin \Lambda.
 \end{cases}
\end{align*}
Let $\widetilde{\mu}$ denote the uniform distribution of all satisfying assignments to $\widetilde{\Phi}$.
By the definition of the projected distribution, if $X \sim \widetilde{\mu}$, then $\Pr[]{X_v \in (h^X_v)^{-1}(c^\star )}$ equals to $\nu_{v,X}(c^\star  \mid \sigma)$.
By \Cref{condition-projection-coupling}, for any constraint $c \in \Cons{C}$, it holds that
\begin{equation}
\label{eq-qxv-lower-bound}
\begin{split}
\sum_{v \in \vbl{c}} \log \ftp{\frac{q^X_v}{s'_v}} \geq \frac{\beta}{10}\log \frac{1}{p} \geq  {5}\log\tp{\frac{2000 D^{4}}{\beta}}.	
\end{split}
\end{equation}
Let $\+D$ denote the product distribution such that each variable $u \in V$ takes a value from $\widetilde{Q}_u$ uniformly at random. 
For each constraint $c \in \Cons{C}$, let $B_c$ denote the bad event that $c$ is not satisfied. 
Let $\+B$ denote the collection of bad events $(B_c)_{c \in \Cons{C}}$.
Let $\Gamma(\cdot)$ be defined as in the Lov\'asz local lemma (\Cref{theorem-LLL}).
We define a function $x: \+B \to (0,1)$ such that
\begin{align*}
\forall c \in \Cons{C} \text{ s.t. } v \notin \vbl{c},\quad 	x(B_c) &= \frac{\beta}{2000 D^4};\\
\forall c \in \Cons{C} \text{ s.t. } v \in \vbl{c}, \quad x(B_c) &= \frac{\beta\ftp{ q^X_v / s'_v }}{2000 D^4 q^X_v}.
\end{align*}
Since $\Proj{h}^X$ is a balanced projection scheme, $\abs{\widetilde{Q}_u} \geq \ftp{\frac{q^X_u}{s'_u}}$ for all $u \in V$.
For any constraint $c \in \Cons{C}$ such that $v \notin \vbl{c}$, it holds that
\begin{align}
\label{eq-LLL-not-v}
\Pr[\+D]{B_c} &= \prod_{u \in \vbl{c}}\frac{1}{\vert{\widetilde{Q}_u}\vert}\leq \prod_{u \in \vbl{c}}\frac{1}{\ftp{ q^X_u / s'_u }} \leq \frac{\beta}{2000 ^ 5D^{20}} \leq \frac{\beta}{2000 D^4}\tp{1 - \frac{\beta}{2000 D^4}}^{2000D^4/\beta - 1}\notag\\
&\leq  \frac{\beta}{2000 D^4}\tp{1 - \frac{\beta}{2000 D^4}}^{D} \leq x(B_c) \prod_{B_{c'} \in \Gamma(B_c)}\tp{1 - x(B_{c'})},
\end{align}
where the last inequality holds because $x(B_c) \leq \frac{\beta}{2000 D^4}$ for all $c \in \Cons{C}$. 
Note that $v \notin \Lambda$.
For any $c \in \Cons{C}$ such that $v \in \vbl{c}$, by~\eqref{eq-qxv-lower-bound}, it holds that
\begin{align*}
\Pr[\+D]{B_c} &= \frac{1}{q^X_v}\prod_{u \in \vbl{c}: u \neq v}\frac{1}{\vert{\widetilde{Q}_u}\vert} \leq 	 \frac{\ftp{q^X_v / s'_v} }{q^X_v} \prod_{u \in \vbl{c}}\frac{1}{\ftp{ q^X_u / s'_u }} \leq \frac{\ftp{q^X_v / s'_v} }{q^X_v} \cdot \frac{\beta}{2000^5D^{20}} \\
&\leq \frac{\beta\ftp{q^X_v / s'_v} }{2000 D^4 q^X_v} \tp{1 - \frac{\beta}{2000 D^4}}^{2000D^4/\beta - 1}
\leq  \frac{\beta\ftp{q^X_v / s'_v} }{2000 D^4 q^X_v}\tp{1 - \frac{\beta}{2000 D^4}}^{D} \\
&\leq x(B_c) \prod_{B_{c'} \in \Gamma(B_c)}\tp{1 - x(B_{c'})}.
\end{align*}
Fix a value $c^\star  \in \Sigma'_v$.
Let $A$ denote the event that $v$ takes a value in $(h^X_v)^{-1}(c^\star )$. It holds that $\abs{\Gamma(A)} \leq D$. For any $B_c \in \Gamma(A)$, it holds that $v \in \vbl{c}$ and $x(B_c) = \frac{\beta\ftp{ q^X_v / s'_v }}{2000 D^4 q^X_v}$.
Recall that $\widetilde{\mu}$ denotes the uniform distribution of all satisfying assignments to $\widetilde{\Phi}$.
By Lov\'asz local lemma (\Cref{theorem-LLL}),
\begin{align*}
\Pr[\widetilde{\mu}]{A} = \nu_{v,X}(c^\star \mid \sigma) &\leq \frac{\abs{(h^X_v)^{-1}(c^\star )} }{q^X_v} \tp{1 - \frac{\beta\ftp{ q^X_v / s'_v }}{2000 D^4 q^X_v} }^{-D}\leq \frac{\abs{(h^X_v)^{-1}(c^\star )} }{q^X_v} \exp \tp{ \frac{\beta\ftp{ q^X_v / s'_v }}{1000 D^3 q^X_v}  } \\
&\leq \frac{\abs{(h^X_v)^{-1}(c^\star )} }{q^X_v} \tp{1 + \frac{\beta\ftp{ q^X_v / s'_v }}{500 D^3 q^X_v}  }
\leq  \frac{\abs{(h^X_v)^{-1}(c^\star )} }{q^X_v} \tp{1 + \frac{\beta}{500 D^3 }  }.
\end{align*}
This proves the upper bound.
Let $A'$ denote the event that $v$ does not take any value in $(h^X_v)^{-1}(c^\star )$, then $\abs{\Gamma(A')} \leq D$. For any $B_c \in \Gamma(A')$, it holds that $v \in \vbl{c}$ and $x(B_c) = \frac{\beta\ftp{ q^X_v / s'_v }}{2000 D^4 q^X_v}$.
By \Cref{theorem-LLL},
\begin{align*}
\Pr[\widetilde{\mu}]{A'} &= 1 - \nu_{v,X}(c^\star \mid \sigma) \leq \tp{1-\frac{\abs{(h^X_v)^{-1}(c^\star )} }{q^X_v}} \tp{1 - \frac{\beta\ftp{ q^X_v / s'_v }}{2000 D^4 q^X_v} }^{-D}\\
&\leq \tp{1 -\frac{\abs{(h^X_v)^{-1}(c^\star )} }{q^X_v}} \exp \tp{ \frac{\beta\ftp{ q^X_v / s'_v }}{1000 D^3 q^X_v}  } \leq \tp{1-\frac{\abs{(h^X_v)^{-1}(c^\star )} }{q^X_v} }\tp{1 + \frac{\beta\ftp{ q^X_v / s'_v }}{500 D^3 q^X_v}  }. 
\end{align*}
Let $a = \abs{(h^X_v)^{-1}(c^\star )}/q^X_v$ and $b = \ftp{ q^X_v / s'_v } / q^X_v$. Since $\Proj{h}^X$ is a balanced projection scheme (\Cref{condition-projection-coupling}), it holds that $\abs{(h^X_v)^{-1}(c^\star )} \geq  \ftp{ q^X_v / s'_v } $  and $a \geq b$. Thus
\begin{align}
\label{eq-proof-lower-bound-v}
\nu_{v,X}(c^\star \mid \sigma) &\geq 1 - \tp{1 - a}\tp{1 + \frac{\beta b}{500D^3}}	  = a\tp{1 + \frac{\beta b}{500D^3} - \frac{\beta b}{500aD^3}} \geq {a}\tp{1  - \frac{\beta b}{500aD^3}}\notag\\
\tp{\text{by }a \geq b}\quad &\geq a\tp{1 - \frac{\beta}{500 D^3}} =\frac{\abs{(h^X_v)^{-1}(c^\star )}}{q^X_v} \tp{1 - \frac{\beta}{500 D^3}}.
\end{align}
This proves the lower bound.

Next, we assume 
\begin{align}
\label{eq-asm-stronger}
\log \ftp{\frac{q^X_v}{s'_v}} \geq t + \frac{5}{4}\log \tp{\frac{2000 D^{4}}{\beta}}. 
\end{align}
For each bad event $B_c$, 
we define a function $x: \+B \to (0,1)$ such that
\begin{align*}
\forall c \in \Cons{C} \text{ s.t. } v \notin \vbl{c},\quad 	x(B_c) &= \frac{\beta}{2000 D^4};\\
\forall c \in \Cons{C} \text{ s.t. } v \in \vbl{c}, \quad x(B_c) &= \frac{\beta2^{-t}\ftp{ q^X_v / s'_v }}{2000 D^4 q^X_v}.
\end{align*}
Note that for any $c \in \Cons{C}$, it holds that $x(B_c) \leq \frac{\beta}{2000 D^4}$. By the same proof, for any constraint $c \in \Cons{C}$ such that $v \notin \vbl{c}$,~\eqref{eq-LLL-not-v} still holds. 
For any constraint $c \in \Cons{C}$ such that $v \in \vbl{c}$, we have
\begin{align*}
\Pr[\+D]{B_c} &= \frac{1}{q^X_v}\prod_{u \in \vbl{c}: u \neq v}\frac{1}{\vert{\widetilde{Q}_u}\vert} \leq 	 \frac{\ftp{q^X_v / s'_v} }{q^X_v} \prod_{u \in \vbl{c}}\frac{1}{\ftp{ q^X_u / s'_u }} \leq  \frac{\ftp{q^X_v / s'_v} }{q^X_v} \frac{1}{\ftp{ q^X_v / s'_v }}\\
\tp{\text{by~\eqref{eq-asm-stronger} and $\beta \leq 1$}}\quad&\leq \frac{\ftp{q^X_v / s'_v} }{q^X_v} \cdot \frac{\beta2^{-t}}{2000^{5/4}D^5} 
\leq \frac{\beta2^{-t}\ftp{q^X_v / s'_v} }{2000 D^4 q^X_v} \tp{1 - \frac{\beta}{2000 D^4}}^{2000D^4/\beta - 1}\\
&\leq  \frac{\beta2^{-t}\ftp{q^X_v / s'_v} }{2000 D^4 q^X_v}\tp{1 - \frac{\beta}{2000 D^4}}^{D} \leq x(B_c) \prod_{B_{c'} \in \Gamma(B_c)}\tp{1 - x(B_{c'})}.	
\end{align*}
Thus, the function $x: \+B \to (0,1)$ satisfies the Lov\'asz local lemma condition. By~\Cref{theorem-LLL}, 
\begin{align*}
\Pr[\widetilde{\mu}]{A} = \nu_{v,X}(c^\star \mid \sigma) &\leq \frac{\abs{(h^X_v)^{-1}(c^\star )} }{q^X_v} \tp{1 - \frac{\beta 2^{-t}\ftp{ q^X_v / s'_v }}{2000 D^4 q^X_v} }^{-D}\leq \frac{\abs{(h^X_v)^{-1}(c^\star )} }{q^X_v} \exp \tp{ \frac{\beta 2^{-t}\ftp{ q^X_v / s'_v }}{1000 D^3 q^X_v}  } \\
&\leq \frac{\abs{(h^X_v)^{-1}(c^\star )} }{q^X_v} \tp{1 + \frac{\beta 2^{-t}\ftp{ q^X_v / s'_v }}{500 D^3q^X_v}  }
\leq  \frac{\abs{(h^X_v)^{-1}(c^\star )} }{q^X_v} \tp{1 + \frac{\beta 2^{-t}}{500 D^3 }  }.
\end{align*}
Furthermore, 
\begin{align*}
\Pr[\widetilde{\mu}]{A'} &= 1 - \nu_{v,X}(c^\star \mid \sigma) \leq \tp{1-\frac{\abs{(h^X_v)^{-1}(c^\star )} }{q^X_v}} \tp{1 - \frac{\beta 2^{-t}\ftp{ q^X_v / s'_v }}{2000 D^4 q^X_v} }^{-D}\\
&\leq \tp{1 -\frac{\abs{(h^X_v)^{-1}(c^\star )} }{q^X_v}} \exp \tp{ \frac{\beta 2^{-t}\ftp{ q^X_v / s'_v }}{1000 D^3 q^X_v}  } 
\leq \tp{1-\frac{\abs{(h^X_v)^{-1}(c^\star )} }{q^X_v} }\tp{1 + \frac{\beta 2^{-t}\ftp{ q^X_v / s'_v }}{500 D^3 q^X_v}  }. 
\end{align*}
By the same proof in~\eqref{eq-proof-lower-bound-v}, we have
\begin{align*}
\nu_{v,X}(c^\star \mid \sigma) \geq 	 \frac{\abs{(h^X_v)^{-1}(c^\star )} }{q^X_v}  \tp{1 - \frac{\beta 2^{-t}}{500 D^3 }  }. &\qedhere
\end{align*}
\end{proof}
Now, we are ready to prove \Cref{lemma-PP}. 
Fix a percolation path (PP) $e_1,e_2,\ldots,e_{\ell}$ in $\Lin^3(H)$	.
We bound the probability that all $e_i$ are bad for $1\leq i \leq \ell$. 
Recall $s'_v = s^X_v =s^Y_v$ for all $v \in V$.
For each hyperedge $e_i$, define 
\begin{align*}
V(e_i) \triangleq \{v \in e_i \mid s'_v \neq 1 \text{ and } v \neq v_0\}.	
\end{align*}
Note that for variables  $v \in e_i \setminus (V(e_i) \cup \{v_0\})$, it must hold that $s'_v =\abs{\Sigma_v'} = 1$. 
It must hold that $X^{\cnon}_v = Y^{\cnon}_v$, which implies the coupling on $v$ cannot be failed.
Hence, if there is a variable $u \in e_i \setminus \{v_0\}$ such that $X^{\cnon}_u \neq Y^{\cnon}_u$, it must hold that $u \in V(e_i)$.
In the while-loop, the coupling $\cnon$ assigns values to variables one-by-one, using the optimal coupling between marginal distributions. Let 
\begin{align*}
k(e_i) \triangleq \abs{V(e_i)}.	
\end{align*}
Fix an index $1 \leq i \leq \ell - 1$.
Let $c(e_i)$ denote the constraint represented by $e_i$.
We can define $k(e_i) + 1$ bad events $B_i^{(j)}$ for $1\leq j \leq k(e_i) + 1$:
\begin{itemize}
\item if $1\leq j \leq k(e_i)$: the constraint $c(e_i)$ is not satisfied by both $\Ass{X}^{\cnon}$ and $\Ass{Y}^{\cnon}$ after $j-1$ variables in $V(e_i)$ are assigned values by $\cnon$, and the coupling on $j$-th variable fails, i.e. $X^{\cnon}_{v_j} \neq Y^{\cnon}_{v_j}$, where $v_j\in V(e_i)$ is the $j$-th variable in $V(e_i)$ whose value is assigned by the coupling $\cnon$;
\item if $j = k(e_i)+1$: the constraint $c(e_i)$ is not satisfied  by both $\Ass{X}^{\cnon}$ and $\Ass{Y}^{\cnon}$  after all variables in $e_i$ are assigned values by the coupling $\cnon$.
\end{itemize}
Let $B_i$ denote the event $\bigvee_{j=1}^{k(e_i)+1}B_{i}^{(j)}$.
By \Cref{definition-bad-edge-gen}, we have the following relation
\begin{align*}
e_i\text{ is bad} \quad \Longleftrightarrow \quad e_i\text{ fails} \quad \Longrightarrow \quad B_i = \bigvee_{j=1}^{k(e_i)+1}B_i^{(j)}.
\end{align*}
By \Cref{defintiion-fails-cnon}, if $e_i$ fails in type-I, then there must exist $1 \leq j \leq k(e_i)$ such that the coupling of $j$-th variable in $V(e_i)$ fails and $e_i$ is not satisfied by both $\Ass{X}^{\cnon}$ and $\Ass{Y}^{\cnon}$ after $j-1$ variables in $V(e_i)$ are assigned values (otherwise, $e_i$ will be removed in~\Cref{line-remove-e-v0} or~\Cref{line-remove-e-gen}). Hence, if $e_i$ fails in type-I, $\bigvee_{j=1}^{k(e_i)}B_i^{(j)}$ must occur. If $e_i$ fails in type-II, then $B_i^{(k(e_i)+1)}$ must occur. This proves the above relation.

For hyperedge $e_\ell$, let $c(e_{\ell})$ denote the constraint represented by $e_\ell$, we define the bad event $B_\ell$ as
\begin{itemize}
\item $B_\ell$: the constraint $c(e_{\ell})$ is not satisfied by both $\Ass{X}^{\cnon}$ and $\Ass{Y}^{\cnon}$  after all variables in $e_\ell \setminus \{\vst\}$ are assigned values by the coupling $\cnon$, and the coupling on $\vst$ fails, i.e. $X^{\cnon}_{\vst} \neq Y^{\cnon}_{\vst}$.
\end{itemize}
By \Cref{definition-bad-edge-gen}, we have the following relation
\begin{align*}
e_\ell\text{ is bad} \quad \Longrightarrow \quad B_{\ell}.
\end{align*}
Let $\Omega_{B} = \bigotimes_{i = 1}^{\ell - 1}[k(e_i) + 1]$, where $[k({e_i}) + 1] = \{1,2,\ldots,k(e_i) + 1\}$.
We have the following relation
\begin{align*}
\Pr[\cnon]{\forall 1\leq i \leq \ell: e_i \text{ is bad}} \leq \Pr[\cnon]{\forall 1\leq i \leq \ell: B_i}\leq \sum_{z \in \Omega_{B}}\Pr[\cnon]{B_{\ell} \land \forall 1\leq i \leq \ell-1: B_i^{(z_i)}},
\end{align*}
where $z \in \Omega_B$ is a $(\ell - 1)$-dimensional vector and $z_i \in [k(e_i) + 1]$.
Fix a vector $z \in \Omega_B$. Let 
\begin{align*}
\+E_1 &= \{e_i \mid 1 \leq  i \leq \ell -1 \land z_i \leq k(e_i) \}\\
\+E_2 &= \{e_i \mid 1 \leq  i \leq \ell - 1 \land z_i = k(e_i) + 1\}.	
\end{align*}
We will prove that 
\begin{align}
\label{eq-gen-target}
&\Pr[\cnon]{B_{\ell} \land \forall 1\leq i \leq \ell-1: B_i^{(z_i)}}\notag\\
\leq&\, \prod_{e_i \in \+E_1}\tp{\tp{\frac{3}{4}}^{z_i - 1} \frac{1}{200 D^3}} \times \prod_{e_j \in \+E_2}\tp{\frac{1}{200D^3}}	\times \tp{\frac{\beta}{200 D^3} \tp{\frac{1}{2}}^{\frac{\beta \abs{e_\ell}}{50}}}.
\end{align}
By~\eqref{eq-gen-target}, we have
\begin{align*}
\Pr[\cnon]{\forall 1\leq i \leq \ell: e_i \text{ is bad}} &\leq 	\sum_{z \in \Omega_{B}}\Pr[\cnon]{B_{\ell} \land \forall 1\leq i \leq \ell-1: B_i^{(z_i)}}\\
\tp{\text{by }~\eqref{eq-gen-target}}\quad&\leq \tp{\frac{1}{200 D^3} + \frac{1}{200 D^3} \sum_{j=1}^{k(e_i)}\tp{\frac{3}{4}}^{j-1}}^{\ell - 1}\times \tp{ \frac{\beta}{200 D^3} \tp{\frac{1}{2}}^{\frac{\beta \abs{e_\ell}}{50}}}\\
&\leq \tp{\frac{1}{40 D^3}}^{\ell - 1}\frac{\beta}{200 D^3} \tp{\frac{1}{2}}^{\frac{\beta \abs{e_\ell}}{50}} \leq \tp{\frac{1}{4 D^3}}^{\ell} \frac{\beta}{50}\tp{\frac{1}{2}}^{\frac{\beta \abs{e_\ell}}{50}}.
\end{align*}
This proves \Cref{lemma-PP}. The rest of this section is dedicated to the proof of~\eqref{eq-gen-target}.

Note that the RHS of~\eqref{eq-gen-target} is a product.
Although all hyperedges in a percolation path are mutually disjoint, we cannot show that all bad events $B_i^{(z_i)}$ and $B_{\ell}$ are mutually independent. Because all the bad events are defined by $\cnon$, they may have some correlations with each other. To prove~\eqref{eq-gen-target}, we will use an independent random process to dominate the event that all $B_i^{(z_i)}$ and $B_{\ell}$ occur.

To prove~\eqref{eq-gen-target}, we first divide the bad event $B_{\ell}$ into two parts $B_{\ell}^{(1)}$ and $B_{\ell}^{(2)}$, where $B_{\ell}^{(1)}$ denotes the event that  the constraint $c(e_{\ell})$ is not satisfied by both $X^{\cnon}_S$ and $Y^{\cnon}_S$, where $S = e_\ell \setminus \{\vst\}$, and $B_{\ell}^{(2)}$ denotes the event that the coupling on $\vst$ fails, i.e. $X^{\cnon}_{\vst} \neq Y^{\cnon}_{\vst}$. It is easy to see $B_{\ell} = B_{\ell}^{(1)} \land B_{\ell}^{(2)}$. Note that $\vst \in e_\ell$ and $q^X_{\vst} = q^Y_{\vst}$. By~\eqref{eq-condition-coupling-projection} in \Cref{condition-projection-coupling}, one of the following two conditions must be satisfied:
\begin{align}
\label{eq-cond-1}
&\min\tp{\sum_{v \in \vbl{c} \setminus \{\vst\} }\log\frac{q_v^X}{\ctp{q_v^X / s_v'}},\sum_{v \in \vbl{c} \setminus \{\vst\} }\log\frac{q_v^Y}{\ctp{q_v^Y / s_v'}}}\geq \frac{\beta}{20} \tp{\sum_{v \in \vbl{c}}\log q_v},
\end{align}
\begin{align}
\label{eq-cond-2}
\log \ftp{\frac{ q^X_{\vst}}{s'_{\vst}}} = \log\ftp{\frac{ q^Y_{\vst}}{s'_{\vst}}} \geq \frac{\beta}{20} \tp{\sum_{v \in \vbl{c}}\log q_v}.
\end{align}
If~\eqref{eq-cond-1} holds, we can prove~\eqref{eq-gen-target} by bounding the RHS of the following inequality
\begin{align}
\label{eq-cond-1-RHS}
\Pr[\cnon]{B_{\ell} \land \forall 1\leq i \leq \ell-1: B_i^{(z_i)}} \leq \Pr[\cnon]{B_{\ell}^{(1)} \land \forall 1\leq i \leq \ell-1: B_i^{(z_i)}}.
\end{align}
If~\eqref{eq-cond-2} holds, we can prove~\eqref{eq-gen-target} by bounding the RHS of the following inequality
\begin{align}
\label{eq-cond-2-RHS}
\Pr[\cnon]{B_{\ell} \land \forall 1\leq i \leq \ell-1: B_i^{(z_i)}} \leq \Pr[\cnon]{B_{\ell}^{(2)} \land \forall 1\leq i \leq \ell-1: B_i^{(z_i)}}.
\end{align}
In the rest of the proof, we mainly focus on the case when~\eqref{eq-cond-1} holds. If~\eqref{eq-cond-2} holds, we can modify our proof to bound the RHS of~\eqref{eq-cond-2-RHS}, this part will be discussed later. 

Assume~\eqref{eq-cond-1} holds. We start to bound the RHS of~\eqref{eq-cond-1-RHS}.  To do this, we will give a particular implementation of the coupling $\cnon$ such that if $B^{(1)}_\ell$ and all $B_i^{(z_i)}$ occur, then some independent events must occur in our implementation and their probabilities are easy to bound. We first sample a set  $\+R$ of real numbers from $[0,1]$ uniformly and independently.
\begin{itemize}
\item For each $e_i \in \+E_1$, sample $k(e_i)$ random real numbers $r_{e_i}(j) \in [0,1]$ for $1 \leq j \leq k(e_i)$ uniformly and independently.
\item For each $e_i \in \+E_2 \cup \{e_{\ell}\}$, for each variable $v \in e_i$, sample a random real number $r_{v} \in [0,1]$ uniformly and independently. 
\end{itemize}
We then run the coupling $\cnon$ in \Cref{alg-coupling-gen}, but in some particular steps, we will use the random numbers in $\+R$ to implement the sampling step in $\cnon$.

We start from the special variable $v_0$.
Note that if $v_0$ appears in the percolation path, then $v_0 \in e_1$. 
The coupling $\cnon$ will sample the values of $v_0$ in \Cref{line-v0-gen}.
We use the real number $r_{v_0}$ to implement this sampling step if and only if $v_0 \in e_1$ and $e_1 \in \+E_2$.
Let $c(e_1)$ denote the constraint represented by $e_1$. 
Suppose $c(e_1)$ forbids the configuration $\sigma \in Q_{e_1}$, i.e. $(c(e_1))(\sigma) = \False$. 
By definition, in $\Phi^{X}$, $Q^{X}_{v_0} = h^{-1}_{v_0}(X_{v_0})$ and  in $\Phi^{Y}$, $Q^{Y}_{v_0} = h^{-1}_{v_0}(Y_{v_0})$. Note that $Q^{X}_{v_0} \neq Q^{Y}_{v_0}$. 
Thus, $e_1$ must be satisfied in $\Phi^{X}$ or $\Phi^{Y}$, because it must hold that $\sigma_{v_0} \notin Q^{X}_{v_0}$ or $\sigma_{v_0} \notin Q^{Y}_{v_0}$. 
If $e_1$ is satisfied in both $\Phi^{X}$ and $\Phi^{Y}$, then the hyperedge $e_1$ cannot be bad.
We may assume $e_1$ is not satisfied in $\Phi^{X}$ (i.e. $\sigma_{v_0} \in Q^{X}_{v_0}$) and $e_1$ is satisfied in $\Phi^{Y}$ (i.e. $\sigma_{v_0} \notin Q^{Y}_{v_0}$). Otherwise, we can swap the roles of $X$ and $Y$ in the whole analysis.
We use $r_{v_0}$ to sample $X^{\cnon}_{v_0}$ in \Cref{line-v0-gen} of $\cnon$. 
Note that there is only one $j \in \Sigma'_{v_0}$ such that $\sigma_{v_0} \in (h^X_{v_0})^{-1}(j)$. We can set $X^{\cnon}_{v_0} = j$ if $r_{v_0} \leq \nu_{v_0,X}(j)$. By \Cref{lemma-local-uniform-coupling-gen}, $\nu_{v_0,X}(j)\leq (1+\frac{1}{500D^3})\ctp{q^{X}_{v_0}/s'_{v_0}}/{q^X_{v_0}}$. 
Note that if $s'_{v_0} = 1$, then $\nu_{v_0,X}(j) = 1$, which implies
$
\nu_{v_0,X}(j) = 1 = (\frac{\lceil{q^{X}_{v_0}/s'_{v_0}}\rceil}{q^X_{v_0}})^{0.95}.
$
If $s_{v_0}' \geq 2$, then $\ctp{q^{X}_{v_0}/s'_{v_0}}/{q^X_{v_0}} \leq \ctp{q^{X}_{v_0}/2}/{q^X_{v_0}}\leq  \frac{2}{3}$ (because $q^X_{v_0} \geq s'_{v_0} \geq 2$), which implies 
\begin{align*}
\nu_{v_0,X}(j)\leq (1+\frac{1}{500D^3})\frac{\ctp{q^{X}_{v_0}/s'_{v_0}}}{q^X_{v_0}} \leq	\frac{501}{500}\frac{\ctp{q^{X}_{v_0}/s'_{v_0}}}{q^X_{v_0}}\leq \tp{\frac{\ctp{q^{X}_{v_0}/s'_{v_0}}}{q^X_{v_0}}}^{0.95}.
\end{align*}
After \Cref{line-v0-gen}, if $e_1$ is not satisfied by both $X^{\cnon}_{v_0}$ and $Y^{\cnon}_{v_0}$, then the following event must occur
\begin{align}
\label{eq-bad-v0}
r_{v_0} \leq \tp{\frac{\ctp{q^X_{v_0}/s'_{v_0}}}{q^X_{v_0}}}^{0.95}. 	
\end{align}
 
During the while-loop of $\cnon$, we maintain an index $j_i$ for each hyperedge $e_i \in \+E_1$. Initially, all $j_i = 0$. Suppose the coupling $\cnon$ picks a variable $u$ in \Cref{line-pick-gen}. Suppose $u\in e_i$ for some $1 \leq i \leq \ell$. Note that such hyperedge $e_i$ is unique because all hyperedges in a percolation path are mutually disjoint. Let $c(e_i)$ denote the constraint represented by $e_i$. 
Suppose $c(e_i)$ forbids the configuration $\tau \in \Dom{Q}_{e_i}$, i.e. $(c(e_i))(\tau) = \False$.
Since $u \neq v_0$, by \Cref{condition-projection-coupling}, it holds that $h^X_u = h^Y_u$.
Let $c^{\star} \in \Sigma_u'$ denote the value such that $\tau_u \in (h^X_u)^{-1}(c^{\star}) = (h^Y_u)^{-1}(c^{\star})$.
We need to sample $c_x \in \Sigma'_u$ and $c_y \in \Sigma'_u$ from the optimal coupling between $\nu_{u,X}(\cdot \mid \Ass{X}^{\cnon})$ and $\nu_{u,Y}(\cdot \mid \Ass{Y}^{\cnon})$ in \Cref{line-optimal-coupling-gen}. 
By~\eqref{eq-opt-coupling-1} and~\eqref{eq-coupling-min},
the optimal coupling satisfies the following properties,
\begin{align*}
\Pr[]{c_x = c_y} =  \sum_{j \in \Sigma'_u}\Pr[]{c_x = c_y = j} &= \sum_{j \in \Sigma'_u }\min \tp{\nu_{u,X}(j \mid \Ass{X}^{\cnon}), \nu_{u,Y}(j \mid \Ass{Y}^{\cnon})}\\
&= 1 - \DTV{\nu_{u,X}(\cdot \mid \Ass{X}^{\cnon})}{\nu_{u,Y}(\cdot \mid \Ass{Y}^{\cnon})},\\
\Pr[]{c_x = c^\star \lor c_y = c^\star} &= \max \tp{\nu_{u,X}(c^\star \mid \Ass{X}^{\cnon}), \nu_{u,Y}(c^\star \mid \Ass{Y}^{\cnon})}.	
\end{align*}

Let $t_{\max} \triangleq  \max \tp{\nu_{u,X}(c^\star \mid \Ass{X}^{\cnon}), \nu_{u,Y}(c^\star \mid \Ass{Y}^{\cnon})}$ and $d_{\mathsf{TV}}\triangleq \DTV{\nu_{u,X}(\cdot \mid \Ass{X}^{\cnon})}{\nu_{u,Y}(\cdot \mid \Ass{Y}^{\cnon})}$.
Note that either $e_i \in \+E_1$ or $e_i \in \+E_2 \cup \{e_{\ell}\}$.
We will use the following  procedure to implement the sampling step in \Cref{line-optimal-coupling-gen}.
\begin{itemize}
\item Case $e_i \in \+E_1$ and $u \in V(e_i)$. Set $j_i \gets j_i + 1$ and let $r = r_{e_i}(j_i)$. If $j_i < z_i$, we sample $c_x$ and $c_y$ such that $c_x = c^\star \lor c_y = c^\star$ if and only if $r \leq t_{\max}$.
	If $j_i = z_i$, we sample $c_x$ and $c_y$ such that $c_x \neq c_y$ if and only if $r \leq d_{\mathsf{TV}}$.
	If $j_i > z_i$, we arbitrarily sample $c_x$ and $c_y$ from their optimal coupling.

\item Case $e_i \in \+E_2 \cup \{e_\ell\}$. Let $r = r_u$.  Sample $c_x$ and $c_y$ such that $c_x = c^\star \lor c_y = c^\star$ if and only if $r \leq t_{\max}$.
\item Otherwise, we do not use random numbers in $\+R$ to implement the coupling.
\end{itemize}

We will use the following properties to analysis our implementation. 
Note that after we assigned the values to variable $u$, if $c(e_i)$ is not satisfied by both $X^{\cnon}_u$ and $Y^{\cnon}_u$, then it must hold that $c_x = c^\star$ or $c_y = c^{\star}$. 
Since $u \neq v_0$, by \Cref{condition-projection-coupling}, $Q^{X}_u = Q^{Y}_u$ and $h^X_u = h^Y_u$.
By \Cref{lemma-local-uniform-coupling-gen}, we can prove the following properties. For any $u$ with $s'_u > 1$, we have $q^X_u =q^Y_u \geq s'_u > 1$, thus
\begin{align}
\label{eq-tmax-1}
t_{\max} &\leq \frac{\ctp{q^X_u / s'_u} }{q^X_u}\tp{1 + \frac{1}{500 D^3}} \leq  \frac{\ctp{q^X_u/ 2} }{q^X_u}\tp{1 + \frac{1}{500}} \leq \frac{2}{3}\tp{1 + \frac{1}{500}} \leq \frac{3}{4}.
\end{align}
For any $u \in V \setminus \{v_0\}$,  since $Q^{X}_u = Q^{Y}_u$ and $h^X_u = h^Y_u$, by \Cref{lemma-local-uniform-coupling-gen}, it holds that
\begin{align}
t_{\max} &\leq \min\tp{1, \frac{\ctp{q^X_u / s'_u} }{q^X_u}\tp{1 + \frac{1}{500 D^3}}} \leq  \min \tp{1, \frac{501\ctp{q^X_u / s'_u} }{500q^X_u}}\leq \tp{\frac{\ctp{q^X_u / s'_u} }{q^X_u}}^{0.95} \label{eq-tmax-2};\\
d_{\mathsf{TV}} &\leq \frac{1}{2}\sum_{j \in \Sigma'_u}\frac{\abs{(h^X_u)^{-1}(j)}}{q^X_u} \tp{\frac{2}{500D^3}} = \frac{1}{500D^3} \leq \frac{1}{200D^3}.\label{eq-d-max}
\end{align}
Inequality~\eqref{eq-tmax-2} can be proved by considering two cases. 
If $s'_u = 1$, then $\tp{\frac{\ctp{q^X_u / s'_u} }{q^X_u}}^{0.95} = 1$, the inequality holds trivially.
If $s'_u > 1$, then $\frac{\ctp{q^X_u / s'_u} }{q^X_u} \leq \frac{2}{3}$, this implies~\eqref{eq-tmax-2}.
To prove~\eqref{eq-d-max}, note that  $Q^{X}_u = Q^{Y}_u$ (thus, $q^X_u =q^Y_u$); and $\Proj{h}^X$ and $\Proj{h}^Y$ use the same way to map $Q^{X}_u = Q^{Y}_u$ to $\Sigma'_u$ (i.e. $h^X_u=h^Y_u$). Hence, we can use the upper and lower bound in \Cref{lemma-local-uniform-coupling-gen} to bound the total variation distance $d_{\mathsf{TV}}$.

Consider a hyperedge $e_i \in \+E_1$. If the event $B_i^{(z_i)}$ occurs, then by definition, $c(e_i)$ is not satisfied after $z_i-1$ variables in $V(e_i)$ get the values and the coupling on $z_i$-th variable in $V(e_i)$ fails. Note that for all $v \in V(e_i)$, $s'_v > 1$. By~\eqref{eq-tmax-1} and~\eqref{eq-d-max}, the bad event $B_i^{(z_i)}$ implies the following event:
\begin{itemize}
\item $\+A_i$: for all $1 \leq j \leq z_i - 1$, $r_{e_i}(j) \leq \frac{3}{4}$ and $r_{e_i}(z_i) \leq \frac{1}{200D^3}$.	
\end{itemize}
This bad event $\+A_i$ occurs with probability
\begin{align}
\label{eq-A-1}
\Pr[]{\+A_i} = 	\tp{\frac{3}{4}}^{z_i - 1} \frac{1}{200 D^3}.
\end{align}

Consider a hyperedge $e_i \in \+E_2$. If the event $B_i^{(z_i)} =  B_i^{(k(e_i) + 1)}$ occurs, then by definition, $c(e_i)$ is not satisfied after all variables in $e_i$ get the value.
In our implementation, for any $v \in e_i$, we use $r_v$ to sample values for $X^{\cnon}_{v}$ and $Y^{\cnon}_v$.
 By~\eqref{eq-bad-v0} and~\eqref{eq-tmax-2}, the bad event $B_i^{(z_i)}$ implies
\begin{itemize}
\item $\+A_i$: for all $v \in e_i$, $r_v \leq \tp{\frac{\ctp{q^X_v / s'_v} }{q^X_v}}^{0.95}$.	
\end{itemize}
Since $e_i \in \+E_2$, it holds that $\vst \notin e_i$. By \Cref{condition-projection-coupling} and~\eqref{eq-cond-coupling-gen}, it holds that
\begin{align*}
\sum_{v \in e_i}\log\frac{q_v^X}{\ctp{q_v^X / s_v'}}  \geq \frac{\beta}{10}\sum_{v \in e_i} \log q_v	 \geq 5 \log \tp{\frac{2000D^4}{\beta}},
\end{align*}
This bad event $\+A_i$ occurs with probability
\begin{align}
\label{eq-A-2}
\Pr[]{\+A_i} = 	\prod_{v \in e_i}\tp{\frac{\ctp{q^X_u / s'_u} }{q^X_u}}^{0.95} \leq \tp{\frac{1}{2000 D^{20}}}^{0.95} \leq \frac{1}{200 D^3}.
\end{align}

Consider the hyperedge $e_{\ell}$. If the event $B_{\ell}^{(1)}$ occurs, then by definition, $c(e_\ell)$ is not satisfied after all variables in $e_i \setminus \{\vst\}$ get the value.
In our implementation, for any $v \in e_{\ell}$, we use $r_v$ to sample values for $X^{\cnon}_{v}$ and $Y^{\cnon}_v$.
 By~\eqref{eq-bad-v0} and~\eqref{eq-tmax-2}, the bad event $B_{\ell}^{(1)}$ implies
\begin{itemize}
\item $\+A_{\ell}$: for all $v \in e_\ell \setminus \{\vst\}$, $r_v \leq \tp{\frac{\ctp{q^X_v / s'_v} }{q^X_v}}^{0.95}$.	
\end{itemize}
By~\eqref{eq-cond-1}, we have
\begin{align*}
\sum_{v \in e_\ell \setminus \{\vst\} }\log\frac{q_v^X}{\ctp{q_v^X / s_v'}}  \geq \frac{\beta}{20}\sum_{v \in e_\ell} \log q_v,
\end{align*}
Note that in the original input CSP formula $\Phi = (V, \Dom{Q},\Cons{C})$ of \Cref{alg-mcmc}, the domain size of each variable is at least 2 (otherwise,the value of such variable is fixed and we can remove such variable), it holds that $q_v \geq 2$ for all $v \in V$. This implies $\sum_{v \in e_\ell} \log q_v \geq \abs{e_\ell}$. By~\eqref{eq-cond-coupling-gen}, it holds that $\sum_{v \in e_\ell} \log q_v \geq \log \frac{1}{p} \geq \frac{50}{\beta}\log\tp{\frac{2000D^4}{\beta}} $. We have
\begin{align*}
\sum_{v \in e_\ell \setminus \{\vst\}}\log\frac{q_v^X}{\ctp{q_v^X / s'_v}} \geq \frac{\beta}{40} \abs{e_\ell} + \frac{\beta}{40}\cdot \frac{50}{\beta}\log\tp{\frac{2000D^4}{\beta}} 	= \frac{\beta}{40} \abs{e_\ell}  + \frac{5}{4}\log\tp{\frac{2000D^4}{\beta}}.
\end{align*}
Hence, this bad event $\+A_\ell$ occurs with probability
\begin{align}
\label{eq-A-3}
\Pr[]{\+A_{\ell}} = 	\prod_{v \in e_\ell \setminus \{\vst\} }\tp{\frac{\ctp{q^X_v / s'_v} }{q^X_v}}^{0.95} \leq \tp{\frac{1}{2}}^{\frac{0.95\beta}{40}\abs{e_\ell}} \cdot \tp{\frac{\beta^{5/4}}{2000^{5/4} D^{5}}}^{0.95} \leq \tp{\frac{1}{2}}^{\frac{\beta}{50}\abs{e_\ell}}{\frac{\beta}{200 D^{3}}},
\end{align}
where the last inequality holds because $\beta \leq 1$.

Finally, if $B_{\ell}^{(1)}$ and all  $B_i^{(z_i)}$ for $1\leq i \leq \ell-1$ occur, then $\+A_i$ occurs for all $1\leq i \leq \ell$. 
By definition, the event $\+A_i$ is determined by a subset of random variables $S_i \subseteq \+R$. For any $i \neq j$, the subset $S_i$ and $S_j$ are disjoint, thus all events $\+A_i$ are mutually independent.
Combining ~\eqref{eq-cond-1-RHS},~\eqref{eq-A-1},~\eqref{eq-A-2} and~\eqref{eq-A-3},
\begin{align*}
\Pr[\cnon]{B_{\ell} \land \forall 1\leq i \leq \ell-1: B_i^{(z_i)}} &\leq \Pr[\cnon]{B_{\ell}^{(1)} \land \forall 1\leq i \leq \ell-1: B_i^{(z_i)}}\\
&\leq \Pr[]{\forall  1\leq i \leq \ell, \+A_i} = \prod_{i = 1}^\ell \Pr[]{\+A_i}\\
&\leq \prod_{e_i \in \+E_1}\tp{\tp{\frac{3}{4}}^{z_i - 1} \frac{1}{200 D^3}} \times \prod_{e_j \in \+E_2}\tp{\frac{1}{200D^3}}	\times \tp{\frac{\beta}{200 D^3} \tp{\frac{1}{2}}^{\frac{\beta \abs{e_\ell}}{50}}}.
\end{align*}
This proves~\eqref{eq-gen-target} in case of~\eqref{eq-cond-1}.

Suppose the condition in~\eqref{eq-cond-2} holds. In this case, we need to bound the RHS of~\eqref{eq-cond-2-RHS}. Compared with the above proof, the only difference is that we need to bound the probability of $B^{(2)}_{\ell}$, where $B^{(2)}_\ell$ denotes the coupling on $\vst$ fails, i.e. $X^{\cnon}_{\vst} \neq Y^{\cnon}_{\vst}$. 
In this case, we have $\log \lfloor{\frac{ q^X_{\vst}}{s'_{\vst}}}\rfloor = \log\lfloor {\frac{ q^Y_{\vst}}{s'_{\vst}}}\rfloor \geq \frac{\beta}{20} \tp{\sum_{v \in e_{\ell}}\log q_v}$. 
Note that in the original input CSP formula of \Cref{alg-mcmc}, it holds that $q_v \geq 2$ for all $v \in V$. This implies $\sum_{v \in e_\ell} \log q_v \geq \abs{e_\ell}$. By~\eqref{eq-cond-coupling-gen}, it holds that $\sum_{v \in e_\ell} \log q_v \geq \log \frac{1}{p} \geq \frac{50}{\beta}\log\tp{\frac{2000D^4}{\beta}} $. Thus, we have
\begin{align*}
\log \ftp{\frac{ q^X_{\vst}}{s'_{\vst}}} = \log\ftp{\frac{ q^Y_{\vst}}{s'_{\vst}}} \geq \frac{\beta}{20} \tp{\sum_{v \in e_{\ell}}\log q_v} \geq \frac{\beta}{40} \abs{e_\ell}  + \frac{5}{4}\log\tp{\frac{2000D^4}{\beta}}.
\end{align*}
Note that $Q^{X}_{\vst} = Q^{Y}_{\vst}$ and $h^X_{\vst} = h^Y_{\vst}$.
In \Cref{lemma-local-uniform-coupling-gen}, 
we can set the parameter $t =\frac{\beta}{40} \abs{e_\ell}$.
This implies that when $\cnon$ couples $X^{\cnon}_{\vst}$ and $Y^{\cnon}_{\vst}$, the probability that the coupling fails is at most
\begin{align*}
\frac{1}{2}\sum_{j \in \Sigma_u'}\frac{\abs{(h^X_{\vst})^{-1}(j)}}{q^X_{\vst}} \tp{\frac{2\beta }{500D^3}}\tp{\frac{1}{2}}^{\frac{\beta}{40}\abs{e_{\ell}}}  \leq \tp{\frac{1}{2}}^{\frac{\beta}{50}\abs{e_\ell}}{\frac{\beta}{200 D^{3}}}.
\end{align*}
The proof of this case is almost the same as the above proof. The only difference is that when coupling $\vst$, we sample a random real number $r_{\vst} \in [0,1]$ uniformly and independently. We use $r_{\vst}$ to implement the coupling such that $X^{\cnon}_{\vst} \neq Y^{\cnon}_{\vst}$ only if $r_{\vst} \leq \tp{\frac{1}{2}}^{\frac{\beta}{50}\abs{e_\ell}}{\frac{\beta}{200 D^{3}}}$. We define the bad event $\+A_{\ell}$ as $r_{\vst} \leq \tp{\frac{1}{2}}^{\frac{\beta}{50}\abs{e_\ell}}{\frac{\beta}{200 D^{3}}}$. By the same proof, we have
\begin{align*}
\Pr[\cnon]{B_{\ell} \land \forall 1\leq i \leq \ell-1: B_i^{(z_i)}} &\leq \Pr[\cnon]{B_{\ell}^{(2)} \land \forall 1\leq i \leq \ell-1: B_i^{(z_i)}}\\
&\leq \prod_{e_i \in \+E_1}\tp{\tp{\frac{3}{4}}^{z_i - 1} \frac{1}{200 D^3}} \times \prod_{e_j \in \+E_2}\tp{\frac{1}{200D^3}}	\times \tp{\frac{\beta}{200 D^3} \tp{\frac{1}{2}}^{\frac{\beta \abs{e_\ell}}{50}}}.
\end{align*}
This proves~\eqref{eq-gen-target} in case of~\eqref{eq-cond-2}.

\subsubsection{Proof of \Cref{lemma-find-projection-coupling}}
\label{section-proof-find-coupling-gen}
Without loss of generality, we assume $\abs{Q^{X}_{v_0}} \leq \abs{Q^{Y}_{v_0}}$. 
Otherwise, we can swap the roles of $X$ and $Y$ in this proof. 
Since the original projection scheme $\Proj{h}$ is uniform,
\begin{align}
\label{eq-QX-QY-v0}
0 \leq \abs{Q^{Y}_{v_0}} - \abs{Q^{X}_{v_0}} \leq 1. 
\end{align}
We first construct the projection scheme $\PX$ for $\Phi^{X}$. To do this, we introduce a CSP formula $\widetilde{\Phi}^X = (V,\widetilde{\Dom{Q}}^X=(\widetilde{Q}^X_v)_{v \in V},\Cons{C})$. We first construct a projection scheme $\widetilde{\Proj{h}}^X$ for  $\widetilde{\Phi}^X$, then transform $\widetilde{\Proj{h}}^X$ to the projection scheme $\PX$. 
Recall the original projection scheme is $\Proj{h} = (h_v)_{v \in V}$, where $h_v:Q_v \to \Sigma_v$. Recall $q_v = \abs{Q_v}$.
The CSP formula  $\widetilde{\Phi}^X$ is define as follows:
\begin{align*}
\widetilde{Q}^X_u = \begin{cases}
h_u^{-1}(X_u) &\text{if } u \neq \vst;\\
h_u^{-1}(j) &\text{if } u = \vst, 	
 \end{cases}	
\end{align*}
where $j\in \Sigma_{\vst}$ is an arbitrary value satisfying $\abs{h^{-1}_{\vst}(j)} = \lfloor{q_{\vst}}/{s_{\vst}}\rfloor$. For each $v \in V$, let $\widetilde{q}^X_v = \abs{\widetilde{Q}^X_v}$. Let $\widetilde{p}$ denote $\max_{c \in \Cons{C}}\prod_{v \in \vbl{c}}\frac{1}{\widetilde{q}^X_v}$. 
By \Cref{condition-projection}, we have for any constraint $c \in \Cons{C}$,
\begin{align*}
\sum_{v \in \vbl{c}}\log\widetilde{q}^X_v \geq \beta\sum_{v \in \vbl{c}}\log q_v	.
\end{align*}
By the condition assumed in \Cref{lemma-find-projection-coupling},  it holds that
\begin{align}
\label{eq-condition-p-tilde}
\log \frac{1}{\widetilde{p}} \geq \beta \log \frac{1}{p} \geq 55(\log D + 3).	
\end{align}
Recall that the maximum degree of the dependency graph of $\widetilde{\Phi}^X$ is also $D$. 
We can use \Cref{theorem-projection-general} on instance $\widetilde{\Phi}^X$ such that the parameter $\alpha$ and $\beta$ in \Cref{theorem-projection-general} are set as $\alpha = 8/9$ and $\beta = 1/9$. 
Remark that in the proof of \Cref{theorem-projection-general}, we use Lov\'asz loca lemma to prove that the projection scheme described in theorem must exist. When $\alpha = 8/9$ and $\beta = 1/9$, the condition in \Cref{theorem-projection-general}  becomes
\begin{align*}
\log \frac{1}{\widetilde{p}} \geq \frac{25 \cdot 9^3}{7^3}(\log D + 3).	
\end{align*}
This implies that under the condition in~\eqref{eq-condition-p-tilde}, there exists a balanced projection scheme $\widetilde{\Proj{h}}^X = (\widetilde{h}_v^X)_{v \in V}$, where $\widetilde{h}_v^X: \widetilde{Q}^X_v \to \widetilde{\Sigma}_v^X$ and $\widetilde{s}_v^X = \abs{\widetilde{\Sigma}_v^X}$ such that for any $c \in \Cons{C}$,
\begin{equation}
\label{eq-proof-PX}
\begin{split}
\sum_{v \in \vbl{c}}\log\frac{\widetilde{q}^X_v}{\ctp{\widetilde{q}^X_v/ \widetilde{s}^X_v}}	 &\geq \tp{1-\frac{8}{9}}\sum_{v \in \vbl{c}}\log\widetilde{q}^X_v \geq \frac{\beta}{9} \sum_{v \in \vbl{c}}\log q_v;\\ 
\sum_{v \in \vbl{c}}\log\ftp{\frac{\widetilde{q}^X_v}{\widetilde{s}^X_v}} &\geq \frac{1}{9} \sum_{v \in \vbl{c}}\log \widetilde{q}^X_v\geq 	\frac{\beta}{9} \sum_{v \in \vbl{c}}\log q_v.
\end{split}
\end{equation}
Note that $\widetilde{\Phi}^X$ and $\Phi^{X}$ differ only at variable $\vst$. 
Given the projection scheme $\widetilde{\Proj{h}}^X$ and the original projection scheme $\Proj{h}$, the projection scheme $\PX$ can be constructed as follows
\begin{align*}
h^X_u = \begin{cases}
\widetilde{h}^X_u &\text{if } u \neq \vst;\\
h_u &\text{if } u = \vst.
 \end{cases}
\end{align*}
By definition, $\Proj{h}^X$ is a balanced projection scheme and $h^X_{\vst} = h_{\vst}$.
Since $\widetilde{\Proj{h}}^X$ and $\Proj{h}^X$ differ only at variable $\vst$, for any constraint $c \in \Cons{C}$ such that $\vst \notin \vbl{c}$, by~\eqref{eq-proof-PX},
\begin{align*}
\sum_{v \in \vbl{c}}\log \frac{q^X_v}{\ctp{q^X_v/s^X_v}} &= \sum_{v \in \vbl{c}}\log\frac{\widetilde{q}^X_v}{\ctp{\widetilde{q}^X_v/ \widetilde{s}^X_v}}\geq  \frac{\beta}{10} \sum_{v \in \vbl{c}}\log q_v;\\
\sum_{v \in \vbl{c}}\log\ftp{\frac{q^X_v}{s^X_v}} &= \sum_{v \in \vbl{c}}\log\ftp{\frac{\widetilde{q}^X_v}{\widetilde{s}^X_v}} \geq \frac{\beta}{10} \sum_{v \in \vbl{c}}\log q_v.	
\end{align*}
For variable $\vst$, it holds that $\ftp{q^X_{\vst} / s^X_{\vst}} = \ftp{q_{\vst}/s_{\vst}} = \widetilde{q}^X_{\vst}$, because $\Proj{h}^X$ uses the same way to partition $Q_{\vst}$ as in the original projection scheme $\Proj{h}$. Hence, for any constraint $c \in \Cons{C}$ such that $\vst \in \vbl{c}$, 
\begin{align}
\label{eq-b/9}
\sum_{v \in \vbl{c}} \log \ftp{ \frac{q^X_v}{s^X_v}} &\geq
\sum_{v \in \vbl{c}}\log\ftp{\frac{\widetilde{q}^X_v}{\widetilde{s}^X_v}} \geq 	\frac{\beta}{10} \sum_{v \in \vbl{c}}\log q_v;\notag\\
 \log \ftp{\frac{q^X_{\vst}}{s^X_{\vst}}} + \sum_{v \in \vbl{c} \setminus \{\vst\} } \log \frac{q^X_v}{\ctp{q^X_v / s^X_v}} &= \log \ftp{\frac{q^X_{\vst}}{s^X_{\vst}}}  + \sum_{v \in \vbl{c} \setminus \{\vst\} } \log \frac{\widetilde{q}^X_v}{\ctp{\widetilde{q}^X_v / \widetilde{s}^X_v}}\notag\\
\tp{\text{by $\ftp{q^X_{\vst} / s^X_{\vst}} = \widetilde{q}^X_{\vst}$}}\quad&\geq \sum_{v \in \vbl{c} } \log \frac{\widetilde{q}^X_v}{\ctp{\widetilde{q}^X_v / \widetilde{s}^X_v}}\notag\\
& \geq \frac{\beta}{9} \sum_{v \in \vbl{c}}\log q_v \geq 	\frac{\beta}{10} \sum_{v \in \vbl{c}}\log q_v.
\end{align}
This implies that $\Proj{h}^X$ satisfies all the conditions in \Cref{condition-projection-coupling}. 

Given the projection scheme  $\Proj{h}^X$, the projection scheme $\Proj{h}^Y$ for $\Phi^{Y}$ can be defined as follows.
For each variable $v \in V \setminus \{v_0\}$, $h^Y_v = h^X_v$.
For variable $v_0$, we construct $\Sigma^Y_{v_0} = \Sigma^X_{v_0}$ and $s^Y_{v_0}  = \abs{\Sigma^Y_{v_0}}$, then
arbitrarily map $Q^{Y}_{v_0}$ to $\Sigma^Y_{v_0}$ such that for any $j \in \Sigma^Y_{v_0}$, $\ftp{q^Y_{v_0}/s^Y_{v_0}}\leq  \abs{(h^Y_{v_0})^{-1}(j)} \leq \ctp{q^Y_{v_0}/s^Y_{v_0}}$.
It is easy to see $\Proj{h}^Y$ is also a balanced projection scheme and $h^Y_{\vst}=h_{\vst}$.
It is also easy to see $\Sigma_{v_0}^X = \Sigma_{v_0}^Y$, and $h^X_u = h^Y_u$ for all $u \in V \setminus \{v_0\}$.
We now only need to verify that 
for any $c \in \Cons{C}$,
\begin{align}
\label{eq-qY-1}
\sum_{v \in \vbl{c}}\log\ftp{\frac{q_v^Y}{s_v^Y}} \geq \frac{\beta}{10} \sum_{v \in \vbl{c}}\log q_v;
\end{align}
for any  $c \in \Cons{C}$ satisfying $\vst \notin \vbl{c}$,
\begin{align}
\label{eq-qY-2}
\sum_{v \in \vbl{c}}\log\frac{q_v^Y}{\ctp{q_v^Y / s_v^Y}}	 \geq \frac{\beta}{10}\tp{\sum_{v \in \vbl{c}}\log q_v} ;
\end{align}
and for any $c \in \Cons{C}$ satisfying $\vst \in \vbl{c}$,
\begin{align}
\label{eq-qY-3}
\log \ftp{\frac{q^Y_{\vst}}{s^Y_{\vst}}} + \sum_{v \in \vbl{c} \setminus \{\vst\}}\log\frac{q_v^Y}{\ctp{q_v^Y / s_v^Y}}	 \geq \frac{\beta}{10}\tp{\sum_{v \in \vbl{c}}\log q_v}.
\end{align}
Note that for all $u \in V \setminus \{v_0\}$, it holds that $s^X_u =s^Y_u$ and $q^X_u = q^Y_u$.	
Also note that $s^X_{v_0} = s^Y_{v_0}$.
If $q^X_{v_0} = q^Y_{v_0}$, ~\eqref{eq-qY-1},~\eqref{eq-qY-2} and~\eqref{eq-qY-3} hold trivially. By~\eqref{eq-QX-QY-v0}, we assume $q^Y_{v_0} = q^X_{v_0} + 1$.
Since $q^Y_{u} \geq q^X_{u}$ and $s^X_u = s^Y_u$ for all $u \in V$, for any $c \in \Cons{C}$,
\begin{align*}
\sum_{v \in \vbl{c}}\log\ftp{\frac{q_v^Y}{s_v^Y}} \geq \sum_{v \in \vbl{c}}\log\ftp{\frac{q_v^X}{s_v^X}} \geq \frac{\beta}{10}\sum_{v \in \vbl{c}}\log q_v.
\end{align*}
This proves~\eqref{eq-qY-1}. Note that for all $u \neq v_0$, $q^X_u = q^Y_u$ and $s^X_u = s^Y_u$. Also note that $\vst \neq v_0$. It holds that
\begin{align}
\label{eq-proof-claim-0}
\ftp{\frac{q^Y_{\vst}}{s^Y_{\vst}}}  =  \ftp{\frac{q^X_{\vst}}{s^X_{\vst}}}\quad \text{and}\quad
\forall v \in V \setminus \{v_0\},\quad
\frac{q_{v}^Y}{\ctp{q_{v}^Y / s_{v}^Y}} = \frac{q_{v}^X}{\ctp{q_{v}^X / s_v^X}}.
\end{align}
To prove~\eqref{eq-qY-2} and~\eqref{eq-qY-3}, we only need to compare $\frac{q_{v_0}^X}{\ctp{q_{v_0}^X / s_{v_0}^X}}$ with $\frac{q_{v_0}^Y}{\ctp{q_{v_0}^Y / s_{v_0}^Y}}$. We claim
\begin{align}
\label{eq-proof-claim-1}
\frac{q_{v_0}^Y}{\ctp{q_{v_0}^Y / s_{v_0}^Y}} = \frac{q_{v_0}^X + 1}{\ctp{(q_{v_0}^X +1)/ s_{v_0}^X}} \geq \frac{1}{2}\frac{q_{v_0}^X}{\ctp{q_{v_0}^X / s_{v_0}^X}}.
\end{align}
By~\eqref{eq-proof-PX},~\eqref{eq-proof-claim-0} and~\eqref{eq-proof-claim-1}, for any $c \in \Cons{C}$ such that $\vst \notin \vbl{c}$, we have
\begin{align*}
\sum_{v \in \vbl{c}}\log\frac{q_v^Y}{\ctp{q_v^Y / s_v^Y}} \geq 	\tp{\sum_{v \in \vbl{c}}\log\frac{q_v^X}{\ctp{q_v^X / s_v^X}}} - 1 \geq \frac{\beta}{9}\tp{\sum_{v \in \vbl{c}}\log q_v} - 1 \geq \frac{\beta}{10}\tp{\sum_{v \in \vbl{c}}\log q_v},
\end{align*}
where the last inequality holds because ${\beta}\sum_{v \in \vbl{c}}\log q_v \geq \beta \log \frac{1}{p} \geq 55(\log D + 3) \geq 165$.
This proves~\eqref{eq-qY-2}. Similarly, for any $c \in \Cons{C}$ such that $\vst \in \vbl{c}$, we have
\begin{align*}
\log \ftp{\frac{q^Y_{\vst}}{s^Y_{\vst}}} + \sum_{v \in \vbl{c} \setminus \{\vst\}}\log\frac{q_v^Y}{\ctp{q_v^Y / s_v^Y}} &\geq 	\log \ftp{\frac{q^X_{\vst}}{s^X_{\vst}}} + \tp{\sum_{v \in \vbl{c} \setminus \{\vst\}}\log\frac{q_v^X}{\ctp{q_v^X / s_v^X}}} - 1\\
\tp{\text{by~\eqref{eq-b/9}}}\quad&\geq \frac{\beta}{9}\tp{\sum_{v \in \vbl{c}}\log q_v} - 1 \geq \frac{\beta}{10}\tp{\sum_{v \in \vbl{c}}\log q_v}.
\end{align*}

To prove~\eqref{eq-proof-claim-1}, we consider two case. Recall $s^X_{v_0} =s^Y_{v_0}$. If $q^X_{v_0}$ cannot be divided by $s_{v_0}^X$, then  $\ctp{(q_{v_0}^X +1)/ s_v^X} = \ctp{q_{v_0}^X / s_{v_0}^X}$ and~\eqref{eq-proof-claim-1} holds trivially. If $q^X_{v_0}$ can be divided by $s_{v_0}^X$, then we need to show
\begin{align*}
\frac{q_{v_0}^X + 1}{1 + q_{v_0}^X / s_{v_0}^X} \geq \frac{1}{2}s_{v_0}^X,
\end{align*}
which is equivalent to $q^X_{v_0} \geq s^X_{v_0} - 2$, then~\eqref{eq-proof-claim-1} holds because $q^X_{v_0} \geq s^X_{v_0}$.

\subsection{Proofs of \Cref{lemma-mixing-gen} and \Cref{lemma-mixing-kd}}
\Cref{lemma-mixing-gen} is proved by combining \Cref{lemma-path-coupling}, \Cref{proposition-stationary} and \Cref{lemma-path-coupling-gen}.
Note that the condition in \Cref{lemma-mixing-gen} is $\log \frac{1}{p} \geq \frac{50}{\beta} \log \tp{\frac{2000D^4}{\beta}}$, which suffices to imply the conditions in \Cref{proposition-stationary}  and \Cref{lemma-path-coupling-gen}. This implies the Glauber dynamics has the unique stationary distribution $\nu$ and the mixing rate is $\tmix(\epsilon) \leq \ctp{2n \log \frac{n}{\epsilon}}$.

\Cref{lemma-mixing-kd} is proved by combining \Cref{lemma-path-coupling}, \Cref{proposition-stationary} and \Cref{lemma-path-coupling-uniform}. Given a $(k,d)$-CSP formula, the maximum degree $D$ of the dependency graph is at most $dk$, thus the condition in \Cref{proposition-stationary} becomes $k \log q \geq \frac{1}{\beta}\log(2\mathrm{e}dk)$.
The condition in \Cref{lemma-mixing-kd} is $k\log q \geq \frac{1}{\beta} \log \tp{3000 q^2d^6k^6}$, which suffices to imply the conditions in \Cref{proposition-stationary}  and \Cref{lemma-path-coupling-uniform}. This implies the Glauber dynamics has the unique stationary distribution $\nu$ and the mixing rate is $\tmix(\epsilon) \leq \ctp{2n \log \frac{n}{\epsilon}}$.

\bibliographystyle{alpha}
\bibliography{refs.bib}
\end{document}